\newaliascnt{definition}{thm}
\newaliascnt{proposition}{thm}
\newaliascnt{lemma}{thm}
\newaliascnt{corollary}{thm}
\newaliascnt{conjecture}{thm}
\newaliascnt{remark}{thm}
\newaliascnt{example}{thm}
\theoremstyle{plain}
\newtheorem{theorem}[thm]{Theorem}
\newtheorem{lemma}[lemma]{Lemma}
\newtheorem{corollary}[corollary]{Corollary}
\theoremstyle{definition}
\newtheorem{definition}[definition]{Definition}
\newtheorem{example}[example]{Example}
\newenvironment{diagram}{\begin{tikzpicture}[->,auto,>=latex,line width=.5pt]}{\end{tikzpicture}}%
\setlist[itemize]{%
	labelindent=*,
	leftmargin=*,
	label=$\triangleright$%
}%
\setlist[enumerate]{%
	font=\normalfont,
	labelindent=*,
	leftmargin=*,
	label=\arabic*.,%
	ref=\arabic*%
}
\newlist{wbconditions}{enumerate}{1}
\setlist[wbconditions,1]{%
	font=\normalfont,
	labelindent=*,
	leftmargin=*,
	label=W\arabic*.,%
	ref=W\arabic*%
}
\crefname{figure}{figure}{figures}
\Crefname{figure}{Figure}{Figures}
\crefname{equation}{equation}{equations}
\Crefname{equation}{Equation}{Equations}
\crefname{page}{page}{pages}
\Crefname{page}{Page}{Pages}
\crefname{section}{section}{sections}
\Crefname{section}{Section}{Sections}
\crefname{definition}{definition}{definitions}
\Crefname{definition}{Definition}{Definitions}
\crefname{example}{example}{examples}
\Crefname{example}{Example}{Examples}
\crefname{thm}{theorem}{theorems}
\Crefname{thm}{Theorem}{Theorems}
\crefname{theorem}{theorem}{theorems}
\Crefname{theorem}{Theorem}{Theorems}
\crefname{lemma}{lemma}{lemmas}
\Crefname{lemma}{Lemma}{Lemmas}
\crefname{corollary}{corollary}{corollaries}
\Crefname{corollary}{Corollary}{Corollaries}
\crefname{wbconditionsi}{condition}{conditions}
\Crefname{wbconditionsi}{Condition}{Conditions}
\renewcommand{\phi}{\varphi}
\newcommand{\set}[1]{\left\{#1\right\}}
\newcommand{\powerset}[1]{\mathcal{P}\left({#1}\right)}
\newcommand{\cat}[1]{\ensuremath{\mathbb{#1}}}
\newcommand{\Set}{\ensuremath{\mathnormal{Set}}}%
\newcommand{\PMet}{\mathnormal{PMet}}
\newcommand{\R}{\ensuremath{\mathbb{R}}}%
\newcommand{\N}{\ensuremath{\mathbb{N}}}%
\newcommand{\lbbd}{\mathopen{[\![}}
\newcommand{\rbbd}{\mathclose{]\!]}}
\newcommand{\final}[1]{\lbbd #1 \rbbd}
\newcommand{\CoAlg}[1]{\ensuremath{\mathnormal{CoAlg}\left(#1\right)}}
\newcommand{\Kl}{\ensuremath{\mathcal{K}\ell}}
\newcommand{\EM}{\ensuremath{\mathcal{EM}}}
\DeclareSymbolFont{bbold}{U}{bbold}{m}{n}
\DeclareSymbolFontAlphabet{\mathbbold}{bbold}
\newcommand{\one}{\ensuremath{\mathbbold{1}}}
\newcommand{\two}{\ensuremath{\mathbbold{2}}}
\newcommand{\nonexpansiveTo}{\ensuremath{\,\mathop{\to}\,}}
\newcommand{\reals}{{[0,\top]}}
\newcommand{\prealinf} {\ensuremath{[0,\infty]}}
\newcommand{\preal}{\mathbb{R}_0^+}
\newcommand{\bd}{\mathnormal{bd}}
\newcommand{\td}{\mathnormal{td}}
\newcommand{\Id}{\ensuremath{\mathrm{Id}}}
\newcommand{\id}{\ensuremath{\mathrm{id}}}
\newcommand{\ev}{\ensuremath{ev}}
\newcommand{\Kantorovich}[2]{\ensuremath{#2^{\,\uparrow {#1}}}}
\newcommand{\LiftedFunctor}[1]{\ensuremath{\overline{#1}}}
\newcommand{\EvaluationFunctor}[1]{\ensuremath{\widetilde{#1}}}
\newcommand{\Wasserstein}[2]{\ensuremath{#2^{\,\downarrow {#1}}}}
\newcommand{\LiftedMetric}[2]{\ensuremath{#2^{#1}}}
\newcommand{\Couplings}[1]{\Gamma_{#1}}
\newcommand{\Powerset}{\ensuremath{\mathcal{P}}}
\newcommand{\PowersetFinite}{\ensuremath{\mathcal{P}_{\!f}}}
\newcommand{\Distributions}{\ensuremath{\mathcal{D}}}
\newcommand{\DistributionsFinite}{\ensuremath{\mathcal{D}_{\!f}}}
\newcommand{\supp}{\ensuremath{\mathrm{supp}}}
\newcommand{\textlsc}[1]{\textsc{\MakeLowercase{#1}}}
\newcommand{\name}[2][]{{#2}}
\newcommand{\new}[1]{#1}
\begin{document}

% Metadata
\title[Coalgebraic Behavioral Metrics]{Coalgebraic Behavioral Metrics}

\author[P. Baldan]{Paolo Baldan}
\address{Dipartimento di Matematica, Università di Padova, Italy}
\email{baldan@math.unipd.it}

\author[F. Bonchi]{Filippo Bonchi}
\address{Dipartimento di Informatica, Università di Pisa, Italy}
\email{bonchi@di.unipi.it}

\author[H. Kerstan]{Henning Kerstan}
\address{}
\email{mail@henningkerstan.de}

\author[B. König]{Barbara König}
\address{Universität Duisburg-Essen, Germany}
\email{barbara\_koenig@uni-due.de}

\keywords{behavioral metric, bisimilarity metric, trace metric, functor lifting, monad lifting, pseudometric, coalgebra} 

\subjclass{F.3.1 Specifying and Verifying and Reasoning about Programs, D.2.4 Software/Program Verification}

% PDF properties
\hypersetup{%
pdftitle={Coalgebraic Behavioral Metrics},%
pdfauthor={Paolo Baldan, Filippo Bonchi, Henning Kerstan, Barbara König},%
pdfkeywords={behavioral metric, bisimilarity metric, trace metric, functor lifting, monad lifting, pseudometric, coalgebra}
}

\begin{abstract}
We study different behavioral metrics, such as those arising from both branching and linear-time semantics, in a coalgebraic setting. Given a coalgebra $\alpha\colon X \to HX$ for a functor $H \colon \Set\to \Set$, we define a framework for deriving pseudometrics on $X$ which measure the behavioral distance of states.

A crucial step is the lifting of the functor $H$ on $\Set$ to a functor $\LiftedFunctor{H}$ on the category $\PMet$ of pseudometric spaces. We present two different approaches which can be viewed as generalizations of the Kantorovich and Wasserstein pseudometrics for probability measures. We show that the pseudometrics provided by the two approaches coincide on several natural examples, but in general they differ.

If $H$ has a final coalgebra, every lifting $\LiftedFunctor{H}$ yields in a canonical way a behavioral distance which is usually branching-time, i.e., it generalizes bisimilarity. In order to model linear-time metrics (generalizing trace equivalences), we show sufficient conditions for lifting distributive laws and monads. These results enable us to employ the generalized powerset construction.
\end{abstract}

\maketitle

\section{Introduction}
When considering the behavior of state-based system models embodying
quantitative information, such as probabilities, time or cost, the interest normally shifts from behavioral equivalences to behavioral distances. In fact, in a quantitative setting, it is often quite unnatural to ask that two systems exhibit exactly the same behavior, while it can be more reasonable to require that the distance between their behaviors is sufficiently small \cite{GJS90,DGJP04,vBW05,bblm:total-variation-markov,afs:linear-branching-metrics-quant,dAFS09,flt:quantitative-spectrum}.

Coalgebras \cite{Rut00} are a well-established abstract framework where a canonical notion of behavioral equivalence can be uniformly derived. In a nutshell, a coalgebra for a functor $H\colon \cat{C} \to \cat{C}$ is an arrow $\alpha \colon X \to H X$
in some category $\cat{C}$. When $\cat{C}$ is $\Set$, the category of sets and functions, coalgebras represent state machines: $X$ is a set of states, $\alpha$ the transition function, and $H$ describes the type of the transitions performed. For instance, deterministic automata are coalgebras $\langle o,t\rangle \colon X \to \two \times  X^A$: for any state $x\in X$, $o$ specifies whether it is final ($o(x)=1$) or not ($o(x)=0$), and $t$ the successor state for any given input in $A$. Similarly, nondeterministic automata can be seen as coalgebras $\langle o,t\rangle\colon  X \to \two \times  \mathcal{P}(X)^A$ where now $t$ assigns, for every state and input, a set of possible successors. 
Under certain conditions on $H$, there exists a final $H$-coalgebra,
namely a coalgebra $\zeta\colon Z \to HZ$ such that for every
coalgebra $\alpha\colon X \to HX$ there is a unique coalgebra
homomorphism $\final{\cdot} \colon X \to Z$. When the base category is
$\Set$, $Z$ can be thought of as the set of all $H$-behaviors and
$\final{\cdot} $ as the function mapping every state in $X$ into its
behavior. Then two states $x,y\in X$ are said to be behaviorally
equivalent when  $\final{x}  = \final{y} $. For deterministic automata this equivalence coincides with language equivalence while, for nondeterministic automata, it coincides with bisimilarity.

In this paper, we investigate how to exploit coalgebras to derive canonical notions of behavioral distances. The key step in our approach is to lift a functor $H$ on $\Set$ to a functor $\LiftedFunctor{H}$ on $\PMet$, the category of pseudometric spaces and nonexpansive maps. Given a pseudometric space $(X,d_X)$, the goal is to define a suitable pseudometric on $HX$.
For the (discrete) probability distribution functor $\Distributions$, there are two 
liftings, known as Wasserstein and Kantorovich liftings,
that have been extensively studied in transportation
theory~\cite{Vil09}. The Kantorovich-Rubinstein duality states that these two liftings coincide.

It is our observation that these notions of liftings can analogously be defined for arbitrary functors $H$, leading to a rich general
theory (\Cref{sec:lifting}). As concrete examples, besides $\Distributions$, we study the (finite)
powerset functor (resulting in the Hausdorff metric), the input functor, the coproduct and product bifunctors. The Kantorovich-Rubinstein duality holds for these, but it does not hold in general (we provide a counterexample).

Once a lifting $\LiftedFunctor{H}$ has been defined, it is easy to
derive a behavioral distance on the states of an $H$-coalgebra
$\alpha\colon X\to HX$. First, we turn $\alpha$ into a coalgebra for
$\LiftedFunctor{H}$ by endowing $X$ with the trivial discrete distance. Then,
for any two states $x,y\in X$, their behavioral distance can be
defined as the distance of their images in a final
$\LiftedFunctor{H}$-coalgebra
$\zeta\colon (Z,d_Z)\to \LiftedFunctor{H}(Z,d_Z)$, namely as
$d_Z(\final{x} ,\final{y} )$. This notion of distance is well behaved
with respect to the underlying behavioral equivalence: we show
(\Cref{thm:final-coalgebra}) that a final $H$-coalgebra can be enriched with a pseudometric $d_Z$ on $Z$ yielding the final $\LiftedFunctor{H}$-coalgebra. This immediately implies that if
two states are behaviorally equivalent, then they are at distance $0$
(\Cref{thm:bisim-pseudometric}). Moreover, we show how to compute
distances on the final coalgebra as well as on arbitrary coalgebras
via fixed-point iteration and we prove that the pseudometric obtained
on the final coalgebra is indeed a metric (\Cref{thm:comp-dist,thm:d-omega-is-metric}). We recover behavioral metrics in the
setting of probabilistic systems \cite{DGJP04,vBW06} and of metric
transition systems \cite{dAFS09}.

The canonical notion of equivalence for coalgebras, in a sense, fully captures the behavior of the system as expressed by the functor $H$. Sometimes one is interested in coarser equivalences, ignoring some aspects of a computation, a notable example being language equivalence for nondeterministic automata. In the coalgebraic setting this can be achieved by means of the generalized powerset construction~\cite{SBBR13,JSS12,JSS15}. The starting observation is that the distinction between the behavior to be observed and the computational effects that are intended to be hidden, is sometimes formally captured by splitting the functor $H$ in two components, a functor $F$ for the observable behavior and a monad $T$ describing the computational effects, e.g.,  $\one + \_$, $\mathcal{P}$ or $\mathcal{D}$ provides partial, nondeterministic or probabilistic computations, respectively. For instance, the functor for nondeterministic automata $\two \times  \Powerset(X)^A$ can be seen as the composition of the functor $F X = \two \times X^A$ of deterministic automata, with the powerset monad $T = \mathcal{P}$, capturing nondeterminism. Trace semantics can be derived by viewing a coalgebra $X \to \two \times \mathcal{P}(X)^A$ as a coalgebra $\mathcal{P}(X) \to \two \times \mathcal{P}(X)^A$, via a determinization construction.  Similarly probabilistic automata can be seen as coalgebras of the form $X \to [0,1] \times \mathcal{D}(X)^A$, yielding coalgebras $\mathcal{D}(X) \to [0,1] \times \mathcal{D}(X)^A$ via determinization. In general terms, the determinization of a coalgebra $X\to FTX$ leads to an $F$-coalgebra $TX\to FTX$. Formally this  can be done whenever there is a distributive law between $F$ and $T$. In this case, $F$ lifts to  a functor $\widehat{F}$ in $\EM(T)$, the Eilenberg-Moore category of $T$, and the determinized coalgebra can be regarded as an $\widehat{F}$-coalgebra.

In the second part of this paper, we exploit the aforementioned
approach \cite{JSS15} for systematically deriving metric trace
semantics for coalgebras. The situation is summarized by the diagram in \Cref{fig:gen-pow-trace} (on \cpageref{fig:gen-pow-trace}). As a first step, building on our technique for lifting functors from $\Set$ to $\PMet$, we identify conditions under which also natural transformations, monads and distributive laws can be lifted (\Cref{prop:nt-lifting}).
In this way we obtain an adjunction between $\PMet$ and
$\EM(\LiftedFunctor{T})$, where $\LiftedFunctor{T}$ is the lifted
monad. Via the lifted distributive law we can transfer a functor
$\LiftedFunctor{F}\colon \PMet \to \PMet$ to an endofunctor
$\widehat{\LiftedFunctor{F}}$ on $\EM(\LiftedFunctor{T})$. By using the
discrete distance, coalgebras of the form $TX\to FTX$ can now
live in $\EM(\LiftedFunctor{T})$ and can be equipped with a trace
distance via a map into a final coalgebra. 

We show that this notion covers known or meaningful trace distances
such as a metric on languages for nondeterministic automata or a
variant of the total variation distance on distributions for
probabilistic automata
(\Cref{exa:final-coalgebra-tracedistance-nfa,exa:final-coalgebra-tracedistance-pa}).

\paragraph{\bf{Synopsis}}
Motivated by an example of probabilistic transition systems, we will explain the Wasserstein and Kantorovich liftings for $\Distributions$  in terms of transportation theory (\Cref{sec:motivation}). 
After some preliminaries on pseudometrics (\Cref{sec:pseudometrics}) and a quick look at further motivating examples (\Cref{sec:motivating-examples}) we will generalize the Kantorovich/Wasserstein liftings to arbitrary endofunctors on $\Set$ (\Cref{sec:lifting}). 
This leads to the definition of behavioral distance and the proof that it enjoys several desirable properties (\Cref{sec:final-coalgebra}). 
For trace pseudometrics, we need to study first compositionality of our liftings (\Cref{sec:compositionality}) and then the lifting of natural transformations and monads (\Cref{sec:monadlifting}). Using these results, we can then employ the generalized powerset construction \cite{JSS15} to obtain trace pseudometrics and show how this applies to nondeterministic and probabilistic automata (\Cref{sec:tracemetrics}). Finally, we will discuss related and future work (\Cref{sec:conclusion}).

\paragraph{\bf{Notation}}
We assume that the reader is familiar with the basic notions of coalgebras and category theory but will present some of the definitions below to introduce our notation.
For a function $f\colon X \to Y$ and sets $A \subseteq X$, $B \subseteq Y$ we write $f[A] := \set{f(a) \mid a \in A}$ for the \emph{image} of $A$ and $f^{-1}[B]= \set{x \in X \mid f(x) \in B}$ for the \emph{preimage} of $B$. If $Y \subseteq [0,\infty]$ and $f,g\colon X \to Y$ are functions we write $f \leq g$ if $f(x)\leq g(x)$ for all $x \in X$.
Given a natural number $n \in \N$ and a family $(X_i)_{i = 1}^n$ of sets $X_i$ we denote the projections of the (cartesian) product of the $X_i$ by $\pi_i^n\colon \prod_{i=1}^n X_i \to X_i,$ or just by $\pi_i$ if $n$ is clear from the context. For a source $(f_i\colon X \to X_i)_{i = 1}^n$ we denote the unique mediating arrow to the product by $\langle f_1,\dots,f_n\rangle \colon X \to \prod_{i=1}^{n}X_i$. Similarly, given a family of arrows $(f_i\colon X_i \to Y_i)_{i = 1}^n$, we write $f_1 \times \dots \times f_n = \langle f_1 \circ \pi_1,\dots,f_n \circ \pi_n \rangle\colon \prod_{i=1}^n X_i \to \prod_{i=1}^n Y_i$.

We quickly recap the basic ideas of coalgebras. Let $F$ be an
endofunctor on the category $\Set$ of sets and functions. An
$F$-coalgebra is just a function $\alpha \colon X \to FX$. Given
another $F$-coalgebra $\beta\colon Y \to FY$ a coalgebra homomorphism
from $\alpha$ to $\beta$ is a function $f \colon A \to B$ such that
$\beta \circ f = Ff \circ \alpha$. We call an $F$-coalgebra
$\kappa\colon \Omega \to F\Omega$ \emph{final} if for any other
coalgebra $\alpha \colon X \to FX$ there is a unique coalgebra
homomorphism $\final{\cdot}_\alpha \colon X \to \Omega$. The final
coalgebra need not exist but if it does it is unique up to
isomorphism. It can be considered as the universe of all possible
behaviors. If we have an endofunctor $F$ such that a final coalgebra
$\kappa\colon \Omega \to F\Omega$ exists then for any coalgebra
$\alpha\colon X \to FX$ two states $x_1, x_2 \in X$ are said to be
\emph{behaviorally equivalent} if and only if
$\final{x_1}_\alpha = \final{x_2}_\alpha$. 

A \emph{probability distribution} on a given set $X$ is a function $P\colon X \to [0,1]$ satisfying $\sum_{x \in X}P(x) = 1$. For any set $B \subseteq X$ we define $P(B) = \sum_{x \in B}P(x)$. The \emph{support} of $P$ is the set $\supp(P) := \set{x \in X\mid P(x) > 0}$. We write $\Distributions(X)$ for the set of all probability distributions on a set $X$ and $\DistributionsFinite(X)$ for the subset containing the distributions on $X$ with finite support.

\section{Motivation}\label{sec:motivation}
The transition system in \Cref{fig:probsys2} (taken from \cite{vBW06})
is a purely probabilistic transition system (\textlsc{PTS})
\new{without labels} with state space  $X=\set{u,x,y,z}$ and an arbitrary $\epsilon \in \ ]0,1/2[$. An intuitive understanding of such a system is that in each state the system chooses a transition (indicated by the arrows) to another state using the probabilistic information which is given by the numbers on the arrows.

\begin{wrapfigure}{r}{3.6cm}
	\centering
	\begin{tikzpicture}[->,auto,node distance=1cm,semithick,>=latex]
	\tikzstyle{every state}=[fill=black!10,text=black,inner sep=0.5pt, minimum size=15pt]
	\tikzstyle{every node}=[text=black]
	% nodes
	\node[state] (x) {$x$};
	\node (center) [below=0.25cm of x] {};
	\node[state] (u) [left=0.5cm of center]{$u$};
	\node[state,accepting] (z) [right=0.5cm of center] {$z$};
	\node[state] (y) [below=0.25 cm of center] {$y$};
	%
	% paths
	\path (x) edge node[above left] {$\frac{1}{2}-\epsilon$} (u);
	\path (x) edge node[above right] {$\frac{1}{2}+\epsilon$} (z);
	\path (y) edge node[below left] {$\frac{1}{2}$} (u);
	\path (y) edge node[below right] {$\frac{1}{2}$} (z);
	\path (u) edge[loop left] node[left] {$1$} (u);
	\end{tikzpicture}
	\caption{A \textlsc{PTS}}
	\label{fig:probsys2}
\end{wrapfigure}

The state $z$ on the right hand side of \Cref{fig:probsys2} is a \emph{final state} so the system terminates with probability one (indicated by the double circle) when reaching that state. Contrary to that, state $u$ on the left hand side can be interpreted as a \emph{fail state} which -- once reached -- can never be left and the system loops indefinitely in $u$. Thus the behavior of these states is entirely different. 

Comparing the behavior of $x$ and $y$ is a bit more complicated -- they both have probabilistic transitions to $u$ and $z$ but in state $x$ there is always a bias towards the final state $z$ which is controlled only by the value of $\epsilon$. Due to this $x$ and $y$ are certainly not behaviorally equivalent but similar in the sense that their behavioral distance is $\epsilon$.

Let us now analyze how we can formally draw that conclusion. First
note that we can define a distance between the states $u$ and $z$
based solely on the fact that $z$ is final while $u$ is not. Though we
do not yet give it an explicit numerical value, we consider them to be
\emph{maximally apart}. Then, in order to compare $x$ and $y$ we need
to compare their transitions which (in this example) are probability
distributions\footnote{They are actually probability distributions on
  $X$ with support $\set{u,z}$. We will discuss this issue later.}
$P_x, P_y \colon \set{u,z} \to [0,1]$. \new{In particular
  $P_x(u) = \frac{1}{2}-\epsilon$, $P_x(z) = \frac{1}{2}+\epsilon$,
  $P_y(u) = \frac{1}{2}$, $P_y(z) = \frac{1}{2}$,
  $P_x(x) = P_x(y) = P_y(x) = P_y(y) = 0$.} Thus the underlying task is to
\emph{define a distance between probability distributions based on a
  distance on their common domain of definition}.

Let us now tackle this question with a separate, more illustrative example. We will come back to our probabilistic transition system afterwards (in fact, we will even discuss it more thoroughly throughout the whole paper).

\subsection{Wasserstein and Kantorovich Distance}
Suppose we are given the three element set $X = \set{A,B,C}$ along with a distance function (here: a metric) $d\colon X \times X \to \preal$ where
\begin{align*}
	d(A,A) = d(B,B) = d(C,C) = 0, \qquad &d(A,B) = d(B,A) = 3,\\
	d(A,C) = d(C,A) = 5, \qquad\text{ and} \qquad & d(B,C) = d(C,B) = 4\,.
\end{align*} 
Based on this function we now want to define a distance on probability distributions on $X$, i.e., a function $\LiftedMetric{\Distributions}{d}\colon \Distributions X \times \Distributions X \to [0,1]$, so we need to define its value $\LiftedMetric{\Distributions}{d}(P,Q)$ for all probability distributions $P,Q \colon X \to [0,1]$. As a concrete example, let us take the distributions 
\begin{align*}
	P(A) &= 0.7, & P(B) &= 0.1, & P(C) &= 0.2,\\
	Q(A) &= 0.2, & Q(B) &= 0.3, \text{ and} & Q(C) &= 0.5\,.
\end{align*}
In order to define their distance, we can interpret the three elements $A$, $B$, $C$ as places where a certain product is produced and consumed (imagine the places are coffee roasting plants, each with an adjacent café where one can taste the coffee). For the sake of simplicity, we assume that the total amount of production equals the amount of consumption\footnote{If this was not the case we could introduce a dummy place representing either a warehouse (overproduction) or the amount of the product that needs to be bought on the market (underproduction).}. The places are geographically distributed across an area and the distance function just describes their physical distance. Moreover, the above distributions model the supply ($P$) and demand ($Q$) of the product in \new{proportion} of the total supply or demand. We have illustrated this situation graphically in \Cref{fig:lifting}. The numbers on the edges indicate the distance between the places $A$, $B$, $C$ whereas the numbers on the nodes indicate supply $P$ (upper value) and demand $Q$ (lower value).

\begin{wrapfigure}{r}{4.5cm}
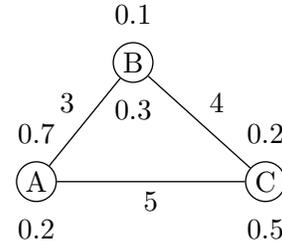

	\centering
	\begin{diagram}
		\tikzstyle{every state}=[draw=black,fill=white,text=black,inner sep=0.5pt, minimum size=15pt]
		\node[state] (A) {A};
		\node[state, above right=1.2cm and 0.9cm of A] (B) {B};
		\node[state, right=2.5cm of A] (C) {C};
		\path[-] (A) edge node[above left] {$3$} (B);
		\path[-] (A) edge node[below] {$5$} (C);
		\path[-] (B) edge node[above right] {$4$} (C);
		
		\node[text=black, above=.1cm of A] (sa) {$0.7$};
		\node[text=black, above=.1cm of B] (sb) {$0.1$};
		\node[text=black, above=.1cm of C] (sc) {$0.2$};
		
		\node[text=black, below=.1cm of A] (sa) {$0.2$};
		\node[text=black, below=.1cm of B] (sb) {$0.3$};
		\node[text=black, below=.1cm of C] (sc) {$0.5$};
	\end{diagram}
	\caption{Lifting example}
	\label{fig:lifting}
\end{wrapfigure}

The interpretation given above allows us to find two economically motivated views of defining a distance between $P$ and $Q$ based on the notion of \emph{transportation} which is studied extensively in \emph{transportation theory} \cite{Vil09}. The leading idea is that the product needs to be transported so as to avoid excess supply and meet all demands. As an owner of the three facilities we have two choices: do the transport on our own or find a logistics firm which will do it for us.

If we are organizing the transport ourselves, transporting one unit of our product from one place to another costs one monetary unit per unit of distance traveled. As an example, transporting ten units from $A$ to $B$ will cost $10 \cdot 3 = 30$ monetary units. Formally, we will have to define a function $t \colon X \times X \to [0,1]$ where $t(x,y)$ describes (in \%) the amount of goods to be transported from place $x$ to $y$ such that
\begin{itemize}
	\item supplies are used up: for all $x \in X$ we must have $\sum_{y \in X} t(x,y) = P(x) $, and
	\item demand is satisfied: for all $y \in X$ we must have $\sum_{x \in X} t(x,y) = Q(y)$.
\end{itemize}
In probability theory, such a function is as a joint probability distribution on $X \times X$ with marginals $P$ and $Q$ and is therefore called a \emph{coupling} of $P$ and $Q$. In our economic perspective we will just call it a \emph{transportation plan} for supply $P$ and demand $Q$ and write $T(P,Q)$ for the set of all such transportation plans.

If $M \in \R_+$ denotes the total supply (= total demand), the total transportation cost for any transportation plan $t \in T(P,Q)$ is given by the function $c_d$ which is parametrized by the distance function $d$ and defined as
\begin{align*}
	c_d(t) := \sum_{x,y \in X} \big(M \cdot t(x,y)\big) \cdot d(x,y) = M \cdot \sum_{x,y \in X} t(x,y) \cdot d(x,y)\,.
\end{align*}
Since we want to maximize our profits, we will see to it that the
total transportation cost is minimized, i.e., we will look for a
transportation plan $t^* \in T(P,Q)$ minimizing the value of $c_d$ (it
can be shown that such a $t^*$ always exists, by compactness). Using this, we can now define the distance of $P$ and $Q$ to be the relative costs $c_d(t^*)/M$ of such an optimal transportation plan, i.e., 
\begin{align*}
	\Wasserstein{\Distributions}{d}(P,Q) := \min \set{\sum_{x,y \in X} t(x,y) \cdot d(x,y)\ \bigg|\ t \in T(P,Q)} = c_d(t^*)/M\,.
\end{align*}
This distance between probability distributions is called the \emph{Wasserstein distance} and it can be shown that this distance is a pseudometric if $d$ is a pseudometric (and we will recover this result later). In our concrete example of \Cref{fig:lifting}, the best solution is apparently to first use the local production (zero costs) at each facility and then transport the remaining excess supply of $50\%$ in $A$ to the remaining demands in $B$ ($20\%$) and in $C$ ($30\%$). Thus we obtain
\begin{align*}
	\Wasserstein{\Distributions}{d}(P,Q) = 0.2 \cdot d(A,B) + 0.3 \cdot d(A,C) = 0.2 \cdot 3 + 0.3 \cdot 5 = 0.6+1.5 = 2.1
\end{align*}
yielding an optimal (absolute) transportation cost of $2.1 \cdot M$ monetary units.

If we decide to let a logistics firm do the transportation instead of doing it on our own, we assume that for each place it sets up a price for which it will buy a unit of our product (at a place with overproduction) or sell it (at a place with excess demand). Formally it will do so by giving us a price function $f \colon X \to \R_+$. We will only accept  \emph{competitive price functions} which satisfy $|f(x) - f(y)| \leq d(x,y)$ for all places $x,y \in X$. This amounts to saying that if we sell one unit of our product at place $x$ to the logistics firm and buy one at $y$ it does not cost more than transporting it ourselves from $x$ to $y$. If $d$ is a pseudometric, we will later call this requirement \emph{nonexpansiveness} of the function $f$. Here, we will denote the set of all these functions by $C(d)$.

The logistics firm is interested in its own profits which are given by the function $g_d$ which is again parametrized by $d$ and defined as
\begin{align*}
	g_d(f) := \sum_{x \in X} f(x) \cdot \Big(\big(M \cdot Q(x)\big)-\big(M \cdot P(x)\big)\Big) = M \cdot  \sum_{x \in X} f(x) \cdot \big(Q(x)-P(x)\big)
\end{align*}
for all competitive price functions $f \in C(d)$. If in this formula the value $Q(x)-P(x)$ is greater than $0$, there is underproduction so the logistics firm can sell goods whereas if $Q(x) - P(x) < 0$ it will have to buy them. Naturally, the logistics firm will want to maximize its profits so it will look for a competitive price function $f^* \in C(d)$ maximizing the value of $g_d$.  Based on this we can now define another distance between $P$ and $Q$ to be the relative profit $g_d(f^*) / M$, i.e.,
\begin{align*}
	\Kantorovich{\Distributions}{d}(P,Q) := \max\set{\sum_{x \in X} f(x) \cdot \big(Q(x) - P(x)\big) \,\Big|\,f \in C(d)} = g_d(f^*)/M\,.
\end{align*}
This distance is known as the \emph{Kantorovich distance} and is also a pseudometric if $d$ is. One can show that for our example it will be best if we give our product to the logistics firm for free\footnote{Apparently, this is only reasonable in presence of a contract that prohibits the logistics firm to use our product or sell it to anyone else.} in $A$, i.e., the logistic firm defines the price $f^*(A) =0$. Moreover, we need to buy it back at $B$ for three monetary units ($f^*(B) = 3$) and for five monetary units at $C$ ($f^*(C) = 5$). This yields as distance
\begin{align*}
	\Kantorovich{\Distributions}{d}(P,Q) &= \sum_{x \in X}f^*(x) \cdot \big(Q(x) - P(x)\big) \\\
	&= 0 \cdot (0.2-0.7) + 3 \cdot (0.3-0.1) + 5 \cdot(0.5-0.2) = 2.1
\end{align*}
which is exactly the same as the one obtained earlier. In fact, one
can prove that $\Kantorovich{\Distributions}{d}(P,Q) \leq
\Wasserstein{\Distributions}{d}(P,Q)$ so whenever we have a
transportation plan $t^*$ and a competitive price function $f^*$ so
that $c_d(t^*) = g_d(f^*)$ we know that both are optimal yielding
$\Kantorovich{\Distributions}{d}(P,Q) =
\Wasserstein{\Distributions}{d}(P,Q) = c_d(t^*)$. As final remark we
note that if $X$ is a finite set such a pair $(t^*,f^*)$ will always
exist and can be found e.g. using the simplex algorithm from linear
programming. 

Note that the Kantorovich-Rubinstein duality \cite{Vil09} states that
in the probabilistic setting the Wasserstein and the Kantorovich
distance coincide. However, this is not necessarily the case when we
lift other functors.

\subsection{Behavioral Distance as Fixed Point}
Now that we have finished our little excursion to transportation theory, let us come back to the original example where we wanted to define a distance between the two states $x$ and $y$ of the   probabilistic transition system in \Cref{fig:probsys2}. Since we consider $u$ and $z$ to be maximally apart, we could formally set $d(u,z) = d(z,u) = 1$ and $d(u,u) = d(z,z) = 0$ so we obtain as distance function $d \colon \set{u,z} \times \set{u,z} \to [0,1]$ the discrete $1$-bounded metric on the set $\set{u,z}$. Using this, we could then define the distance of $x$ and $y$ to be the distance of their transition distributions $P_x, P_y \colon \set{u,z} \to [0,1]$ yielding indeed a distance $d'(x,y) = \Wasserstein{\Distributions}{d}(P_x, P_y) = \Kantorovich{\Distributions}{d}(P_x,P_y) = \epsilon$ as claimed in the beginning.

However, the remaining question we need to answer is how the above
procedure gives rise to a proper behavioral distance in the sense that
we obtain a sound and complete definition of a distance function on
\emph{the whole set} $X$, i.e., a pseudometric
$d\colon X \times X \to [0,1]$. \new{Notice that the distance between
  distributions depends on the distances in $X$, and vice-versa.
  Thus, we are led to a fixed-point definition, in particular in the
  presence of cycles in the transition system.}  We just need to
observe that the definition of $d'(x,y)$ immediately yields the
following recursive formula
$d(x_1,x_2) =
\LiftedMetric{\Distributions}{d}\big(P_{x_1},P_{x_2}\big)$ for
\emph{all} $x_1,x_2 \in X$ where $\LiftedMetric{\Distributions}{d}$ is
one of the equivalent distances (Wasserstein or Kantorovich) defined
above. A known approach for probabilistic systems as the one above is
to define its behavioral distance to be a fixed point
$d^*\colon X \times X \to [0,1]$ of the above equation
\cite{vBW06,vBSW08}. It is not difficult to see that due to the
special structure of the above system, one obtains
$d^*(u,z) = d^*(z,u) = 1$ and $d^*(x,y) = d'(x,y) = \epsilon$ which
finally validates our initial claim that the distance of $x$ and $y$
is indeed $\epsilon$.

\section{Pseudometric Spaces}
\label{sec:pseudometrics}
Contrary to the usual definitions, where distances assume values in the half open real interval $[0,\infty[$, our distances assume values in a closed interval $[0,\top]$, where $\top \in\,]0,\infty]$ is a fixed maximal element (for our examples we will use $\top = 1$ or $\top=\infty$).

\begin{definition}[Pseudometric, Pseudometric Space]
	\label{def:pseudometric}
	\index{pseudometric}
	\index{pseudometric space}
	\index{metric}
	\index{metric space}
	Let $\top \in\,]0, \infty]$ be a fixed maximal distance and $X$ be a set. We call a function $d \colon X \times X \to \reals$ a \emph{$\top$-pseudometric} on $X$ (or a \emph{pseudometric} if $\top$ is clear from the context) if it satisfies $d(x,x) = 0$ (\emph{reflexivity}), \index{reflexive}\index{reflexivity} $d(x,y) = d(y,x)$ (\emph{symmetry}), \index{symmetric}\index{symmetry} and $d(x,z) \le d(x,y)+d(y,z)$ (\emph{triangle inequality})\index{triangle inequality} for all $x,y,z\in X$. If additionally $d(x,y)=0$ implies $x=y$, $d$ is called a \emph{$\top$-metric} (or a \emph{metric}). A \emph{(pseudo)metric space} is a pair $(X,d)$ where $X$ is a set and $d$ is a (pseudo)metric on $X$.
\end{definition}

A trivial example of a pseudometric is the constant $0$-function on any set whereas a simple example of a metric is the so-called \emph{discrete metric} which can be defined on any set $X$ as $d(x,x) = 0$ for all $x \in X$ and $d(x,y) = \top$ for all $x,y \in X$ with $x \not = y$.

The set of (pseudo)metrics over a fixed set $X$ is a complete lattice (since $\reals$ is) with respect to the pointwise order, i.e., for $d_1,d_2 \colon X \times X \to [0,\top]$ we define $d_1 \leq d_2$ if and only if $d_1(x, x') \leq d_2(x,x')$ holds for all $x,x' \in X$.

\begin{lemma}[Lattice of Pseudometrics]
  \label{le:lattice-pseudo}
  Let $X$ be a set. Then the set of pseudometrics over $X$, i.e.,
  $D_X = \{ d \mid d \colon X \times X \to \reals\ \land\ d\
  \text{pseudometric}\}$, endowed with the pointwise order, is a
  complete lattice. The join of a set of pseudometrics
  $D \subseteq D_X$ is
  $(\sup D)(x,y) = \sup \{ d(x,y) \mid d \in D \}$ for all $x, y \in X$. The meet of $D$ is $\inf D = \sup \{ d \mid d \in D_X \land\ \forall d' \in D.\ d \leq d'\}$.
\end{lemma}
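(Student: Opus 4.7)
The plan is to verify directly that the pointwise supremum of a family of pseudometrics is again a pseudometric, and then obtain all meets for free from a standard order-theoretic argument, thereby getting completeness of the lattice.

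First, I would fix a family $D \subseteq D_X$ and define $\hat{d}(x,y) := \sup\{d(x,y) \mid d \in D\}$, which is well-defined because the supremum is taken in the complete lattice $\reals$. To see that $\hat{d} \in D_X$, I check the three pseudometric axioms. Reflexivity: $\hat{d}(x,x) = \sup\{d(x,x) \mid d \in D\} = \sup\{0\} = 0$ (taking the supremum of the empty set to be $0$, the bottom of $\reals$, handles the case $D = \emptyset$ and yields the constant-$0$ pseudometric, which is the bottom of $D_X$). Symmetry follows pointwise from symmetry of each $d \in D$. For the triangle inequality, for every $d \in D$ and every $x,y,z \in X$,
\[
 d(x,z) \le d(x,y) + d(y,z) \le \hat{d}(x,y) + \hat{d}(y,z),
\]
so taking the supremum on the left over $d \in D$ gives $\hat{d}(x,z) \le \hat{d}(x,y) + \hat{d}(y,z)$. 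Thus $\hat{d}$ is a pseudometric, and by construction it is the least upper bound of $D$ in the pointwise order, establishing the formula for joins.

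For the meet, I would invoke the standard fact that a poset in which every subset has a join is automatically a complete lattice, with meets given by the join of the set of lower bounds. Concretely, given $D \subseteq D_X$, let $L = \{ d \in D_X \mid \forall d' \in D.\ d \le d'\}$. This set is non-empty since the constant-$0$ pseudometric always belongs to $L$, so $\sup L$ exists in $D_X$ by the previous step. It remains a routine check that $\sup L$ is simultaneously a lower bound of $D$ (since each element of $L$ is, and being $\le d'$ pointwise is preserved by pointwise suprema) and the greatest such lower bound (since any other lower bound lies in $L$ and hence is dominated by $\sup L$). This gives exactly the formula $\inf D = \sup\{ d \in D_X \mid \forall d' \in D.\ d \le d'\}$.

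No step strikes me as genuinely hard; the only mild subtlety is ensuring the bottom element exists so that the set of lower bounds is never empty, and noticing that the triangle inequality passes through suprema because the right-hand side is a sum of two suprema that each dominate the corresponding summand for every $d \in D$. The meet formula is not the naive pointwise infimum of the $d \in D$ (which in general fails the triangle inequality), so it is worth flagging that the asymmetry between the explicit join formula and the indirect meet formula is essential rather than a matter of presentation.
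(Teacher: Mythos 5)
Your proof is correct and follows essentially the same route as the paper: verify that the pointwise supremum of pseudometrics is again a pseudometric (with the triangle inequality passing through suprema in the same way), and then obtain meets from the general fact that a poset with all joins is a complete lattice. Your extra care about the empty family and the bottom element is a welcome but inessential refinement of what the paper does.
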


\begin{proof}
  Let $D \subseteq D_X$. We first show that
  $d' \colon X \times X \to \reals$ defined by
  $d'(x,y) = \sup \{ d(x,y) \mid d \in D \}$ for all $x, y \in X$ is a
  pseudometric. Reflexivity and symmetry follows immediately from the fact that all $d \in D$ are pseudometrics. Concerning the triangle inequality, for all $x,y,z\in X$:
  \begin{align*}
    d'(x,y) + d'(y,z) &= \sup_{d\in D} d(x,y) + \sup_{d\in D} d (y,z)\\
                      &\geq\sup_{d\in D} \big(d(x,y) + d (y,z)\big) \geq \sup_{d\in D} d (x,z) = d'(x,z)\,.
  \end{align*}
  Since $d'$ is a pseudometric, it immediately follows that
  $\sup D = d$, as desired.
  
  The assertion about the meet is a completely general fact in
  complete lattices.
\end{proof}

Observe that while the join of a set of pseudometrics is given by the pointwise supremum, in general, the meet can be smaller than the pointwise infimum. Hereafter,  joins and meets of sets of pseudometrics will be implicitly taken in the corresponding lattice.

We will see later (\Cref{prop:PMetcomplete}) that the completeness of the lattice of  pseudometrics ensures that the category whose objects are pseudometric spaces is complete and cocomplete.

\subsection{Calculating with (Extended) Real Numbers}
By $d_e\colon \reals^2\to \reals$ we denote the ordinary Euclidean metric\index{Euclidean metric} on $\reals$, i.e., $d_e(x,y) = |x-y|$ for $x,y\in\reals\setminus\{\infty\}$, and -- where appropriate -- $d_e(x,\infty) = \infty$ if $x\neq \infty$ and $d_e(\infty,\infty) = 0$. Addition is defined in the usual way, in particular $x + \infty = \infty$ for $x\in\prealinf$. 

In the following lemma we rephrase the well-known fact that for $a,b,c \in [0,\infty)$ we have $|a-b| \leq c \iff a-b \leq c\, \land\, b-a \leq c$ to include the cases where $a,b,c$ might be $\infty$. The proof is a simple case distinction and hence omitted.

\begin{lemma}
	\label{lem:sum-vs-dist}
	For $a,b,c\in\prealinf$ we have $d_e(a,b)\leq c$ if and only if $(a \leq b+c)$ and $(b \leq a+c)$.
\end{lemma}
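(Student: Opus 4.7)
The statement is really just a careful case analysis covering whether any of $a,b,c$ equals $\infty$; as the authors note, it is standard when all three are finite, so the heart of the proof is handling the extended values. My plan is to dispose of the finite case by citing the classical equivalence $|a-b|\le c \iff a\le b+c \,\land\, b\le a+c$, then run through the cases involving $\infty$.

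\textbf{Case analysis.} First I would observe that whenever $c=\infty$, both sides are vacuously true: the left side because $d_e(a,b)\le\infty$ always holds, and the right because $a\le b+\infty=\infty$ and symmetrically. So I may assume $c<\infty$. Next, if $a=b=\infty$, then $d_e(a,b)=0\le c$ and $a\le b+c$, $b\le a+c$ both reduce to $\infty\le\infty$, so again both sides are true. If exactly one of $a,b$ equals $\infty$, say $a=\infty$ and $b<\infty$ (the other subcase is symmetric by the symmetry of both sides in $a$ and $b$), then by definition $d_e(a,b)=\infty$, so the left side fails since $c<\infty$; on the right, the inequality $a\le b+c$ becomes $\infty\le b+c$, which also fails since $b,c<\infty$. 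Thus both sides are false, and they agree. Finally, when $a,b,c$ are all finite the equivalence is the classical statement about absolute values.

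\textbf{Main obstacle.} There is no genuine obstacle: the whole argument is bookkeeping over cases, which is exactly why the authors chose to omit it. The only mild pitfall is making sure the case $a=\infty$, $b<\infty$, $c<\infty$ (and its mirror) is handled so that both sides come out false, and not inadvertently assuming $a-b$ is well-defined in $\prealinf$; working directly from the definitions of $d_e$ and of $+$ on $\prealinf$ avoids this.
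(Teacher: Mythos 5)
Your case analysis is correct and complete, and it is exactly the "simple case distinction" that the paper declares and omits: dispose of $c=\infty$, handle the cases where one or both of $a,b$ are $\infty$, and reduce the all-finite case to the classical absolute-value equivalence. Nothing is missing.
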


We continue with another simple calculation involving this extended Euclidean distance which will turn out to be useful. Again, we omit the straightforward proof.

\begin{lemma}
	\label{lem:max-sum}
	For a finite set $A$ and functions $f,g\colon A \to [0,\infty]$ we have
	\begin{enumerate}
		\item $d_e\big(\max_{a \in A}f(a),\max_{a \in A}g(a)\big) \leq \max_{a \in A} d_e\big(f(a), g(a)\big)$, and
		\item $d_e\left(\sum_{a \in A}f(a),\sum_{a \in A}g(a)\right) \leq \sum_{a \in A} d_e\big(f(a), g(a)\big)$.
	\end{enumerate}
\end{lemma}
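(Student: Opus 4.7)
The plan is to reduce both claims to Lemma~\ref{lem:sum-vs-dist}, which characterizes $d_e(a,b)\le c$ by the two one-sided inequalities $a\le b+c$ and $b\le a+c$. Once in that form, both parts become pointwise statements combined with the monotonicity of $\max$ and $\sum$, respectively. The only subtlety is dealing with possibly infinite values, which is handled by a short case analysis rather than real work.

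For part (1), write $c := \max_{a\in A} d_e(f(a),g(a))$. By Lemma~\ref{lem:sum-vs-dist}, for every $a\in A$ we have $f(a)\le g(a)+d_e(f(a),g(a)) \le \max_{a'\in A} g(a') + c$, and taking the maximum over $a$ on the left yields $\max_a f(a) \le \max_a g(a) + c$. The symmetric argument, swapping $f$ and $g$, gives $\max_a g(a) \le \max_a f(a) + c$. Invoking Lemma~\ref{lem:sum-vs-dist} once more in the reverse direction then gives $d_e\bigl(\max_a f(a),\max_a g(a)\bigr) \le c$, which is what we want.

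For part (2), let $s := \sum_{a\in A} d_e(f(a),g(a))$. Again by Lemma~\ref{lem:sum-vs-dist}, for each $a$ we have $f(a) \le g(a) + d_e(f(a),g(a))$; summing over the finite set $A$ (where sums are well-defined in $[0,\infty]$) gives $\sum_a f(a) \le \sum_a g(a) + s$. The symmetric inequality follows the same way, and Lemma~\ref{lem:sum-vs-dist} then yields $d_e\bigl(\sum_a f(a),\sum_a g(a)\bigr) \le s$.

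The only step that really needs a second glance is the bookkeeping around $\infty$. If some $f(a)$ or $g(a)$ equals $\infty$, then either both sides of the corresponding summand/maximum are $\infty$ with $d_e = 0$, or exactly one side is $\infty$, in which case the bound on the right-hand side is itself $\infty$ (since $d_e$ of that term is $\infty$) and the inequality holds trivially. This is the only potential obstacle, and it is dispatched by this short case distinction; everything else is routine.
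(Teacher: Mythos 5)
Your proof is correct. The paper itself omits the proof of this lemma, declaring it straightforward, so there is nothing to compare against; your reduction to \Cref{lem:sum-vs-dist} (split $d_e(a,b)\le c$ into the two one-sided inequalities, push each through the monotonicity of $\max$ resp.\ $\sum$ on $[0,\infty]$, then recombine) is exactly the kind of routine argument the authors had in mind, and since \Cref{lem:sum-vs-dist} is already stated for all of $[0,\infty]$, even your closing case analysis on infinite values is not strictly needed — the argument goes through uniformly.
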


\subsection{Pseudometrics Categorically}
Having established these two intermediary results, we recall that we want to work in a category whose objects are pseudometric spaces. In order to do so we need to define the arrows between them. While there are other, topologically motivated possibilities (taking the \emph{continuous} or \emph{uniformly continuous} functions with respect to the pseudometric topology), we just require that our functions do not increase distances.

\begin{definition}[Nonexpansive Function, Isometry]
	\label{def:nonexpansive-fct}
	\index{nonexpansive function}
	\index{function!nonexpansive}
	\index{isometry}
	\index{function!isometry}
	Let $\top \in \ ]0,\infty]$ be an extended real number and $(X,d_X)$ and $(Y,d_Y)$ be $\top$-pseudometric spaces. We call a function $f\colon X\to Y$ \emph{nonexpansive} if $d_Y \circ (f \times f) \le d_X$. In this case we write $f \colon (X, d_X) \nonexpansiveTo (Y, d_Y)$. If equality holds $f$ is called an \emph{isometry}. 
      \end{definition}

\new{Observe that that isometries are not required to be surjective. Moreover, since we work with pseudometrics, where distinct elements can have distance $0$, they are not necessarily injective.}
Note also that in this definition we have used a category theoretic mind-set and written the inequality in an ``element-free'' version as it will be easier to use in some of the subsequent results. Of course, this inequality is equivalent to requiring $d_Y\big(f(x),f(x')\big) \leq d_X(x,x')$ for all $x,x' \in X$. Simple examples of nonexpansive functions -- even isometries -- are of course the identity functions on a pseudometric space. 

Apparently, if these functions shall be the arrows of a category, we will have to check that nonexpansiveness is preserved by function composition.

\begin{lemma}[Composition of Nonexpansive Functions]
	Let $\top \in\ ]0, \infty]$, and $(X,d_X)$, $(Y,d_Y)$ and $(Z, d_Z)$ be $\top$-pseudometric spaces. If $f \colon (X, d_X) \nonexpansiveTo (Y, d_Y)$ and $g \colon (Y, d_Y) \nonexpansiveTo (Z, d_Z)$ are nonexpansive, then so is $g \circ f\colon (X, d_X) \nonexpansiveTo (Z, d_Z)$.
\end{lemma}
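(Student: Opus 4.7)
The plan is to verify nonexpansiveness of $g \circ f$ directly from the element-free definition, by chaining the two given nonexpansiveness inequalities. The key observation is the functorial identity $(g \circ f) \times (g \circ f) = (g \times g) \circ (f \times f)$ on the product, which lets us reassociate the composition so that the inner application of $f \times f$ can be bounded using nonexpansiveness of $f$, and the outer application of $d_Z \circ (g \times g)$ can be bounded using nonexpansiveness of $g$.

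Concretely, I would first rewrite
\[
d_Z \circ \bigl((g \circ f) \times (g \circ f)\bigr) = d_Z \circ (g \times g) \circ (f \times f).
\]
Then, using $d_Z \circ (g \times g) \le d_Y$ (nonexpansiveness of $g$) and composing with $f \times f$ on the right (which preserves the pointwise order since it is the same on both sides), I obtain $d_Z \circ (g \times g) \circ (f \times f) \le d_Y \circ (f \times f)$. Finally, nonexpansiveness of $f$ gives $d_Y \circ (f \times f) \le d_X$, and transitivity of $\le$ concludes.

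Alternatively, and perhaps more transparently for readers who prefer the pointwise formulation, for any $x, x' \in X$ one computes
\[
d_Z\bigl((g \circ f)(x), (g \circ f)(x')\bigr) = d_Z\bigl(g(f(x)), g(f(x'))\bigr) \le d_Y(f(x), f(x')) \le d_X(x, x'),
\]
where the first inequality uses nonexpansiveness of $g$ applied to the points $f(x), f(x') \in Y$, and the second uses nonexpansiveness of $f$.

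There is no real obstacle here: the lemma is essentially immediate from the definition once the composition is unfolded. The only subtlety worth flagging is that the argument implicitly relies on monotonicity of pre-composition with respect to the pointwise order on functions into $\reals$, which is evident but should arguably be mentioned for the element-free version.
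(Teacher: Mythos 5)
Your proof is correct and follows exactly the same route as the paper's: rewrite $(g\circ f)\times(g\circ f)$ as $(g\times g)\circ(f\times f)$ and chain the two nonexpansiveness inequalities $d_Z\circ(g\times g)\le d_Y$ and $d_Y\circ(f\times f)\le d_X$. The additional pointwise formulation and the remark about monotonicity of pre-composition are fine elaborations but do not change the argument.
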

\begin{proof}
	Using nonexpansiveness of $g$ and $f$ we immediately conclude that $d_Z \circ \big((g \circ f) \times (g \circ f)\big) = d_Z \circ (g \times g) \circ (f \times f) \leq d_Y \circ (f \times f) \leq d_X$ which is the desired nonexpansiveness of $g \circ f$.
\end{proof}

With this result at hand we now give our category a name. Please note that the definition below actually defines a whole family of categories, parametrized by the chosen maximal element. Despite this, we will just speak of \emph{the} category of pseudometric spaces and keep in mind that there are (uncountably) many with the same properties. 

\begin{definition}[Category of Pseudometric Spaces]
	\label{def:category-metric}
	\index{category of pseudometric spaces}
	Let $\top \in\ ]0,\infty]$ be a fixed maximal element. The category $\PMet$ has as objects all pseudometric spaces whose pseudometrics have codomain $[0,\top]$. The arrows are the nonexpansive functions between these spaces. The identities are the (isometric) identity functions and composition of arrows is function composition. 
\end{definition}

This category is complete and cocomplete which in particular implies that it has products and coproducts. We will later see that the respective product and coproduct pseudometrics also arise as special instances of our lifting framework (see \Cref{lem:binary-prod-PMet,lem:coproduct-pseudometric}). 

\begin{theorem}\label{prop:PMetcomplete}
	$\PMet$ is bicomplete, i.e., it is complete and cocomplete.
\end{theorem}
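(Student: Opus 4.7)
The plan is to prove bicompleteness by constructing limits and colimits of small diagrams in $\PMet$ directly, lifting the corresponding $\Set$-constructions along the forgetful functor $U\colon \PMet \to \Set$ and exploiting the lattice structure of pseudometrics established in \Cref{le:lattice-pseudo}.

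For the \emph{limit} of a diagram $D\colon \mathbb{J} \to \PMet$, first take the $\Set$-limit $(L, (\pi_i\colon L \to U(D(i)))_{i\in \mathbb{J}})$ of $U\circ D$, which exists since $\Set$ is complete. I would equip $L$ with the pointwise supremum
\[
d_L(x,y) := \sup_{i \in \mathbb{J}} d_{D(i)}(\pi_i(x), \pi_i(y)),
\]
which is a pseudometric on $L$ with values in $\reals$ by \Cref{le:lattice-pseudo}. Each $\pi_i$ is nonexpansive by construction. Given a cone $(f_i\colon (Y, d_Y) \nonexpansiveTo D(i))_{i \in \mathbb{J}}$, the unique mediating map $f\colon Y \to L$ in $\Set$ satisfies $d_L(f(y), f(y')) = \sup_i d_{D(i)}(f_i(y), f_i(y')) \leq d_Y(y,y')$ by nonexpansiveness of each $f_i$, hence $f$ is nonexpansive and uniqueness is inherited from $\Set$.

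For the \emph{colimit}, take the $\Set$-colimit $(C, (q_i\colon U(D(i)) \to C)_{i\in\mathbb{J}})$ of $U \circ D$. Let $S$ be the set of pseudometrics $d$ on $C$ making every coprojection $q_i\colon (D(i), d_{D(i)}) \nonexpansiveTo (C, d)$ nonexpansive. The set $S$ is nonempty (it contains the constantly $0$ pseudometric), so by \Cref{le:lattice-pseudo} it admits a join $d_C := \sup S$ in the lattice of pseudometrics, and each $q_i$ remains nonexpansive with respect to $d_C$ since $d_C(q_i(x), q_i(x')) = \sup_{d \in S} d(q_i(x), q_i(x')) \leq d_{D(i)}(x,x')$. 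For the universal property, given a cocone $(g_i\colon D(i) \nonexpansiveTo (Y, d_Y))_{i \in \mathbb{J}}$, let $g\colon C \to Y$ be the mediating map in $\Set$. The ``pullback'' pseudometric $d'(x,y) := d_Y(g(x), g(y))$ belongs to $S$, because $d'(q_i(x), q_i(x')) = d_Y(g_i(x), g_i(x')) \leq d_{D(i)}(x,x')$; therefore $d' \leq d_C$, which is exactly nonexpansiveness of $g\colon (C, d_C) \nonexpansiveTo (Y, d_Y)$.

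The verifications are essentially routine once one observes that the lattice structure from \Cref{le:lattice-pseudo} does all the heavy lifting; the only subtlety — and thus the main point to get right — is the asymmetry between the two constructions: limits use the pointwise supremum of pulled-back pseudometrics, whereas colimits require taking the join inside the lattice of pseudometrics (which may be strictly smaller than the pointwise supremum of the defining family, but nevertheless exists and still makes the coprojections nonexpansive).
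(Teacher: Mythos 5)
Your proof is correct and follows essentially the same route as the paper: limits carry the pointwise supremum of the pulled-back pseudometrics, and colimits carry the join of all pseudometrics making the coprojections nonexpansive, with the pullback pseudometric $d_Y\circ(g\times g)$ witnessing nonexpansiveness of the mediating map. One small slip in your closing remark: by \Cref{le:lattice-pseudo} the \emph{join} of a set of pseudometrics \emph{is} the pointwise supremum (it is the \emph{meet} that can differ from the pointwise infimum), so the colimit pseudometric is not smaller than the pointwise supremum over $S$ --- your construction is unaffected, since you compute with the pointwise supremum anyway.
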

\begin{proof}
	Let $U\colon \PMet \to \Set$ be the forgetful functor which maps every pseudometric space to its underlying set and each nonexpansive function to the underlying function. Moreover, let $D\colon I \to \PMet$ be a small diagram, and define $(X_i, d_i) := D(i)$ for each object $i \in I$. Obviously $UD\colon I \to \Set$ is also a small diagram. We show completeness and cocompleteness separately.
	\medskip\\
	\noindent\emph{Completeness}: Let $(f_i\colon X \to X_i)_{i \in I}$ be the limit cone to $UD$.
	
	Observe that for any $i \in I$, $d_i~ \circ (f_i \times f_i)\colon X^2 \to \reals$ is a pseudometric and define the pseudometric $d:=\sup_{i\in I} d_i~ \circ (f_i \times f_i)\colon X^2 \to \reals$ as provided by \Cref{le:lattice-pseudo}.

	With this pseudometric all $f_j$ are nonexpansive functions $(X, d) \nonexpansiveTo (X_j,d_j)$ because for all $j \in I$ and all $x,y \in X$ we have $d_j\big(f_j(x), f_j(y)\big) \leq \sup_{i \in I} d_i\big(f_i(x), f_i(y)\big) = d(x,y)$.
	
	Moreover, if $\big(f_i'\colon (X',d') \nonexpansiveTo (X_i,d_i)\big)_{i \in I}$ is a cone to $D$, $(f_i'\colon X' \to X_i)_{i \in I}$ is a cone to $UD$ and hence there is a unique function $g\colon X' \to X$ in $\Set$ satisfying $f_i \circ g = f_i'$ for all $i \in I$. We finish our proof by showing that this is a nonexpansive function $(X',d') \nonexpansiveTo (X,d)$. By nonexpansiveness of the $f_i'$ we have for all $i \in I$ and all $x,y \in X'$ that $d_i(f_i'(x),f_i'(y)) \leq d'(x,y)$ and thus also 
	\begin{align*}
	d\big(g(x),g(y)\big) &= \sup_{i \in I} d_i\Big(f_i\big(g(x)\big), f_i\big(g(y)\big)\Big) = \sup_{i \in I} d_i\big(f_i'(x), f_i'(y) \big)\leq \sup_{i \in I}d'(x,y) = d'(x,y)\,.
	\end{align*}
	We conclude that $\big(f_i\colon (X,d) \nonexpansiveTo (X_i,d_i)\big)_{i \in I}$ is a limit cone to $D$.
	\medskip\\\noindent\emph{Cocompleteness:} Let $(f_i\colon X_i \to X)_{i \in I}$ be the colimit co-cone from $UD$ and $M_X$ be the set of all pseudometrics $d_X\colon X^2 \to \reals$ on $X$ such that the $f_i$ are nonexpansive functions $(X_i,d_i) \nonexpansiveTo (X,d_X)$. We define the pseudometric $d:=\sup_{d_X \in M_X} d_X$, as given by Lemma~\ref{le:lattice-pseudo}. With this pseudometric all $f_j$ are nonexpansive functions $(X_j, d_j) \nonexpansiveTo (X,d)$ because for all $j \in I$ and all $x,y \in X_j$ we have $d\big(f_j(x), f_j(y)\big) = \sup_{d_X \in M_X} d_X\big(f_j(x), f_j(y)\big) \leq \sup_{d_X \in M_X} d_j(x,y) = d_j(x,y)$.
	
	Moreover, if $\big(f_i'\colon (X_i,d_i) \nonexpansiveTo (X',d')\big)_{i \in I}$ is a co-cone from $D$, $(f_i'\colon X_i \to X')_{i \in I}$ is a co-cone from $UD$ and hence there is a unique function $g\colon X \to X'$ in $\Set$ satisfying $g \circ f_i = f_i'$ for all $i \in I$. We finish our proof by showing that this is a nonexpansive function $(X,d) \nonexpansiveTo (X',d')$. Let $d_g:=d'\circ (g\times g)\colon X^2 \to \reals$, then it is easy to see that $d_g$ is a pseudometric on $X$. Moreover, for all $i \in I$ and all $x,y \in X_i$ we have 
	\begin{align*}
	d_g\big(f_i(x), f_i(y)\big) = d'\Big(g\big(f_i(x)\big), g\big(f_i(y)\big)\Big) = d'\big(f_i'(x), f_i'(y)\big) \leq d_i(x,y)
	\end{align*}
	due to nonexpansiveness of $f_i'\colon (X_i,d_i) \nonexpansiveTo (X',d')$. Thus all $f_i$ are nonexpansive if seen as functions $(X_i, d_i) \nonexpansiveTo (X,d_g)$ and we have $d_g \in M_X$. Using this we observe that for all $x,y \in X$ we have $d'\big(g(x),g(y)\big) = d_g(x,y) \leq \sup_{d_X \in M_X} d_X(x,y) = d(x,y)$ which shows that $g$ is a nonexpansive function $ (X', d') \nonexpansiveTo (X,d)$. We conclude that indeed $\big(f_i\colon (X_i,d_i) \nonexpansiveTo (X,d)\big)_{i \in I}$ is a colimit co-cone from $D$.\qedhere
\end{proof}

For our purposes it will turn out to be useful to consider the following alternative characterization of the triangle inequality using the concept of nonexpansive functions.

\begin{lemma}
	\label{lem:alt-char-triangle}
	A reflexive and symmetric function (see \Cref{def:pseudometric}) $d\colon X^2\to \reals$ satisfies the triangle inequality if and only if for all $x \in X$ the function $d(x,\_)\colon (X,d) \to (\reals,d_e)$ is nonexpansive.
\end{lemma}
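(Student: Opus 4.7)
The plan is to unfold both sides of the equivalence to concrete inequalities and then match them up, relying crucially on \Cref{lem:sum-vs-dist} to rewrite the extended Euclidean distance as a pair of ordinary inequalities, and on symmetry of $d$ to align indices.

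First I would expand what nonexpansiveness of $d(x,\_)\colon (X,d) \to (\reals,d_e)$ means. By definition it amounts to the inequality
\[
	d_e\bigl(d(x,y),\, d(x,z)\bigr) \le d(y,z) \qquad \text{for all } y,z \in X.
\]
Applying \Cref{lem:sum-vs-dist} with $a = d(x,y)$, $b = d(x,z)$, $c = d(y,z)$, this is equivalent to the conjunction
\[
	d(x,y) \le d(x,z) + d(y,z) \quad\text{and}\quad d(x,z) \le d(x,y) + d(y,z).
\]
Note that the two inequalities obtained this way are interchangeable by swapping the roles of $y$ and $z$, so asserting the nonexpansiveness of $d(x,\_)$ for all $x$ is equivalent to asserting just one of them, uniformly over all triples $x, y, z \in X$.

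For the forward direction I would assume the triangle inequality and, given $x, y, z$, simply write $d(x,y) \le d(x,z) + d(z,y) = d(x,z) + d(y,z)$, where the equality uses symmetry of $d$; swapping $y$ and $z$ gives the other required inequality. The pair then yields nonexpansiveness via \Cref{lem:sum-vs-dist}. For the backward direction I would take arbitrary $a, b, c \in X$ and apply nonexpansiveness of $d(a,\_)$ at the points $b, c$ to obtain $d_e\bigl(d(a,b), d(a,c)\bigr) \le d(b,c)$; by \Cref{lem:sum-vs-dist} this implies $d(a,c) \le d(a,b) + d(b,c)$, which is the triangle inequality for the triple $(a,b,c)$.

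There is essentially no obstacle: the whole argument is a direct bookkeeping exercise once \Cref{lem:sum-vs-dist} is invoked. The only subtlety worth flagging is the use of symmetry to turn $d(x,z) + d(z,y)$ into $d(x,z) + d(y,z)$ so that the shape matches the nonexpansiveness inequality, and the observation that symmetry also lets one of the two inequalities supplied by \Cref{lem:sum-vs-dist} be dropped (since quantifying over all $x$ already covers both orientations).
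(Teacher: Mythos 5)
Your proof is correct. The forward direction is identical to the paper's: derive the two inequalities $d(x,y) \le d(x,z)+d(y,z)$ and $d(x,z) \le d(x,y)+d(y,z)$ from the triangle inequality plus symmetry, then invoke \Cref{lem:sum-vs-dist}. Your backward direction, however, takes a slightly different (and arguably more economical) route: you apply \Cref{lem:sum-vs-dist} a second time, in the other direction, to unpack $d_e\bigl(d(a,b),d(a,c)\bigr)\le d(b,c)$ into the conjunction of two sum inequalities and simply read off the triangle inequality from one of them. The paper instead writes $d(x,z) = d_e\bigl(d(x,x),d(x,z)\bigr)$ using reflexivity of $d$, applies the triangle inequality of $d_e$, and then uses nonexpansiveness twice. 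Your version needs only one application of nonexpansiveness and does not use reflexivity at all in that direction; the paper's version avoids a second appeal to \Cref{lem:sum-vs-dist}. Both are valid, and the difference is purely a matter of which elementary facts one prefers to lean on.
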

\begin{proof}
	Let $d\colon X^2 \to [0,1]$ be a reflexive and symmetric function. We show both implications separately for all $x,y,z \in X$.
	\begin{itemize}
		\item Using the triangle inequality and symmetry we know that $d(x,y) \leq d(x,z) + d(y,z)$ and $d(x,z) \leq d(x,y) + d(y,z)$. With \Cref{lem:sum-vs-dist} we conclude that $d_e(d(x,y),d(x,z)) \leq d(y,z)$.

		\item Using reflexivity of $d$, the triangle inequality for $d_e$ and nonexpansiveness of $d(x,\_)$ we get $d(x,z) = d_e(d(x,x),d(x,z)) \leq d_e(d(x,x),d(x,y)) + d_e(d(x,y),d(x,z)) \leq d(x,y) + d(y,z)$.\qedhere
	\end{itemize}
\end{proof}

\section{Examples of Behavioral Distances}
\label{sec:motivating-examples}
Equipped with this basic knowledge about our pseudometrics, let us now take a look at a few examples which we will use throughout the paper to demonstrate our theory. All the claims we make in these examples will be justified by our results. Our first example are the purely probabilistic systems like the system in \Cref{fig:probsys2} in the beginning. 

\begin{example}[Probabilistic Systems and Behavioral Distance \cite{vBW06}]
	\label{ex:probabilistic-1}
	The probability distribution functor $\Distributions\colon \Set \to \Set$ maps each set $X$ to the set of all probability distributions on it, i.e., $\Distributions X = \set{p \colon X \to [0,1]\mid \sum_{x \in X}p(x) = 1}$ and each function $f \colon X \to Y$ to the function $\Distributions f \colon \Distributions X \to \Distributions Y$ with $\Distributions f(p)(y) = \sum_{x \in f^{-1}[\{y\}]}p(x)$. We consider purely probabilistic transition systems without labels as coalgebras of the form $\alpha\colon X\to \Distributions (X+\one)$. Thus $\alpha(x)(y)$, for $x,y\in X$, is the probability of a transition from a state $x$ to $y$ and $\alpha(x)(\checkmark)$ gives the probability of terminating in $x$.
	
	\name[Franck]{Van Breugel} and \name[James]{Worrell} \cite{vBW06} introduced a metric for a continuous version of these systems by considering a discount factor $c \in\,]0,1[$. Instantiating their framework to the discrete case we obtain the behavioral distance $d\colon X^2\to [0,1]$, defined as the least solution (with respect to the order $d_1 \leq d_2 \iff \forall x,y \in X.d_1(x,y) \leq d_2(x,y)$) of the equation $d(x,y) = \overline{d}\big(\alpha(x),\alpha(y)\big)$ for all $x,y\in X$. The lifted pseudometric $\overline{d}\colon (\Distributions (X+\one))^2\to [0,1]$ is defined in two steps: 
	\begin{itemize}
		\item First, $\widehat{d}\colon (X+\one)^2\to [0,1]$ is defined as $\widehat{d}(x,y) = c\cdot d(x,y)$ if $x,y\in X$, $\widehat{d}(\checkmark,\checkmark) = 0$ and $\widehat{d}(x,y) = 1$ otherwise. 
		\item Then, for all $P_1,P_2\in \Distributions
                  (X+\one)$, $\overline{d}(P_1,P_2)$ is defined as the
                  supremum of all values $\sum_{x \in X+\one} f(x)\cdot (P_1(x) - P_2(x))$, with $f \colon (X+\one,\widehat{d}) \nonexpansiveTo ([0,1],d_e)$ being an arbitrary nonexpansive function.
	\end{itemize}
	
	\noindent Our concrete example from \Cref{fig:probsys2} is an instance of such a system and if we employ the aforementioned approach the behavioral distance of $u$ and $z$ is $d(u,z) = 1$ and hence $d(x,y) = c\cdot \epsilon$. We will see in \Cref{ex:probabilistic-2} that this example can be captured by our framework. Moreover, we will also see that it is possible to set $c = 1$ resulting in $d(x,y) = \epsilon$.
\end{example}

It is easy to see that also the state space of a deterministic automaton can be equipped with a pseudometric which arises as a solution of a fixed point equation.

\begin{example}[Bisimilarity Pseudometric for Deterministic Automata]
	\label{exa:bisim-pseudometric-da}
	We consider deterministic automata as coalgebras $\langle o, s\rangle \colon X \to \two \times X^A$ in $\Set$. For each state $x \in X$ the output value $o(x)$ determines whether $x$ is final ($o(x) = 1$) or not ($o(x)=0$) and the successor function $s(x)\colon A \to X$ determines, for each label $a \in A$, the unique $a$-successor $s(x)(a) \in X$. Given a pseudometric $d \colon X^2 \to [0,\new{1}]$, we obtain a new pseudometric $\LiftedMetric{F}{d}$ on $\two \times X^A$ by defining, for every $(o_1,s_1), (o_2,s_2) \in \two \times X^A$,
	\begin{align*}
	\LiftedMetric{F}{d}\big((o_1,s_1), (o_2, s_2)\big) = \max\set{d_\two(o_1,o_2), c \cdot \max_{a \in A}d\big(s_1(a), s_2(a)\big)}\,,
	\end{align*}
	where $c \in \,]0,1]$ a discount factor and $d_\two$ is the discrete $1$-bounded metric on $\two$. Using our coalgebra $\langle o, s \rangle \colon X \to \two \times X^A$ we get a fixed point equation on the complete lattice of pseudometrics on $X$ by requiring, for all $x_1, x_2 \in X$, 
	\begin{align*}
	d(x_1,x_2) = \max\set{d_\two\big(o(x_1),o(x_2)\big), c \cdot \max_{a \in A}d\big(s(x_1)(a), s(x_2)(a)\big)}\,.
	\end{align*}
	If we take \new{the least} fixed point of this equation, the
        distance of two states $x_1$ and $x_2$ will be $\new{1}$ if
        one state is final and the other is not. Otherwise their
        distance is the $c$-discounted maximal distance of their
        successors. \new{Note that for $c=1$ we obtain a discrete
          metric, where the distance of two states is $0$ if they
          accept the same language and $1$ otherwise, hence discount
          factors $c<1$ yield more interesting metrics.}
\end{example}

Let us finally consider so-called \emph{metric transition systems} \cite{dAFS09}. Each state of such a system comes equipped with a function which maps elements of a set of so-called \emph{propositions} to a kind of non-discrete truth value in a pseudometric space.

\begin{example}[{Branching Distance for Metric Transition Systems \cite{dAFS09}}]
	\label{ex:metric-ts-1}
	Let $\Sigma =\set{r_1,\dots,r_n}$ be a finite set of \emph{propositions}\index{proposition} where each proposition $r\in\Sigma$ is associated with a pseudometric space $(M_r,d_r)$ which is bounded, i.e.,  we must have a finite $\top \in\ ]0,\infty[$ such that $d_r \colon M_r^2 \to [0,\top]$. A \emph{valuation}\index{valuation} of $\Sigma$ is a function $u \colon \Sigma \to \cup_{r \in \Sigma}M_r$ that assigns to each $r\in\Sigma$ an element of $M_r$, i.e., we require $u(r) \in M_r$. We denote the set of all these valuations by $\mathcal{U}[\Sigma]$ and remark that it is apparently isomorphic to the set $M_1 \times \dots \times M_n$ by means of the bijective function which maps a valuation $u$ to the tuple $ \big(u(r_1),\dots,u(r_n)\big)$.
	
	A \emph{metric transition system}\index{metric transition system} \cite[Def.~6]{dAFS09} is a tuple $(S,\tau,\Sigma,[\cdot])$ with a set $S$ of states, a transition relation $\tau\subseteq S\times S$, a finite set $\Sigma$ of propositions and a function $[\cdot]\colon S \to \mathcal{U}[\Sigma]$ assigning a valuation $[s]$ to each state $s\in S$. We define $\tau(s) := \set{s'\in S\mid (s,s')\in\tau}$ and require that $\tau(s)$ is finite.
	
	The (directed) propositional distance between two valuations $u,v \in \mathcal{U}[\Sigma]$ is given by  \cite[Def.~10]{dAFS09} $\LiftedFunctor{\mathit{pd}}(u,v) = \max_{r\in\Sigma} d_r\big(u(r),v(r)\big)$. The (undirected) branching distance $d\colon S\times S\to\preal$ is defined as \cite[Def.~13]{dAFS09} the smallest fixed-point
	of the following equation, where $s,t\in S$:
	\begin{equation}\label{eq:Hausd}
	d(s,t) = \max\set{\LiftedFunctor{\mathit{pd}}([s],[t]),	\max_{s'\in\tau(s)}\min_{t'\in\tau(t)}  d(s',t'),\max_{t'\in\tau(t)}\min_{s'\in\tau(s)} d(s',t') } 
	\end{equation} 
	Note that, apart from the first argument, this coincides with the Hausdorff distance between the successors of $s$ and $t$.
	
		\begin{figure}[ht]
			\centering
			\begin{tikzpicture}[->,auto,node distance=1cm,semithick,>=latex]
			\tikzstyle{every state}=[fill=black!10,text=black,inner sep=0.5pt, minimum size=15pt]
			\tikzstyle{every node}=[text=black]
			% nodes system 1
			\node[state] (x1) {$x_1$};
			\node (p1) [right=0mm of x1] {$0$};
			\node[state] (x2) [below left=of x1] {$x_2$};
			\node (p2) [right=0mm of x2] {$0.4$};
			\node[state] (x3) [below right=of x1] {$x_3$};
			\node (p3) [left=0mm of x3] {$0.7$};
			% nodes system 2
			\node[state] (y1) [right=3.5cm of x1] {$y_1$};
			\node (q1) [right=0mm of y1] {$0$};
			\node[state] (y2) [below left=of y1] {$y_2$};
			\node (q2) [right=0mm of y2] {$0.5$};
			\node[state] (y3) [below right=of y1] {$y_3$};
			\node (q3) [left=0mm of y3] {$1$};
			% paths system 1
			\path (x1) edge (x2);
			\path (x1) edge (x3);
			\path (x2) edge[loop left] (x2);
			\path (x3) edge[loop right] (x3);
			% paths system 2
			\path (y1) edge (y2);
			\path (y1) edge (y3);
			\path (y2) edge[loop left] (y2);
			\path (y3) edge[loop right] (y3);
			\end{tikzpicture}
			
			\caption{A metric transition system}
			\label{fig:example-ts}
		\end{figure}
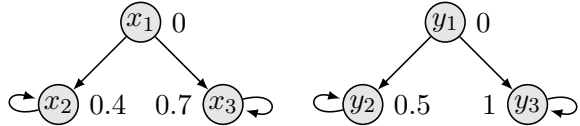
	We consider the concrete example system in \Cref{fig:example-ts}  (\cite[Fig.~1]{dAFS09}) with a single proposition $r\in \Sigma$, where $M_r = [0,1]$ is equipped with the Euclidean distance $d_e$. Since the states $x_2,x_3, y_2, y_3$ only have themselves as successors, computing their distance according to~\eqref{eq:Hausd} is easy: we just have to take the propositional distances of the valuations. This results in $d(x_2,y_2) = |0.4-0.5| = 0.1$, $d(x_2,y_3) = 0.6$, $d(x_3,y_2) = 0.2$, $d(x_3,y_3) = 0.3$. 
	
	Moreover, $\overline{\mathit{pd}}([x_1],[y_1]) = 0$ and thus $d(x_1,y_1)$ equals the Hausdorff distance of the reals associated with the sets of successors which is~$0.3$ (the maximal distance of any successor to the closest successor in the other set of successors, here: the distance from~$y_3$ to~$x_3$).

	In order to model such transition systems as coalgebras we consider the product multifunctor $P\colon \Set^n \to \Set$ where $P(X_1,\dots,X_n) = X_1 \times \dots \times X_n$. Then coalgebras are of the form $c\colon S\to P(M_{r_1},\dots,M_{r_n})\times \PowersetFinite(S)$, where $\PowersetFinite$ is the finite powerset functor and $c(s) = \big([s][r_1], \dots, [s][r_n],\tau(s)\big)$. As we will see later in \Cref{ex:metric-ts-2}, the right-hand side of \eqref{eq:Hausd} can be seen as lifting of a pseudometric $d$ on $X$ to a pseudometric on $P(M_{r_1},\dots,M_{r_n})\times \PowersetFinite(X)$.
\end{example}

We will later see that in all the examples above, we obtain a coalgebraic \emph{bisimilarity pseudometric}: For any coalgebra $c\colon X \to FX$ let us denote the respective least fixed point by $\bd_c \colon X^2 \to [0,\top]$. If a final $F$-coalgebra $z \colon Z \to FZ$ exists and some additional conditions hold (which is the case for all our examples) we have
\begin{align*}
	\bd_c(x,y) = 0 \quad \iff \quad \final{x}_c = \final{y}_c 
\end{align*}
for all $x,y \in X$ where $\final{\cdot}_c \colon X \to Z$ is the unique map into the final coalgebra.

\section[Lifting Functors to Pseudometric Spaces]{Lifting Functors to Pseudometric Spaces\sectionmark{Lifting Functors}}
\sectionmark{Lifting Functors}
\label{sec:lifting}
Generalizing from our examples, we now establish a general framework for deriving behavioral distances. The crucial step is to find, for an endofunctor $F$ on $\Set$, a way to transform a pseudometric on $X$ to a pseudometric on $FX$. This induces a lifting of the functor $F$ in the following sense.\medskip\newline
\begin{minipage}{11.7cm}
\begin{definition}[Lifting to Pseudometric Spaces]
	\label{def:liftingPMet}
	\index{lifting}\index{lifted functor}
	Let $U\colon \PMet \to \Set$ be the forgetful functor which maps every pseudometric space to its underlying set. A functor $\LiftedFunctor{F}\colon\PMet\to\PMet$ is called a \emph{lifting} of a functor $F\colon \Set \to \Set$ if the diagram on the right commutes.
		In this case, for any pseudometric space $(X,d)$, we denote by $\LiftedMetric{F}{d}$ the pseudometric on $FX$ which we obtain by applying $\LiftedFunctor{F}$ to $(X,d)$.
\end{definition}
\end{minipage}\hfill\begin{minipage}{3.4cm}\begin{diagram}
		\matrix[matrix of math nodes, column sep=.5cm, row sep=1.2cm] (m){
			\PMet & \PMet\\
			\Set & \Set\\
		};

		\draw (m-1-1) edge node[above] {$\LiftedFunctor{F}$} (m-1-2);
		\draw (m-1-1) edge node[left] {$U$} (m-2-1);
		\draw (m-2-1) edge node[below] {$F$} (m-2-2);
		\draw (m-1-2) edge node[right] {$U$} (m-2-2);
	\end{diagram}\end{minipage}\medskip
	
Such a lifting is always monotone on pseudometrics in the following sense.

\begin{theorem}[Monotonicity of Lifting]
	\label{prop:monotone}
	Let $\LiftedFunctor{F}\colon \PMet \to \PMet$ be a lifting of $F\colon \Set \to \Set$ and $d_1,d_2 \colon X \times X \to \reals$ be pseudometrics on $X$. Then $d_1 \leq d_2$ implies $\LiftedMetric{F}{d_1} \leq \LiftedMetric{F}{d_2}$. 
\end{theorem}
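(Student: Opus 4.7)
The plan is to exploit functoriality of the lifting applied to the identity map, using the fact that $d_1 \leq d_2$ is equivalent to saying that the identity is nonexpansive from the ``larger'' pseudometric to the ``smaller'' one.

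First I would observe that the hypothesis $d_1 \leq d_2$ precisely means that the set-theoretic identity $\id_X \colon X \to X$, when viewed as a function $(X, d_2) \to (X, d_1)$, is nonexpansive: indeed $d_1(\id_X(x), \id_X(y)) = d_1(x,y) \leq d_2(x,y)$ for all $x,y \in X$. Hence $\id_X$ is an arrow $(X, d_2) \nonexpansiveTo (X, d_1)$ in $\PMet$.

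Next I would apply the lifted functor $\LiftedFunctor{F}$ to this arrow, yielding an arrow $\LiftedFunctor{F}(\id_X) \colon \LiftedFunctor{F}(X,d_2) \nonexpansiveTo \LiftedFunctor{F}(X,d_1)$, that is, $\LiftedFunctor{F}(\id_X) \colon (FX, \LiftedMetric{F}{d_2}) \nonexpansiveTo (FX, \LiftedMetric{F}{d_1})$. By the defining commutative square of a lifting (\Cref{def:liftingPMet}), the underlying set-function of $\LiftedFunctor{F}(\id_X)$ is $F(\id_X) = \id_{FX}$, since $F$ is a functor on $\Set$. Therefore $\id_{FX}$ is nonexpansive as a function $(FX, \LiftedMetric{F}{d_2}) \nonexpansiveTo (FX, \LiftedMetric{F}{d_1})$, which unpacks to $\LiftedMetric{F}{d_1}(u,v) \leq \LiftedMetric{F}{d_2}(u,v)$ for all $u,v \in FX$, i.e.\ $\LiftedMetric{F}{d_1} \leq \LiftedMetric{F}{d_2}$.

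There is essentially no obstacle here: the entire argument is a one-line diagram chase, and its only ingredients are (i) the characterisation of $d_1 \leq d_2$ as nonexpansiveness of $\id_X$ and (ii) the commutativity of the lifting square, which forces $\LiftedFunctor{F}$ to send identities to identities at the level of underlying sets. No structural properties of $F$ beyond functoriality are needed, which is exactly why monotonicity holds for every lifting.
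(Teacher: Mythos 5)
Your proof is correct and is essentially identical to the paper's own argument: both reformulate $d_1 \leq d_2$ as nonexpansiveness of the identity map $(X,d_2)\nonexpansiveTo(X,d_1)$, apply $\LiftedFunctor{F}$, and use the lifting square $U\LiftedFunctor{F}=FU$ to conclude that the underlying map is $\id_{FX}$, whence $\LiftedMetric{F}{d_1}\leq\LiftedMetric{F}{d_2}$. The only cosmetic difference is that the paper speaks of an arrow $f$ with $Uf=\id_X$ rather than calling the arrow itself $\id_X$, since the two pseudometric spaces are distinct objects of $\PMet$.
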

\begin{proof}
  \new{
  Observe first that for two pseudometrics $d_1$ and $d_2$ on the same
  set $X$, we have $d_1 \leq d_2$ iff there is some
  $f\colon (X,d_2)\to (X, d_1)$ in $\PMet$ such that $Uf = \id_X$ in
  $Set$.

  Using the above fact, we can prove our statement.  Let
  $f\colon (X,d_2)\to (X, d_1)$ in $\PMet$ be such that $Uf = \id_X$.
  Then
  $\LiftedFunctor{F}f\colon \LiftedFunctor{F} (X,d_2)\to
  \LiftedFunctor{F}(X, d_1)$ is in $\PMet$ as well by functoriality.}
%   Note that for $i = 1, 2$,
%   $U\LiftedFunctor{F}(X,d_i) = U(FX, d^F_i) = FX$.  So
%   $\LiftedFunctor{F}(X,d_i) = (FX, d^F_i)$.  And
%   $U\LiftedFunctor{F}f = FU f = F \id_X = \id_{FX}$.  Thus
%   $\LiftedFunctor{F}f$ shows that $d^F_1 \leq d^F_2$.  
\new{%
By definition of a lifting we have $\LiftedFunctor{F}(X, d_i) = (FX, d^F_i)$, hence $U\LiftedFunctor{F}(X,d_i) = U(FX, d^F_i) = FX$.  Moreover, $U\LiftedFunctor{F}f = FU f = F \id_X = \id_{FX}$.  Therefore $\LiftedFunctor{F}f$ shows that $d^F_1 \leq d^F_2$.
}

% \hrule
  
% 	Since $d_1 \leq d_2$, the identity function on the set $X$ can be regarded as a nonexpansive function $f\colon (X,d_2) \nonexpansiveTo (X, d_1)$ because we have for all $x,y \in X$ that $d_1\big(f(x),f(y)\big) = d_1(x,y) \leq d_2(x,y)$. By functoriality of $\LiftedFunctor{F}$ we know that also $\LiftedFunctor{F}f\colon (FX, \LiftedMetric{F}{d_2}) \nonexpansiveTo (FX, \LiftedMetric{F}{d_1})$ is nonexpansive, i.e., for all $t_1, t_2 \in FX$ we have $\LiftedMetric{F}{d_1}\big(FUf(t_1),FUf(t_2)\big) \leq \LiftedMetric{F}{d_2}(t_1,t_2)$ and moreover $\LiftedMetric{F}{d_1}\big(FUf(t_1),FUf(t_2)\big) = \LiftedMetric{F}{d_1}(F\id_X(t_1),F\id_X(t_2))= \LiftedMetric{F}{d_1}(\id_{FX}(t_1),\id_{FX}(t_2) )= \LiftedMetric{F}{d_1}(t_1, t_2)$ and thus $\LiftedMetric{F}{d_1} \leq \LiftedMetric{F}{d_2}$. 
\end{proof}

In order to define a lifting to $\PMet$ we will just use one simple tool, an evaluation function which describes how to transform an element of $F[0,\top]$ to a real number.

\begin{definition}[Evaluation Function, Evaluation Functor]
	\label{def:evfct} 
	\index{evaluation function}
	\index{evaluation functor}
	Let $F$ be an endofunctor on $\Set$. An \emph{evaluation function} for $F$ is a function $\ev_F\colon F\reals \to \reals$. Given such a function, we define the \emph{evaluation functor} to be the endofunctor $\EvaluationFunctor{F}$ on $\Set/\reals$, the slice category\footnote{The slice category $\Set/\reals$ has as objects all functions $g\colon X\to\reals$ where $X$ is an arbitrary set. Given $g$ as before and $h\colon Y \to \reals$, an arrow from $g$ to $h$ is a function $f\colon X \to Y$ satisfying $h \circ f = g$.} over $\reals$, via $\EvaluationFunctor{F}(g) = \ev_F\circ Fg$ for all $g \in \Set/\reals$. On arrows $\EvaluationFunctor{F}$ is defined as $F$. 
\end{definition}

We quickly remark that by definition of $\EvaluationFunctor{F}$ on arrows, it is immediately clear that one indeed obtains a functor so the name is justified.

\subsection{The Kantorovich Lifting}
\label{subsec:kantorovich-lifting}
Let us now consider an endofunctor $F$ on $\Set$ with an evaluation function $\ev_F$. Given a pseudometric space $(X,d)$, our first approach to lift $d$ to $FX$ will be to take the smallest possible pseudometric $d^F$ on $FX$ such that, for all nonexpansive functions $f\colon (X,d) \nonexpansiveTo (\reals,d_e)$, also $\EvaluationFunctor{F}f\colon (FX,d^F)\nonexpansiveTo (\reals,d_e)$ is nonexpansive again, i.e., we want to ensure that for all $t_1,t_2 \in FX$ we have $d_e\big(\EvaluationFunctor{F}f(t_1), \EvaluationFunctor{F}f(t_2)\big)\leq d^F(t_1,t_2)$. This idea immediately leads us to the following definition which corresponds to the maximization of the logistic firm's prices in the introductory example.

\begin{definition}[Kantorovich Distance]
	\label{def:kantorovich}
	\index{Kantorovich pseudometric}
	Let $F\colon \Set \to \Set$ be a functor with an evaluation function $\ev_F$. For every pseudometric space $(X,d)$ the \emph{Kantorovich distance} on $FX$ is the function $\Kantorovich{F}{d}\colon FX\times FX\to \reals$, where 
	\begin{align*}
		\Kantorovich{F}{d}(t_1,t_2) := \sup \set{d_e\Big(\EvaluationFunctor{F}f(t_1),\EvaluationFunctor{F}f(t_2)\Big) \mid f\colon (X,d) \nonexpansiveTo (\reals,d_e)}
	\end{align*}
	for all $t_1,t_2 \in FX$. 
\end{definition}

Note that a nonexpansive function $f\colon (X,d) \nonexpansiveTo (\reals,d_e)$ always exists. If $X = \emptyset$ it is the unique empty function and for $X \not=\emptyset$ every constant function is nonexpansive. Moreover, it is easy to show that $\Kantorovich{F}{d}$ is a pseudometric.

\begin{theorem}[Kantorovich Pseudometric]
	\label{prop:kantorovich-is-pseudometric}
	For every pseudometric space $(X,d)$ the Kantorovich distance $\Kantorovich{F}{d}$ is a pseudometric on $FX$.
\end{theorem}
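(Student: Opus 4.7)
The plan is to verify the three defining properties of a pseudometric (reflexivity, symmetry, triangle inequality) directly from the definition, using the fact that each of these properties holds pointwise inside the supremum, since $d_e$ itself is a pseudometric on $\reals$. I also need to check that the supremum is well-defined and lands in $[0,\top]$, but the set of nonexpansive maps $(X,d) \nonexpansiveTo (\reals,d_e)$ is always nonempty (constant maps, or the empty map if $X = \emptyset$, are nonexpansive), and every value $d_e(a,b)$ with $a,b\in[0,\top]$ lies in $[0,\top]$, so the supremum belongs to $\reals$.

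For reflexivity, I just observe that for every nonexpansive $f$ and every $t \in FX$ we have $d_e(\EvaluationFunctor{F}f(t),\EvaluationFunctor{F}f(t)) = 0$, so the supremum is $0$. Symmetry is immediate from symmetry of $d_e$: the set $\{ d_e(\EvaluationFunctor{F}f(t_1),\EvaluationFunctor{F}f(t_2)) \mid f \text{ nonexpansive}\}$ is invariant under swapping $t_1$ and $t_2$.

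The triangle inequality is the only step requiring a small argument, but no real obstacle. Given $t_1,t_2,t_3\in FX$ and any nonexpansive $f\colon (X,d)\nonexpansiveTo(\reals,d_e)$, the triangle inequality for $d_e$ yields
\begin{align*}
d_e\big(\EvaluationFunctor{F}f(t_1),\EvaluationFunctor{F}f(t_3)\big) &\leq d_e\big(\EvaluationFunctor{F}f(t_1),\EvaluationFunctor{F}f(t_2)\big) + d_e\big(\EvaluationFunctor{F}f(t_2),\EvaluationFunctor{F}f(t_3)\big)\\
&\leq \Kantorovich{F}{d}(t_1,t_2) + \Kantorovich{F}{d}(t_2,t_3),
\end{align*}
since each summand on the right is bounded by the corresponding Kantorovich distance by definition. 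Taking the supremum over all such $f$ on the left-hand side gives $\Kantorovich{F}{d}(t_1,t_3) \leq \Kantorovich{F}{d}(t_1,t_2) + \Kantorovich{F}{d}(t_2,t_3)$, as required.

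The proof is therefore almost mechanical: the only thing one really uses about $F$ and $\ev_F$ is that they let us form the maps $\EvaluationFunctor{F}f$ on which we take a supremum of $d_e$-distances, and all pseudometric axioms are inherited pointwise from $(\reals,d_e)$ through this supremum. No obstacle is expected, beyond noting the nonemptiness of the family of nonexpansive test functions so that the supremum is meaningful.
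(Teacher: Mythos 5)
Your proof is correct and follows essentially the same route as the paper: reflexivity and symmetry are inherited pointwise from $d_e$, and the triangle inequality is obtained by applying the triangle inequality for $d_e$ under a common test function $f$ and then passing to suprema (the paper phrases this via $\sup_f a_f + \sup_f b_f \geq \sup_f(a_f+b_f)$, which is the same estimate written in the other direction). Your additional remark on nonemptiness of the family of nonexpansive test functions matches the paper's observation immediately after the definition of the Kantorovich distance.
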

\begin{proof}
	Reflexivity and symmetry are an immediate consequence of the fact that $d_e$ is a metric. We now show the triangle inequality. Let $t_1, t_2, t_3 \in FX$, then 
	\begin{align*}
		& \quad \Kantorovich{F}{d}(t_1, t_2) + \Kantorovich{F}{d}(t_2, t_3) \\
		&=\sup_{f \colon (X,d) \nonexpansiveTo (\reals,d_e)} {d_e\left(\EvaluationFunctor{F}f(t_1),\EvaluationFunctor{F}f(t_2)\right)} + \sup_{f \colon (X,d) \nonexpansiveTo (\reals,d_e)} {d_e\left(\EvaluationFunctor{F}f(t_2),\EvaluationFunctor{F}f(t_3)\right)}\\
		&\geq \sup_{f \colon (X,d) \nonexpansiveTo (\reals,d_e)} {\left(d_e\left(\EvaluationFunctor{F}f(t_1),\EvaluationFunctor{F}f(t_2)\right) + d_e\left(\EvaluationFunctor{F}f(t_2),\EvaluationFunctor{F}f(t_3)\right)\right)}\\
		&\geq \sup_{f \colon (X,d) \nonexpansiveTo (\reals,d_e)} {d_e\left(\EvaluationFunctor{F}f(t_1),\EvaluationFunctor{F}f(t_3)\right)} = \Kantorovich{F}{d}(t_1,t_3)
	\end{align*}
	where the first inequality is a simple property of the supremum and the second inequality follows again from the fact that $d_e$ is a metric. 
\end{proof}

Using this pseudometric we can now immediately define our first lifting.

\begin{definition}[Kantorovich Lifting]
	\index{Kantorovich lifting}
	Let $F\colon \Set \to \Set$ be a functor with an evaluation function $\ev_F$. We define the \emph{Kantorovich lifting} of $F$ to be the functor $\LiftedFunctor{F}\colon \PMet \to \PMet$, $\LiftedFunctor{F}(X,d) = (FX,\Kantorovich{F}{d})$, $\LiftedFunctor{F}f = Ff$.
\end{definition}

Since $\LiftedFunctor{F}$ inherits the preservation of identities and composition of morphisms from $F$ we just need to prove that nonexpansive functions are mapped to nonexpansive functions to obtain functoriality of $\LiftedFunctor{F}$.

\begin{theorem}
	\label{prop:kantorovich-functorial}
	The Kantorovich lifting $\LiftedFunctor{F}$ is a functor on pseudometric spaces.
\end{theorem}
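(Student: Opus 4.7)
The plan is to verify the only non-trivial fact, namely that $\LiftedFunctor{F}$ maps nonexpansive functions to nonexpansive functions; preservation of identities and composition transfers from $F$ because $\LiftedFunctor{F}$ acts as $F$ on arrows, and \Cref{prop:kantorovich-is-pseudometric} already guarantees that the output is a pseudometric space, so only well-definedness on arrows is at stake.

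So I take a nonexpansive map $f\colon (X,d_X)\nonexpansiveTo (Y,d_Y)$ and aim to show that $Ff\colon (FX,\Kantorovich{F}{d_X}) \nonexpansiveTo (FY,\Kantorovich{F}{d_Y})$, i.e., that for all $t_1,t_2\in FX$,
\begin{align*}
\Kantorovich{F}{d_Y}\bigl(Ff(t_1),Ff(t_2)\bigr) \leq \Kantorovich{F}{d_X}(t_1,t_2).
\end{align*}
The key observation is functoriality of $\EvaluationFunctor{F}$: for any $g\colon Y\to \reals$ one has $\EvaluationFunctor{F}g \circ Ff = \ev_F\circ Fg\circ Ff = \ev_F\circ F(g\circ f) = \EvaluationFunctor{F}(g\circ f)$. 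Moreover, if $g\colon (Y,d_Y)\nonexpansiveTo(\reals,d_e)$ is nonexpansive, then so is $g\circ f\colon (X,d_X)\nonexpansiveTo(\reals,d_e)$, since the composite of nonexpansive maps is nonexpansive.

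Combining these two facts, the supremum defining $\Kantorovich{F}{d_Y}(Ff(t_1),Ff(t_2))$ is taken over terms of the form $d_e(\EvaluationFunctor{F}(g\circ f)(t_1),\EvaluationFunctor{F}(g\circ f)(t_2))$ with $g\circ f$ ranging over a subset of the nonexpansive maps $(X,d_X)\nonexpansiveTo(\reals,d_e)$. Hence it is bounded above by the analogous supremum over \emph{all} such $h\colon (X,d_X)\nonexpansiveTo(\reals,d_e)$, which by definition is exactly $\Kantorovich{F}{d_X}(t_1,t_2)$. This yields the required inequality and completes the proof.

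I do not expect any real obstacle here: the argument is essentially a one-line calculation once one notices that precomposition with $f$ pulls the ``test functions'' $g$ on $Y$ back to test functions $g\circ f$ on $X$, and that nonexpansiveness is preserved under composition. No discharging of the triangle inequality or of reflexivity/symmetry is needed, since those have already been handled.
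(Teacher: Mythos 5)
Your proposal is correct and follows essentially the same route as the paper: rewrite $\EvaluationFunctor{F}g(Ff(t_i))$ as $\EvaluationFunctor{F}(g\circ f)(t_i)$, note that $g\circ f$ is nonexpansive on $(X,d_X)$, and bound the supremum over these pullbacks by the supremum over all nonexpansive test functions on $(X,d_X)$. Nothing is missing.
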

\begin{proof}
	$\LiftedFunctor{F}$ preserves identities and composition of arrows because $F$ does. Moreover, it preserves nonexpansive functions: Let $f\colon (X,d_X) \nonexpansiveTo (Y,d_Y)$ be nonexpansive and $t_1, t_2 \in FX$, then
	\begin{align*}
		\Kantorovich{F}{d_Y}\big(Ff(t_1), Ff(t_2)\big)&=\sup_{g \colon (Y,d_Y) \nonexpansiveTo (\reals, d_e)} d_e\Big(\EvaluationFunctor{F}(g \circ f)(t_1),\EvaluationFunctor{F}(g \circ f)(t_2)\Big) \\
		&\leq \sup_{h \colon (X,d_X) \nonexpansiveTo (\reals,d_e)} d_e\Big(\EvaluationFunctor{F}(h)(t_1),\EvaluationFunctor{F}(h)(t_2)\Big) = \Kantorovich{F}{d_X}(t_1, t_2)
	\end{align*}
	due to the fact that since both $f$ and $g$ are nonexpansive also their composition $(g \circ f) \colon (X,d_X) \nonexpansiveTo (\reals,d_e)$ is nonexpansive.
\end{proof}

With this result at hand we are almost done: The only remaining task is to show that $\LiftedFunctor{F}$ is a lifting of $F$ in the sense of \Cref{def:liftingPMet} but this is indeed obvious by definition of $\LiftedFunctor{F}$.

An important property of this lifting is that it preserves isometries, which is a bit tricky to show. While one might be tempted to think that this is immediately true because functors preserve isomorphisms, it is easy to see that the isomorphisms of $\PMet$ are the \emph{bijective} isometries. However, there are of course also isometries which are not bijective and thus not isomorphisms. Simple examples for this arise by taking the unique discrete metric space $(\one, d_\one)$ and mapping it into any other pseudometric space $(X,d)$ with $|X| > 1$. Any function $\one \to X$ is necessarily isometric but certainly not bijective.

\begin{theorem}
	\label{prop:kantorovich-lifting-preserves-isometries}
	The Kantorovich lifting $\LiftedFunctor{F}$ of a functor $F$ preserves isometries. 
\end{theorem}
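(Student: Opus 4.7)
The plan is to show, for an isometry $f\colon (X, d_X) \nonexpansiveTo (Y, d_Y)$, that $\LiftedFunctor{F}f = Ff$ satisfies $\Kantorovich{F}{d_Y}(Ff(t_1), Ff(t_2)) = \Kantorovich{F}{d_X}(t_1, t_2)$ for every $t_1, t_2 \in FX$. The inequality ``$\leq$'' is already contained in the proof of \Cref{prop:kantorovich-functorial}, since an isometry is in particular nonexpansive.

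For the converse ``$\geq$'', the crucial observation is that every nonexpansive map $h\colon (X, d_X) \nonexpansiveTo (\reals, d_e)$ admits a nonexpansive extension $g\colon (Y, d_Y) \nonexpansiveTo (\reals, d_e)$ with $g \circ f = h$. Following the classical McShane--Whitney construction, for $X \neq \emptyset$ I would set
\[
g(y) \;:=\; \min\!\Bigl(\top,\ \inf_{x \in X}\bigl(h(x) + d_Y(y, f(x))\bigr)\Bigr),
\]
and then verify three things: $g$ is valued in $[0, \top]$ (immediate), $g$ is nonexpansive (the infimum is nonexpansive in $y$ by the triangle inequality for $d_Y$, the constant $\top$ is trivially nonexpansive, and the pointwise minimum of nonexpansive maps is nonexpansive), and $g(f(x_0)) = h(x_0)$ for every $x_0 \in X$. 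The last item is where the isometry hypothesis plays the decisive role: it gives $d_Y(f(x_0), f(x)) = d_X(x_0, x)$, and the nonexpansiveness of $h$ then yields $h(x_0) \leq h(x) + d_X(x_0, x)$, so the infimum equals $h(x_0) \leq \top$ and the cap is not active.

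With $g$ at hand, functoriality of $F$ and the very definition of the evaluation functor give
\[
\EvaluationFunctor{F}g \circ Ff \;=\; \ev_F \circ Fg \circ Ff \;=\; \ev_F \circ F(g \circ f) \;=\; \EvaluationFunctor{F}h,
\]
hence for all $t_1, t_2 \in FX$,
\[
d_e\bigl(\EvaluationFunctor{F}h(t_1), \EvaluationFunctor{F}h(t_2)\bigr) \;=\; d_e\bigl(\EvaluationFunctor{F}g(Ff(t_1)), \EvaluationFunctor{F}g(Ff(t_2))\bigr) \;\leq\; \Kantorovich{F}{d_Y}\bigl(Ff(t_1), Ff(t_2)\bigr),
\]
and taking the supremum over all nonexpansive $h$ finishes the argument.

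The main subtlety I anticipate is the extension step: the ``naïve'' McShane--Whitney formula $\inf_x(h(x) + d_Y(y, f(x)))$ could exceed $\top$ when $\top$ is finite, so the cap is genuinely needed, and one must then argue carefully that the cap preserves nonexpansiveness and does not destroy the identity $g \circ f = h$. The trivial case $X = \emptyset$ is harmless: the only nonexpansive $h$ is the empty function, and any nonexpansive $g\colon (Y, d_Y) \nonexpansiveTo (\reals, d_e)$ (for instance a constant, or vacuously if $Y = \emptyset$) satisfies $g \circ f = h$, so the same chain of equalities closes the proof.
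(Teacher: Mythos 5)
Your proof is correct and follows essentially the same route as the paper: both construct the McShane--Whitney extension $\inf_{x\in X}\bigl(h(x)+d_Y(y,f(x))\bigr)$ of a nonexpansive $h$ along the isometry $f$, verify nonexpansiveness via the triangle inequality, use the isometry to show the extension restricts to $h$ on the image of $f$, and conclude by taking suprema. Your explicit cap at $\top$ (and the check that it is inactive on $f[X]$ and preserves nonexpansiveness) is a small but genuine refinement: the paper's uncapped formula can exceed $\top$ when $\top$ is finite, a point the paper glosses over.
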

\begin{proof}
	Let $f\colon (X,d_X) \nonexpansiveTo (Y,d_Y)$ be an isometry, i.e., $f$ satisfies $d_Y \circ (f \times f) = d_X$. Since the Kantorovich lifting $\LiftedFunctor{F}$ is a functor on pseudometric spaces, we already know that $\LiftedFunctor{F}f$ is nonexpansive, i.e., we know that $\Kantorovich{F}{d_Y} \circ ({F}f \times {F}f) \leq \LiftedMetric{F}{d_X}$ thus we only have to show the opposite inequality. We will do that by constructing for every nonexpansive function $g\colon (X, d_X) \nonexpansiveTo (\reals, d_e)$ a nonexpansive function $h\colon (Y,d_Y) \nonexpansiveTo (\reals, d_e)$ such that $h \circ f = g$ which implies that for every $t_1,t_2 \in FX$ we have equality $d_e(\EvaluationFunctor{F}h(Ff(t_1)), \EvaluationFunctor{F}h(Ff(t_2)))= d_e(\EvaluationFunctor{F}g(t_1), \EvaluationFunctor{F}g(t_2))$. Then we have
	\begin{align*}
		\Kantorovich{F}{d_Y} &\circ ({F}f \times {F}f)(t_1, t_2) = \sup \set{d_e\Big(\EvaluationFunctor{F}h (Ff(t_1)),\EvaluationFunctor{F}h (Ff(t_2))\Big) \mid h\colon (Y,d_Y) \nonexpansiveTo (\reals,d_e)}\\
		& \geq \sup \set{d_e\Big(\EvaluationFunctor{F}g (t_1),\EvaluationFunctor{F}g (t_2)\Big) \mid g\colon (X,d_X) \nonexpansiveTo (\reals,d_e)} = \Kantorovich{F}{d_X}(t_1,t_2)\,.
	\end{align*}
We define $h$ via $h(y) := \inf_{x \in X} \{ g(x) + d_Y(f(x), y)\}$. This function is nonexpansive, in fact, for all $y_1, y_2 \in \new{Y}$ we have
        \begin{align*}
          h(y_1) & =     \inf_{x \in X} \{ g(x) + d_Y(f(x), y_1)\} \leq  \inf_{x \in X} \{ g(x) + d_Y(f(x), y_2) + d_Y(y_1,y_2)\}\\
                 & = \inf_{x \in X} \{ g(x) + d_Y(f(x), y_2) \} + d_Y(y_1,y_2)  = h(y_2) + d_Y(y_1,y_2)
        \end{align*}
        where the inequality follows by the fact that $d_Y$ satisfies the triangle inequality. The same calculation also yields $h(y_2) \leq h(y_1) + d_Y(y_1, y_2)$. We use \Cref{lem:sum-vs-dist} with $a = h(y_1)$, $b=h(y_2)$ and $c=d_Y(y_1, y_2)$ to conclude that $d_e(h(y_1), h(y_2)) = d_e(a, b) \leq c = d_Y(y_1,y_2)$.
        
        We can apply \Cref{lem:sum-vs-dist} with $a = g(x)$, $b=g(x')$ and $c=d(x, x')$ in the other direction using nonexpansiveness of $g$ to obtain $g(x') + d_X(x',x) = b + c \geq a = g(x)$. Thus $h(f(x))  = \inf_{x' \in X} \{ g(x') + d_Y(f(x'), f(x)) \} = \inf_{x' \in X} \{ g(x') + d_X(x', x) \} \geq g(x)$ and with $x'=x$ we obtain the equality $h(f(x))=g(x)$ for all $x \in X$.
\end{proof}

With this result in place, let us quickly discuss the name of our lifting. We chose the name \emph{Kantorovich} because our definition is reminiscent of the Kantorovich pseudometric in probability theory. If we take the proper combination of functor and evaluation function, we can recover that pseudometric (in the discrete case) as the first instance of our framework. 

\begin{example}[Kantorovich Lifting of the Distribution Functor]
	\label{exa:probability-distribution-functor}
	We take $\top = 1$ and the probability distribution functor
        $\Distributions$ (or the sub-probability distribution functor). As evaluation $\ev_\Distributions\colon \Distributions [0,1] \to [0,1]$ we define, for each $P \in \Distributions[0,1]$, $\ev(P)$ to be the expected value of the identity function on $[0,1]$, i.e., $\ev_\Distributions (P) :=  \sum_{x\in [0,1]} x\cdot P(x)$. Then for any function $g\colon X \to [0,1]$ and any (sub)distribution $P \in \Distributions X$ we can easily see that $\EvaluationFunctor{\Distributions }g(P) = \ev_\Distributions \circ \Distributions g (P)= \sum_{x \in X}g(x) \cdot P(x)$. Using this, we can see that for every pseudometric space $(X,d)$ we obtain the usual Kantorovich pseudometric $\Kantorovich{\Distributions }{d}\colon (\Distributions X)^2 \to [0,1]$, where
	\begin{align*}
		\Kantorovich{\Distributions }{d}(P_1,P_2) =
                \sup\set{ \big|\sum_{x \in X} f(x)\cdot (P_1(x) - P_2(x))\big|\ \bigg|\ f \colon (X,d) \nonexpansiveTo ([0,1],d_e)}
	\end{align*}
	for all (sub)probability distributions $P_1,P_2 \colon X \to [0,1]$.
\end{example}

Let us now consider the question whether the Kantorovich lifting preserves metrics, i.e., we want to check whether the Kantorovich pseudometric $\Kantorovich{F}{d}$ is a metric for a metric space $(X,d)$. The next example shows that this is not necessarily the case.

\begin{example}[Kantorovich Lifting of the Squaring Functor]
	\label{exa:kant-no-metric}
	\label{exa:squaring-functor}
	\index{squaring functor}
The \emph{squaring functor} on $\Set$ is the functor $S\colon \Set \to \Set$ where $SX = X \times X$ for each set $X$ and $Sf = f \times f$ for each function $f \colon X \to Y$ (see e.g. \cite[Ex.~3.20 (10)]{AHS90}).

We take $\top = \infty$ and the evaluation function $\ev_{S}\colon [0,\infty] \times [0,\infty] \to [0,\infty]$, $\ev_{S}(r_1, r_2) = r_1+r_2$. For a metric space $(X,d)$ with $|X|\geq 2$ take $t_1 = (x_1, x_2) \in SX$ with $x_1 \not = x_2$ and define $t_2 := (x_2, x_1)$. Clearly $t_1 \not = t_2$ but for every nonexpansive function $f\colon (X,d) \nonexpansiveTo (\reals, d_e)$ we have $\EvaluationFunctor{S}f(t_1) = f(x_1) + f(x_2) = f(x_2) + f(x_1) = \EvaluationFunctor{S}f(t_2)$ and thus $\Kantorovich{S}{d}(t_1,t_2) = 0$.
\end{example}

\subsection{The Wasserstein Lifting}
\label{sec:wasserstein}
We have seen that our first lifting approach bears close resemblance to the original Kantorovich pseudometric on probability measures. We will now also define a generalized version of the Wasserstein pseudometric and compare it with our generalized Kantorovich pseudometric. To do that we first need to define generalized couplings, which can be understood as a generalization of joint probability measures.

\begin{definition}[Coupling]
	\label{def:coupling}
	\index{coupling}
	Let $F \colon \Set \to \Set$ be a functor and $n \in \N$. Given a set $X$ and $t_i \in FX$ for $1 \leq i \leq n$ we call an element $t \in F(X^n)$ such that $F\pi_i(t) = t_i$ a \emph{coupling} of the $t_i$ (with respect to $F$). We write $\Couplings{F}(t_1,\dots,t_n)$ for the set of all these couplings. 
\end{definition}

Using these couplings we now proceed to define an alternative pseudometric on $FX$. As was the case for the Kantorovich distance, we only need to choose an evaluation function $\ev_F$ for our functor and then use the corresponding evaluation functor $\EvaluationFunctor{F}$ (see \Cref{def:evfct})

\begin{definition}[Wasserstein Distance]
  \label{def:wasserstein}
  Let $F\colon \Set \to \Set$ be a functor with evaluation function
  $\ev_F$. For every pseudometric space $(X,d)$ the \emph{Wasserstein
    distance} on $FX$ is the function
  $\Wasserstein{F}{d} \colon FX \times FX \to \reals$ given by
  \begin{align}
    \Wasserstein{F}{d}(t_1, t_2) := \inf
    \set{\EvaluationFunctor{F}d(t) \,\big|\, t \in
      \Couplings{F}(t_1,t_2)}\,.\label{eq:wasserstein}
  \end{align}
  for all $t_1,t_2 \in FX$.
\end{definition}

\new{Given $t_1,t_2\in FX$, we say that a coupling
  $t\in \Couplings{F}(t_1,t_2)$ is optimal whenever
  $\Wasserstein{F}{d}(t_1,t_2)=\EvaluationFunctor{F}d(t)$. Note that
  such an optimal coupling need not always exist.}

In contrast to the Kantorovich distance, where the respective set
cannot be empty because nonexpansive functions always exist, here it
might be the case that no coupling exists and thus
$\Wasserstein{F}{d}(t_1,t_2) = \inf \emptyset = \top$. Without any
additional conditions we cannot even prove that
$\Couplings{F}(t_1,t_1) \not = \emptyset$ which we would certainly
need for reflexivity and thus we do not automatically obtain a
pseudometric. The only property we get for free is symmetry.

\begin{lemma}[Symmetry of the Wasserstein Distance]
	\label{lem:wasserstein-symmetric}
	For all pseudometric spaces $(X,d)$ the Wasserstein distance $\Wasserstein{F}{d}$ is symmetric.
\end{lemma}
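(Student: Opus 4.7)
The plan is to exploit the swap map on $X \times X$ to set up a value-preserving bijection between $\Couplings{F}(t_1,t_2)$ and $\Couplings{F}(t_2,t_1)$, from which symmetry of the infimum follows immediately.

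First, I introduce the swap $\sigma \colon X^2 \to X^2$, $\sigma(x,y) = (y,x)$. This is an involution ($\sigma \circ \sigma = \id_{X^2}$) and it exchanges the projections: $\pi_1 \circ \sigma = \pi_2$ and $\pi_2 \circ \sigma = \pi_1$. Applying $F$ and using functoriality, $F\sigma$ is also an involution, and $F\pi_1 \circ F\sigma = F\pi_2$, $F\pi_2 \circ F\sigma = F\pi_1$. Hence, given any $t \in \Couplings{F}(t_1,t_2)$, the element $F\sigma(t)$ lies in $\Couplings{F}(t_2,t_1)$, and vice versa. This shows $F\sigma$ restricts to a bijection $\Couplings{F}(t_1,t_2) \to \Couplings{F}(t_2,t_1)$; in particular, one coupling set is empty iff the other is.

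Second, I use that $d$ is symmetric, i.e., $d = d \circ \sigma \colon X^2 \to \reals$. Applying $F$ and composing with $\ev_F$ gives
\[
\EvaluationFunctor{F}d \circ F\sigma = \ev_F \circ Fd \circ F\sigma = \ev_F \circ F(d \circ \sigma) = \ev_F \circ Fd = \EvaluationFunctor{F}d.
\]
Therefore $\EvaluationFunctor{F}d\bigl(F\sigma(t)\bigr) = \EvaluationFunctor{F}d(t)$ for every $t$, and the bijection $F\sigma$ is value-preserving.

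Third, the sets over which the infima in \eqref{eq:wasserstein} are taken for $\Wasserstein{F}{d}(t_1,t_2)$ and $\Wasserstein{F}{d}(t_2,t_1)$ coincide, so the two infima are equal (including the case where both coupling sets are empty, in which case both equal $\top$). I do not expect any real obstacle here; the only subtlety worth stating explicitly is that the argument relies solely on functoriality of $F$ and symmetry of $d$, and does not require any hypothesis on $\ev_F$ beyond what is already part of the definition.
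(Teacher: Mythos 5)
Your proof is correct and takes essentially the same route as the paper's: both use the swap map $\sigma$, functoriality of $F$, and symmetry of $d$ to transport couplings between $\Couplings{F}(t_1,t_2)$ and $\Couplings{F}(t_2,t_1)$ while preserving the value of $\EvaluationFunctor{F}d$, and both handle the empty-coupling case by noting the coupling sets are empty simultaneously. Your explicit observation that $F\sigma$ is an involution, hence a value-preserving bijection, is a slightly cleaner packaging of the same argument.
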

\begin{proof}
	Let $t_1, t_2 \in FX$  and let $\sigma := \langle \pi_2, \pi_1\rangle$ be the swap map on $X \times X$, i.e., $\sigma\colon X \times X \rightarrow X \times X$, $\sigma(x_1,x_2) = (x_2,x_1)$. If there is a coupling $t_{12} \in \Couplings{F}(t_1,t_2)$ we define $t_{21} := F\sigma(t_{12}) \in F(X \times X)$ and observe that it satisfies $F\pi_1(t_{21}) = F\pi_1(F\sigma(t_{12})) = F(\pi_1 \circ \sigma)(t_{12})= F\pi_2(t_{12}) = t_2$ and analogously $F\pi_2(t_{21}) = t_1$, thus $t_{21} \in \Couplings{F}(t_2, t_1)$. Moreover, due to symmetry of $d$ (i.e., $d \circ \sigma = d$), we obtain $\EvaluationFunctor{F}d(t_{21}) = \ev_{F} \big(Fd(t_{21})\big) = \ev_F \big(Fd(F\sigma(t_{12})\big) = \ev_F\big(F(d \circ \sigma)(t_{12}) \big)= \ev_F \big(Fd(t_{12})\big) = \EvaluationFunctor{F}d(t_{12})$ which yields the desired symmetry. If no coupling of $t_1$ and $t_2$ exists, there is also no coupling of $t_2$ and $t_1$ because otherwise we would get a coupling of $t_1$ and $t_2$ using the above method. Thus $\Wasserstein{F}{d}(t_1,t_2) = \top = \Wasserstein{F}{d}(t_2,t_1)$.
\end{proof}

In order to be able to guarantee the other two properties of a pseudometric we will restrict our attention to well-behaved evaluation functions.

\begin{definition}[Well-Behaved Evaluation Function]
	\label{def:well-behaved}
	Let $\ev_F$ be an evaluation function for a functor $F\colon \Set \to \Set$. We call $\ev_F$ \emph{well-behaved} if it satisfies the following conditions:
	\begin{wbconditions}
		\item $\EvaluationFunctor{F}$ is monotone, i.e., for $f,g\colon X\to\reals$ with  $f\le g$, we have $\EvaluationFunctor{F}f\le\EvaluationFunctor{F}g$.\label{W1}
		\item For any $t\in F(\reals^2)$ we have $d_e\big(\ev_F(t_1), \ev_F(t_2)\big) \leq \EvaluationFunctor{F}d_e(t)$ for $t_i : = F\pi_i(t)$.\label{W2}
		\item $\ev_F^{-1}\big[\{0\}\big] = Fi\big[F\{0\}\big]$ where $i \colon \set{0} \hookrightarrow\reals$ is the inclusion map. \label{W3}
	\end{wbconditions}
\end{definition}

\noindent While \cref*{W1} is quite natural, for \cref*{W2,W3} some explanations are in order. \Cref*{W2} ensures that $\EvaluationFunctor{F}\id_\reals = \ev_F\colon F\reals\to\reals$ is nonexpansive once $d_e$ is lifted to $F\reals$ (recall that for the Kantorovich lifting we require $\EvaluationFunctor{F}f$ to be nonexpansive for \emph{any} nonexpansive $f$). By definition of the evaluation functor $\EvaluationFunctor{F}$ and the $t_i$, we have $d_e\big(\EvaluationFunctor{F}\pi_1(t), \EvaluationFunctor{F}\pi_2(t)\big) =  d_e\big(\ev_F \circ F\pi_(t), \ev_F \circ F\pi_2(t)\big) = d_e\big(\ev_F(t_1), \ev_F(t_2)\big)$ so \Cref{W2} can equivalently be stated as $d_e\big(\EvaluationFunctor{F}\pi_1(t), \EvaluationFunctor{F}\pi_2(t)\big) \leq \EvaluationFunctor{F}d_e(t)$.

\Cref*{W3} requires that exactly the elements of $F\{0\}$ are mapped to $0$ via $\ev_F$. This ensures the reflexivity of the Wasserstein pseudometric. 

\begin{lemma}[Reflexivity of the Wasserstein Function]
	\label{lem:wasserstein-reflexive}
	Let $F$ be an endofunctor  on $\Set$ with evaluation function $\ev_F$. If  $\ev_F$ satisfies \Cref{W3} of \Cref{def:well-behaved} then for any pseudometric space $(X,d)$ the Wasserstein distance $\Wasserstein{F}{d}$ is reflexive.
\end{lemma}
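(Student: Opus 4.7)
My plan is to exhibit, for each $t \in FX$, an explicit coupling of $t$ with itself whose $\EvaluationFunctor{F}d$-value is $0$; this will force $\Wasserstein{F}{d}(t,t) \leq 0$, and since values lie in $\reals = [0,\top]$, reflexivity follows. The natural candidate is the image of $t$ under the diagonal.

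Specifically, I would define $\Delta \colon X \to X \times X$ by $\Delta(x) = (x,x)$ and set $c := F\Delta(t) \in F(X \times X)$. Since $\pi_i \circ \Delta = \id_X$ for $i = 1,2$, functoriality gives $F\pi_i(c) = F\pi_i(F\Delta(t)) = F(\pi_i \circ \Delta)(t) = t$, so indeed $c \in \Couplings{F}(t,t)$.

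It then remains to check that $\EvaluationFunctor{F}d(c) = 0$. By reflexivity of $d$, the composite $d \circ \Delta \colon X \to \reals$ is the constantly-zero function, which factors as $i \circ !_X$ where $!_X \colon X \to \{0\}$ is the unique map and $i \colon \{0\} \hookrightarrow \reals$ is the inclusion from Condition~\ref{W3}. Therefore
\begin{align*}
\EvaluationFunctor{F}d(c) = \ev_F\bigl(Fd(F\Delta(t))\bigr) = \ev_F\bigl(F(d\circ\Delta)(t)\bigr) = \ev_F\bigl(Fi(F!_X(t))\bigr),
\end{align*}
and $F!_X(t) \in F\{0\}$, so this element lies in $Fi[F\{0\}]$. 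By Condition~\ref{W3} of well-behavedness, $Fi[F\{0\}] \subseteq \ev_F^{-1}[\{0\}]$, hence $\EvaluationFunctor{F}d(c) = 0$, concluding the proof.

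No step is really hard here; the only subtle point is recognising that Condition~\ref{W3} is tailor-made for exactly this argument, translating pointwise reflexivity of $d$ into $\ev_F$-reflexivity of the lifted distance. The symmetric condition (that $\ev_F$ sends only elements of $F\{0\}$ to zero, rather than just some of them) is not used for reflexivity; only the inclusion $Fi[F\{0\}] \subseteq \ev_F^{-1}[\{0\}]$ is needed.
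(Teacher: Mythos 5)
Your proof is correct and follows essentially the same route as the paper's: construct the coupling $F\Delta(t)\in\Couplings{F}(t,t)$ via the diagonal, factor $d\circ\Delta = i\circ{!_X}$ using reflexivity of $d$, and apply the inclusion $Fi[F\{0\}]\subseteq \ev_F^{-1}[\{0\}]$ from Condition~W3. Your closing remark that only this one inclusion of W3 is needed is accurate and matches how the paper uses the condition here.
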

\begin{proof}
	Let $t_1 \in FX$.  To show reflexivity we will construct a coupling $t \in \Couplings{F}(t_1,t_1)$ such that $\EvaluationFunctor{F}d(t) = 0$. In order to do that, let $\delta \colon X \to X^2, \delta(x) = (x,x)$ and define $t := F\delta(t_1)$. Then $F\pi_i(t) = F(\pi_i \circ \delta)(t_1) = F(\id_X)(t_1) = t_1$ and thus $t \in \Couplings{F}(t_1,t_1)$. Since $d$ is reflexive, $d \circ \delta\colon X \to [0,\top]$ is the constant zero function. Let $i\colon \set{0} \hookrightarrow [0,\top], i(0) = 0$ and for any set $X$ let $!_X\colon X \to \set{0}, !_X(x) = 0$. Then also $i \circ !_X \colon X \to [0,\top]$ is the constant zero function and thus $d \circ \delta = i \circ !_X$. Using this we can conclude that 
	\begin{align*}
	\EvaluationFunctor{F}{d}(t) = \EvaluationFunctor{F}{d}\big(F\delta(t_1)\big) = \EvaluationFunctor{F}(d \circ \delta)(t_1) =  \EvaluationFunctor{F}(i \circ !_X)(t_1) = \ev_F\Big(Fi\big((F!_X)(t_1)\big)\Big) = 0
	\end{align*}
	where the last equality follows from the fact that $F!_X(t_1) \in F\{0\}$ and \Cref{W3} of \Cref{def:well-behaved}.
\end{proof}

Before we continue our efforts to obtain a Wasserstein pseudometric, we check that well-behaved evaluation functions exist but not every evaluation function is well-behaved.

\begin{example}[Evaluation Function for the Powerset Functor]
	\label{ex:evaluation-max}
	We take $\top = \infty$ and consider the powerset functor $\Powerset$. First, we show that the evaluation function $\sup\colon \Powerset[0,\infty] \to [0,\infty]$ where $\sup \emptyset :=0$ is well-behaved.
	\begin{wbconditions}
		\item Let $f,g\colon X\to [0,\infty]$ with $f\le g$. Let $S\in \Powerset X$, i.e., $S\subseteq X$. Then we have $\EvaluationFunctor{\Powerset}f(S) = \sup f[S] = \sup \set{f(x) \mid x \in S} \le \sup \set{g(x) \mid x \in S}= \sup g[S] = \EvaluationFunctor{\Powerset}g(S)$.
		\item For any subset $S\subseteq [0,\infty]^2$ we have to show the inequality \begin{align}
		d_e\big(\sup \pi_1[S],\sup \pi_2[S]\big) \le \sup d_e[S]\,.\label{eq:pfin-wb:1}
		\end{align}
		For $S = \emptyset$ this is true because $\sup \emptyset = 0$ and thus both sides of the inequality are $0$. Otherwise we define $s_i := \sup\pi_i[S]$. Clearly, if $s_1 = s_2$ then the left hand side of \eqref{eq:pfin-wb:1} is $0$ and thus the inequality holds. 
		Without loss of generality we now assume $s_1 < s_2$ and distinguish two cases.

If $s_2 < \infty$ then for any $\epsilon > 0$ we can find a pair $(t_1,t_2) \in S$ such that $s_2 - \epsilon < t_2$ because $s_2$ is the supremum. Moreover, $t_1 \leq s_1$ and if $\epsilon < s_2-s_1$ also $t_1 \leq t_2$. By combining these inequalities, we conclude that for every $\epsilon \in \,]0,s_2-s_1[$ we have a pair $(t_1,t_2) \in S$ such that $d_e(s_1,s_2) - \epsilon = s_2 - s_1 - \epsilon \leq s_2 - t_1 - \epsilon = s_2-\epsilon - t_1 < t_2-t_1=d_e(t_1,t_2)$. Since $\epsilon$ can be arbitrarily small, we thus must have \eqref{eq:pfin-wb:1}.

If $s_2 = \infty$ then $d_e(s_1,s_2) = \infty$. However, $s_2 = \infty$ also implies that for every non-negative real number $r \in \R_+$ we can find an element $(t_1,t_2) \in S$ such that $t_2 > r$. Especially, for $r > s_1$ we have $t_2 > r > s_1 \geq t_1$ and thus $d_e(t_1,t_2) = t_2 - t_1 \geq t_2-s_1 > r - s_1$. Since $r$ can be arbitrarily large, we thus must have $\sup d_e[S] = \infty$ and thus \eqref{eq:pfin-wb:1} is an equality.

		\item We have $\Powerset i [\Powerset \{0\}] = \Powerset i [\{\emptyset,\{0\}\}] = \{i[\emptyset],i[\{0\}]\} = \{\emptyset, \{0\}\} = \sup^{-1}[\{0\}]$.
	\end{wbconditions}
	Whenever we work with the finite powerset functor $\PowersetFinite$ we can of course use $\max$ instead of $\sup$ with the convention $\max \emptyset = 0$.
	
	In contrast to the above, $\inf\colon \Powerset(\prealinf) \to \prealinf$ is not well-behaved. It neither satisfies \Cref{W2}, nor \Cref{W3}: $\inf d_e[S] \ge d_e\big(\inf \pi_1[S],\inf \pi_2[S]\big)$ fails for $S = \{(0,1),(1,1)\}$ and $\{0,1\} \in \inf^{-1}[\set{0}]$.
\end{example}

Staying with \Cref{W3} for a while we remark that it can be expressed as a weak pullback diagram, thus fitting nicely into a coalgebraic framework.\medskip\newline
\begin{minipage}{11.7cm}
\begin{lemma}[Weak Pullback Characterization of \Cref*{W3}]
\label{lem:W3-weak-pb}
Let $F$ be an endofunctor on $\Set$ with evaluation function $\ev_F$ and $i \colon \set{0} \hookrightarrow [0,\top]$ be the inclusion function. For any set $X$ we denote the unique arrow into $\set{0}$ by $!_X\colon X \to \set{0}$. Then $\ev_F$ satisfies \Cref{W3} of \Cref{def:well-behaved}, i.e., $\ev_F^{-1}[\{0\}] = F[F\{0\}]$ if and only if the diagram on the right is a weak pullback.
\end{lemma}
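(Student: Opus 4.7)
The plan is to identify the weak-pullback square as the one obtained by applying the standard "preimage as pullback" construction to $\ev_F$, and then to translate the weak universal property directly into the set-theoretic equality in \Cref{W3}.

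First I would make the diagram explicit: the square in question should have $F\{0\}$ in the top-left, $F\reals$ in the top-right (via $Fi$), $\set{0}$ in the bottom-left (via $!_{F\{0\}}$), and $\reals$ in the bottom-right (via both $i$ and $\ev_F$). The ordinary pullback of $\{0\} \xrightarrow{i} \reals \xleftarrow{\ev_F} F\reals$ in $\Set$ is (up to iso) $P = \ev_F^{-1}[\{0\}]$, with projections $!_P \colon P \to \set{0}$ and the inclusion $\iota\colon P \hookrightarrow F\reals$. The square is a weak pullback iff there is a (not necessarily unique) set map $m\colon P \to F\{0\}$ making both triangles commute, namely $Fi \circ m = \iota$ (the left triangle commutes automatically since $\set{0}$ is terminal on that side).

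Next I would observe that commutativity of the original square, $i \circ !_{F\{0\}} = \ev_F \circ Fi$, is exactly the statement $Fi[F\{0\}] \subseteq \ev_F^{-1}[\{0\}]$, since the left-hand side is constantly $0$. Assuming \Cref{W3}, this inclusion holds, so the square commutes, and for any competing cone $(Z, p\colon Z\to\set{0}, q\colon Z \to F\reals)$ with $i\circ p = \ev_F\circ q$ we have $q[Z]\subseteq \ev_F^{-1}[\{0\}] = Fi[F\{0\}]$; using the axiom of choice we pick, for each $z\in Z$, an element $h(z) \in Fi^{-1}[\set{q(z)}]$, yielding $h\colon Z\to F\{0\}$ with $Fi\circ h = q$ (and $!_{F\{0\}}\circ h = p$ automatically, since $\set{0}$ is a singleton). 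This gives the forward direction.

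For the converse, assume the square is a weak pullback. Its commutativity already gives $Fi[F\{0\}] \subseteq \ev_F^{-1}[\{0\}]$. To obtain the reverse inclusion, instantiate the weak universal property with the canonical cone $Z := \ev_F^{-1}[\{0\}]$, $p := {!_Z}$, $q := \iota$: the cone condition $i\circ !_Z = \ev_F \circ \iota$ holds because both sides are constantly $0$ by definition of $\ev_F^{-1}[\{0\}]$. The mediating map $h\colon \ev_F^{-1}[\{0\}] \to F\{0\}$ then satisfies $\iota = Fi \circ h$, hence $\ev_F^{-1}[\{0\}] = \iota[\ev_F^{-1}[\{0\}]] \subseteq Fi[F\{0\}]$, which gives \Cref{W3}.

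No step presents a real obstacle: the argument is essentially a re-expression of the set-theoretic equality as the existence of a factorisation through $Fi$, mediated by the axiom of choice. The only thing to be a little careful about is keeping the roles of the two inclusions $i$ (into $\reals$) and $\iota$ (the set-theoretic inclusion $\ev_F^{-1}[\{0\}] \hookrightarrow F\reals$) clearly separated.
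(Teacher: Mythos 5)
Your proof is correct and follows essentially the same route as the paper's: commutativity of the square is the inclusion $Fi[F\{0\}] \subseteq \ev_F^{-1}[\{0\}]$, the reverse inclusion gives weak universality by choosing (via the axiom of choice) a preimage under $Fi$ for each element of a competing cone, and the converse is obtained by instantiating weak universality with the inclusion $\ev_F^{-1}[\{0\}] \hookrightarrow F\reals$ itself. The only cosmetic difference is your explicit identification of the genuine pullback with $\ev_F^{-1}[\{0\}]$, which the paper leaves implicit.
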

\end{minipage}\hfill
\begin{minipage}{3.5cm}\begin{diagram}
		\matrix(m)[matrix of math nodes, column sep=20pt, row sep=30pt,ampersand replacement=\&]{
			F\set{0} \& \set{0}\\
			F[0,\top] \& {[0,\top]}\\
		};
		\draw[->] (m-1-1) edge node[above]{$!_{F\set{0}}$} (m-1-2);
		\draw[->] (m-1-1) edge node[left]{$Fi$} (m-2-1);
		\draw[right hook->] (m-1-2) edge node[right]{$i$} (m-2-2);
		\draw[->] (m-2-1) edge node[below]{$\ev_F$} (m-2-2);
	\end{diagram}\end{minipage}
	
\begin{proof}
	We consider the extended diagram on the right hand side below. Commutativity of the diagram is equivalent to $\ev_F^{-1}[\set{0}] \supseteq  Fi[F\set{0}]$.\\
	\begin{minipage}{10cm}
	\parindent=1.5em\relax
	\indent Given a set $X$ and a function $f \colon X \to F[0,\top]$ as depicted below, we conclude again by commutativity ($i \circ !_X = \ev_F \circ f$) that $f(x) \in \ev_F^{-1}[\set{0}]$ for all $x \in X$.
	
	Now we show that the weak universality is equivalent to the other inclusion. First suppose that $\ev_F^{-1}[\set{0}] \subseteq  Fi[F\set{0}]$ then for $f(x) \in \ev_F^{-1}[\set{0}]$ we can choose a (not necessarily unique) $x_0 \in F\set{0}$ such that $f(x) = Fi(x_0)$. If we define $\phi\colon X \to F\set{0}$ by $\phi(x)  = x_0$ then clearly $\phi$ makes the above diagram commute and thus we have a weak pullback.
\end{minipage}\hfill
\begin{minipage}{5cm}\begin{diagram}
		\matrix(m)[matrix of math nodes, column sep=20pt, row sep=20pt]{
			X\\
			&F\set{0} & \set{0}\\
			&F[0,\top] & {[0,\top]}\\
		};
		\draw[->] (m-1-1) edge[bend left] node[above] {$!_X$} (m-2-3);
		\draw[->] (m-1-1) edge[bend right] node[left] {$f$} (m-3-2);
		\draw[->] (m-1-1) edge[dashed] node[above] {$\phi$} (m-2-2);
		\draw[->] (m-2-2) edge node[above]{$!_{F\set{0}}$} (m-2-3);
		\draw[->] (m-2-2) edge node[left]{$Fi$} (m-3-2);
		\draw[right hook->] (m-2-3) edge node[right]{$i$} (m-3-3);
		\draw[->] (m-3-2) edge node[below]{$\ev_F$} (m-3-3);
	\end{diagram}\end{minipage}
	\indent Conversely if the diagram is a weak pullback we consider the set $X = \ev_F^{-1}[\set{0}]$ and the function $f \colon \ev_F^{-1}[\set{0}] \hookrightarrow F[0,\top], f(x) = x$. Now for any $x \in \ev_F^{-1}[\set{0}]$ we have $Fi (\phi(x)) = (Fi \circ \phi)(x) = f(x) = x$, hence -- since $\phi(x) \in F\set{0}$ -- we have $x \in Fi[F\set{0}]$. This shows that indeed $\ev_F^{-1}[\set{0}] \subseteq Fi[F\set{0}]$ holds.
\end{proof}

The only missing step towards the Wasserstein pseudometric is the observation that if $F$ preserves weak pullbacks we can define new couplings based on given ones.

\begin{lemma}[Gluing Lemma]
	\label{lem:coupling}
	\index{Gluing lemma}
	Let $F$ be an endofunctor on $\Set$, $X$ a set, $t_1, t_2, t_3 \in FX$, $t_{12} \in \Couplings{F}(t_1,t_2)$, and $t_{23} \in \Couplings{F}(t_2,t_3)$ be couplings. If $F$ preserves weak pullbacks then there is a coupling $t_{123} \in \Couplings{F}(t_1, t_2, t_3)$ such that
	\begin{align*}
		F\langle \pi_1, \pi_2\rangle(t_{123}) = t_{12} \quad  \text{and} \quad F\langle\pi_2, \pi_3\rangle(t_{123}) = t_{23}
	\end{align*} 
	where $\pi_i\colon X^3 \to X$ are the projections of the ternary product. Moreover, $t_{13}:= F\langle\pi_1, \pi_3\rangle(t_{123})$ is a coupling of $t_1$ and $t_3$, i.e., we have $t_{13} \in \Couplings{F}(t_1,t_3)$.
\end{lemma}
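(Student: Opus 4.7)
The plan is to realize $X^3$ as a pullback in $\Set$ and then use weak pullback preservation to transfer this construction to $F$. Specifically, observe that the square with corners $X^3$, $X^2$, $X^2$, $X$, with maps $\langle \pi_1,\pi_2\rangle, \langle \pi_2,\pi_3\rangle \colon X^3 \to X^2$ on top/left and $\pi_2, \pi_1 \colon X^2 \to X$ on the bottom/right (so both legs pick out the ``middle'' $X$-coordinate), is a pullback in $\Set$: an element of the pullback is a pair $\bigl((a,b),(b',c)\bigr)\in X^2\times X^2$ with $b=b'$, which is clearly in bijection with $X^3$ via $(a,b,c)\mapsto \bigl((a,b),(b,c)\bigr)$.

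Since $F$ preserves weak pullbacks, applying $F$ yields a weak pullback square with $F(X^3)$ at the apex and legs $F\langle\pi_1,\pi_2\rangle$ and $F\langle\pi_2,\pi_3\rangle$. Next I would observe that $t_{12}$ and $t_{23}$ form a compatible cone over this square: by the coupling hypotheses, $F\pi_2(t_{12}) = t_2 = F\pi_1(t_{23})$. The weak universal property then supplies an element $t_{123}\in F(X^3)$ with $F\langle\pi_1,\pi_2\rangle(t_{123}) = t_{12}$ and $F\langle\pi_2,\pi_3\rangle(t_{123}) = t_{23}$, which is exactly the desired gluing.

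It remains to check that $t_{123}$ is a coupling of $t_1, t_2, t_3$ and that $t_{13} := F\langle\pi_1,\pi_3\rangle(t_{123})$ couples $t_1$ and $t_3$. This is a routine functoriality computation: for each ternary projection $\pi_i^3 \colon X^3\to X$, factor it through one of $\langle\pi_1,\pi_2\rangle$ or $\langle\pi_2,\pi_3\rangle$ followed by a binary projection (for instance $\pi_1^3 = \pi_1^2 \circ \langle\pi_1,\pi_2\rangle$ and $\pi_3^3 = \pi_2^2 \circ \langle\pi_2,\pi_3\rangle$), apply $F$, and use the gluing equations together with the coupling properties of $t_{12}$ and $t_{23}$. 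The same kind of factorization $\pi_i^2\circ \langle\pi_1,\pi_3\rangle$ for $i=1,2$ reduces the claim about $t_{13}$ to the already-verified identities for $t_{123}$.

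The main (and really only) conceptual step is recognizing that $X^3$ sits as a pullback over the middle coordinate of two copies of $X^2$; once that is in place the proof is a direct application of weak pullback preservation, and the verifications are purely diagram-chasing. No use of an evaluation function or of the metric is needed here — this is a structural fact about $F$.
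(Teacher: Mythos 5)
Your argument is correct and is essentially identical to the paper's own proof: both realize $X^3$ as the pullback of the two binary projection maps over the middle coordinate, apply weak pullback preservation of $F$ to obtain $t_{123}$ from the compatible pair $(t_{12},t_{23})$, and verify the coupling conditions by factoring the ternary projections through the binary ones. No gaps.
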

\begin{proof}
	Let $\tau_i\colon X \times X$ be the projections of the binary product. We first observe that the following diagram is a pullback square.
	\begin{center}\begin{diagram}
		\matrix(m)[matrix of math nodes, column sep=15pt, row sep=10pt]{
			& X\times X \times X &\\
			X\times X && X\times X\\
			& X \\
		};
		\draw[->] (m-1-2) edge node[above left]{$\langle\pi_1, \pi_2\rangle$} (m-2-1);
		\draw[->] (m-1-2) edge node[above right]{$\langle\pi_2, \pi_3\rangle$} (m-2-3);
		\draw[->] (m-2-1) edge node[below left]{$\tau_2$} (m-3-2);
		\draw[->] (m-2-3) edge node[below right]{$\tau_1$} (m-3-2);
	\end{diagram}\end{center}
	Given any set $P$ along with functions $p_1,p_2\colon P \to X \times X$ satisfying the condition $\tau_2 \circ p_1 = \tau_1 \circ p_2$ the unique mediating arrow $u\colon P \to X\times X \times X$ is given by $u = \langle\tau_1 \circ p_1, \tau_2\circ p_1, \tau_2\circ p_2\rangle = \langle\tau_1 \circ p_1, \tau_1\circ p_2, \tau_2\circ p_2\rangle$. Now let us look at the following diagram. 
	 
	 \begin{center}\begin{diagram}
 		\matrix(m)[matrix of math nodes, column sep=40pt, row sep=10pt]{
 			&& F(X\times X \times X)\\
 			& F(X\times X) && F(X\times X)\\
 			FX && FX && FX\\
 		};
 		\draw[->] (m-1-3) edge node[above left]{$F\langle\pi_1, \pi_2\rangle$} (m-2-2);
 		\draw[->] (m-1-3) edge node[above right]{$F\langle\pi_2, \pi_3\rangle$} (m-2-4);
 		\draw[->] (m-2-2) edge node[below left]{$F\tau_2$} (m-3-3);
 		\draw[->] (m-2-4) edge node[below right]{$F\tau_1$} (m-3-3);
 		\draw[->] (m-2-2) edge node[above left]{$F\tau_1$} (m-3-1);
 		\draw[->] (m-2-4) edge node[above right]{$F\tau_2$} (m-3-5);
	\end{diagram}\end{center}
	 Since $F$ preserves weak pullbacks, the inner part of this diagram is a weak pullback. We recall that $t_{12} \in \Couplings{F}(t_1,t_2)$ and $t_{23} \in \Couplings{F}(t_2,t_3)$ so we have $F\tau_2(t_{12}) = t_2 = F\tau_1(t_{23})$. Using this, we can use the (weak) universality of the pullback to obtain\footnote{Explicitly: Consider $\set{t_2}$ with functions $p_1, p_2\colon \set{t_2}\to F(X\times X)$ where $p_1(t_2) = t_{12}$ and $p_2(t_2) = t_{23}$, then by the weak pullback property there is a (not necessarily unique) function $u\colon \set{t_2} \to F(X \times X \times X)$ satisfying $F(\langle\pi_1, \pi_2\rangle) \circ u = p_1$ and $F(\langle\pi_2, \pi_3\rangle) \circ u = p_2$. We simply define $t_{123} := u(t_2)$.} an element $t_{123} \in F(X\times X \times X)$ which satisfies the two equations of the lemma and moreover $F\pi_1(t_{123}) = F\big(\tau_1 \circ \langle\pi_1, \pi_2\rangle\big)(t_{123}) = F\tau_1 \circ F\langle\pi_1, \pi_2\rangle(t_{123}) = F\tau_1(t_{12}) = t_1$ and analogously $F\pi_2(t_{123}) = t_2$, $F\pi_3(t_{123}) = t_3$ yielding $t_{123} \in \Couplings{F}(t_1, t_2, t_3)$.
	 
	For $t_{13} := F\langle\pi_1, \pi_3\rangle (t_{123})$ we observe that $F\tau_1(t_{13}) = F\tau_1\big(F\langle\pi_1, \pi_3\rangle (t_{123})\big) = F\big(\tau_1\circ \langle\pi_1, \pi_3\rangle\big) (t_{123}) = F\pi_1 (t_{123}) = t_1$ and similarly $F\tau_2(t_{13}) = t_3$ so $t_{13} \in \Couplings{F}(t_1,t_3)$ as claimed.
\end{proof}

With the help of this lemma we can now finally give sufficient conditions to guarantee that the Wasserstein distance satisfies the triangle inequality. Apparently, since we use the above lemma, this will work only for weak pullback preserving functors and we will also need \Cref{W1,W2} of \Cref{def:well-behaved} for the proof. 

\begin{lemma}[Triangle Inequality for the Wasserstein Function]
	\label{lem:wasserstein-triangle-inequality}
	Let $F$ be an endofunctor  on $\Set$ with evaluation function $\ev_F$. If $F$ preserves weak pullbacks and $\ev_F$ satisfies \Cref{W1,W2} of \Cref{def:well-behaved} then for any pseudometric space $(X,d)$ the Wasserstein distance $\Wasserstein{F}{d}$ satisfies the triangle inequality.
\end{lemma}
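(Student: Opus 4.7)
The plan is to fix arbitrary couplings $t_{12}\in\Couplings{F}(t_1,t_2)$ and $t_{23}\in\Couplings{F}(t_2,t_3)$ (if either set is empty, the corresponding Wasserstein distance is $\top$ and the inequality is vacuous), glue them via the Gluing Lemma into a coupling $t_{123}\in\Couplings{F}(t_1,t_2,t_3)$, and set $t_{13}:=F\langle\pi_1,\pi_3\rangle(t_{123})\in\Couplings{F}(t_1,t_3)$. It will then suffice to show
\[
\EvaluationFunctor{F}d(t_{13})\le \EvaluationFunctor{F}d(t_{12})+\EvaluationFunctor{F}d(t_{23}),
\]
because taking the infimum over $t_{12}$ and $t_{23}$ delivers the triangle inequality for $\Wasserstein{F}{d}$.

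Writing $d_{ij}:=d\circ\langle\pi_i,\pi_j\rangle\colon X^3\to\reals$, the triangle inequality of $d$ gives the \emph{pointwise} bound $d_{13}\le d_{12}+d_{23}$ on $X^3$. Applying \Cref{W1} to $t_{123}$ yields $\EvaluationFunctor{F}d_{13}(t_{123})\le \EvaluationFunctor{F}(d_{12}+d_{23})(t_{123})$. Moreover, naturality of the evaluation functor in its argument (i.e.\ $\EvaluationFunctor{F}(h\circ k)=\ev_F\circ Fh\circ Fk$) combined with the Gluing Lemma identities yields $\EvaluationFunctor{F}d_{13}(t_{123})=\EvaluationFunctor{F}d(t_{13})$, $\EvaluationFunctor{F}d_{12}(t_{123})=\EvaluationFunctor{F}d(t_{12})$, and $\EvaluationFunctor{F}d_{23}(t_{123})=\EvaluationFunctor{F}d(t_{23})$.

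The main obstacle is to establish the subadditivity
\[
\EvaluationFunctor{F}(d_{12}+d_{23})(t_{123})\;\le\;\EvaluationFunctor{F}d_{12}(t_{123})+\EvaluationFunctor{F}d_{23}(t_{123}),
\]
since neither $F$ nor $\ev_F$ is assumed to be additive. The trick is to invoke \Cref{W2} on a cleverly chosen element of $F(\reals^2)$. Consider the map $u:=\langle d_{12}+d_{23},\,d_{23}\rangle\colon X^3\to\reals\times\reals$ and apply $F$ to obtain $Fu(t_{123})\in F(\reals^2)$. Its two projections yield $\ev_F\circ F\pi_1\circ Fu(t_{123})=\EvaluationFunctor{F}(d_{12}+d_{23})(t_{123})$ and $\ev_F\circ F\pi_2\circ Fu(t_{123})=\EvaluationFunctor{F}d_{23}(t_{123})$. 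Since $d$ is non-negative, $d_e\circ u=(d_{12}+d_{23})-d_{23}=d_{12}$ pointwise, so \Cref{W2} specializes to
\[
d_e\!\left(\EvaluationFunctor{F}(d_{12}+d_{23})(t_{123}),\,\EvaluationFunctor{F}d_{23}(t_{123})\right)\;\le\;\EvaluationFunctor{F}d_{12}(t_{123}).
\]
Applying \Cref{lem:sum-vs-dist} (with possibly infinite values) converts this distance bound into the desired subadditivity.

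Chaining the three inequalities $\EvaluationFunctor{F}d(t_{13})=\EvaluationFunctor{F}d_{13}(t_{123})\le\EvaluationFunctor{F}(d_{12}+d_{23})(t_{123})\le\EvaluationFunctor{F}d(t_{12})+\EvaluationFunctor{F}d(t_{23})$ and taking infima over the couplings completes the argument. I expect the subadditivity step to be the only subtle point; the rest is bookkeeping using functoriality and the Gluing Lemma.
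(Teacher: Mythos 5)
Your overall strategy is sound and genuinely different from the paper's: you prove the triangle inequality directly on representatives (glue $t_{12}$ and $t_{23}$ via \Cref{lem:coupling}, bound $\EvaluationFunctor{F}d(t_{13})$ by a sum, then take infima), whereas the paper proves the equivalent nonexpansiveness formulation $d_e\big(\Wasserstein{F}{d}(t_1,t_2),\Wasserstein{F}{d}(t_1,t_3)\big)\leq\Wasserstein{F}{d}(t_2,t_3)$ from \Cref{lem:alt-char-triangle} and closes with an $\epsilon$-argument on the infima. Both arguments rest on the same three ingredients: the Gluing Lemma, \Cref{W1} and \Cref{W2}.

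There is, however, a concrete gap in your subadditivity step. The auxiliary map $u=\langle d_{12}+d_{23},d_{23}\rangle$ must land in $\reals^2=[0,\top]^2$ for $Fu$, for $\EvaluationFunctor{F}(d_{12}+d_{23})$, and for the applications of \Cref{W1} and \Cref{W2} to make sense; but for finite $\top$ (e.g.\ $\top=1$, used throughout the paper) the sum $d_{12}+d_{23}$ can exceed $\top$, so $u$ is not a map into $\reals^2$ and $\EvaluationFunctor{F}(d_{12}+d_{23})$ is undefined. Moreover, even when $\top=\infty$ the claimed pointwise identity $d_e\circ u=d_{12}$ fails at points where $d_{23}=\infty$, since there $d_e(\infty,\infty)=0$. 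Both defects are repairable: replace $d_{12}+d_{23}$ by its truncation $m:=\min\{d_{12}+d_{23},\top\}$. Then $d_{13}\leq m$ pointwise (because $d_{13}\leq\top$), the inequality $d_e\circ\langle m,d_{23}\rangle\leq d_{12}$ holds pointwise in all cases, and combining \Cref{W1}, \Cref{W2} and \Cref{lem:sum-vs-dist} yields $\EvaluationFunctor{F}d(t_{13})\leq\EvaluationFunctor{F}m(t_{123})\leq\EvaluationFunctor{F}d_{23}(t_{123})+\EvaluationFunctor{F}d_{12}(t_{123})$, which is exactly what you need before taking infima. It is worth observing that the paper's route sidesteps this issue entirely: its pointwise inequality $d_e\circ(d\times d)\circ\big\langle\langle\pi_1,\pi_2\rangle,\langle\pi_1,\pi_3\rangle\big\rangle\leq d\circ\langle\pi_2,\pi_3\rangle$ only ever manipulates values in $[0,\top]$, which is presumably why the authors work with the $d_e$-formulation of the triangle inequality rather than with sums.
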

\begin{proof}
	We will use the characterization of the triangle inequality given by \Cref{lem:alt-char-triangle} (\cpageref{lem:alt-char-triangle}). Hence, given any pseudometric space $(X,d)$ we just have shown that for every $t_1 \in FX$ the function $\Wasserstein{F}{d}(t_1,\_)\colon (FX, \Wasserstein{F}{d}) \nonexpansiveTo (\reals,d_e)$ is nonexpansive, i.e., that the inequality
	\begin{align}
		d_e\big(\Wasserstein{F}{d}(t_1,t_2),\Wasserstein{F}{d}(t_1,t_3)\big) \leq \Wasserstein{F}{d}(t_2,t_3)\label{eq:wasserstein-triangle1}
	\end{align}
	holds for all $t_2, t_3 \in FX$. We will show this in several steps.
	
	First of all we consider the case where no coupling of $t_2$ and $t_3$ exists. In this case the right hand side of \eqref{eq:wasserstein-triangle1} is $\top$ and it is easy to see that the the left hand side can never exceed that value because $\Wasserstein{F}{d}$ is non-negative. Thus in the remainder of the proof we only consider the case where $\Couplings{F}(t_2,t_3) \not = \emptyset$. 
	
	As next step we observe that if $\Couplings{F}(t_1,t_2)  = \Couplings{F}(t_1,t_3) = \emptyset$ the left hand side of \eqref{eq:wasserstein-triangle1} is $0$ and the right hand side is non-negative. Thus we are left with the cases where $\Couplings{F}(t_1,t_2) \not = \emptyset$ or $\Couplings{F}(t_1,t_3) \not = \emptyset$.
	
	Let us first assume that $\Couplings{F}(t_1,t_2) \not = \emptyset$ and recall that we required $\Couplings{F}(t_2,t_3) \not = \emptyset$. With the \nameref{lem:coupling} (\Cref{lem:coupling}) we can then conclude that also $\Couplings{F}(t_1,t_3) \not = \emptyset$. Similarly, if $\Couplings{F}(t_1,t_3) \not = \emptyset$ we can use the swap map as in the proof of \Cref{lem:wasserstein-symmetric} to see that $\Couplings{F}(t_3, t_1) \not = \emptyset$. As above the \nameref{lem:coupling} yields $\Couplings{F}(t_2,t_1) \not = \emptyset$ and again using the swap map we conclude that $\Couplings{F}(t_1,t_2) \not = \emptyset$. Thus the sole remaining case is the case where all couplings exists, i.e., we have $\Couplings{F}(t_1,t_2) \not = \emptyset$, $\Couplings{F}(t_2,t_3) \not = \emptyset$ and $\Couplings{F}(t_1,t_3) \not = \emptyset$.
	
	As intermediate step we recall that for all $x \in X$ the function $d(x,\_)$ is nonexpansive (see \Cref{lem:alt-char-triangle}, \cpageref{lem:alt-char-triangle}). Using the projections $\pi_i\colon X^3 \to X$ of the product this can be formulated as the inequality $d_e \circ (d \times d) \circ \big\langle \langle\pi_1, \pi_2\rangle, \langle\pi_1, \pi_3\rangle \big\rangle \leq d \circ \langle \pi_2, \pi_3\rangle$ and the monotonicity of $\EvaluationFunctor{F}$ (\Cref{W1}) implies that also the inequality \begin{align}
		\EvaluationFunctor{F}\big(d_e \circ (d \times d) \circ \big\langle \langle\pi_1, \pi_2\rangle, \langle\pi_1, \pi_3\rangle \big\rangle \big) \leq \EvaluationFunctor{F}(d \circ \langle \pi_2, \pi_3\rangle)\label{eq:wasserstein-triangle2}
	\end{align} holds. We will now use this inequality to prove \eqref{eq:wasserstein-triangle1} for the remaining case in which all couplings exists.
	
	As already pointed out before, for any $t_{12} \in \Couplings{F}(t_1, t_2)$ and $t_{23} \in \Couplings{F}(t_2,t_3)$ the \nameref{lem:coupling} (\Cref{lem:coupling}) yields a $t_{123} \in \Couplings{F}(t_1,t_2,t_3)$ and a coupling $t_{13} := F(\langle \pi_1, \pi_3\rangle)(t_{123}) \in \Couplings{F}(t_1,t_3)$. Plugging in $t_{123}$ in the inequality \eqref{eq:wasserstein-triangle2} above yields $\EvaluationFunctor{F}d_e\left(F(d\times d)(t_{12}, t_{13})\right) \leq \EvaluationFunctor{F}d(t_{23})$. Using well-behavedness (\Cref{W2}) of $\ev_F$ on the left hand side we obtain the following, intermediary result:
	\begin{align}
		d_e\Big(\EvaluationFunctor{F}d(t_{12}), \EvaluationFunctor{F}d(t_{13})\Big) \leq \EvaluationFunctor{F}d(t_{23}) \,.\label{eq:proof-wasserstein-triangle-intermediary-result}
	\end{align}
	If we define $d_{ij} := \Wasserstein{F}{d}(t_i,t_j)$ we can express \eqref{eq:wasserstein-triangle1} as $d_e(d_{12}, d_{13}) \leq d_{23}$. This is obviously true for $d_{12} = d_{13}$ so without loss of generality we assume $d_{12} < d_{13}$ and claim that for all $\epsilon > 0$ there is a coupling $t_{12} \in \Couplings{F}(t_1,t_2)$ such that for all couplings $t_{13} \in \Couplings{F}(t_1,t_3)$ we have
	\begin{align}
	d_e(d_{12}, d_{13}) \leq \epsilon + d_e\Big(\EvaluationFunctor{F}d(t_{12}), \EvaluationFunctor{F}d(t_{13})\Big)\,.\label{eq:proof-wasserstein-triangle}
	\end{align}
	To prove this claim, we recall that the Wasserstein distance is defined as an infimum, so we have $d_{13} \leq \EvaluationFunctor{F}d(t_{13})$ for all couplings $t_{13}$. Moreover, for the same reason we can pick, for every $\epsilon > 0$, a coupling $t_{12} \in \Couplings{F}(t_1,t_2)$, such that $\EvaluationFunctor{F}d(t_{12}) - d_{12} \leq \epsilon$ which can equivalently be stated as $\EvaluationFunctor{F}d(t_{12}) \leq d_{12} + \epsilon$. With this fixed coupling we now proceed to establish \eqref{eq:wasserstein-triangle2} for all $t_{13} \in \Couplings{F}(t_1,t_3)$.
	
	If $d_{13} = \infty$ we have $d_e(d_{12},d_{13}) = \infty$ but also $\EvaluationFunctor{F}d(t_{13}) = \infty$ and $\EvaluationFunctor{F}d(t_{12}) \leq d_{12}+ \epsilon < \infty + \epsilon = \infty$ and therefore $d_e\big(\,\EvaluationFunctor{F}d(t_{12}), \EvaluationFunctor{F}d(t_{13})\big) = \infty$ and thus \eqref{eq:proof-wasserstein-triangle} is valid. For $d_{13} < \infty$ we have 
	\begin{align*}
		d_e(d_{12}, d_{13}) &= d_{13} - d_{12} \leq \EvaluationFunctor{F}d(t_{13}) - \Big(\EvaluationFunctor{F}d(t_{12})-\epsilon\Big) = \epsilon + \Big(\EvaluationFunctor{F}d(t_{13}) - \EvaluationFunctor{F}d(t_{12})\Big)\\ 
		&\leq \epsilon +\big|\,\EvaluationFunctor{F}d(t_{13})- \EvaluationFunctor{F}d(t_{12})\big| \leq \epsilon + d_e\Big(\EvaluationFunctor{F}d(t_{12}), \EvaluationFunctor{F}d(t_{13})\Big)
	\end{align*}
	where the last inequality is due to the fact that $\EvaluationFunctor{F}d(t_{12}) < \infty$. Hence we have established our claimed validity of \eqref{eq:proof-wasserstein-triangle}. Using this, \eqref{eq:proof-wasserstein-triangle-intermediary-result} and the fact that -- as above -- given $\epsilon > 0$ we have a coupling $t_{23}$ such that $\EvaluationFunctor{F}d(t_{23}) \leq d_{23} + \epsilon$ we obtain the inequality
	\begin{align*}
	d_e(d_{12}, d_{13}) \leq \epsilon + d_e\Big(\EvaluationFunctor{F}d(t_{12}), \EvaluationFunctor{F}d(t_{13})\Big)\leq \epsilon + \EvaluationFunctor{F}d(t_{23}) \leq 2\epsilon + d_{23}
	\end{align*}
	which also proves $d_e(d_{12}, d_{13}) \leq d_{23}$. Indeed if $d_e(d_{12}, d_{13})> d_{23}$ then we would have $d_e(d_{12}, d_{13}) = d_{23}+\epsilon'$ and we just take $\epsilon < \epsilon'/2$ which yields the contradiction $	d_e(d_{12}, d_{13}) \leq 2\epsilon + d_{23} < \epsilon' + d_{23} = d_e(d_{12}, d_{13})$.
\end{proof}

Combining this result  with our previous considerations we finally obtain the desired result which guarantees that the Wasserstein distance is indeed a pseudometric. 

\begin{theorem}[Wasserstein Pseudometric]
	\label{prop:wasserstein-is-pseudometric}
Let $F$ be an endofunctor  on $\Set$ with evaluation function $\ev_F$. If $F$ preserves weak pullbacks and $\ev_F$ is well-behaved then for any pseudometric space $(X,d)$ the Wasserstein distance $\Wasserstein{F}{d}$ is a pseudometric.
\end{theorem}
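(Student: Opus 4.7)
The plan is essentially bookkeeping: the three previous lemmas have already dispatched reflexivity, symmetry, and the triangle inequality separately under hypotheses that are each implied by our assumptions, so the theorem reduces to collecting these pieces.

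First I would fix a pseudometric space $(X,d)$ and recall the three conditions to be verified for $\Wasserstein{F}{d}\colon FX \times FX \to \reals$: reflexivity, symmetry, and the triangle inequality. Symmetry is the easiest: \Cref{lem:wasserstein-symmetric} gives it unconditionally (the argument only uses the swap map on $X \times X$ together with symmetry of $d$), so neither the well-behavedness of $\ev_F$ nor weak pullback preservation is needed here.

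Next I would invoke \Cref{lem:wasserstein-reflexive} for reflexivity. This lemma only requires \cref*{W3} of \Cref{def:well-behaved}, which is included in the hypothesis that $\ev_F$ is well-behaved. The underlying idea is to push $t_1 \in FX$ along the diagonal $\delta\colon X \to X^2$ to obtain a canonical coupling $F\delta(t_1) \in \Couplings{F}(t_1,t_1)$, and then use \cref*{W3} together with $d \circ \delta = 0$ to conclude $\EvaluationFunctor{F}d(F\delta(t_1)) = 0$, so the infimum in \eqref{eq:wasserstein} is $0$.

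Finally, the triangle inequality is handed to us by \Cref{lem:wasserstein-triangle-inequality}, whose hypotheses are exactly weak pullback preservation of $F$ together with \cref*{W1,W2}, all of which are in force. Combining the three assertions yields that $\Wasserstein{F}{d}$ is a pseudometric on $FX$, as required. There is no genuine obstacle at this step — the substantive work has already been isolated into the three preceding lemmas, and in particular into the Gluing Lemma (\Cref{lem:coupling}) used inside \Cref{lem:wasserstein-triangle-inequality}, which is where weak pullback preservation truly earns its keep.
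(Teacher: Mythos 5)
Your proposal is correct and matches the paper's own proof exactly: the paper likewise assembles the result from \Cref{lem:wasserstein-reflexive} (reflexivity, via \cref*{W3}), \Cref{lem:wasserstein-symmetric} (symmetry, unconditionally), and \Cref{lem:wasserstein-triangle-inequality} (triangle inequality, via weak pullback preservation and \cref*{W1,W2}). Your additional remarks on which hypotheses each lemma actually consumes are accurate and, if anything, more explicit than the paper's one-line proof.
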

\begin{proof}
	Reflexivity is given by \Cref{lem:wasserstein-reflexive}, symmetry by \Cref{lem:wasserstein-symmetric} and the triangle inequality by \Cref{lem:wasserstein-triangle-inequality}.
\end{proof}

With this result in place we can now finally study the Wasserstein lifting of a functor. Of course, our requirements on $F$ and $\ev_F$ are just sufficient conditions to prove that the Wasserstein distance is a pseudometric so it might be possible to give a more general definition. However, we will always work with weak pullback preserving functors and well-behaved evaluation functions so the following definition suffices.

\begin{definition}[Wasserstein Lifting] 
	Let $F$ be a weak pullback preserving endofunctor on $\Set$ with well-behaved evaluation function $\ev_F$. We define the \emph{Wasserstein lifting} of $F$ to be the functor $\LiftedFunctor{F}\colon \PMet\to\PMet$, $\LiftedFunctor{F}(X,d) = (FX,\Wasserstein{F}{d})$, $\LiftedFunctor{F}f = Ff$.
\end{definition}

Of course, we will have to check the functoriality. Its proof relies on \Cref{W1} of \Cref{def:well-behaved}, the monotonicity of $\EvaluationFunctor{F}$.

\begin{theorem}
	\label{prop:wasserstein-lifting-is-functorial}
	The Wasserstein lifting $\LiftedFunctor{F}$ is a functor on pseudometric spaces.
\end{theorem}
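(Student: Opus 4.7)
The plan is straightforward: we already know from \Cref{prop:wasserstein-is-pseudometric} that for each pseudometric space $(X,d)$ the pair $(FX,\Wasserstein{F}{d})$ is a legitimate object of $\PMet$, so the only real content is that (i) $\LiftedFunctor{F}$ sends nonexpansive maps to nonexpansive maps, and (ii) it preserves identities and composition. Items (ii) are immediate because $\LiftedFunctor{F}f$ is defined to be $Ff$ and $F$ is itself functorial on the underlying sets and functions. So the whole argument reduces to item (i).

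For (i), I would fix a nonexpansive $f\colon (X,d_X) \nonexpansiveTo (Y,d_Y)$ and two elements $t_1,t_2\in FX$, and I would show
\[
\Wasserstein{F}{d_Y}\bigl(Ff(t_1),Ff(t_2)\bigr) \,\leq\, \Wasserstein{F}{d_X}(t_1,t_2).
\]
If $\Couplings{F}(t_1,t_2)=\emptyset$ the right-hand side is $\top$ and there is nothing to prove, so assume it is nonempty. The key construction is to push couplings forward along $f$: given $t\in \Couplings{F}(t_1,t_2) \subseteq F(X\times X)$, set $s := F(f\times f)(t) \in F(Y\times Y)$. Using that $\pi_i \circ (f\times f) = f \circ \pi_i$ and functoriality of $F$, one checks $F\pi_i(s) = Ff(F\pi_i(t)) = Ff(t_i)$, so $s \in \Couplings{F}(Ff(t_1),Ff(t_2))$.

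The second ingredient is to compare the evaluated ``costs''. By nonexpansiveness of $f$ we have the pointwise inequality $d_Y \circ (f\times f) \leq d_X$ as functions $X\times X \to \reals$. Applying the evaluation functor and invoking monotonicity \Cref{W1} of \Cref{def:well-behaved}, we get $\EvaluationFunctor{F}(d_Y\circ (f\times f)) \leq \EvaluationFunctor{F}d_X$. Unfolding the definition of $\EvaluationFunctor{F}$ and using the functoriality of $F$ on the composition $d_Y \circ (f\times f)$ yields
\[
\EvaluationFunctor{F}d_Y(s) \,=\, \ev_F\circ F(d_Y\circ (f\times f))(t) \,\leq\, \ev_F\circ Fd_X(t) \,=\, \EvaluationFunctor{F}d_X(t).
\]
Taking the infimum over $t \in \Couplings{F}(t_1,t_2)$ on the right (noting that each such $t$ produces an $s \in \Couplings{F}(Ff(t_1),Ff(t_2))$, so the infimum over this smaller set is at most the infimum over the originals) gives exactly the desired inequality.

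There is no real obstacle here: the whole argument hinges on two very mild facts, namely naturality of the projections (so that forward pushforwards of couplings are still couplings) and the monotonicity condition \Cref{W1}, both of which are already available. The proof should only take a few lines once these two observations are in place.
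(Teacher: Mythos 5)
Your proposal is correct and follows essentially the same route as the paper's proof: identities and composition are inherited from $F$, couplings are pushed forward along $F(f\times f)$ using naturality of the projections, and nonexpansiveness of $f$ combined with monotonicity (\Cref{W1}) gives $\EvaluationFunctor{F}d_Y\circ F(f\times f)\leq \EvaluationFunctor{F}d_X$, after which the infimum comparison closes the argument. The only cosmetic difference is that you treat the empty-coupling case explicitly, whereas the paper absorbs it into the convention $\inf\emptyset=\top$.
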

\begin{proof}
	$\LiftedFunctor{F}$ preserves identities and composition of arrows because $F$ does. Moreover, it preserves nonexpansive functions: Let $f\colon (X,d_X) \nonexpansiveTo (Y,d_Y)$ be nonexpansive and $t_1,t_2 \in FX$. Every $t \in \Couplings{F}(t_1,t_2)$ satisfies $Ff(t_i) = Ff\big(F\pi_i(t)\big) = F(f\circ \pi_i)(t) = F\big(\pi_i\circ (f\times f)\big)(t) = F\pi_i\big(F(f\times f)(t)\big)$. Hence we can calculate
	\begin{align}
	\Wasserstein{F}{d_X}(t_1,t_2) &= \inf \set{\EvaluationFunctor{F}d_X(t) \mid t \in \Couplings{F}(t_1,t_2)}\nonumber \\
	&\geq \inf \set{\EvaluationFunctor{F}d_X(t) \mid t \in F(X \times X),\ F\pi_i\big(F(f\times f)(t)\big) = Ff(t_i)}\label{eq:wassersteinLiftingIsFunctor:proof:1}\\
	&\geq \inf \set{\EvaluationFunctor{F}d_Y\big(F(f\times f)(t)\big) \mid t \in F(X \times X),\ F\pi_i\big(F(f\times f)(t)\big) = Ff(t_i)}\label{eq:wassersteinLiftingIsFunctor:proof:2}\\
	&\geq \inf \set{\EvaluationFunctor{F}d_Y(t') \mid t' \in \Gamma_F\big(Ff(t_1), Ff(t_2)\big)} =  \Wasserstein{F}{d_Y}\big(Ff(t_1),Ff(t_2)\big)\,.\label{eq:wassersteinLiftingIsFunctor:proof:3}
	\end{align}
	In this calculation the inequality \eqref{eq:wassersteinLiftingIsFunctor:proof:1} is due to our initial observation. Furthermore, \eqref{eq:wassersteinLiftingIsFunctor:proof:2} holds because $f$ is nonexpansive, i.e., $d_X \geq d_Y\circ (f \times f)$ and applying the monotonicity (\Cref{W1} of \Cref{def:well-behaved}) of $\EvaluationFunctor{F}$  yields $\EvaluationFunctor{F}d_X \geq \EvaluationFunctor{F}(d_Y \circ (f\times f)) = \EvaluationFunctor{F}d_Y \circ F(f\times f)$. The last inequality, \eqref{eq:wassersteinLiftingIsFunctor:proof:3}, is due to the fact that there might be more couplings $t'$ than those obtained via $F(f\times f)$.
\end{proof}

Let us now study the properties of the Wasserstein lifting. As was the case for the Kantorovich lifting, also the Wasserstein lifting preserves isometries.

\begin{theorem}
	\label{prop:wasserstein-preserves-isometries}
The Wasserstein lifting $\LiftedFunctor{F}$ of a functor $F$ preserves isometries. 
\end{theorem}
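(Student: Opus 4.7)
The plan is to prove $\Wasserstein{F}{d_X}(t_1, t_2) = \Wasserstein{F}{d_Y}(Ff(t_1), Ff(t_2))$ for all $t_1, t_2 \in FX$, where $f\colon (X,d_X) \nonexpansiveTo (Y,d_Y)$ is an isometry, i.e.\ $d_X = d_Y \circ (f \times f)$. The direction $\Wasserstein{F}{d_Y}(Ff(t_1), Ff(t_2)) \le \Wasserstein{F}{d_X}(t_1, t_2)$ is immediate from \Cref{prop:wasserstein-lifting-is-functorial}, which guarantees that $\LiftedFunctor{F}f = Ff$ is nonexpansive.

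For the reverse inequality, my strategy is: starting from an arbitrary coupling $s \in \Couplings{F}(Ff(t_1), Ff(t_2))$, I will construct a coupling $t \in \Couplings{F}(t_1, t_2)$ such that $F(f \times f)(t) = s$. Once this is achieved, the isometry condition gives $\EvaluationFunctor{F}d_X(t) = \ev_F \circ Fd_Y \circ F(f \times f)(t) = \EvaluationFunctor{F}d_Y(s)$, so that $\Wasserstein{F}{d_X}(t_1, t_2) \le \EvaluationFunctor{F}d_Y(s)$; taking the infimum over $s$ will then finish the proof.

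To build such a $t$, I would factor $f \times f = (f \times \id_Y) \circ (\id_X \times f)$ and invoke weak pullback preservation of $F$ twice. First, I verify that in $\Set$ the square with corners $X \times Y$, $Y \times Y$, $X$, $Y$ whose arrows are $f \times \id$, $\pi_1$, $\pi_1$ and $f$ is a pullback (immediate from the universal property of products). Applying $F$, the weak pullback property together with the compatibility $F\pi_1(s) = Ff(t_1)$ yields some $t' \in F(X \times Y)$ with $F(f \times \id)(t') = s$ and $F\pi_1(t') = t_1$; the identity $\pi_2 \circ (f \times \id) = \pi_2$ automatically gives $F\pi_2(t') = F\pi_2(s) = Ff(t_2)$. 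A symmetric pullback diagram, with corners $X \times X$, $X \times Y$, $X$, $Y$ and arrows $\id \times f$, $\pi_2$, $\pi_2$ and $f$, then produces $t \in F(X \times X)$ with $F(\id \times f)(t) = t'$ and $F\pi_2(t) = t_2$; functoriality forces $F\pi_1(t) = F\pi_1(t') = t_1$ and $F(f \times f)(t) = F(f \times \id)(t') = s$, so $t$ has all required properties.

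The main obstacle I anticipate is choosing the factorization and the two auxiliary pullback squares so that each weak pullback application pins down exactly one marginal without disturbing the other; this is what motivates the asymmetric two-step decomposition through $F(X \times Y)$ rather than a direct attempt to invert $F(f \times f)$, which would require $f$ to be injective. As a final remark, the degenerate case $\Couplings{F}(Ff(t_1), Ff(t_2)) = \emptyset$ is handled trivially: any $t \in \Couplings{F}(t_1, t_2)$ would push forward via $F(f \times f)$ to a coupling of $Ff(t_1), Ff(t_2)$, so both coupling sets are empty simultaneously and both Wasserstein distances are $\top$.
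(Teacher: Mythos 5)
Your proof is correct and follows the same overall strategy as the paper's: nonexpansiveness of $\LiftedFunctor{F}f$ gives one inequality, and the converse is obtained by lifting each coupling $s\in\Couplings{F}\big(Ff(t_1),Ff(t_2)\big)$ back to a coupling of $t_1,t_2$ via weak pullback preservation, then using $d_X=d_Y\circ(f\times f)$ to see that the evaluation is unchanged, and finally taking infima. The only real difference is in the diagram chase: the paper assembles a symmetric diagram of three pullback squares, producing intermediate elements $s_1\in F(X\times Y)$ and $s_2\in F(Y\times X)$ and then gluing them into an element of $F(X\times X)$, whereas you factor $f\times f=(f\times\id_Y)\circ(\id_X\times f)$ and apply weak pullback preservation only twice, sequentially through $F(X\times Y)$. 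Your version is slightly more economical and equally valid: both squares you use are genuine pullbacks in $\Set$ (as you note, this follows from the universal property of the product), and the marginal conditions propagate exactly as you claim, since $\pi_2\circ(f\times\id_Y)=\pi_2$ and $\pi_1\circ(\id_X\times f)=\pi_1$. Your treatment of the degenerate case $\Couplings{F}\big(Ff(t_1),Ff(t_2)\big)=\emptyset$ is also fine; the paper gets by with the weaker one-sided observation that the right-hand side then equals $\top$, but your push-forward argument showing that $\Couplings{F}(t_1,t_2)$ is then empty as well is correct.
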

\begin{proof}
	Let $f\colon (X,d_X)\nonexpansiveTo (Y,d_Y)$ be an isometry. Since $\LiftedFunctor{F}$ is a functor, $\LiftedFunctor{F}f$ is nonexpansive, i.e., for all $t_1,t_2\in FX$ we have $\Wasserstein{F}{d_X}(t_1,t_2) \geq \Wasserstein{F}{d_Y}\big(Ff(t_1),Ff(t_2)\big)$. Now we show the opposite direction, i.e., that for all $t_1,t_2\in FX$ we have $\Wasserstein{F}{d_X}(t_1,t_2) \le \Wasserstein{F}{d_Y}\big(Ff(t_1),Ff(t_2)\big)$. 
	
	If $\Couplings{F}\big(Ff(t_1), Ff(t_2)\big) = \emptyset$ we have $\Wasserstein{F}{d_Y}\big((Ff(t_1),Ff(t_2)\big) = \top \geq \Wasserstein{F}{d_X}(t_1,t_2)$. Otherwise we will construct for each coupling $t \in \Couplings{F}\big(Ff(t_1), Ff(t_2)\big)$ a coupling $\gamma(t) \in \Gamma_F(t_1,t_2)$ such that $\EvaluationFunctor{F}d_X\big(\gamma(t)\big)=\EvaluationFunctor{F}d_Y(t)$ because then we have
	\begin{align*}
		\Wasserstein{F}{d_X}(t_1,t_2) = \inf_{t'\in \Couplings{F}(t_1,t_2)} \EvaluationFunctor{F}d_X(t') &\leq \inf_{t\in \Couplings{F}\big(Ff(t_1),Ff(t_2)\big)} \EvaluationFunctor{F}d_X\big(\gamma(t)\big)\\
		&=\inf_{t\in \Couplings{F}\big(Ff(t_1),Ff(t_2)\big)} \EvaluationFunctor{F}d_Y(t) = \Wasserstein{F}{d_Y}\big(Ff(t_1), Ff(t_2)\big)
	\end{align*}
	as desired. In this calculation the inequality is due to the fact that $\gamma(t) \in \Gamma_F(t_1,t_2)$ is a coupling and there might be other couplings which are not in the image of $\gamma$. 
	
	In order to construct $\gamma\colon \Gamma_F\big(Ff(t_1), Ff(t_2)\big) \to \Gamma_F(t_1,t_2)$, we consider the diagram below where $\pi_1 \colon X \times Y \to X$, $\pi_2 \colon Y \times X \to X$, and $\tau_i \colon Y \times Y \to Y$ are the respective projections of the products.
	
	\begin{center}\begin{diagram}
		\matrix(m)[matrix of math nodes, column sep=60pt, row sep=10pt]{
				&				& X \times X\\
				& X \times Y & 					& Y \times X\\
			X 	&				& Y \times Y		&				& X\\
				& Y			&					& Y\\
		};
		\draw[->] (m-1-3) edge node[above left]{$\id_X \times f$} (m-2-2);
		\draw[->] (m-1-3) edge node[above right]{$f \times \id_X$} (m-2-4);
		\draw[->] (m-2-2) edge node[below left]{$f \times \id_Y$} (m-3-3);
		\draw[->] (m-2-4) edge node[below right]{$\id_Y \times f$} (m-3-3);
		\draw[->] (m-2-2) edge node[above left]{$\pi_1$} (m-3-1);
		\draw[->] (m-2-4) edge node[above right]{$\pi_2$} (m-3-5);
		\draw[->] (m-3-1) edge node[below left]{$f$} (m-4-2);
		\draw[->] (m-3-3) edge node[below right]{$\tau_1$} (m-4-2);
		
		\draw[->] (m-3-3) edge node[below left]{$\tau_2$} (m-4-4);
		\draw[->] (m-3-5) edge node[below right]{$f$} (m-4-4);
	\end{diagram}\end{center}
	
	\noindent This diagram consists of pullbacks: it is easy to
        check that the diagram commutes. The unique mediating arrows
        are constructed as follows\new{: in every item we assume that
          $P$ is another object -- with a commuting square -- and we
          construct the mediating morphism to the pullback object.}
	\begin{itemize}
		\item For the lower left part let $P$ be a set with $p_1\colon P \to X$, $p_2 \colon P \to Y \times Y$ such that $f \circ p_1 = \tau_1 \circ p_2$, then define $u\colon P \to X \times Y$ as $u :=\langle p_1, \tau_2 \circ p_2\rangle$. 
		\item Analogously, for the lower right part let $P$ be a set with $p_1\colon P \to Y \times Y$, $p_2 \colon P \to X$ such that $\tau_2 \circ p_1 = f\circ p_2$, then define $u\colon P \to Y \times X$ as $u := \langle\tau_1 \circ p_1, p_2\rangle$.
		\item Finally, for the upper part let $P$ be a set with $p_1\colon P \to X \times Y$, $p_2 \colon P \to Y \times X$ such that $(f \times \id_Y) \circ p_1 = (\id_Y \times f) \circ p_2$, then define $u\colon P \to X \times X$ as $u := \langle\pi_1 \circ p_1, \pi_2 \circ p_2\rangle$.
	\end{itemize}
	
	\noindent We apply the weak pullback preserving functor $F$ to the diagram and obtain the following diagram which hence consists of three weak pullbacks.
	
	\begin{center}\begin{diagram}
		\matrix(m)[matrix of math nodes, column sep=40pt, row sep=10pt]{
				&				& F(X \times X)\\
				& F(X \times Y) & 					& F(Y \times X)\\
			FX 	&				& F(Y \times Y)		&				& FX\\
				& FY			&					& FY\\
		};
		\draw[->] (m-1-3) edge node[above left]{$F(\id_X \times f)$} (m-2-2);
		\draw[->] (m-1-3) edge node[above right]{$F(f \times \id_X)$} (m-2-4);
		\draw[->] (m-2-2) edge node[below left]{$F(f \times \id_Y)$} (m-3-3);
		\draw[->] (m-2-4) edge node[below right]{$F(\id_Y \times f)$} (m-3-3);
		\draw[->] (m-2-2) edge node[above left]{$F\pi_1$} (m-3-1);
		\draw[->] (m-2-4) edge node[above right]{$F\pi_2$} (m-3-5);
		\draw[->] (m-3-1) edge node[below left]{$Ff$} (m-4-2);
		\draw[->] (m-3-3) edge node[below right]{$F\tau_1$} (m-4-2);
		
		\draw[->] (m-3-3) edge node[below left]{$F\tau_2$} (m-4-4);
		\draw[->] (m-3-5) edge node[below right]{$Ff$} (m-4-4);
	\end{diagram}\end{center}
	
	\noindent Given a coupling $t \in \Gamma_F\big(Ff(t_1), Ff(t_2)\big) \subseteq F(Y \times Y)$ we know $F\tau_i(t) = Ff(t_i) \in FY$. 
	
	Since the lower left square in the diagram is a weak pullback, we obtain an element $s_1\in F(X\times Y)$ with $F\pi_1(s_1) = t_1$ and $F(f\times \id_Y)(s_1) = t$. Similarly, from the lower right square, we obtain $s_2\in F(Y\times X)$ with $F\pi_2(s_2) = t_2$ and $F(\id_Y\times f)(s_2) = t$. Again by the weak pullback property we obtain our $\gamma(t) \in F(X\times X)$ with $F(\id_X\times f)\big(\gamma(t)\big) = s_1$, $F(f\times \id_X)\big(\gamma(t)\big) = s_2$.
	
	We convince ourselves that $\gamma(t)$ is indeed a coupling of $t_1$ and $t_2$: Let $\pi_i'\colon X \times X \to X$ be the missing projections. Using $\pi_1 \circ (\id_X \times f) = \pi_1'$ we have $F\pi_1'\big(\gamma(t)\big) = F\big(\pi_1\circ (\id_X\times f)\big)\big(\gamma(t)\big) = F \pi_1\circ F(\id_X\times f)\big(\gamma(t)\big) = F\pi_1(s_1) = t_1$
	and analogously $F\pi_2'\big(\gamma(t)\big) = F\big(\pi_2\circ (f\times\id_X)\big)\big(\gamma(t)\big) = F\pi_2\circ F(f\times\id_X)\big(\gamma(t)\big) = F\pi_2(s_2) = t_2$. Moreover, using $f \times f = (f \times \id_Y) \circ (\id_X \times f)$ and functoriality we have 
	\begin{align*}
		F(f\times f)\big(\gamma(t)\big) &= F\big((f\times \id_Y)\circ (\id_X\times f)\big)\big(\gamma(t)\big) \\
		&= F(f\times \id_Y)\circ F(\id_X\times f)\big(\gamma(t)\big) = F(f\times \id_Y)(s_1) = t
	\end{align*} and thus $\EvaluationFunctor{F}d_Y(t) = \EvaluationFunctor{F}d_Y(F(f\times f)(\gamma(t))) = \EvaluationFunctor{F}(d_Y\circ (f\times f))({\gamma(t)}) = \EvaluationFunctor{F}(d_X({\gamma(t)}))$ as desired. Note that the last equality is due to the fact that $f$ is an isometry.
\end{proof}

In contrast to the Kantorovich lifting, we can prove that metrics are preserved by the Wasserstein lifting in certain situations.

\begin{theorem}[Preservation of Metrics]
	\label{prop:preservation}
	Let $F$ be a weak pullback preserving endofunctor on $\Set$ with well-behaved evaluation function $\ev_F$ and $(X,d)$ be a metric space. If for all $t_1,t_2 \in FX$ where $\Wasserstein{F}{d}(t_1,t_2) = 0$ there is an optimal $F$-coupling $\gamma(t_1,t_2) \in \Couplings{F}(t_1,t_2)$ such that $0=\Wasserstein{F}{d}(t_1,t_2)=\EvaluationFunctor{F}d\big(\gamma(t_1,t_2)\big)$ then  $\Wasserstein{F}{d}$ is a metric and thus $\LiftedFunctor{F}(X,d) = (FX,\Wasserstein{F}{d})$ is a metric space.
\end{theorem}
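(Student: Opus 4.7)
The plan is to show that $\Wasserstein{F}{d}$ separates points, since we already know from \Cref{prop:wasserstein-is-pseudometric} that it is a pseudometric. So suppose $t_1, t_2 \in FX$ with $\Wasserstein{F}{d}(t_1,t_2) = 0$; the goal is to deduce $t_1 = t_2$ using the hypothesized optimal coupling together with the fact that $(X,d)$ is a genuine metric space.

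By the assumption, there is a coupling $\gamma(t_1,t_2) \in \Couplings{F}(t_1,t_2)$ with $\EvaluationFunctor{F}d\bigl(\gamma(t_1,t_2)\bigr) = \ev_F\bigl(Fd(\gamma(t_1,t_2))\bigr) = 0$. Applying \Cref*{W3} of well-behavedness (in the weak-pullback reformulation of \Cref{lem:W3-weak-pb}), I obtain some $s \in F\{0\}$ with $Fi(s) = Fd(\gamma(t_1,t_2))$, where $i\colon \{0\} \hookrightarrow \reals$ is the inclusion.

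The key step is now to exploit that $d$ is a metric: the preimage $d^{-1}[\{0\}]$ is exactly the diagonal $\Delta_X = \{(x,x) \mid x \in X\}$, which is isomorphic to $X$ via the diagonal map $\delta\colon X \to X \times X$, $\delta(x) = (x,x)$. Hence the square with top arrow $!_X \colon X \to \{0\}$, left arrow $\delta$, bottom arrow $d$, and right arrow $i$ is a pullback in $\Set$. Since $F$ preserves weak pullbacks, its image under $F$ is a weak pullback. The cone consisting of $\gamma(t_1,t_2) \in F(X\times X)$ and $s \in F\{0\}$ commutes over this square by the previous paragraph, so there exists $u \in FX$ with $F\delta(u) = \gamma(t_1,t_2)$ (and $F!_X(u) = s$, which we will not need).

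Finally, since $\gamma(t_1,t_2)$ is a coupling, $t_i = F\pi_i\bigl(\gamma(t_1,t_2)\bigr) = F\pi_i\bigl(F\delta(u)\bigr) = F(\pi_i \circ \delta)(u) = F\id_X(u) = u$ for $i = 1,2$, whence $t_1 = u = t_2$, as desired. The main obstacle, once one sees the overall strategy, is really the bookkeeping to turn \Cref*{W3} into the statement that an optimal coupling must factor through the diagonal; this is precisely what the weak-pullback reformulation of \Cref{lem:W3-weak-pb} provides, together with preservation of weak pullbacks by $F$.
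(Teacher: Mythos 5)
Your proposal is correct and follows essentially the same route as the paper: both arguments combine \Cref{W3} (via \Cref{lem:W3-weak-pb}) with the fact that $d^{-1}[\{0\}]$ is the diagonal when $d$ is a metric, apply weak-pullback preservation of $F$, and then use that the optimal coupling factors through $F$ of the diagonal to force $t_1 = F\pi_1(\gamma(t_1,t_2)) = F\pi_2(\gamma(t_1,t_2)) = t_2$. The only cosmetic difference is that the paper pastes the two weak-pullback squares into one outer rectangle and works with the inclusion $e\colon \Delta_X \hookrightarrow X\times X$, whereas you first extract $s \in F\{0\}$ from \Cref*{W3} and then invoke the single $F$-image square with the diagonal $\delta$, which makes the final identity $\pi_i\circ\delta = \id_X$ slightly more direct.
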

\begin{proof}
	Let $(X,d)$ be a metric space. By \Cref{prop:wasserstein-is-pseudometric} $\Wasserstein{F}{d}$ is a pseudometric. Thus we just have to show that for any $t_1,t_2\in FX$ the fact that $\Wasserstein{F}{d}(t_1,t_2) = 0$ implies $t_1 = t_2$. 
	
	Since $d$ is a metric the preimage $d^{-1}[\set{0}]$ is the set $\Delta_X = \{(x,x)\mid x\in X\}$. Hence the square on the left below is a pullback and adding the projections yields $\pi_1\circ e =\pi_2\circ e$ where $e\colon \Delta_X \hookrightarrow X\times X$ is the inclusion. Furthermore, by \Cref{lem:W3-weak-pb} we know that due to \Cref{W3} of \Cref{def:well-behaved} the square on the right is a weak pullback.
	
	\begin{center}\begin{diagram}
		\matrix(m)[matrix of math nodes, column sep=30pt, row sep=10pt]{
				& \Delta_X 		& \set{0} 	& & F\set{0} & \set{0}\\
			X 	& X \times X 	& \reals	& & F\reals & \reals\\
		};
		\draw[->] (m-1-2) edge node[above]{$!_{\Delta_X}$} (m-1-3);
		\draw[right hook->] (m-1-2) edge node[left]{$e$} (m-2-2);
		\draw[right hook->] (m-1-3) edge node[right]{$i$} (m-2-3);
		\draw[->] (m-2-2) edge node[above]{$d$} (m-2-3);
		\draw[->, transform canvas={yshift=2pt}] (m-2-2) edge node[above]{$\pi_1$} (m-2-1);
		\draw[->, transform canvas={yshift=-2pt}] (m-2-2) edge node[below]{$\pi_2$} (m-2-1);
		
		% second diagram
		\draw[->] (m-1-5) edge node[above]{$!_{F\set{0}}$} (m-1-6);
		\draw[->] (m-1-5) edge node[left]{$Fi$} (m-2-5);
		\draw[right hook->] (m-1-6) edge node[right]{$i$} (m-2-6);
		\draw[->] (m-2-5) edge node[above]{$\ev_F$} (m-2-6);
	\end{diagram}\end{center}
	
	\noindent Since $F$  preserves weak pullbacks, applying it to the first diagram yields a weak pullback. By combining this diagram with the right diagram we obtain the diagram below where the outer rectangle is again a weak pullback. 
	\begin{center}\begin{diagram}
		\matrix(m)[matrix of math nodes, column sep=40pt, row sep=10pt]{
				& F\Delta_X 		& F\set{0} 	& \set{0}\\
			FX 	& F(X \times X) 	& F\reals	& \reals\\
		};
		\draw[->] (m-1-2) edge node[above]{$F!_{\Delta_X}$} (m-1-3);
		\draw[->] (m-1-2) edge node[left]{$Fe$} (m-2-2);
		\draw[->] (m-1-3) edge node[right]{$Fi$} (m-2-3);
		\draw[->] (m-2-2) edge node[above]{$Fd$} (m-2-3);
		\draw[->, transform canvas={yshift=2pt}] (m-2-2) edge node[above]{$F\pi_1$} (m-2-1);
		\draw[->, transform canvas={yshift=-2pt}] (m-2-2) edge node[below]{$F\pi_2$} (m-2-1);
	
		\draw[->] (m-1-3) edge node[above]{$!_{F\set{0}}$} (m-1-4);
		\draw[right hook->] (m-1-4) edge node[right]{$i$} (m-2-4);
		\draw[->] (m-2-3) edge node[above]{$\ev_F$} (m-2-4);
		
		\draw[->] (m-2-2) edge[bend right=10] node[below]{$\EvaluationFunctor{F}d$} (m-2-4);
	\end{diagram}\end{center}
	
	\noindent Let $t := \gamma(t_1,t_2) \in F(X\times X)$, i.e., $\Wasserstein{F}{d}(t_1,t_2) = \EvaluationFunctor{F}d(t) = 0$. Since we have a weak pullback, there exists $t'\in F\Delta_X$ with $Fe(t') = t$. (Since $Fe$ is an embedding, $t'$ and $t$ actually coincide.) This implies that $t_1 = F\pi_1(t) = F\pi_1\big(Fe(t')\big) = F\pi_2\big(Fe(t')\big) = F\pi_2(t) = t_2$.
\end{proof}

Apparently, \Cref{prop:preservation} admits the following simple corollary. 

\begin{corollary}
	Let $F$ be a weak pullback preserving endofunctor on $\Set$ with well-behaved evaluation function $\ev_F$ and $(X,d)$ be a metric space. If the infimum in \eqref{eq:wasserstein} is always a minimum then $\Wasserstein{F}{d}$ is a metric and thus $\LiftedFunctor{F}(X,d) = (FX,\Wasserstein{F}{d})$ is a metric space.
\end{corollary}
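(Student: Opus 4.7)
The plan is to show that the hypothesis of this corollary is just a special case of the hypothesis of \Cref{prop:preservation}, so the conclusion follows directly. The main observation is that the corollary's assumption, namely that the infimum in \eqref{eq:wasserstein} is always attained as a minimum, supplies, for every pair $t_1, t_2 \in FX$, an optimal coupling in $\Couplings{F}(t_1, t_2)$.

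More concretely, first I would fix $t_1, t_2 \in FX$ with $\Wasserstein{F}{d}(t_1, t_2) = 0$. By the assumption that the infimum is always a minimum, there exists some coupling $\gamma(t_1, t_2) \in \Couplings{F}(t_1, t_2)$ witnessing this value, i.e., such that $\EvaluationFunctor{F}d\bigl(\gamma(t_1, t_2)\bigr) = \Wasserstein{F}{d}(t_1, t_2) = 0$. This is exactly the existence hypothesis required by \Cref{prop:preservation}.

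Then I would invoke \Cref{prop:preservation} directly, since the remaining hypotheses (weak pullback preservation of $F$, well-behavedness of $\ev_F$, and $(X,d)$ being a metric space) are identical in both statements. We conclude that $\Wasserstein{F}{d}$ is a metric, and therefore that $\LiftedFunctor{F}(X, d) = (FX, \Wasserstein{F}{d})$ is a metric space. There is no genuine obstacle here: the whole argument is a one-line application of \Cref{prop:preservation} once one observes that ``the infimum is always a minimum'' is a uniform way of producing the optimal couplings $\gamma(t_1,t_2)$ demanded by that proposition.
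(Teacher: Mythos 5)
Your proof is correct and matches the paper's intent exactly: the corollary is stated as an immediate consequence of \Cref{prop:preservation} (the paper gives no separate proof), and your observation that attainment of the infimum yields the optimal coupling $\gamma(t_1,t_2)$ required there — in particular for pairs at Wasserstein distance $0$ — is precisely the intended one-line argument.
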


Please note that a similar restriction for the Kantorovich lifting (i.e., requiring that the supremum in \Cref{def:kantorovich} is a maximum) does \emph{not} yield preservation of metrics: In \Cref{exa:kant-no-metric} the supremum is always a maximum but we do not get a metric. Let us now compare both lifting approaches.

\begin{lemma}
	\label{prop:wasserstein-upper-bound}
	Let $F$ be an endofunctor on $\Set$ with evaluation function $\ev_F$ and $(X,d)$ be a pseudometric space. If $\ev_F$ satisfies \Cref{W1,W2} of \Cref{def:well-behaved} then for all $t_1, t_2 \in FX$, all $t\in\Couplings{F}(t_1,t_2)$ and all nonexpansive functions $f\colon (X,d)\nonexpansiveTo (\reals,d_e)$ we have $d_e\big(\EvaluationFunctor{F}f(t_1),\EvaluationFunctor{F}f(t_2)\big) \leq \EvaluationFunctor{F}d(t)$.
\end{lemma}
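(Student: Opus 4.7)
The plan is to push the coupling $t \in F(X\times X)$ forward along $f$ on both components so as to land in $F(\reals^2)$, apply \Cref{W2} there, and then reduce back to a bound involving $d$ using \Cref{W1} and nonexpansiveness of $f$. So the proof is essentially bookkeeping between these two axioms.

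Concretely, first I would set $g := \langle f \circ \pi_1, f \circ \pi_2\rangle\colon X \times X \to \reals \times \reals$ and define $t' := Fg(t) \in F(\reals^2)$. By functoriality and the coupling condition $F\pi_i(t) = t_i$, one checks that $F\pi_i(t') = F(\pi_i \circ g)(t) = F(f \circ \pi_i)(t) = Ff(t_i)$, so the two marginals of $t'$ are precisely $Ff(t_1)$ and $Ff(t_2)$. Now \Cref{W2} applied to $t'$ yields
\begin{align*}
d_e\bigl(\ev_F(F\pi_1(t')),\, \ev_F(F\pi_2(t'))\bigr) \leq \EvaluationFunctor{F}d_e(t'),
\end{align*}
whose left-hand side is, by construction and the definition of $\EvaluationFunctor{F}$, exactly $d_e\bigl(\EvaluationFunctor{F}f(t_1), \EvaluationFunctor{F}f(t_2)\bigr)$.

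It remains to bound the right-hand side by $\EvaluationFunctor{F}d(t)$. Unfolding, $\EvaluationFunctor{F}d_e(t') = \ev_F(Fd_e(Fg(t))) = \EvaluationFunctor{F}(d_e \circ g)(t)$, and $(d_e \circ g)(x_1,x_2) = d_e(f(x_1), f(x_2)) \leq d(x_1,x_2)$ by nonexpansiveness of $f$, hence $d_e \circ g \leq d$ pointwise on $X \times X$. Monotonicity of $\EvaluationFunctor{F}$ (\Cref{W1}) then gives $\EvaluationFunctor{F}(d_e \circ g) \leq \EvaluationFunctor{F}d$, so evaluating at $t$ and chaining with the previous inequality yields the claim. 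There is no real obstacle here; the only subtle point is spotting that $t'$ is the correct witness to which \Cref{W2} should be applied.
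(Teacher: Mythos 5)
Your proof is correct and follows essentially the same route as the paper's: the paper also pushes the coupling forward along $f\times f$ (your $g$ is exactly $f\times f$ in the paper's notation), applies \Cref{W2} to the resulting element of $F(\reals^2)$, and closes with $d_e\circ(f\times f)\le d$ plus monotonicity from \Cref{W1}. No gaps.
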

\begin{proof}
We have $d_e\circ (f\times f) \le d$ since $f$ is nonexpansive. Due to monotonicity of the evaluation functor (\Cref{W1})  we obtain $  \EvaluationFunctor{F}d_e\circ F(f\times f) = \EvaluationFunctor{F}\big(d_e\circ (f\times f)\big) \le \EvaluationFunctor{F}d$. Furthermore:
\begin{align*}
	&d_e\big(\EvaluationFunctor{F}f(t_1),\EvaluationFunctor{F}f(t_2)\big) 	=	d_e\Big(\EvaluationFunctor{F}f\big(F\pi_1(t)\big),\EvaluationFunctor{F}f\big(F\pi_2(t)\big)\Big) =  d_e\Big(\EvaluationFunctor{F}(f\circ\pi_1)(t),\EvaluationFunctor{F}(f\circ\pi_2)(t)\Big) \\
							&\ =	d_e\Big(\EvaluationFunctor{F}\big(\pi_1\circ (f\times f)\big)(t),\EvaluationFunctor{F}\big(\pi_2\circ (f\times f)\big)(t)\Big)  =  d_e\Big(\EvaluationFunctor{F}\pi_1\big(F(f\times f)(t)\big),\EvaluationFunctor{F}\pi_2\big(F(f\times f)(t)\big)\Big) \\
							&\ \le \EvaluationFunctor{F}d_e\big(F(f\times f)(t)\big) = \EvaluationFunctor{F}\big(d_e\circ (f\times f)\big)(t) \le \EvaluationFunctor{F}d(t)
\end{align*}
where the first inequality is due to \Cref{W2}, the second due to the above observation which was based on \Cref{W1}.
\end{proof}

Using this result we can see that under certain conditions the Wasserstein distance is an upper bound for the Kantorovich pseudometric.

\begin{theorem}[Comparison of the two Liftings]
	\label{prop:wasserstein-vs-kantorovich}
	Let  $F$  be an endofunctor on $\Set$. If $\ev_F$ satisfies \Cref{W1,W2} of \Cref{def:well-behaved} then for all pseudometric spaces $(X,d)$ we have $\Kantorovich{F}{d} \leq \Wasserstein{F}{d}$.
\end{theorem}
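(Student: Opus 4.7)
The plan is to deduce this directly from Lemma~\ref{prop:wasserstein-upper-bound}, which already bounds $d_e(\EvaluationFunctor{F}f(t_1),\EvaluationFunctor{F}f(t_2))$ by $\EvaluationFunctor{F}d(t)$ for every nonexpansive $f$ and every coupling $t$. So all that is really needed is to take the appropriate supremum and infimum, plus a short sanity check for the case when no coupling exists.

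Fix $t_1,t_2\in FX$. I would split into two cases depending on whether $\Couplings{F}(t_1,t_2)$ is empty. If it is empty, then by definition $\Wasserstein{F}{d}(t_1,t_2)=\inf\emptyset=\top$, and since $\Kantorovich{F}{d}$ takes values in $\reals=[0,\top]$ (its definition uses nonexpansive $f\colon (X,d)\to(\reals,d_e)$ whose images live in $[0,\top]$, hence so do their $d_e$-distances), the inequality $\Kantorovich{F}{d}(t_1,t_2)\leq\top=\Wasserstein{F}{d}(t_1,t_2)$ is immediate.

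In the remaining case $\Couplings{F}(t_1,t_2)\neq\emptyset$, pick any $t\in\Couplings{F}(t_1,t_2)$ and any nonexpansive $f\colon (X,d)\nonexpansiveTo(\reals,d_e)$. By Lemma~\ref{prop:wasserstein-upper-bound} we have
\[
d_e\big(\EvaluationFunctor{F}f(t_1),\EvaluationFunctor{F}f(t_2)\big)\ \leq\ \EvaluationFunctor{F}d(t).
\]
The right-hand side does not depend on $f$, so taking the supremum over all nonexpansive $f$ on the left gives $\Kantorovich{F}{d}(t_1,t_2)\leq\EvaluationFunctor{F}d(t)$. Since this holds for every coupling $t\in\Couplings{F}(t_1,t_2)$, taking the infimum over $t$ on the right yields $\Kantorovich{F}{d}(t_1,t_2)\leq\Wasserstein{F}{d}(t_1,t_2)$, as desired.

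There is no real obstacle here: the content of the comparison is entirely in Lemma~\ref{prop:wasserstein-upper-bound} (which in turn uses W1 and W2), and the present theorem is just the observation that the same inequality survives under $\sup_f$ on one side and $\inf_t$ on the other, with a trivial remark to cover the case of an empty set of couplings.
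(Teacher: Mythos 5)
Your proof is correct and follows essentially the same route as the paper: a case split on whether $\Couplings{F}(t_1,t_2)$ is empty, with the empty case handled by $\Wasserstein{F}{d}(t_1,t_2)=\top$ and the nonempty case reduced to Lemma~\ref{prop:wasserstein-upper-bound} by taking the supremum over $f$ and the infimum over $t$. The paper's proof is just a terser version of the same argument.
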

\begin{proof}
	Let $t_1,t_2 \in FX$. We know (see the discussion after \Cref{def:kantorovich}) that a nonexpansive function $f \colon (X,d) \nonexpansiveTo ([0,\top],d_e)$ always exists but (see the discussion after \Cref{def:wasserstein}) couplings do not have to exist, so we have to distinguish two cases. If $\Couplings{F}(t_1,t_2) = \emptyset$ we have $\Wasserstein{F}{d}(t_1,t_2) = \top$ and clearly $\Kantorovich{F}{d} \leq \top$. Otherwise we can apply \Cref{prop:wasserstein-upper-bound} to obtain the desired inequality.
\end{proof}

This inequality can be strict as the following example shows.

\begin{example}[Wasserstein Lifting of the Squaring Functor]
	\label{exa:kant-no-metric2}
	It is easy to see that the squaring functor $S\colon \Set \to \Set$, $SX = X \times X$, $Sf = f \times f$ preserves weak pullbacks. Some simple calculations show that the evaluation function $\ev_S\colon S[0,\infty] \to [0,\infty], \ev_S(r_1, r_2) = r_1+r_2$ given in \Cref{exa:squaring-functor} is well-behaved \cite[Ex.~5.4.28]{Ker16}. We now continue \Cref{exa:squaring-functor} where we considered a metric space $(X,d)$ with at least two elements, chose an element  $t_1=(x_1,x_2) \in SX=X \times X$ with $x_1 \not = x_2$ and defined $t_2=(x_2,x_1)$. The unique coupling $t \in \Gamma_S(t_1,t_2)$ is $t=\big((x_1, x_2), (x_2, x_1)\big)$. Using that $d$ is a metric we conclude that $\Wasserstein{S}{d}(t_1, t_2) = \EvaluationFunctor{S}d(t) = d(x_1,x_2) + d(x_2,x_1) = 2d(x_1,x_2) > 0$. However, in \Cref{exa:squaring-functor} we calculated $ \Kantorovich{S}{d}(t_1,t_2) = 0$.
\end{example}

Whenever the inequality in \Cref{prop:wasserstein-vs-kantorovich} can be replaced by an equality we will in the following say that the \emph{Kantorovich-Rubinstein duality}\index{Kantorovich-Rubinstein duality} or simply \emph{duality}\index{duality (Kantorovich Rubinstein)} holds. In this case we obtain a canonical notion of distance on $FX$ for any given pseudometric space $(X,d)$. 

To prove that the duality holds and simultaneously to calculate the distance of $t_1,t_2\in FX$ it is sufficient to find a nonexpansive function $f\colon (X,d)\nonexpansiveTo (\reals,d_e)$ and a coupling $t \in \Gamma_F(t_1,t_2)$ such that $d_e\big(\EvaluationFunctor{F}f(t_1),\EvaluationFunctor{F}f(t_2)\big) = \EvaluationFunctor{F}d_e(t)$. Due to \Cref{prop:wasserstein-vs-kantorovich} this value equals $\Kantorovich{F}{d}(t_1,t_2) = \Wasserstein{F}{d}(t_1,t_2)$. We will often employ this technique for the upcoming examples.

\begin{example}[Duality for the Identity Functor] 
	\label{ex:kant-rubinst}
	We consider the identity functor $\Id$ with the identity function as evaluation function, i.e., $\ev_\Id = \id_\reals$. For any $t_1,t_2\in X$, $t := (t_1, t_2)$ is the unique coupling of $t_1, t_2$. Hence, $\Wasserstein{F}{d}(t_1, t_2) = d(t_1,t_2)$. With the function $d(t_1,\_) \colon (X,d) \nonexpansiveTo (\reals,d_e)$ which is nonexpansive due to \Cref{lem:alt-char-triangle} we obtain duality because we have $d(t_1,t_2) = d_e\big(d(t_1, t_1), d(t_1,t_2)\big) \leq \Kantorovich{F}{d}(t_1,t_2) \leq \Wasserstein{F}{d}(t_1,t_2) = d(t_1,t_2)$ and thus equality. Similarly, if we define $\ev_\Id(r) = c\cdot r$ for $r\in \reals$, $0<c\leq1$, the Kantorovich and Wasserstein liftings coincide and we obtain the discounted distance $\Kantorovich{F}{d}(t_1,t_2) = \Wasserstein{F}{d}(t_1,t_2) = c\cdot d(t_1,t_2)$.
\end{example}

\begin{example}[Duality for the Distribution Functors]
	\label{exa:probability-distribution-functor2}
	It is known that the probability distribution functor $\Distributions$ and its variants preserve weak pullbacks \cite[Proposition~3.3]{Sok11}. It is easy show that the evaluation function $\ev_\Distributions\colon \Distributions [0,1] \to [0,1]$, $\ev_\Distributions (P) = \sum_{x\in [0,1]} x\cdot P(x)$ which we have defined in \Cref{exa:probability-distribution-functor} is well-behaved for all variants of the distribution functor (i.e., distributions or subdistributions with countable or finite supports). In order to do so, one just has to use the definition of the evaluation functor and the triangle inequality for the absolute value. We omit this proof \cite[Ex.~5.4.30]{Ker16} here and just 
	observe that we recover the usual Wasserstein pseudometric, i.e., for any (sub)probability distributions $P_1,P_2\colon X \to [0,1]$ we have
	\begin{align*}
		\Wasserstein{\Distributions }{d}(P_1,P_2)= \inf \set{\sum_{x_1,x_2 \in X} d(x_1,x_2) \cdot P(x_1, x_2) \mid P \in \Couplings{\Distributions }(P_1, P_2)}
	\end{align*}
	and -- for proper distributions only -- the Kantorovich-Rubinstein duality \cite{Vil09} from transportation theory for the discrete case. Moreover, in this case it is known (and easy to see using a linear program formulation) that for finite supports the above infimum is always a minimum. In the case of subdistributions we do not have duality: Let $P, Q \colon \one \to [0,1]$ be subdistributions on the singleton set $\one$, i.e., $P(\checkmark) = p$ and $Q(\checkmark) = q$ with $p,q \in [0,1]$. The only pseudometric on $\one$  is the discrete metric $d$ so any function $f \colon \one \to [0,1]$ is nonexpansive and we have $\EvaluationFunctor{\Distributions}f(P) = f(\checkmark)\cdot P(\checkmark)$.  Hence the Kantorovich distance of $P$ and $Q$ is achieved for the function $f$ where $f(\checkmark) = 1$ and equals $\Kantorovich{F}{d}(P,Q) = |p-q|$. However, if $p\not = q$ it is easy to see that there are no couplings of $P$ and $Q$ so $\Wasserstein{F}{d}(P,Q) = 1$. Thus for any $p,q$ where $|p - q| < 1$ we do not have equality. 
\end{example}

\begin{example}[The Hausdorff Pseudometric for Finite Sets]
	\label{exa:hausdorff}
	\index{Hausdorff pseudometric}
	Similar to \Cref{ex:evaluation-max} we assume $\top = \infty$ but here we just consider the finite powerset functor $\PowersetFinite$ with evaluation function $\max\colon \PowersetFinite(\prealinf) \to \prealinf$ with $\max \emptyset = 0$ and $\min \emptyset = \infty$. We claim that in this setting we obtain duality and both pseudometrics are equal to the \emph{Hausdorff pseudometric} $d_H$ on $\PowersetFinite(X)$ which is defined as, for all $X_1,X_2\in \PowersetFinite X$,
	\begin{align*}
		 d_H(X_1,X_2) = \max\left\{\max_{x_1\in X_1} \min_{x_2\in X_2} d(x_1,x_2), \max_{x_2\in X_2} \min_{x_1\in X_1} d(x_1,x_2) \right\}\,.
	\end{align*}
	Note that this distance is $\infty$, if either $X_1$ or $X_2$ is empty.
	
	We show our claim by proving that if $X_1,X_2$ are both non-empty there exists a coupling and a nonexpansive function that both witness the Hausdorff distance. Assume that the first value $\max_{x_1\in X_1} \min_{x_2\in X_2} d(x_1,x_2)$ is maximal and assume that $y_1\in X_1$ is the element of $X_1$ for which the maximum is reached. Furthermore let $y_2\in X_2$ be the closest element in $X_2$, i.e.,  the element for which $d(y_1,y_2)$ is minimal. We know that for all $x_1\in X_1$ there exists $x_2^{x_1}$ such that $d(x_1,x_2^{x_1}) \le d(y_1,y_2)$ and for all $x_2\in X_2$ there exists $x_1^{x_2}$ such that $d(x_1^{x_2},x_2) \le d(y_1,y_2)$. Specifically, $x_2^{y_1} = y_2$. We use the coupling $T\subseteq X\times X$ with 
	\begin{align*}
		T = \set{(x_1,x_2^{x_1})\mid x_1\in X_1} \cup \set{(x_1^{x_2},x_2)\mid x_2\in X_2}\,.
	\end{align*}
	Indeed, we obviously have $\PowersetFinite \pi_i (T) = X_i$ and $\PowersetFinite d(T)$ contains all distances between the elements above, of which the distance $d(y_1,y_2) = d^H(X_1,X_2)$ is maximal.
	We define a nonexpansive function $f\colon (X,d)\to (\reals,d_e)$ as $f(x) = \min_{x_2\in X_2} d(x,x_2)$. Then we have
	\begin{align*}
		\max \PowersetFinite f(X_1) = \max f[X_1] = \max_{x_1\in X_1} \min_{x_2\in X_2} d(x_1,x_2) = d^H(X_1,X_2)
	\end{align*} 
	and $\max \PowersetFinite f(X_2) = \max f[X_2] = 0$. Hence, the difference of both values is $d^H(X_1,X_2)$. It remains to show that $f$ is nonexpansive. Let $x,y\in X$ and let $x_2,y_2\in X_2$ be elements for which the distances $d(x,x_2),d(y,y_2)$ are minimal. Hence $d(x,x_2) \le d(x,y_2) \le d(x,y) + d(y,y_2)$ and $d(y,y_2) \le d(y,x_2) \le d(y,x) + d(x,x_2)$. \Cref{lem:sum-vs-dist} implies that $d(x,y) \ge d_e\big(d(x,x_2),d(y,y_2)\big) = d_e\big(f(x),f(y)\big)$.
	
	If $X_1=X_2=\emptyset$, we can use the coupling $T = \emptyset=\emptyset\times\emptyset$ and any function $f$. If, instead $X_1=\emptyset$, $X_2\neq\emptyset$, no coupling exists thus $\Wasserstein{F}{d} = \infty$ and we can take the constant $\infty$-function to show that also $\Kantorovich{F}{d} = \infty$ is attained.
\end{example}

It would also be interesting to consider the general or countable
powerset functor, use the supremum as (well-behaved) evaluation
function and consider the resulting liftings. By generalizing the
argument and finding suitable approximations for the infimum and
supremum, one can again show that the Wasserstein as well as the
Kantorovich lifting coincide with the Hausdorff metric (with
supremum/infimum replacing maximum/minimum).

On the other hand, we can argue that in this case optimal couplings do
not necessarily exist, because the Hausdorff pseudometric for
countable sets does not preserve metrics. If we take the Euclidean
metric and consider the sets
$X _1= \set{0} \cup \set{1/n \mid n \in \N}$ and
$X_2 = \set{1/n\mid n \in \N}$ then their Hausdorff distance is $0$
although $X_1$ and $X_2$ are different. Thus, due to
\Cref{prop:preservation}, there cannot be an optimal coupling for
$X_1$ and $X_2$.

As another example of our lifting approaches, we consider the input
functor $\_^A$ on $\Set$ where $A$ is an arbitrary set. It maps each
set $X$ to the set $X^A$ of functions with domain $A$ and codomain $X$
and each function $f \colon X \to Y$ to the function
$f^A \colon X^A \to Y^A$ where $f^A(g\colon A \to X) = f \circ
g$. \new{\name[Jan]{Rutten} showed in \cite{Rut00} that functors on $\Set$ which
  preserve pullbacks also preserve weak pullbacks, which applies to
  the input functor.} We will now present different well-behaved
evaluation functions for this functor and the resulting Wasserstein
pseudometrics.

\begin{example}[Wasserstein Lifting for the Input Functor]
	\label{exa:wasserstein-input-functor}
	\index{input functor}
	We consider the input functor $\_^A $ with finite input set
        $A$ and claim that the evaluation functions $\ev_F\colon
        [0,\top]^A \to [0,\top]$ which are listed in the table below
        are well-behaved and yield the given Wasserstein pseudometric
        on $X^A$ for any pseudometric space $(X,d)$. \new{Note that in
          the third case it is necessary to normalize with $|A|^{-1}$
          so that the distance does not exceed $\top$.}
	\begin{center}\begin{tabular}{c|c|c}
		maximal distance $\top$ & $\ev_F(s)$ & $\Wasserstein{F}{d}(s_1,s_2)$ \\
		\hline
		$\top \in\,]0,\infty]$ & $\max\limits_{a \in A} s(a)$ & $\max\limits_{a \in A}d\big(s_1(a), s_2(a)\big)$\\
		$\top=\infty$ & $\sum\limits_{a \in A} s(a)$ &$\sum\limits_{a \in A}d\big(s_1(a), s_2(a)\big)$\\
		$\top \in\,]0,\infty[$ & $|A|^{-1}\sum\limits_{a \in A} s(a)$ &$|A|^{-1}\sum\limits_{a \in A}d\big(s_1(a), s_2(a)\big)$
	\end{tabular}\end{center}
	In order to show this we first observe that for any $f\colon X \to [0,\top]$ we have $\EvaluationFunctor{F}f = \ev_F \circ f^A$ so applying it to $s\in X^A$ yields either $\max_{a \in A} f\big(s(a)\big)$ or $\sum_{a \in A} f\big(s(a)\big)$ or $|A|^{-1}\sum_{a \in A} f\big(s(a)\big)$. With this we proceed to show well-behavedness.
	\begin{wbconditions}
		\item For $f_1,f_2  \colon X \to [0,\top]$ with $f_1 \leq f_2$ we obviously also have $\EvaluationFunctor{F}f_1 \leq \EvaluationFunctor{F}f_2$.
		\item Let $s \in ([0,\top]^2)^A$ and $s_i := \pi_i^A(t)$, i.e., necessarily $s = \langle s_1, s_2\rangle$. We have to show the inequality
		$d_e\big(\ev_F(s_1), \ev_F(s_2)\big) \leq \EvaluationFunctor{F}d_e(s)$ where the right hand side evaluates to $\ev_F \big(d_e^A(s)\big) = \ev_F (d_e \circ s) = \ev_F(d_e \circ \langle s_1,s_2\rangle)$. Using this we can see that the inequality is an immediate consequence of \Cref{lem:max-sum} (\cpageref{lem:max-sum}) by taking $f=s_1$, $g=s_2$ or, in the last case, $f = |A|^{-1}s_1$ and $g=|A|^{-1}s_2$.
		\item We have $\ev_F^{-1}[\set{0}] = \set{s \colon A \to [0,\top] \mid \ev_F(s) =0}$. Clearly for all functions this is the case only if $s$ is the constant $0$-function. Since $\set{0}$ is a final object in $\Set$, there is a unique function $z \colon A \to \set{0}$. Thus $Fi[F\set{0}] = i^A[\set{0}^A] = \set{i^A(z)} = \set{i \circ z}$ and clearly $i \circ z \colon A \to [0,\top]$ is also the constant $0$-function. 
	\end{wbconditions}
	Now if we have $s_1,s_2 \in X^A$ their unique coupling is $s:=\langle s_1,s_2\rangle \colon A \to X \times X$. Moreover $\EvaluationFunctor{F}d(s) = \ev_F \big( d^A(s)\big) = \ev_F\big(d\circ \langle s_1,s_2\rangle\big)$ and using the different evaluation functions we obtain the pseudometrics given in the table above.
\end{example}

We conclude our list of examples with the Wasserstein lifting of the machine functor which we will use several times in the remainder of this paper.

\begin{example}[Wasserstein Lifting of the Machine Functor]
  \label{exa:wasserstein-machine}
  We equip the machine functor $M_B = B \times \_^A$ with the
  evaluation function
  $\ev_{M_B} \colon B \times [0,\top]^A\to [0, \top]$,
  $(o,s) \mapsto c \cdot \ev_I(s)$ where $c \in \,]0,1]$ is a discount
  factor and and $\ev_I$ is one of the evaluation functions for the
  input functor from \Cref{exa:wasserstein-input-functor}. For any
  pseudometric space $(X,d)$ we can easily see that for two elements
  $(o_1,s_1), (o_2,s_2) \in B \times X^A$ we have a unique coupling if
  and only if $o_1 = o_2$, namely $(o_1,\langle s_1,s_2\rangle)$ (for
  $o_1 \not = o_2$ no coupling exists at all). Thus the Wasserstein
  distance of any two elements as above is $1$ if $o_1 \not = o_2$ and
  $c \cdot \ev_I (d \circ \langle s_1,s_2 \rangle)$ otherwise.

        % The latter is either
        % $c \cdot \max_{a \in A}d\big(s_1(a), s_2(a)\big)$ or
        % $\sum_{a \in A}d\big(s_1(a), s_2(a)\big)$ depending on the
        % choice of $\ev_I$.
\end{example}

\subsection{Lifting Multifunctors}
\label{sec:multifunctors}
While the functors we considered so far can be nicely lifted using our theory, there are other functors that require a more general treatment. For instance, consider the output functor $F = B\times \_$ for some fixed set $B$. As in \Cref{exa:wasserstein-machine} we have a coupling for $t_1,t_2 \in FX = B \times X$ with $t_i=(b_i,x_i)$ if and only if $b_1 = b_2$. Consequently, if $b_1 \neq b_2$ then irrespective of the evaluation function we choose and of the distance between $x_1$ and $x_2$ in $(X, d)$, the lifted Wasserstein pseudometric will always result in $ \Wasserstein{F}{d}(t_1,t_2) =\top$.  This can be counterintuitive, e.g., taking $B=[0,1]$, $X \not=\emptyset$ and $t_1 = (0,x)$ and $t_2=(\epsilon,x)$ for a small $\epsilon > 0$ and an $x \in X$.
The reason is that we think of $B = [0,1]$ as if it were endowed with a non-discrete pseudometric, like e.g. the Euclidean metric $d_e$, plugged into the product \emph{after} the lifting. This intuition can be formalized by considering the lifting of the product seen as a functor from $\Set \times \Set$ into $\Set$. More generally, it can be seen that the definitions and results introduced so far for endofunctors in $\Set$ extend to multifunctors on $\Set$, i.e., to functors $F\colon \Set^n \to \Set$ on the product category $\Set^n$ for any natural number $n \in \N$. The only difference is that we start with $n$ pseudometric spaces instead of one. Due to this, the definitions and results are technically more complicated than in the endofunctor setting but they capture exactly the same ideas as before. 

For clarity we provide some of the multifunctor results here but it is safe to skip the results at a first read, continue with studying the product and coproduct bifunctors in \Cref{subsec:product-coproduct-bifunctors} (\cpageref{subsec:product-coproduct-bifunctors}) -- they will play an important role for the later development of our theory -- and only look at the exact multifunctor definitions when necessary.

The formal definition of multifunctor lifting is a straightforward extension of \Cref{def:liftingPMet} with only a little bit of added technical complexity.

\begin{definition}[Lifting of a Multifunctor]
	\label{def:liftingPMet-multi}
	\index{lifting (of a multifunctor)}
	Let $U\colon \PMet \to \Set$ be the forgetful functor which maps every pseudometric space to its underlying set and denote by $U^n\colon \PMet^n \to \Set^n$ the n-fold product of $U$ with itself. A functor $\LiftedFunctor{F}\colon\PMet^n\to\PMet$ is called a \emph{lifting} of a functor $F\colon \Set^n \to \Set$ if it satisfies $U\circ \LiftedFunctor{F} = F\circ U^n$.
\end{definition}

These multifunctor liftings can be used to obtain endofunctor liftings by fixing all but one parameter \cite[Lem.~5.4.37]{Ker16}. As in the endofunctor case any multifunctor lifting is monotone in the following sense: If we have pseudometrics $d_i \leq e_i$ on common sets $X_i$ we also have $\LiftedMetric{F}{(d_1,\dots,d_n)} \leq \LiftedMetric{F}{(e_1,\dots,e_n)}$ where $\LiftedMetric{F}{(d_1,\dots,d_n)}$ and $\LiftedMetric{F}{(e_1,\dots,e_n)}$ denote the pseudometrics on $F(X_1,\dots,X_n)$ which we obtain by applying $\LiftedFunctor{F}$ to $(X_1,d_1),\dots,(X_n,d_n)$ or $(X_1,e_1),\dots,(X_n,e_n)$ respectively. We omit the proof of this property since the line of argument is exactly the same as used in the proof of \Cref{prop:monotone} (\cpageref{prop:monotone}). One just has to take proper care of putting the universal quantification (for all $1 \leq i \leq n$) in the right place. Also for the following results we will just provide a reference to the corresponding endofunctor result and omit the simple (but admittedly tedious) calculations.

The really useful feature of considering multifunctor liftings is based on the fact that we have a slightly different domain of definition for evaluation functions which will also help us to solve the problems we described initially.

\begin{definition}[Multifunctor Evaluation Function and Evaluation Multifunctor]
	\label{def:evfct-multi} 
	\index{evaluation function (for multifunctors)}
	
	Let $F\colon \Set^n \to \Set$ be a multifunctor. We call any function $\ev_F\colon F(\reals,\dots,\reals) \to \reals$ an \emph{evaluation function} for $F$.
	Given such an evaluation function, the \emph{evaluation multifunctor} is the functor 
$\EvaluationFunctor{F} \colon(\Set/\reals)^n \to \Set/\reals$ where $\EvaluationFunctor{F}(g_1,\dots,g_n) := \ev_F\circ F(g_1,\dots,g_n)$ for all $g_i \in \Set/\reals$ and on arrows $\EvaluationFunctor{F}$ coincides with $F$. 
\end{definition}

Using this function, we immediately get the Kantorovich pseudometric and the corresponding lifting.

\begin{definition}[Kantorovich Distance for Multifunctors]
	\label{def:kantorovich-multi}
	\index{Kantorovich distance (for multifunctors)}
	Let $F\colon \Set^n\to \Set$ be a functor with evaluation function $\ev_F\colon F([0,\top],\dots,[0,\top]) \to [0,\top]$ and let $(X_1,d_1)$, \dots, $(X_n, d_n)$ be arbitrary pseudometric spaces. The \emph{Kantorovich distance} is the function $\Kantorovich{F}{(d_1,\dots,d_n)}\colon \big(F(X_1,\dots,X_n)\big)^2\to \reals$, where

	\begin{align*}
		\Kantorovich{F}{(d_1,\dots,d_n)}(t_1,t_2) := \sup_{\new{\substack{(f_1,\dots,f_n)\\f_i\colon (X_i,d_i) \nonexpansiveTo (\reals,d_e)}}} \!\!\!\!d_e\Big(\EvaluationFunctor{F}(f_1,\dots,f_n)(t_1),\EvaluationFunctor{F}(f_1,\dots,f_n)(t_2)\Big)
	\end{align*}
	for all $t_1, t_2 \in F(X_1,\dots,X_n)$.
\end{definition}

By adapting the proof of \Cref{prop:kantorovich-is-pseudometric} (\cpageref{prop:kantorovich-is-pseudometric}) we can prove that this function is indeed a pseudometric on $F(X_1,\dots, X_n)$ and can thus define the Kantorovich lifting of $F$ as $\LiftedFunctor{F}\colon \PMet^n\to\PMet$, $\LiftedFunctor{F}((X_1,d_1),\dots, (X_n,d_n)) = (F(X_1,\dots,X_n),\Kantorovich{F}{(d_1,\dots,d_n)})$ and $\LiftedFunctor{F}f = Ff$. The soundness of this definition can be shown using the same line of argument as in \Cref{prop:kantorovich-functorial} (\cpageref{prop:kantorovich-functorial}).

Not only the Kantorovich but also the Wasserstein lifting can be transferred to the multifunctor setting. For this we first need to define couplings. 

\begin{definition}[Coupling]
	Let $F \colon \Set^n \to \Set$ be a functor and $m \in \N$. Given sets $X_1,\dots,X_n$ and elements $t_j \in F(X_1,\dots,X_n)$ for $1 \leq j \leq m$ we call an element $t \in F(X_1^m,\dots, X_n^m)$ such that $F(\pi_{1,j},\dots,\pi_{n,j})(t) = t_j$ a \emph{coupling} of the $t_j$ (with respect to $F$) where $\pi_{i,j}$ are the projections $\pi_{i,j} \colon X_i^m \to X_i$. We write $\Couplings{F}(t_1, t_2, \dots, t_m)$ for the set of all these couplings.
\end{definition}

Using these couplings we can then again define a Wasserstein distance.

\begin{definition}[Wasserstein Distance for Multifunctors]
	\label{def:wasserstein-multi}
	Let $F\colon \Set^n\to \Set$ be a functor with evaluation function $\ev_F\colon F([0,\top],\dots,[0,\top]) \to [0,\top]$ and let $(X_1,d_1)$, \dots, $(X_n, d_n)$ be arbitrary pseudometric spaces.
	The \emph{Wasserstein distance} is the function $\Wasserstein{F}{(d_1,\dots,d_n)}\colon \big(F(X_1,\dots,X_n)\big)^2\to \reals$, where 
	\begin{align*}
	\Wasserstein{F}{(d_1,\dots,d_n)}(t_1, t_2) := \inf_{t \in \Couplings{F}(t_1,t_2)} \EvaluationFunctor{F}(d_1,\dots,d_n)(t).\,
	\end{align*}
	for all $t_1,t_2 \in F(X_1,\dots,X_n)$.
\end{definition}

As before, we will use well-behaved evaluation functions along with pullback preserving functors to obtain a Wasserstein pseudometric.

\begin{definition}[Well-Behaved Multifunctor Evaluation Function]
	\label{def:well-behaved-multi}
We call a multifunctor evaluation function $\ev_F \colon F(\reals,\dots,\reals) \to \reals$ \emph{well-behaved} if it satisfies the following three properties.
\begin{wbconditions}
	\item $\EvaluationFunctor{F}$ is monotone, i.e., given $f_i,g_i\colon X_i\to\reals$ with $f_i\le g_i$ for all $1\leq i \leq n$, we also have $\EvaluationFunctor{F}(f_1,\dots,f_n)\le\EvaluationFunctor{F}(g_1,\dots,g_n)$.\label{W1multi}
	\item \label{cond:evfct-ge-multi} Let $\pi_i \colon [0,\top]^2 \to [0,\top]$ be the projections. For all couplings $t\in F(\reals^2,\dots,\reals^2)$ we require $d_e\big(\EvaluationFunctor{F}(\pi_1,\dots,\pi_1)(t),\EvaluationFunctor{F}(\pi_2,\dots,\pi_2)(t)\big) \le \EvaluationFunctor{F}(d_e,\dots,d_e)(t)$. \label{W2multi}
	\item \label{cond:evfct-zero-multi} $\ev_F^{-1}[\set{0}] = F(i,\dots,i)[F(\{0\},\dots,\{0\})]$ where $i \colon \set{0} \hookrightarrow\reals$ is the inclusion map. \label{W3multi}
\end{wbconditions}
\end{definition}

\noindent By using a generalization \cite[Lem.~5.4.47]{Ker16} of the \nameref{lem:coupling} (\Cref{lem:coupling}) and well-behavedness we can prove sufficient conditions for the Wasserstein distance to be a pseudometric just as we did in \Cref{prop:wasserstein-is-pseudometric}.

\begin{theorem}
	Let $F\colon \Set^n \to \Set$ be a functor with evaluation function $\ev_F$. If $F$ preserves weak pullbacks and $\ev_F$ is well-behaved then the Wasserstein distance is a pseudometric.
\end{theorem}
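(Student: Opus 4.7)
My plan is to mimic \Cref{prop:wasserstein-is-pseudometric} by establishing reflexivity, symmetry, and the triangle inequality separately, generalizing each endofunctor argument coordinatewise across the $n$ pseudometric spaces. The payoff for having set up a completely parallel vocabulary of couplings, evaluation multifunctors, and the multifunctor well-behavedness conditions is that no genuinely new idea is required; only bookkeeping.

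For reflexivity, I would take $t \in F(X_1,\dots,X_n)$ and form $t' := F(\delta_1,\dots,\delta_n)(t)$, where $\delta_i \colon X_i \to X_i^2$ is the diagonal of $X_i$. Since $\pi_{i,1}\circ\delta_i = \pi_{i,2}\circ\delta_i = \id_{X_i}$, functoriality makes $t'$ a self-coupling of $t$. Each $d_i \circ \delta_i$ is the constant zero function, hence factors as $i\circ !_{X_i}$ through $\{0\}\hookrightarrow \reals$, so Condition \Cref{W3multi} of \Cref{def:well-behaved-multi} yields $\EvaluationFunctor{F}(d_1,\dots,d_n)(t') = 0$, witnessing $\Wasserstein{F}{(d_1,\dots,d_n)}(t,t) = 0$. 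Symmetry follows exactly as in \Cref{lem:wasserstein-symmetric}: the componentwise swap $F(\sigma_1,\dots,\sigma_n)$ sets up a bijection $\Couplings{F}(t_1,t_2) \to \Couplings{F}(t_2,t_1)$, and componentwise symmetry of the $d_i$ together with functoriality preserves the value of $\EvaluationFunctor{F}(d_1,\dots,d_n)$; the case in which one coupling set is empty forces the other to be too, so both sides equal $\top$.

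The triangle inequality is the main obstacle, and it requires a multifunctor version of the Gluing Lemma (\Cref{lem:coupling}): given couplings $t_{12} \in \Couplings{F}(t_1,t_2)$ and $t_{23} \in \Couplings{F}(t_2,t_3)$, we must produce a triple coupling $t_{123} \in F(X_1^3,\dots,X_n^3)$ whose reducts along $\langle\pi_{i,1},\pi_{i,2}\rangle$ and $\langle\pi_{i,2},\pi_{i,3}\rangle$ recover $t_{12}$ and $t_{23}$. In each coordinate $i$, the square with edges $\langle\pi_{i,1},\pi_{i,2}\rangle$ and $\langle\pi_{i,2},\pi_{i,3}\rangle$ meeting at $X_i$ is a pullback in $\Set$, so the componentwise product forms a pullback diagram in $\Set^n$; applying the weak-pullback-preserving $F$ to this diagram yields a weak pullback in $\Set$, and feeding $(t_{12},t_{23})$ through its weak universality (they are compatible because both restrict to $t_2$ on the middle projection) produces $t_{123}$. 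Then $t_{13} := F(\langle\pi_{1,1},\pi_{1,3}\rangle,\dots,\langle\pi_{n,1},\pi_{n,3}\rangle)(t_{123})$ belongs to $\Couplings{F}(t_1,t_3)$. This is precisely the content cited as \cite[Lem.~5.4.47]{Ker16}.

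Given the gluing lemma, the endgame mirrors \Cref{lem:wasserstein-triangle-inequality}. Using the characterization from \Cref{lem:alt-char-triangle}, I would show $d_e\bigl(\Wasserstein{F}{(d_1,\dots,d_n)}(t_1,t_2),\Wasserstein{F}{(d_1,\dots,d_n)}(t_1,t_3)\bigr) \le \Wasserstein{F}{(d_1,\dots,d_n)}(t_2,t_3)$. After disposing of the degenerate cases (no coupling on one side forces the inequality by the convention $\inf\emptyset = \top$, and if the relevant coupling exists on one side the gluing lemma propagates existence to the other), monotonicity \Cref{W1multi} applied to the componentwise triangle inequality $d_e\circ (d_i\times d_i)\circ \langle\langle\pi_{i,1},\pi_{i,2}\rangle,\langle\pi_{i,1},\pi_{i,3}\rangle\rangle \le d_i\circ \langle\pi_{i,2},\pi_{i,3}\rangle$ together with condition \Cref{W2multi} yields $d_e\bigl(\EvaluationFunctor{F}(d_1,\dots,d_n)(t_{12}),\EvaluationFunctor{F}(d_1,\dots,d_n)(t_{13})\bigr) \le \EvaluationFunctor{F}(d_1,\dots,d_n)(t_{23})$ for the glued triple. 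A standard $\epsilon$-approximation of the infima defining the Wasserstein values then gives the claim.
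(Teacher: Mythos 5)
Your proposal is correct and follows essentially the same route the paper intends: the paper itself only sketches this proof by deferring to the endofunctor case (\Cref{prop:wasserstein-is-pseudometric}) together with a multifunctor generalization of the Gluing Lemma cited as \cite[Lem.~5.4.47]{Ker16}, and your coordinatewise diagonal/swap arguments, the componentwise pullback construction of the triple coupling, and the \Cref{W1multi}/\Cref{W2multi}-based $\epsilon$-argument are exactly that generalization spelled out. No gaps.
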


This result gives rise to the Wasserstein lifting: For a weak pullback preserving functor $F\colon \Set^n\to \Set$ with well-behaved evaluation function $\ev_F\colon F([0,\top],\dots,[0,\top]) \to [0,\top]$ the \emph{Wasserstein lifting} is the functor $\LiftedFunctor{F}\colon \PMet^n\to\PMet$, where $\LiftedFunctor{F}\big((X_1,d_1),\dots, (X_n,d_n)\big) = \big(F(X_1,\dots,X_n),d\big)$ with $d = \Wasserstein{F}{(d_1,\dots,d_n)}$ and $\LiftedFunctor{F}f = Ff$. This definition is justified by adapting the proof of \Cref{prop:wasserstein-lifting-is-functorial} to obtain functoriality.

With these results at hand we quickly summarize a few of the properties for the multifunctor liftings which arise as natural generalizations of \Cref{prop:wasserstein-upper-bound} and \Cref{prop:wasserstein-vs-kantorovich,prop:kantorovich-lifting-preserves-isometries,prop:wasserstein-preserves-isometries,prop:preservation}.

\begin{theorem}
	\label{prop:multifunctor-lifting-properties}
	Let $F\colon \Set^n \to \Set$ be a functor with evaluation function $\ev_F$.
	\begin{enumerate}
		\item Both liftings preserve isometries.
		\item If $\ev_F$ satisfies \Cref{W1multi,W2multi} then $\Kantorovich{F}{(d_1,\dots,d_n)} \leq \Wasserstein{F}{(d_1,\dots,d_n)}$ holds for all pseudometric spaces $(X_i,d_i)$.
		\item If $F$ preserves weak pullbacks, $\ev_F$ is well-behaved and for all $t_1,t_2 \in F(X_1, \dots, X_n)$ with $\Wasserstein{F}{(d_1,\dots,d_n)}(t_1,t_2) = 0$ there is a coupling $\gamma(t_1,t_2) \in \Couplings{F}(t_1,t_2)$ such that $0=\Wasserstein{F}{(d_1,\dots,d_n)}(t_1,t_2) = \EvaluationFunctor{F}(d_1,\dots,d_n)\big(\gamma(t_1,t_2)\big)$ then $\Wasserstein{F}{(d_1,\dots,d_n)}$ is a metric for all metric spaces $(X_i, d_i)$.
	\end{enumerate}
\end{theorem}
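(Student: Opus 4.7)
The plan is to mirror the endofunctor proofs (\Cref{prop:kantorovich-lifting-preserves-isometries,prop:wasserstein-preserves-isometries,prop:wasserstein-upper-bound,prop:wasserstein-vs-kantorovich,prop:preservation}) componentwise. Conceptually nothing new is happening; the real work is indexing bookkeeping as one juggles $n$-tuples of pseudometrics, nonexpansive maps, projections and (co)products. For item~1 in the Kantorovich case, I fix isometries $f_i\colon(X_i,d_i)\nonexpansiveTo(Y_i,d_i')$. Functoriality gives the $\leq$ direction, so I only owe the reverse inequality. For any nonexpansive tuple $(g_i\colon(X_i,d_i)\nonexpansiveTo(\reals,d_e))_{i=1}^n$ I take the coordinate\textendash wise McShane--Whitney extension
\[
h_i(y) \;:=\; \inf_{x\in X_i}\{\,g_i(x)+d_i'(f_i(x),y)\,\}\,,
\]
which is nonexpansive by the triangle inequality for $d_i'$ (as in \Cref{prop:kantorovich-lifting-preserves-isometries}) and satisfies $h_i\circ f_i=g_i$ since $f_i$ is isometric. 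Then $\EvaluationFunctor{F}(h_1,\dots,h_n)\circ F(f_1,\dots,f_n)=\EvaluationFunctor{F}(g_1,\dots,g_n)$, which is exactly what is needed to push the Kantorovich supremum on the $Y_i$ side down to the one on the $X_i$ side.

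For item~1 in the Wasserstein case, I build, for every coupling $t\in\Couplings{F}\big(F(f_1,\dots,f_n)(t_1),F(f_1,\dots,f_n)(t_2)\big)$, a coupling $\gamma(t)\in\Couplings{F}(t_1,t_2)$ with the same evaluation. The construction replicates the three-pullback diagram of \Cref{prop:wasserstein-preserves-isometries} \emph{in every coordinate separately}: for each $i$, the squares built from $\id\times f_i$, $f_i\times\id$ and $f_i\times f_i$ are pullbacks in $\Set$, and $F$ preserves them weakly after applying it to the whole $n$-tuple of pullback squares. Two appeals to weak universality yield intermediate witnesses $s_1\in F(X_1\times Y_1,\dots,X_n\times Y_n)$ and $s_2\in F(Y_1\times X_1,\dots,Y_n\times X_n)$, and a third produces $\gamma(t)$. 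The isometry condition $d_i'\circ(f_i\times f_i)=d_i$ then gives $\EvaluationFunctor{F}(d_1,\dots,d_n)(\gamma(t))=\EvaluationFunctor{F}(d_1',\dots,d_n')(t)$ by naturality of the evaluation multifunctor. The main obstacle is purely notational: one has to choose consistent indexing conventions so that the componentwise pullback witnesses glue coherently across the $n$ coordinates; the mathematics is a straightforward transliteration of the endofunctor argument.

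For item~2, I first lift \Cref{prop:wasserstein-upper-bound} to the multifunctor setting: for any nonexpansive tuple $(f_1,\dots,f_n)$ and any coupling $t\in\Couplings{F}(t_1,t_2)$, \Cref{W1multi} gives $\EvaluationFunctor{F}d_e\circ F(f_1\times f_1,\dots,f_n\times f_n)\leq\EvaluationFunctor{F}(d_1,\dots,d_n)$ from $d_e\circ(f_i\times f_i)\leq d_i$, and applying \Cref{W2multi} to the coupling $F(f_1\times f_1,\dots,f_n\times f_n)(t)\in F(\reals^2,\dots,\reals^2)$ yields $d_e\big(\EvaluationFunctor{F}(f_1,\dots,f_n)(t_1),\EvaluationFunctor{F}(f_1,\dots,f_n)(t_2)\big)\leq\EvaluationFunctor{F}(d_1,\dots,d_n)(t)$. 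Taking the sup over tuples and the inf over couplings, and handling the empty-coupling case as in \Cref{prop:wasserstein-vs-kantorovich}, gives item~2. For item~3 I rerun \Cref{prop:preservation} coordinatewise: since each $d_i$ is a metric, $d_i^{-1}[\{0\}]=\Delta_{X_i}$, so each square with $d_i$ is a pullback in $\Set$, and $F$ preserves them weakly on the whole $n$-tuple. The evident multifunctor form of \Cref{lem:W3-weak-pb} (via \Cref{W3multi}) supplies one further weak pullback for $\ev_F$. Gluing as in \Cref{prop:preservation}, the optimal coupling $\gamma(t_1,t_2)$ with $\EvaluationFunctor{F}(d_1,\dots,d_n)(\gamma(t_1,t_2))=0$ lifts to an element of $F(\Delta_{X_1},\dots,\Delta_{X_n})$, and projecting through $F(\pi_{1,1},\dots,\pi_{n,1})$ and $F(\pi_{1,2},\dots,\pi_{n,2})$ recovers $t_1=t_2$.
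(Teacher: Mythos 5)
Your proposal is correct and follows exactly the route the paper intends: the theorem is stated there as a "natural generalization" of \Cref{prop:kantorovich-lifting-preserves-isometries,prop:wasserstein-preserves-isometries,prop:wasserstein-upper-bound,prop:wasserstein-vs-kantorovich,prop:preservation}, with the coordinatewise transliteration left to the reader, and your McShane--Whitney extensions, componentwise pullback diagrams, and the $\Delta_{X_i}$ argument are precisely that transliteration. The only implicit point worth flagging is that item~1 for the Wasserstein lifting silently presupposes weak pullback preservation and well-behavedness (as it must, since the lifting is only defined under those hypotheses), which matches the paper's conventions.
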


\noindent Of course, whenever the two pseudometrics coincide for a functor and an evaluation function, we say that the \emph{Kantorovich-Rubinstein duality} or short \emph{duality} holds. 

\subsection{The Product and Coproduct Bifunctors}
\label{subsec:product-coproduct-bifunctors}
We conclude our section on multifunctors by considering two important examples at length, the product and the coproduct bifunctor. 

\begin{definition}[Product Bifunctor]
	\label{def:product-bifunctor}
	\index{product bifunctor}
	The \emph{product bifunctor} is the bifunctor $F\colon \Set^2 \to \Set$ where $F(X_1,X_2) = X_1\times X_2$ for all sets $X_1, X_2$ and $F(f_1,f_2) = f_1 \times f_2$ for all functions $f_i \colon X_i \to Y_i$.
\end{definition}

This functor fits nicely into our theory since it preserves pullbacks
\cite[Lem.~5.4.53]{Ker16}. The proof is simple and hence omitted. Let
us now discuss possible evaluation functions for this functor. They
are similar to those for the input functor in
\Cref{exa:wasserstein-input-functor} but we add some additional
parameters \new{as weighting factors to have additional flexibility
  and to demonstrate that one could choose different evaluation
  functions} (this could be done analogously for the input functor).
	
\begin{lemma}[Evaluation Functions for the Product Bifunctor]
	\label{lem:evfct-product-bifunctor}
	Let $F$ be the product bifunctor. The evaluation functions $\ev_F\colon [0,\top]^2 \to [0,\top]$ presented below are well-behaved.
	\begin{center}\begin{tabular}{c|c|c}
			Maximal Distance $\top$ & Other Parameters & $\ev_F(r_1,r_2)$ \\
			\hline
			$\top \in\ ]0, \infty]$ & $c_1,c_2 \in \,]0,1]$ & $\max \set{c_1r_1,c_2r_2}$ \\
			$\top = \infty $ & $c_1,c_2 \in \,]0,\infty[$\,, $p \in \N$ & $(c_1 x_1^p + c_2 x_2^p)^{1/p}$ \\
			$\top \in\,]0, \infty[$ & $c_1,c_2 \in \,]0,1], c_1+c_2 \leq 1$, $p \in \N$ & $(c_1 x_1^p + c_2 x_2^p)^{1/p}$ 
		\end{tabular}\end{center}
\end{lemma}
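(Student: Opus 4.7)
The plan is to verify, for each of the three rows in the table, the three well-behavedness conditions \ref*{W1multi}, \ref*{W2multi}, \ref*{W3multi} of \Cref{def:well-behaved-multi}. Before doing so, I would first check that each listed $\ev_F$ really maps into $[0,\top]$: for the $\max$-row this uses $c_1,c_2\le 1$; for the $L^p$-row with $\top<\infty$ it uses $c_1+c_2\le 1$ together with $r_i\le\top$, giving $c_1r_1^p+c_2r_2^p\le (c_1+c_2)\top^p\le\top^p$; the $\top=\infty$ row is trivial.

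For \ref*{W1multi}, observe that $\EvaluationFunctor{F}(f_1,f_2)(x_1,x_2) = \ev_F(f_1(x_1),f_2(x_2))$, so monotonicity of $\EvaluationFunctor{F}$ reduces to monotonicity of $\ev_F$ in each argument. This holds immediately because $r\mapsto c\,r$ and $r\mapsto c\,r^p$ are monotone on $[0,\top]$, and both $\max$ and weighted sums preserve pointwise inequalities.

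For \ref*{W3multi}, the strict positivity of the weights $c_1,c_2>0$ forces $\ev_F(r_1,r_2)=0$ if and only if $r_1=r_2=0$, hence $\ev_F^{-1}[\{0\}]=\{(0,0)\}$. On the other side, $F(\{0\},\{0\})=\{(0,0)\}$ and $F(i,i)(0,0)=(0,0)$, so both sides of the required equality coincide.

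The main work is \ref*{W2multi}, which for the product bifunctor reads: for every $t=(a,b)\in\reals^2\times\reals^2$ with $a=(a_1,a_2)$, $b=(b_1,b_2)$, one has
\[
d_e\bigl(\ev_F(a_1,a_2),\,\ev_F(b_1,b_2)\bigr)\;\le\;\ev_F\bigl(d_e(a_1,b_1),\,d_e(a_2,b_2)\bigr).
\]
For the $\max$-row I would apply \Cref{lem:max-sum}(1) directly to the finite index set $\{1,2\}$ with $f(i)=c_i a_i$ and $g(i)=c_i b_i$, which yields
$d_e\bigl(\max_i c_i a_i,\max_i c_i b_i\bigr)\le\max_i d_e(c_i a_i,c_i b_i)=\max_i c_i d_e(a_i,b_i)$, as desired. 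For the two $L^p$-rows I would rephrase the inequality by substituting $u_i := c_i^{1/p}a_i$, $v_i := c_i^{1/p}b_i$, so that both sides become $\bigl|\|u\|_p - \|v\|_p\bigr|\le \|u-v\|_p$ on $\R^2$ with the standard $p$-norm. This is the classical reverse triangle inequality, a direct consequence of Minkowski's inequality (which holds for $p\ge 1$, in particular all $p\in\N\setminus\{0\}$); one still has to separately handle the degenerate cases where one of the $\|u\|_p,\|v\|_p$ is $\infty$ (only relevant for $\top=\infty$), which reduces to an elementary case split using the convention $x+\infty=\infty$.

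The only real obstacle is \ref*{W2multi} for the $L^p$-case: one needs to recognise that the inequality is exactly the reverse Minkowski inequality on weighted $\ell^p(\{1,2\})$, and to carry out the $\infty$-case analysis cleanly when $\top=\infty$. Everything else is bookkeeping.
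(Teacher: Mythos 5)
Your proposal is correct and follows the paper's overall plan: verify \Cref{W1multi,W2multi,W3multi} row by row, with \Cref{W1multi} reducing to monotonicity of $\ev_F$ in each argument and \Cref{W3multi} following from strict positivity of $c_1,c_2$, exactly as in the paper. The one genuine divergence is in \Cref{W2multi} for the $\max$-row: the paper first reduces the claim via \Cref{lem:sum-vs-dist} to the one-sided inequality $z_1 \leq z_2 + \max\{c_1 d_e(x_{11},x_{21}), c_2 d_e(x_{12},x_{22})\}$ and then argues by hand on which of $c_1x_{11}$, $c_2x_{12}$ attains the maximum (after separately dispatching the cases $z_1=z_2$ and $z_1=\infty$), whereas you invoke \Cref{lem:max-sum}(1) on the index set $\{1,2\}$ with $f(i)=c_ia_i$, $g(i)=c_ib_i$ and use $d_e(c_ia_i,c_ib_i)=c_i\,d_e(a_i,b_i)$. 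Your route is shorter and reuses machinery the paper itself deploys for the analogous input-functor check in \Cref{exa:wasserstein-input-functor}; the paper's version is self-contained but longer. For the $L^p$-rows both arguments come down to the reverse Minkowski inequality (the paper defers to a cited lemma at exactly this point), and your substitution $u_i=c_i^{1/p}a_i$ makes the reduction explicit; your remark that the $\infty$-cases need a separate elementary check when $\top=\infty$ matches the care the paper takes elsewhere with extended reals.
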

\begin{proof}
	Apparently the only difference between the second and the third row is the range of the parameters. It ensures that $\ev_F(r_1,r_2) \in [0,\top]$. We proceed by checking all three conditions for well-behavedness:
	\begin{wbconditions}
		\item Let $f_i, g_i\colon X_i\to\reals$ with $f_i\le g_i$ be given. For the maximum we have $\EvaluationFunctor{F}(f_1,f_2) = \max\set{c_1f_1,c_2f_2} \leq \max\set{c_1g_1,c_2g_2}= \EvaluationFunctor{F}(g_1,g_2)$ 
		and for the second evaluation function, we also obtain $\EvaluationFunctor{F}(f_1, f_2) = \big(c_1\cdot f_1^p + c_2\cdot f_2^p\big)^{1/p} \le \big(c_1\cdot g_1^p + c_2\cdot g_2^p\big)^{1/p} = \EvaluationFunctor{F}(g_1, g_2)$
		due to monotonicity of all involved functions since $c_1,c_2>0$.
		
		\item Let $\pi_i \colon [0,\top]^2 \to [0,\top]$ be the product projections and define $t:=(x_{11},x_{21},x_{12},x_{22})\in F([0,\top]^2,[0,\top]^2) = [0,\top]^2\times [0,\top]^2$. We have to show the inequality 
		\begin{align}
			d_e\Big(\EvaluationFunctor{F}(\pi_1,\pi_1)(t), \EvaluationFunctor{F}(\pi_2, \pi_2)(t)\Big) \leq \EvaluationFunctor{F}(d_e,d_e)(t)\,.\label{eq:evfct-product1}
		\end{align}
		 To do this, we first observe that the right hand side of this inequality evaluates to $\EvaluationFunctor{F}(d_e,d_e)(t) = \ev_F\big(d_e(x_{11},x_{21}), d_e(x_{12},x_{22})\big)$. Moreover, we have $\EvaluationFunctor{F}(\pi_i,\pi_i)(t) = \ev_F(x_{i1},x_{i2})$ so if we define $z_i = \ev_F(x_{i1},x_{i2})$ the left hand side of \eqref{eq:evfct-product1} can be rewritten as $d_e(z_1,z_2)$. Thus \eqref{eq:evfct-product1} is equivalent to
		\begin{equation}
			d_e\left(z_1,z_2\right) \leq \ev_F\big(d_e(x_{11},x_{21}), d_e(x_{12},x_{22})\big)\,.\label{eq:evfct-product2}
		\end{equation}
		If $z_1=z_2$ this is obviously true because $d_e(z_1, z_2) = 0$ and the right hand side of \eqref{eq:evfct-product2} is non-negative. We now assume $z_1 > z_2$ (the other case is symmetrical). For $\infty = z_1 > z_2$ inequality \eqref{eq:evfct-product2} holds because then $x_{11} = \infty$ or $x_{12} = \infty$ and $x_{21},x_{22} < \infty$ (otherwise we would have $z_2 = \infty$) so both the left hand side and the right hand side are $\infty$. Thus we can now restrict our attention to $\infty > z_1 > z_2$ where necessarily also $x_{11}, x_{12}, x_{21}, x_{22} < \infty$ (otherwise we would have $z_1 = \infty$ or $z_2=\infty$). According to \Cref{lem:sum-vs-dist} (\cpageref{lem:sum-vs-dist}), the inequality \eqref{eq:evfct-product2} is equivalent to showing the two inequalities $z_1 \leq z_2 + \ev_F\big(d_e(x_{11},x_{21}), d_e(x_{12},x_{22})\big)$ and $z_2 \leq z_1 + \ev_F\big(d_e(x_{11},x_{21}), d_e(x_{12},x_{22})\big)$. By our assumption ($\infty > z_1>z_2$) the second of these inequalities is satisfied, so we just have to show the first. 
			\begin{enumerate}
				\item For the discounted maximum as evaluation function we have $z_i = \max\set{c_1x_{i1}, c_2x_{i2}}$. If for $z_1$ the maximum is attained for the first element, i.e., if $z_1 = c_1x_{11}$, we can conclude that 
$z_2 + \max\set{c_1d_e(x_{11}, x_{21}), c_2 d_e(x_{12},x_{22})} \geq z_2 + c_1d_e(x_{11},x_{21}) = z_2 + c_1|x_{11}-x_{21}| \geq z_2 + c_1(x_{11}-x_{21})= z_2 + c_1x_{11} - c_1x_{21} = z_2 + z_1 - c_1x_{21} = z_1 + (z_2-c_1x_{21}) \geq z_1$ because $z_2 = \max\set{c_1x_{21},c_2x_{22}} > c_1x_{21}$ and therefore $(z_2-c_1x_{21}) \geq 0$. The same line of argument can be applied if $z_1 = c_2x_{12}$.
				
				\item For the second evaluation function we can simply use the Minkowski inequality to obtain the result \cite[Lem.~5.4.54]{Ker16}.
			\end{enumerate}
			
		\item Both functions satisfy \Cref{cond:evfct-zero-multi} of \Cref{def:evfct-multi}, because $F(i,i)[F(\set{0},\set{0})] = (i\times i)[\set{0} \times \set{0}] = \set{(0,0)}$ and for both evaluation functions apparently $\ev_F^{-1}[\set{0}] = \set{(0,0)}$.\qedhere
	\end{wbconditions}
\end{proof}

\noindent Using these well-behaved evaluation functions we can now lift the product bifunctor using our multifunctor lifting framework.

\begin{lemma}[Product Pseudometrics] 
	\label{lem:product-pseudometrics}
	\index{product pseudometrics}
	Let $F$ be the product bifunctor of \Cref{def:product-bifunctor}. For the evaluation functions presented in \Cref{lem:evfct-product-bifunctor} the Kantorovich-Rubinstein duality holds and the supremum [infimum] of the Kantorovich [Wasserstein] pseudometric is always a maximum [minimum]. Moreover, for all pseudometric spaces $(X_1,d_1)$, $(X_2,d_2)$ we obtain the lifted pseudometrics $\LiftedMetric{F}{(d_1,d_2)}\colon (X_1 \times X_2)^2 \to [0,\top]$ as given in the table below.
	\begin{center}\begin{tabular}{c|c}
			$\ev_F(r_1,r_2)$ & $\LiftedMetric{F}{(d_1,d_2)}\big((x_1,x_2),(y_1,y_2)\big)$ \\
			\hline
			 $\max \set{c_1r_1,c_2r_2}$ & $\max\set{c_1d_1(x_1,y_1), c_2d_2(x_2,y_2)}$\\
			 $\Big(c_1 x_1^p + c_2 x_2^p\Big)^{1/p}$ &$\Big(c_1d_1(x_1,y_1)^p+c_2d_2(x_2,y_2)^p\Big)^{1/p}$
		\end{tabular}\end{center}
	\end{lemma}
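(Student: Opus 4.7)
The plan is to exploit the fact that the product bifunctor admits a unique coupling for any pair of elements, which makes the Wasserstein side collapse to a single evaluation, and then to exhibit explicit nonexpansive functions realizing the same value on the Kantorovich side.

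First I would observe that, given $(x_1,x_2),(y_1,y_2)\in F(X_1,X_2) = X_1\times X_2$, there is exactly one coupling, namely $t = ((x_1,y_1),(x_2,y_2))\in X_1^2 \times X_2^2 = F(X_1^2,X_2^2)$: the defining conditions $F(\pi_{1,j},\pi_{2,j})(t) = (x_j,y_j)$ for $j=1,2$ pin down all four components. Hence $\Couplings{F}((x_1,x_2),(y_1,y_2))=\{t\}$, the infimum in \Cref{def:wasserstein-multi} is trivially a minimum, and
\[
\Wasserstein{F}{(d_1,d_2)}\big((x_1,x_2),(y_1,y_2)\big) = \EvaluationFunctor{F}(d_1,d_2)(t) = \ev_F\big(d_1(x_1,y_1),d_2(x_2,y_2)\big),
\]
which already yields the two closed-form expressions in the table.

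Second, since the evaluation functions of \Cref{lem:evfct-product-bifunctor} are well-behaved and $F$ preserves pullbacks, \Cref{prop:multifunctor-lifting-properties}(2) gives the inequality $\Kantorovich{F}{(d_1,d_2)} \leq \Wasserstein{F}{(d_1,d_2)}$ for free. For the converse direction, and thus for duality, I would exhibit witnesses for the Kantorovich supremum. Set $f_i := d_i(x_i,\_) \colon (X_i,d_i)\nonexpansiveTo([0,\top],d_e)$, which are nonexpansive by \Cref{lem:alt-char-triangle}. Then $f_i(x_i)=0$ and $f_i(y_i)=d_i(x_i,y_i)$, so $\EvaluationFunctor{F}(f_1,f_2)(x_1,x_2) = \ev_F(0,0)=0$ and $\EvaluationFunctor{F}(f_1,f_2)(y_1,y_2) = \ev_F(d_1(x_1,y_1),d_2(x_2,y_2))$ (using \Cref{W3multi} for the first identity). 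Therefore
\[
d_e\Big(\EvaluationFunctor{F}(f_1,f_2)(x_1,x_2),\EvaluationFunctor{F}(f_1,f_2)(y_1,y_2)\Big) = \ev_F\big(d_1(x_1,y_1),d_2(x_2,y_2)\big),
\]
which matches the Wasserstein value exactly. Consequently the Kantorovich supremum is attained (a maximum) and coincides with the Wasserstein infimum, establishing duality and the claimed formulas for both evaluation functions simultaneously.

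There is no real obstacle here: the whole argument is driven by the uniqueness of couplings for the product bifunctor, which completely trivializes the Wasserstein side, and the same canonical choice of distance-to-a-point functions serves as a Kantorovich witness irrespective of whether $\ev_F$ is the discounted maximum or the weighted $p$-norm. The only minor care needed is to double-check, for each of the two evaluation functions, that plugging $f_i(x_i)=0$ yields $\ev_F(0,0)=0$, which is immediate from the explicit formulas.
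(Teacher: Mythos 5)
Your proof is correct and follows essentially the same route as the paper: exhibit the (unique) coupling $((x_1,y_1),(x_2,y_2))$ together with nonexpansive witnesses built from $d_i(x_i,\_)$ whose Kantorovich value matches the coupling's evaluation, and combine this with the general inequality $\Kantorovich{F}{(d_1,d_2)} \leq \Wasserstein{F}{(d_1,d_2)}$. The only (harmless) difference is that for the max evaluation function the paper makes a without-loss-of-generality case split and takes the second witness to be the constant zero function, whereas your uniform choice $f_i = d_i(x_i,\_)$ works directly because $\ev_F(0,0)=0$ gives $d_e\big(0,\ev_F(d_1(x_1,y_1),d_2(x_2,y_2))\big) = \ev_F\big(d_1(x_1,y_1),d_2(x_2,y_2)\big)$.
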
 
\begin{proof}
Let $(X_1,d_1)$, $(X_1,d_2)$ be pseudometric spaces, $\pi_i \colon X_1^2 \to X_1$ and $\tau_i\colon X_2^2 \to X_2$ be the projections and let $t_1 = (x_1,x_2), t_2 = (y_1,y_2) \in F(X_1,X_2) = X_1 \times X_2$ be given. We define $t := (x_1,y_1,x_2,y_2) \in F(X_1^2,X_2^2)$ and observe that $F(\pi_1,\tau_1)(t) = t_1$, $F(\pi_2,\tau_2)(t) = t_2$ and thus $t \in \Gamma_F(t_1,t_2)$ is a coupling of $t_1$ and $t_2$. In the following we will construct nonexpansive functions $f_i\colon(X_i,d_i) \nonexpansiveTo (\reals,d_e)$ such that $d_e\big(\EvaluationFunctor{F}(f_1,f_2)(t_1), \EvaluationFunctor{F}(f_1,f_2)(t_2)\big) = \EvaluationFunctor{F}(d_1,d_2)(t)$ holds. Due to \Cref{prop:multifunctor-lifting-properties} (\cpageref{prop:multifunctor-lifting-properties}) we can then conclude that duality holds and both supremum and infimum are attained.
	\begin{enumerate}
	\item For the first evaluation function we have $\EvaluationFunctor{F}(d_1,d_2)(t) = \max\set{c_1d_1(x_1,y_1),c_2d_2(x_2,y_2)}$ and assume without loss of generality that $c_1d_1(x_1,y_1)$ is the maximal element. We define $f_1:=d_1(x_1,\_)$, which is nonexpansive due to \Cref{lem:alt-char-triangle} (\cpageref{lem:alt-char-triangle}), and $f_2$ to be the constant zero-function which is obviously nonexpansive as a constant function. Then we have:
	\begin{align*}
		& d_e\Big(\EvaluationFunctor{F}(f_1,f_2)(t_1),\EvaluationFunctor{F}(f_1,f_2)(t_2)\Big) \\
		& \quad = d_e\big(\max\set{c_1f_1(x_1),c_2f_2(x_2)},\max\set{c_1f_1(y_1),c_2f_2(y_2)}\big)\\
		& \quad = d_e\big(\max\set{c_1f_1(x_1),0},\max\set{c_1f_1(y_1),0}\big) = d_e\big(c_1f_1(x_1),c_1f_1(y_1)\big) \\
		& \quad = c_1d_1(x_1,y_1) = \max\set{c_1d_1(x_1,y_1),c_2d_2(x_2,y_2)} = \EvaluationFunctor{F}(d_1,d_2)(t)
	\end{align*}
	The case where $c_2d_2(x_2,y_2)$ is the maximal element is treated analogously.
	\item For the second evaluation function we define $f_1 := d_1(x_1,\_)$ and $f_2:=d_2(x_2,\_)$ which are nonexpansive by \Cref{lem:alt-char-triangle} and obtain
	\begin{align*}
		& d_e\Big(\EvaluationFunctor{F}(f_1,f_2)(t_1),\EvaluationFunctor{F}(f_1,f_2)(t_2)\Big) \\
		&=d_e\left(\left(c_1f_1^p(x_1)+c_2f_2^p(x_2)\right)^{1/p},\left(c_1f_1^p(y_1)+c_2f_2^p(y_2)\right)^{1/p}\right)\\
		& = d_e\left(0, \left(c_1d_1^p(x_1, y_1)+c_2d_2^p(x_2,y_2)\right)^{1/p}\right) \\
		&= \left(c_1d_1^p(x_1, y_1)+c_2d_2^p(x_2,y_2)\right)^{1/p} = \EvaluationFunctor{F}(d_1,d_2)(t)
	\end{align*}
	which completes the proof.\qedhere
	\end{enumerate}
\end{proof}

\noindent While all the product pseudometrics of \Cref{lem:product-pseudometrics} are well-known, we point out a specifically interesting one, the undiscounted maximum pseudometric.

\begin{lemma}[Binary Products in $\PMet$]
	\label{lem:binary-prod-PMet}
	If $c_1=c_2=1$ for the first evaluation function in \Cref{lem:product-pseudometrics} we obtain for two given pseudometric spaces $(X_1, d_1)$ and $(X_2,d_2)$ as lifted pseudometric the function $d_\infty\colon (X_1 \times X_2)^2 \to [0,\top]$, with $d\big((x_1,x_2),(y_1,y_2)\big) = \max\set{d_1(x_1,y_1), d_2(x_2,y_2)}$. The resulting pseudometric space $(X_1 \times X_2, d_\infty)$ is exactly the categorical product of $(X_1,d_1)$ and $(X_2,d_2)$ in $\PMet$.
\end{lemma}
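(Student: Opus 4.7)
The first observation is that the formula for $d_\infty$ is already a consequence of \Cref{lem:product-pseudometrics} by simply specialising to $c_1 = c_2 = 1$ in the first evaluation function, so all that remains is to show that $(X_1\times X_2, d_\infty)$ together with the usual set-theoretic projections $\pi_1, \pi_2$ is a categorical product in $\PMet$.

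My plan is to verify the two defining properties of a product. First I would check that the projections $\pi_i\colon (X_1\times X_2, d_\infty) \nonexpansiveTo (X_i, d_i)$ are nonexpansive: for any $(x_1,x_2),(y_1,y_2) \in X_1\times X_2$ and $i \in \{1,2\}$, the inequality $d_i(x_i,y_i) \leq \max\{d_1(x_1,y_1), d_2(x_2,y_2)\} = d_\infty\bigl((x_1,x_2),(y_1,y_2)\bigr)$ is immediate from the definition of the maximum. Hence $(X_1\times X_2, d_\infty)$ with $\pi_1, \pi_2$ forms a cone over the discrete diagram $\{(X_1,d_1),(X_2,d_2)\}$ in $\PMet$.

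For the universal property, let $(Y,d_Y)$ be another pseudometric space equipped with nonexpansive maps $f_i\colon (Y,d_Y) \nonexpansiveTo (X_i,d_i)$, $i=1,2$. Since the forgetful functor $U\colon \PMet \to \Set$ is faithful and $X_1 \times X_2$ is the product in $\Set$, the only candidate for a mediating morphism is the set-theoretic pairing $\langle f_1, f_2\rangle\colon Y \to X_1\times X_2$, which moreover is automatically unique. The key step is to show $\langle f_1, f_2 \rangle$ is nonexpansive with respect to $d_\infty$: for all $y,y' \in Y$,
\begin{align*}
d_\infty\bigl(\langle f_1,f_2\rangle(y), \langle f_1,f_2\rangle(y')\bigr)
  &= \max\bigl\{d_1(f_1(y),f_1(y')), d_2(f_2(y),f_2(y'))\bigr\} \\
  &\leq \max\{d_Y(y,y'), d_Y(y,y')\} = d_Y(y,y'),
\end{align*}
where the inequality uses nonexpansiveness of both $f_1$ and $f_2$.

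There is no real obstacle here: the whole argument reduces to unfolding the maximum and using nonexpansiveness of the component maps. The only mildly subtle point, already handled by the formula above, is that one really does need the undiscounted maximum (i.e.\ $c_1 = c_2 = 1$) to match the categorical product exactly, because otherwise the projections need not be nonexpansive with the required constants on the nose.
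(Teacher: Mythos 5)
Your proof is correct, but it takes a different route from the paper. The paper's proof is a one-liner: it invokes \Cref{prop:PMetcomplete} with the discrete index $I=\two$ and $f_i = \pi_i$, since the completeness construction there already exhibits the limit of a diagram as the underlying $\Set$-limit equipped with the pointwise supremum $\sup_i d_i \circ (f_i \times f_i)$, which for two factors is precisely $d_\infty$; the universal property then comes for free from that theorem. You instead verify the universal property from scratch: nonexpansiveness of the projections, existence and uniqueness of the mediating map via faithfulness of $U$, and nonexpansiveness of $\langle f_1, f_2\rangle$ by unfolding the maximum. Both arguments are sound; the paper's buys brevity by reusing general machinery, while yours is self-contained and makes explicit exactly where the undiscounted maximum is needed --- your closing observation that $c_1 = c_2 = 1$ is essential for the projections to be nonexpansive is correct and is not spelled out in the paper.
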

\begin{proof}
	This follows from \Cref{prop:PMetcomplete} (\cpageref{prop:PMetcomplete}) by taking $I=\two$ and $f_i = \pi_i \colon X_1 \times X_2 \to X$.
\end{proof}

In a completely analogous way as for the product bifunctor, we will now introduce and study the coproduct bifunctor. 

\begin{definition}[Coproduct Bifunctor]
	\label{def:coproduct-bifunctor}
	The \emph{coproduct bifunctor} is the functor $F\colon \Set^2 \to \Set$, where $F(X_1,X_2) = X_1 + X_2$ for all sets $X_1$, $X_2$ and $F(f_1,f_2) = f_1+f_2$ for all functions $f_1\colon X_1 \to Y_1$, $f_2\colon X_2 \to Y_2$. Explicitly\footnote{We use the representation $X_1 + X_2  \cong X_1\times\{1\} \cup X_2\times \{2\}$.}, the function $f_1+f_2\colon X_1+X_2 \to Y_1 + Y_2$ is given via the assignment $f_1+f_2(x,i) = \big(f_i(x),i\big)$.
\end{definition}

As for the product bifunctor, one can easily show that this bifunctor preserves pullbacks \cite[5.4.58]{Ker16} and we omit the simple proof. For this functor we will just consider one type of evaluation function, parametrized by the maximal element $\top$ of the pseudometrics.

\begin{lemma}[Evaluation Function for the Coproduct Bifunctor]
	\label{lem:coproduct-bifunctor-evfct}
	Let $F$ be the coproduct bifunctor. The function $\ev_F\colon\reals+\reals\to\reals$, where $\ev_F(x,i) = x$, is well-behaved. 
\end{lemma}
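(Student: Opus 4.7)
The plan is to verify the three conditions \cref{W1multi,W2multi,W3multi} of \Cref{def:well-behaved-multi} by direct computation, exploiting the fact that for any function $h\colon X_1 + X_2 \to Y$ arising as $h = \ev_F \circ F(f_1,f_2) = \ev_F \circ (f_1 + f_2)$, we simply have $h(x,i) = f_i(x)$. So the evaluation multifunctor acts by picking the ``active'' component.

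For \cref{W1multi}, if $f_i \leq g_i$ for $i=1,2$, then for every $(x,i) \in X_1 + X_2$ we immediately get $\EvaluationFunctor{F}(f_1,f_2)(x,i) = f_i(x) \leq g_i(x) = \EvaluationFunctor{F}(g_1,g_2)(x,i)$, which yields the required monotonicity.

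For \cref{W2multi}, an arbitrary $t \in F(\reals^2,\reals^2) = \reals^2 + \reals^2$ has the form $t = ((x,y),i)$ for some $(x,y) \in \reals^2$ and $i\in \{1,2\}$. Applying the relevant projections gives $\EvaluationFunctor{F}(\pi_1,\pi_1)(t) = \pi_1(x,y) = x$ and $\EvaluationFunctor{F}(\pi_2,\pi_2)(t) = y$, while $\EvaluationFunctor{F}(d_e,d_e)(t) = d_e(x,y)$. The required inequality $d_e(x,y) \leq d_e(x,y)$ therefore holds trivially (with equality).

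For \cref{W3multi}, we compute both sides explicitly: $\ev_F^{-1}[\{0\}] = \{(x,i) \in \reals + \reals \mid x = 0\} = \{(0,1),(0,2)\}$. On the other hand, $F(\{0\},\{0\}) = \{0\} + \{0\} = \{(0,1),(0,2)\}$ and applying $F(i,i) = i + i$ does not change these elements since $i(0)=0$, giving $F(i,i)[F(\{0\},\{0\})] = \{(0,1),(0,2)\}$. Hence both sets coincide. None of these verifications is substantial -- the only mild subtlety is keeping track of the coproduct tagging consistently across the three conditions.
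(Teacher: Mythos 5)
Your proof is correct and follows essentially the same route as the paper's: a direct verification of the three well-behavedness conditions by unfolding $\EvaluationFunctor{F}(f_1,f_2)(x,i)=f_i(x)$, observing that \Cref{W2multi} holds with equality, and computing both sides of \Cref{W3multi} as the set $\set{0}\times\set{1,2}$. No differences worth noting.
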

\begin{proof}
	We show the three properties of a well-behaved evaluation function. 
	\begin{wbconditions}
		\item Let $f_1,f_2,g_1,g_2\colon X\to\reals$ with $f_1\le g_1$, $f_2\le g_2$ and $(z,i)\in F(X_1,X_2) = X_1+ X_2$. We have $\EvaluationFunctor{F}(f_1,f_2)(z,i) = \ev_F\big(F(f_1,f_2)(z,i)\big) = f_i(z) \le g_i(z) = \ev_F\big(F(g_1,g_2)(z,i)\big) = \EvaluationFunctor{F}(g_1,g_2)(z,i)$.

		\item Let $t = \big((x,y),i\big)\in F(\reals^2,\reals^2) = \reals^2\times\{1,2\}$. We obtain $\EvaluationFunctor{F}(d_e,d_e)(t) = \ev_F\big(d_e(x,y),i\big) = d_e(x,y)= d_e\big(\ev_F(x,i),\ev_F(y,i)\big) = d_e\Big(\EvaluationFunctor{F}(\pi_1,\pi_1)(t),\EvaluationFunctor{F}(\pi_2,\pi_2)(t)\Big)$.
		\item Let $i\colon {0} \hookrightarrow \reals$ be the inclusion function. We have $Fi[F(\set{0},\set{0})] = (i + i) [\set{0} + \set{0}] =\set{0} \times \set{1,2} = \ev_F^{-1}[\set{0}]$.\qedhere
	\end{wbconditions}
\end{proof}

\noindent With this evaluation function we can now employ our multifunctor lifting framework to obtain the following coproduct pseudometric.

\begin{lemma}[Coproduct Pseudometric]
	\label{lem:coproduct-pseudometric}
	\index{coproduct pseudometric}
	For the coproduct bifunctor of  \Cref{def:coproduct-bifunctor} and the evaluation function of \Cref{lem:coproduct-bifunctor-evfct} the Kantorovich-Rubinstein duality holds, the supremum of the Kantorovich pseudometric is always a maximum, the infimum of the Wasserstein pseudometric is a minimum whenever a coupling exists and we obtain the coproduct pseudometric
	\begin{align*}
		d_+ \colon (X_1 + X_2)^2 \to [0,\top], \quad d_+\big((x_1,i_1), (x_2, i_2)\big) =%
		\begin{cases}
			d_{i}(x_1,x_2), & \text{if } i_1=i_2=i\\
			\top, & \text{else}
		\end{cases}\,.
	\end{align*}
\end{lemma}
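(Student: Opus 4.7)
The plan is to compute both $\Kantorovich{F}{(d_1,d_2)}$ and $\Wasserstein{F}{(d_1,d_2)}$ directly from their definitions, showing that each equals $d_+$ and that the relevant extremum is attained; this will simultaneously yield the explicit formula, all attainment claims, and the Kantorovich-Rubinstein duality. The pseudometric property of $\Wasserstein{F}{(d_1,d_2)}$ is already guaranteed by the general theorem, since the bifunctor preserves pullbacks and the evaluation is well-behaved by \Cref{lem:coproduct-bifunctor-evfct}, so no work is needed on that point.

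For the Wasserstein side I would first enumerate couplings. Any element of $F(X_1^2, X_2^2) = X_1^2 + X_2^2$ has the form $((a,b), j)$ in the $j$-th summand, and applying the projection-pair morphism $F(\pi_{1,k}, \pi_{2,k})$ for $k \in \set{1,2}$ yields $(a,j)$ and $(b,j)$ respectively. Hence a coupling of $(x_1, i_1)$ and $(x_2, i_2)$ exists iff $i_1 = i_2 =: i$, in which case $\Couplings{F}((x_1,i),(x_2,i))$ is the singleton $\set{((x_1, x_2), i)}$, whose image under $\EvaluationFunctor{F}(d_1, d_2) = \ev_F \circ F(d_1, d_2)$ is $d_i(x_1, x_2)$; this gives both the claimed value and the attainment of the infimum as a minimum. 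When $i_1 \neq i_2$ the coupling set is empty, so $\Wasserstein{F}{(d_1,d_2)} = \top$ as an empty infimum.

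For the Kantorovich side I would use the fact that $\EvaluationFunctor{F}(f_1, f_2)(x, i) = f_i(x)$ for any nonexpansive $f_k$. When $i_1 = i_2 = i$, nonexpansiveness of $f_i$ gives the bound $d_e(f_i(x_1), f_i(x_2)) \leq d_i(x_1, x_2)$, and equality is witnessed by $f_i := d_i(x_1, \_)$ (nonexpansive by \Cref{lem:alt-char-triangle}) together with any constant choice of $f_{3-i}$, so the supremum is a maximum equal to $d_i(x_1,x_2)$. The only subtle case is $i_1 \ne i_2$: here the Wasserstein value is $\top$ by empty infimum, and a matching Kantorovich witness is needed to preserve duality. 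Taking, say for $i_1 = 1$ and $i_2 = 2$, the functions $f_1$ constantly $0$ and $f_2$ constantly $\top$, both of which are trivially nonexpansive as constants into $[0,\top]$, produces the witness $d_e(0, \top) = \top$, uniformly for any $\top \in \,]0, \infty]$. Combining both cases yields $\Kantorovich{F}{(d_1,d_2)} = \Wasserstein{F}{(d_1,d_2)} = d_+$ and establishes all three claims at once.
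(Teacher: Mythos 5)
Your proposal is correct and follows essentially the same route as the paper: exhibit the unique coupling $((x_1,x_2),i)$ in the matching-component case and the witness functions $f_i = d_i(x_1,\_)$ (resp.\ the constant $0$ and constant $\top$ functions in the mismatched case), and read off both distances. The only cosmetic difference is that you bound the Kantorovich supremum from above directly via nonexpansiveness, whereas the paper appeals to the general inequality $\Kantorovich{F}{(d_1,d_2)} \leq \Wasserstein{F}{(d_1,d_2)}$ to conclude optimality of both witnesses at once; both are valid.
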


\begin{proof}
Let $(X_1,d_1)$, $(X_1,d_2)$ be pseudometric spaces, $\pi_i \colon X_1^2 \to X_1$ and $\tau_i\colon X_2^2 \to X_2$ be the projections and and $t_1,t_2 \in F(X_1,X_2) = X_1 + X_2$, say $t_1 = (z,i)$, $t_2 = (z',i')$. We distinguish two cases.
	\begin{enumerate}
		\item For $i=i'$ we define $t = \big((z,z'),i\big)$ and observe that $F(\pi_1,\tau_1)\big((z,z'),i\big) = t_1$, and $F(\pi_2,\tau_2)\big((z,z'),i\big) = t_2$, thus $t \in \Gamma_F(t_1,t_2)$. Furthermore $\EvaluationFunctor{F}(d_1,d_2)(t) = d_i(z,z')$. If  $i=i'=1$ we define $f_1:=d_1(z,\_)\colon (X_1,d_1) \nonexpansiveTo (\reals,d_e)$ which is nonexpansive according to \Cref{lem:alt-char-triangle} and consider an arbitrary nonexpansive function $f_2\colon (X_2,d_2) \nonexpansiveTo (\reals,d_e)$ (e.g. the constant zero-function). Then we have $d_e\big(\EvaluationFunctor{F}(f_1,f_2)(t_1),\EvaluationFunctor{F}(f_1,f_2)(t_2)\big) = d_e\big(\EvaluationFunctor{F}(f_1,f_2)(z,1),\EvaluationFunctor{F}(f_1,f_2)(z',1)\big) =  d_e\big(f_1(z),f_1(z')\big)= d_e\big(0, d_1(z,z')\big) = d_1(z,z') = d_i(z,z')$. The case $i=i'=2$ can be treated analogously.
	\item  In the case where $i\neq i'$, there is no coupling that projects to $(z,i)$ and $(z',i')$, thus $\Wasserstein{F}{(d_1,d_2)}(t_1,t_2) = \top$. We show that also $\Kantorovich{F}{(d_1,d_2)}(t_1,t_2) = \top$. We define $f_1$ to be the constant $0$-function and $f_2$ the constant $\top$-function. We have $d_e\big(\EvaluationFunctor{F}(f_1,f_2)(t_1),\EvaluationFunctor{F}(f_1,f_2)(t_2)\big) = d_e\big(\EvaluationFunctor{F}(f_1,f_2)(z,i),\EvaluationFunctor{F}(f_1,f_2)(z',j)\big) =  d_e\big(f_i(z),f_j(z')\big) = d_e(0,\top) = \top$.\qedhere
	\end{enumerate}
\end{proof}

\noindent We conclude this section by the observation that this yields the categorical coproduct.

\begin{lemma}[Binary Coproducts in $\PMet$]
	The pseudometric space $(X_1+X_2,d_+)$ where $d_+$ is the pseudometric given in \Cref{lem:coproduct-pseudometric} is exactly the categorical coproduct of $(X_1,d_1)$ and $(X_2,d_2)$ in $\PMet$.
\end{lemma}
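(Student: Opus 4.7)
The plan is to verify directly the universal property of the coproduct. Let $\iota_1\colon X_1 \to X_1 + X_2$ and $\iota_2\colon X_2 \to X_1 + X_2$ denote the standard set-theoretic injections, i.e., $\iota_i(x) = (x,i)$.

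First, I would observe that each $\iota_i\colon (X_i,d_i) \nonexpansiveTo (X_1+X_2, d_+)$ is in fact an isometry: for all $x,y \in X_i$, by definition $d_+(\iota_i(x), \iota_i(y)) = d_+((x,i),(y,i)) = d_i(x,y)$. In particular both injections are arrows in $\PMet$, so $(\iota_1,\iota_2)$ is a co-cone from $\big((X_1,d_1),(X_2,d_2)\big)$ to $(X_1+X_2,d_+)$.

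Next, given any co-cone in $\PMet$, i.e., nonexpansive functions $f_i\colon (X_i,d_i) \nonexpansiveTo (Y,d_Y)$ for $i=1,2$, there is a unique function $[f_1,f_2]\colon X_1 + X_2 \to Y$ in $\Set$ satisfying $[f_1,f_2]\circ\iota_i = f_i$, defined by $[f_1,f_2](x,i) = f_i(x)$. It remains to check that $[f_1,f_2]$ is nonexpansive from $(X_1+X_2,d_+)$ to $(Y,d_Y)$. For $(x,i),(x',i') \in X_1+X_2$ we distinguish two cases: if $i = i'$, then $d_Y\big([f_1,f_2](x,i), [f_1,f_2](x',i)\big) = d_Y\big(f_i(x), f_i(x')\big) \leq d_i(x,x') = d_+\big((x,i),(x',i)\big)$ by nonexpansiveness of $f_i$; if $i \neq i'$, then $d_+\big((x,i),(x',i')\big) = \top$ while $d_Y$ takes values in $[0,\top]$, so the inequality holds trivially.

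I do not expect any real obstacle here: since uniqueness of $[f_1,f_2]$ in $\Set$ is inherited and the underlying forgetful functor $U\colon \PMet \to \Set$ is faithful, uniqueness in $\PMet$ follows immediately. Hence $(X_1+X_2, d_+)$ together with $\iota_1,\iota_2$ satisfies the universal property of the categorical coproduct in $\PMet$.
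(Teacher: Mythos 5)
Your proof is correct, but it takes a different route from the paper. The paper disposes of this lemma in one line by appealing to \Cref{prop:PMetcomplete} with $I=\two$ and $f_i=\iota_i$: the general cocompleteness construction defines the colimit pseudometric as the supremum of all pseudometrics on $X_1+X_2$ making the injections nonexpansive, and one then has to recognize that this supremum is exactly $d_+$ (it is: $d_+$ itself lies in that set since the injections are isometric for it, and any competitor $e$ satisfies $e\big((x,i),(y,i)\big)\leq d_i(x,y)$ on same-component pairs and $e\leq\top$ everywhere, so $e\leq d_+$ — a check the paper leaves implicit). You instead verify the universal property directly: injections are isometries, the copairing $[f_1,f_2]$ is nonexpansive by the two-case analysis ($i=i'$ uses nonexpansiveness of $f_i$; $i\neq i'$ is trivial since $d_+=\top$ there), and uniqueness is inherited from $\Set$ via faithfulness of $U$. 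Your argument is more elementary and self-contained, and it makes explicit exactly why $d_+$ — rather than some smaller pseudometric — is the right choice; the paper's approach buys brevity and uniformity, since the same citation pattern handles products, coproducts, and arbitrary (co)limits. Both are complete and correct.
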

\begin{proof}
	This follows from \Cref{prop:PMetcomplete} by taking $I=\two$ and $f_i = \iota_i \colon X_i \to X_1 + X_2$.
\end{proof}

\section{Bisimilarity Pseudometrics}
\label{sec:final-coalgebra}
We now want to use our lifting framework to derive bisimilarity pseudometrics. We assume an arbitrary lifting $\LiftedFunctor{F}\colon \PMet \to \PMet$ of an endofunctor $F$ on $\Set$ and, for any pseudometric space $(X,d)$, we
write $\LiftedMetric{F}{d}$ for the pseudometric obtained by applying
$\LiftedFunctor{F}$ to $(X,d)$. Such a lifting can be obtained as an
endofunctor lifting, by taking a lifted multifunctor and fixing all parameters apart from one or by the composition of such functors.

We first observe that any coalgebra \new{$c\colon X\to FX$} for a set
functor $F\colon \Set \to \Set$ can be ``lifted'' to a coalgebra for
the lifting $\LiftedFunctor{F}\colon\PMet\to\PMet$ by endowing $X$
with a canonical pseudometric, i.e., the least pseudometric making $c$
a nonexpansive function (that actually turns $c$ into an isometry).

\begin{lemma}[Lifting Coalgebras to $\PMet$]
  \label{lem:lifting-coalg}
  Let $\LiftedFunctor{F}\colon\PMet\to\PMet$ be a lifting of a functor
  $F\colon \Set\to \Set$ and let
  $c\colon X\to FX$ be a coalgebra. The mapping associating each pseudometric
  $d \colon X\times X\to\reals$ with
  $\LiftedMetric{F}{d}\circ(c\times c)$ is monotonic over the lattice
  of pseudometrics on $X$, hence it has a least fixed point
  \begin{center}
    $d_c = \inf \set{ d \mid d \colon X\times X\to\reals\ \text{pseudometric} \land\
    \LiftedMetric{F}{d}\circ(c\times c) \leq d }$
  \end{center}
  and
  $c \colon (X,d_c) \nonexpansiveTo
  (FX,\LiftedMetric{F}{d_c})$ is an isometry.
  
  Moreover, if $c' \colon Y \to FY$ is another coalgebra
  and $f \colon X \to Y$ is a coalgebra homomorphism then
  $f \colon (X, d_c) \nonexpansiveTo (Y, d_{c'})$ is nonexpansive. It is an
  isometry if additionally $\LiftedFunctor{F}$ preserves isometries.
\end{lemma}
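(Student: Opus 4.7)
The first three claims are essentially lattice-theoretic. My plan is to first check that $\phi_c \colon d \mapsto \LiftedMetric{F}{d} \circ (c \times c)$ sends pseudometrics on $X$ to pseudometrics on $X$, since pulling back a pseudometric along any function preserves reflexivity, symmetry, and the triangle inequality. Monotonicity of $\phi_c$ is then a one-line consequence of \Cref{prop:monotone}. Completeness of the pseudometric lattice on $X$ (\Cref{le:lattice-pseudo}) combined with Knaster--Tarski then delivers a least fixed point $d_c$, characterized as the infimum of all post-fixed points, which is precisely the claimed formula. The fixed-point identity $d_c = \LiftedMetric{F}{d_c} \circ (c \times c)$ is exactly the statement that $c \colon (X, d_c) \nonexpansiveTo (FX, \LiftedMetric{F}{d_c})$ is an isometry.

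For the homomorphism claim I plan to use the iterative construction of the least fixed point from below. Define ordinal-indexed pseudometrics by $d_0 = \bot_X$, $d_{\alpha+1} = \phi_c(d_\alpha)$, and $d_\lambda = \sup_{\alpha<\lambda} d_\alpha$ at limits, and analogously $e_\alpha$ on $Y$ using $\phi_{c'}$. Both chains are monotone increasing and eventually stabilize at $d_c$ and $d_{c'}$ respectively. By transfinite induction I would establish the invariant that $f \colon (X, d_\alpha) \nonexpansiveTo (Y, e_\alpha)$ is nonexpansive at every stage. The base and limit cases are routine (at limits one uses that the join of pseudometrics is pointwise, \Cref{le:lattice-pseudo}). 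The successor step is the key calculation: functoriality of $\LiftedFunctor{F}$ turns nonexpansiveness of $f$ at stage $\alpha$ into nonexpansiveness of $Ff$ between the lifted spaces, and combining this with the coalgebra identity $c' \circ f = Ff \circ c$ will yield, for all $x, y \in X$,
\begin{align*}
e_{\alpha+1}(f(x),f(y)) &= \LiftedMetric{F}{e_\alpha}\big(Ff(c(x)), Ff(c(y))\big)\\
&\leq \LiftedMetric{F}{d_\alpha}\big(c(x), c(y)\big) = d_{\alpha+1}(x,y)\,.
\end{align*}
Evaluating at the common stabilization ordinal then yields $d_{c'} \circ (f \times f) \leq d_c$. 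For the isometry strengthening, I would run exactly the same induction with equalities replacing inequalities; the only change is that in the successor step, isometry preservation of $\LiftedFunctor{F}$ upgrades the $\leq$ to $=$.

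The main obstacle is orienting the inequality correctly. The tempting shortcut of showing that $d_{c'} \circ (f \times f)$ is a post-fixed point of $\phi_c$ would, by minimality of $d_c$, put $d_c$ \emph{below} $d_{c'} \circ (f \times f)$, which is the wrong direction for nonexpansiveness. Iterating from $\bot_X$ rather than appealing directly to Knaster--Tarski circumvents this trap, because then the desired inequality becomes an invariant preserved along the chain of approximants.
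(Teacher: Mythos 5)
Your proof is correct, and for the first half (monotonicity, Knaster--Tarski, and the isometry of $c$) it coincides with the paper's argument. For the ``moreover'' part you take a genuinely different route. The paper stays entirely at the level of the order-theoretic characterizations: for nonexpansiveness it introduces the auxiliary pseudometric $d_Y = \sup\set{d \mid f\colon (X,d_c)\nonexpansiveTo (Y,d)\ \text{nonexpansive}}$, shows via the coalgebra-homomorphism identity that $d_Y$ is a pre-fixed point of $d\mapsto \LiftedMetric{F}{d}\circ(c'\times c')$, and concludes $d_{c'}\leq d_Y$ by minimality; for the isometry claim it shows that $d_{c'}\circ(f\times f)$ is a \emph{fixed} point of the map on $X$ (this is where isometry preservation enters), so that $d_c \leq d_{c'}\circ(f\times f)$, which combined with the already-established nonexpansiveness gives equality. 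You instead run the transfinite iteration from the bottom and carry the invariant $e_\alpha\circ(f\times f)\leq d_\alpha$ (resp.\ $=$) through base, successor and limit stages; your successor computation is exactly right, and the limit case is covered by the pointwise description of joins in \Cref{le:lattice-pseudo}. Your approach buys a uniform treatment of both the $\leq$ and the $=$ versions in one induction and makes the role of each hypothesis (functoriality vs.\ isometry preservation) very transparent; its only overhead is the appeal to convergence of the transfinite chain to the least fixed point, which the paper itself records immediately after the lemma. The paper's approach avoids ordinals altogether but needs the two separate auxiliary-pseudometric tricks. Your closing remark about the ``wrong direction'' is also accurate: $d_{c'}\circ(f\times f)$ being a (pre-)fixed point only bounds $d_c$ from \emph{above} by it, which is precisely the half of the isometry claim the paper extracts from it, never a substitute for nonexpansiveness.
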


\begin{proof}
  We first observe that the mapping associating each pseudometric
  $d \colon X\times X\to\reals$ with
  $\LiftedMetric{F}{d}\circ(c\times c)$ is monotonic over the lattice
  of pseudometrics on $X$. Let $d, d' \colon X\times X\to\reals$ be
  pseudometrics such that $d \leq d'$. Then by
  \Cref{prop:monotone},
  $\LiftedMetric{F}{d} \leq \LiftedMetric{F}{d'}$ and thus
  $\LiftedMetric{F}{d}\circ(c\times c) \leq
  \LiftedMetric{F}{d'}\circ(c\times c)$, as desired.
  The fact that $d_c$ is a fixed point exactly says that
  $c \colon (X,d_c) \nonexpansiveTo (FX,\LiftedMetric{F}{d_c})$ is an
  isometry.
  
  Now, let $c' \colon Y \to FY$ be another coalgebra and
  $f\colon X \to Y$ be a coalgebra homomorphism.
We first show that
  $f\colon (X, d_c) \nonexpansiveTo (Y, d_{c'})$ is nonexpansive.
  Consider the pseudometric $d_Y \colon Y\times Y\to\reals$ defined as
  \[
    d_Y = \sup \set{ d \mid d \colon Y\times Y\to\reals\ \text{pseudometric}\ \land\ f \colon (X, d_c) \nonexpansiveTo (Y, d)\ \text{nonexpansive} }\,.
  \]
  Since the supremum is taken pointwise,
  $f \colon (X,d_c) \nonexpansiveTo (Y, d_Y)$ is nonexpansive and thus
  $Ff \colon (FX,\LiftedMetric{F}{d_c}) \nonexpansiveTo (FY,
  \LiftedMetric{F}{d_Y})$ is also nonexpansive. This allows us to deduce that
  \begin{equation}
    \label{eq:inf}
    \LiftedMetric{F}{d_Y} \circ (c'\times c') \leq d_Y\,.
  \end{equation} 
  In fact we have
  \new{
  \begin{align*}
    & \LiftedMetric{F}{d_Y} \circ (c'\circ f \times c'\circ f)
    & \\
    & \quad = \LiftedMetric{F}{d_Y} \circ (Ff\circ c \times Ff\circ c) 
    & \mbox{[since $f$ is coalgebra hom.: $Ff \circ c = c' \circ f$]}\\
    & \quad \leq \LiftedMetric{F}{d_c} \circ (c \times c)
    & \mbox{[since  $Ff$ is nonexpansive]}\\
    & \quad \leq d_c \,.
    & \mbox{[since $c$ is nonexpansive]}
  \end{align*}
  }
  The above means that $f\colon (X, d_c) \nonexpansiveTo (Y, \LiftedMetric{F}{d_Y} \circ (c' \times c'))$ is nonexpansive. Since $d_Y$ is defined as the supremum of the pseudometrics making $f$ nonexpansive, we conclude \Cref{eq:inf}.
  Given the characterization of $d_{c'}$ as least of pre-fixed points, it follows that $d_{c'} \leq d_Y$. Recalling that $f\colon (X,d_c) \to (Y, d_Y)$ was nonexpansive, we conclude that $d_{c'} \circ (f\times f) \leq d_Y \circ (f \times f) \leq d_c$ thus $f\colon (X,d_c) \to (Y, d_{c'})$ is nonexpansive, as desired.

  We finally prove that, whenever $\LiftedFunctor{F}$ preserves
  isometries, $f\colon (X, d_c) \to (Y, d_{c'})$ is an isometry.
  Define a pseudometric $d_X \colon X\times X\to\reals$ as
  $d_X = d_{c'} \circ (f \times f)$.
  Observe that $f \colon (X, d_X) \to (Y, d_{c'})$ is isometric by
  definition. Hence
  $Ff\colon (FX,\LiftedMetric{F}{d_X}) \to (FY, \LiftedMetric{F}{d_{c'}})$ is
  an isometry.
  Moreover we have
  \new{
  \begin{align*}
    d_X(x,x')
	& \quad = d_{c'} \circ (f\times f)
	& \mbox{{[by definition of $d_X$]}}\\
	& \quad = \LiftedMetric{F}{d_{c'}} \circ (c'\circ f \times
        c'\circ f)
	& \mbox{{[since $c' \colon (Y, d_{c'}) \nonexpansiveTo (FY, \LiftedMetric{F}{d_{c'}})$ is an isometry]}}\\
    & \quad = \LiftedMetric{F}{d_{c'}} \circ (Ff\circ c \times
    Ff\circ c) & \mbox{{[since $f$ is coalgebra hom.: $Ff \circ c = c' \circ f$]}}\\
	& \quad = \LiftedMetric{F}{d_X} \circ (c\times c)\,.
	& \mbox{{[since $Ff\colon (FX,\LiftedMetric{F}{d_X}) \to (FY, \LiftedMetric{F}{d_{c'}})$ is an isometry]}}
      \end{align*}
      }
  The above shows that $\LiftedMetric{F}{d_X} \circ (c \times c) =
  d_X$. Therefore, by the characterization of $d_c$ as least of
  pre-fixed points, it follows that $d_c \leq d_X$. Hence
  \new{
  \[
    d_c \leq d_X = d_{c'}\circ (f\times f) \leq d_c
  \]
  }
  the last inequality being nonexpansiveness of
  $f\colon (X,d_c) \to (Y, d_{c'})$. Hence we conclude that $f$ is indeed an isometry.
\end{proof}

For a coalgebra $c\colon X\to FX$, the least fixpoint pseudometric $d_c\colon X\times X\to\reals$ above, given by the Knaster-Tarski theorem \cite{t:lattice-fixed-point}, can be characterized also via transfinite induction. For every ordinal $i$ we construct a pseudometric
$d_i\colon X\times X\to\reals$ as follows:
\begin{itemize}
\item $d_0 := 0$ is the zero pseudometric, 
\item $d_{i+1} := \LiftedMetric{F}{d_i}\circ(c\times c)$ for all ordinals $i$ and
\item $d_j := \sup_{i<j} d_i$ for all limit ordinals $j$.
\end{itemize}
This sequence converges to $d_\theta$ for some ordinal $\theta$.

Moreover, if $c'\colon X\times X\to\reals$ is another coalgebra and
$f \colon X \to Y$ is a coalgebra homomorphism and the fixpoint
pseudometric $d_{c'} \colon X \times X \to \reals$ is reached by the
construction above at ordinal $\zeta$, then $\theta \leq \zeta$.

We can now easily show that if $z \colon Z \to F Z$ is a final $F$-coalgebra,
then its lifting is a final $\LiftedFunctor{F}$-coalgebra.

\begin{theorem}[Final Coalgebra Construction for Liftings]
  \label{thm:final-coalgebra}
  Let $\LiftedFunctor{F}\colon\PMet\to\PMet$ be a lifting of a functor
  $F\colon \Set\to \Set$ such that $F$ has a final coalgebra
  $z\colon Z\to FZ$. Then 
  $z\colon (Z,d_z) \nonexpansiveTo
  (FZ,\LiftedMetric{F}{d_z})$, where $d_z$ is defined as in \Cref{lem:lifting-coalg} is the final
  $\LiftedFunctor{F}$-coalgebra.
\end{theorem}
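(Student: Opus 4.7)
The plan is to verify that $z\colon (Z,d_z) \nonexpansiveTo (FZ, \LiftedMetric{F}{d_z})$ is final by producing, for each $\LiftedFunctor{F}$-coalgebra, a unique nonexpansive homomorphism to it. First, the coalgebra itself is well-defined: by \Cref{lem:lifting-coalg} the map $z$ is an isometry from $(Z,d_z)$ to $(FZ,\LiftedMetric{F}{d_z}) = \LiftedFunctor{F}(Z,d_z)$, in particular nonexpansive.

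Next I would fix an arbitrary $\LiftedFunctor{F}$-coalgebra $\alpha\colon (X,d_X) \nonexpansiveTo \LiftedFunctor{F}(X,d_X) = (FX,\LiftedMetric{F}{d_X})$. Applying the forgetful functor $U\colon \PMet\to\Set$, its underlying map $\alpha\colon X\to FX$ is an $F$-coalgebra in $\Set$, so finality of $z$ in $\CoAlg{F}$ yields a unique $F$-coalgebra homomorphism $\final{\cdot}_\alpha\colon X \to Z$. Uniqueness in $\PMet$ is then automatic: any nonexpansive $\LiftedFunctor{F}$-coalgebra homomorphism $(X,d_X)\nonexpansiveTo (Z,d_z)$ has an underlying $F$-coalgebra homomorphism in $\Set$, and that is necessarily $\final{\cdot}_\alpha$.

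The heart of the argument is to show that $\final{\cdot}_\alpha$ is nonexpansive from $(X,d_X)$ to $(Z,d_z)$. Here the key observation is that $d_X$ is itself a pre-fixed point of the monotone operator $d \mapsto \LiftedMetric{F}{d}\circ (\alpha\times\alpha)$: since $\alpha$ is nonexpansive we have $\LiftedMetric{F}{d_X}\circ(\alpha\times\alpha) \leq d_X$. Letting $d_\alpha$ denote the canonical least-fixed-point pseudometric on $X$ associated with the $F$-coalgebra $\alpha$ as defined in \Cref{lem:lifting-coalg}, the least-fixpoint characterization immediately gives $d_\alpha \leq d_X$. The second half of \Cref{lem:lifting-coalg} provides nonexpansiveness of $\final{\cdot}_\alpha\colon (X,d_\alpha) \nonexpansiveTo (Z,d_z)$, so for all $x,y\in X$ we deduce
\[
  d_z(\final{x}_\alpha,\final{y}_\alpha) \leq d_\alpha(x,y) \leq d_X(x,y),
\]
which is the desired nonexpansiveness.

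I do not expect a real obstacle: the work has essentially been done in \Cref{lem:lifting-coalg}. The only point requiring care is the bookkeeping distinction between the given pseudometric $d_X$ on the $\LiftedFunctor{F}$-coalgebra and the canonical pseudometric $d_\alpha$ produced by the lemma from the underlying $F$-coalgebra; the comparison $d_\alpha \leq d_X$ (via pre-fixed points) is exactly the bridge that lets us transport nonexpansiveness of $\final{\cdot}_\alpha$ from $d_\alpha$ to $d_X$ and conclude finality.
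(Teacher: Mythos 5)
Your proof is correct and follows essentially the same route as the paper: both establish $d_\alpha \leq d_X$ by noting that nonexpansiveness of the given coalgebra structure makes $d_X$ a pre-fixed point of the operator from \Cref{lem:lifting-coalg}, and then transport nonexpansiveness of the unique $\Set$-level homomorphism $(X,d_\alpha)\nonexpansiveTo(Z,d_z)$ across that inequality. Your explicit treatment of uniqueness (via the forgetful functor) is a small addition the paper leaves implicit, but the argument is the same.
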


\begin{proof}
  By \Cref{lem:lifting-coalg}, if we let
  $d_z = \inf \{ d \mid d \colon Z\times Z\to\reals\ \text{pseudometric}\land\
  \LiftedMetric{F}{d}\circ(z\times z) \leq d \}$ then
  $z \colon (X,d_z) \nonexpansiveTo (FZ,\LiftedMetric{F}{d_z})$ is an
  isometry.
  
  Let $c\colon (X,d_X)\nonexpansiveTo \LiftedFunctor{F}(X,d_X)$ be any
  $\LiftedFunctor{F}$-coalgebra. Again, \Cref{lem:lifting-coalg},
  provides a pseudometric
  $d_c = \inf \{ d \mid d \colon X\times X\to\reals\ \text{pseudometric}\land\
  \LiftedMetric{F}{d}\circ(c\times c) \leq d \}$ such that
  $c \colon (X,d_c) \nonexpansiveTo (FZ,\LiftedMetric{F}{d_c})$ is nonexpansive.
  Since $c$ is also nonexpansive with respect to $d_X$, i.e.,  $\LiftedMetric{F}{d_X} \circ (c \times c) \leq d_X$, we must have $d_c \leq d_X$.
  
  Now, consider the underlying $F$-coalgebra $c\colon X\to FX$ in
  $\Set$. Since $z$ is the final $F$-coalgebra, there is a unique
  function $f\colon X\to Z$ such that $z\circ f = Ff\circ c$. By
  \Cref{lem:lifting-coalg}, $f\colon (X,d_c)\nonexpansiveTo (Z,d_z)$ is
  nonexpansive, thus $d_z \circ (f\times f) \leq d_c$. Recalling that $d_c \leq d_X$ we conclude that also $f\colon (X,d_X)\nonexpansiveTo (Z,d_z)$ is nonexpansive, as desired.
\end{proof}

As a first simple example of this construction we consider the machine functor. We will look at more examples at the end of this section.

\begin{example}[Final Coalgebra for the Lifted Machine Functor]
	\label{exa:final-coalg-lifted-machine}
	We consider the machine endofunctor $M_\two = \two \times \_^A$, take $\top=1$ and as evaluation function we use $\ev_{M_\two} \colon [0,1] \times [0,1]^A$ with $\ev_{M_\two}(o,s) = c \cdot \max_{a \in A} s(a)$ for $0<c <1$ as in \Cref{exa:wasserstein-machine} (\cpageref{exa:wasserstein-machine}). 
	
	It is well-known that the carrier of the final $M_\two$-coalgebra is $\two^{A^*}$. Moreover, from \Cref{exa:wasserstein-machine} we know that for any pseudometric $d$ on $\two^{A^*}$ we obtain as Wasserstein pseudometric the function $\Wasserstein{F}{d}\colon \big(\two \times (\two^{A^*})^A\big)^2 \to [0,1]$ where, for all $(o_1,s_1), (o_2, s_2) \in \two \times (\two^{A^*})^A$, 
	\begin{align*}
	\Wasserstein{F}{d}\big((o_1,s_1), (o_2, s_2)\big) = \max\Big\{d_\two(o_1,o_2), c \cdot \max_{a \in A}d\big(s_1(a), s_2(a)\big)\Big\}
	\end{align*}
	where $d_\two$  is the discrete metric on $\two$. Thus the fixed-point equation induced by \Cref{lem:lifting-coalg} is given by, for $L_1,L_2 \in \two^{A^*}$,
	\begin{align*}
		d(L_1,L_2) = \max\Big\{d_\two\big(L_1(\epsilon), L_2(\epsilon)\big), c \cdot \max_{a \in A} d\big(\lambda w.L_1(aw), \lambda w.L_2(aw)\big)\Big\}\,.
	\end{align*}
	Now because $d_\two$ is the discrete metric with $d_\two(0,1) = 1$ we can easily see that $d_{2^{A^*}}$ as defined below is indeed the least fixed-point of this equation and thus $(\two^{A^*}, d_{\two^{A^*}})$ is the carrier of the final $\LiftedFunctor{M_\two}$-coalgebra.
	\begin{align*}
	d_{\two^{A^*}} \colon \two^{A^*} \times \two^{A^*} \to [0,1], \quad d_{\two^{A^*}}(L_1,L_2) = c^{\inf\set{n \in \N \mid \exists w \in A^n.L_1(w) \not = L_2(w)}}\,.
	\end{align*} 
	Thus the distance between two languages $L_1,L_2\colon A^*\to \two$ can be determined by looking for a word $w$ of minimal length which is contained in one and not in the other. Then, the distance is computed as $c^{|w|}$. %This is similar to a standard ultrametric between traces \cite{BN80}.
\end{example}

We already noted in the beginning of this paper that for any set $X$, the set of pseudometrics over $X$, with pointwise order, is a complete lattice. Moreover, by \Cref{prop:monotone} the lifting $\LiftedFunctor{F}$ induces a monotone function $\_^F$ which maps any pseudometric $d$ on $X$ to $d^F$ on $FX$. If, additionally, such a function is $\omega$-continuous\index{$\omega$-continuity}, i.e., if it preserves the supremum of $\omega$-chains, the least fixpoint metric
in \Cref{lem:lifting-coalg} can be obtained by a construction that converges at an ordinal less than or equal to $\omega$. It is easy to see that this is the case in \Cref{exa:final-coalg-lifted-machine}. With some more effort one can also show that also the liftings induced by the finite powerset functor and the probability distribution functor with finite support are $\omega$-continuous \cite[Prop. P.6.1]{BBKK14}. 

The lifting of the final $F$-coalgebra to a final $\LiftedFunctor{F}$-coalgebra, which is provided by \Cref{thm:final-coalgebra}, allows us to move from a qualitative to a quantitative behavior analysis: Instead of just considering equivalences, in $\PMet$ we can now measure the distance of behaviors using the final coalgebra. 

\begin{definition}[Bisimilarity Pseudometric]
  \label{def:bisim-pseudometric}
  Let $\LiftedFunctor{F}\colon \PMet \to \PMet$ be a lifting of a
  $\Set$-endofunctor $F$ for which a final coalgebra $z \colon Z \to FZ$
  exists and $d_z \colon X \times X \to \reals$ be the fixpoint pseudometric of \Cref{lem:lifting-coalg}. For any $F$-coalgebra $c\colon X \to FX$, the \emph{bisimilarity distance} on
  $X$ is the function $\bd_c \colon X \times X \to [0,\top]$ where
  \begin{align*}
    \bd_c(x,y) := d_z\big(\final{x}_c,\final{y}_c\big)
  \end{align*} 
  for all $x,y \in X$ with $\final{\cdot}_c \colon X \to Z$ being the unique map into the final coalgebra. Since $d_z$ is a pseudometric also $\bd_c$ is a pseudometric.
\end{definition}

Let us check how this definition applies to deterministic automata.

\begin{example}[Bisimilarity Pseudometric for Deterministic Automata]
	\label{exa:bisim-pseudo-det-aut}
	We instantiate the above definition for the machine functor $M_\two = \two \times \_^A$ with maximal distance $\top=1$ and evaluation function $\ev_{M_\two} \colon [0,1] \times [0,1]^A$ with $\ev_{M_\two}(o,s) = c \cdot \max_{a \in A} s(a)$ for $0<c <1$ as in \Cref{exa:final-coalg-lifted-machine}. We recall that for any coalgebra $\alpha \colon X \to \two \times X^A$ the unique map $\final{\cdot}_\alpha \colon X \to \two^{A^*}$ into the final coalgebra maps each state $x \in X$ to the language $\final{x}_\alpha \colon A^* \to \two$ it accepts. Using the final coalgebra pseudometric from \Cref{exa:final-coalg-lifted-machine} we have
	\begin{align*}
		\bd_\alpha \colon X \times X \to [0,1], \quad \bd_\alpha(x,y) = c^{\inf\set{n \in \N \mid \exists w \in A^n.\final{x}_\alpha(w) \not = \final{y}_\alpha(w)}}\,.
	\end{align*} 
	Thus the distance between two states $x,y \in X$ is determined by the shortest word $w$ which is contained in the language of one state and not in the language of the other. Then the distance is computed as $c^{|w|}$.
\end{example}

The name \emph{bisimilarity} pseudometric is motivated by the fact that states which are mapped to the same element in the final coalgebra are at distance $0$.

\begin{lemma}
	\label{thm:bisim-pseudometric}
	Let $\LiftedFunctor{F}$, $F$, $c\colon X \to FX$, $\final{\cdot}_c$ and $\bd_c$ be as in \Cref{def:bisim-pseudometric}. For all $x,x' \in X$,
	\begin{align*}
		\final{x}_c = \final{x'}_c \implies \bd_c(x,x') = 0\,.
	\end{align*}
\end{lemma}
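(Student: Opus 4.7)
The plan is to derive the claim as an immediate consequence of \Cref{def:bisim-pseudometric} together with the reflexivity of the pseudometric $d_z$. First I would unfold the definition of $\bd_c$, obtaining
\[
  \bd_c(x,x') = d_z\big(\final{x}_c,\final{x'}_c\big).
\]
Then, using the hypothesis $\final{x}_c = \final{x'}_c$, this expression becomes $d_z\big(\final{x}_c,\final{x}_c\big)$. Finally, since $d_z$ is a pseudometric on $Z$ (guaranteed by \Cref{thm:final-coalgebra} together with \Cref{lem:lifting-coalg}, which ensures that the final coalgebra in $\PMet$ carries a genuine pseudometric), reflexivity yields $d_z\big(\final{x}_c,\final{x}_c\big) = 0$, as required.

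There is no real obstacle here: the lemma is essentially a sanity check that the bisimilarity pseudometric respects behavioral equivalence in the coalgebraic sense, and the work has already been done in setting up the framework. In particular, all the heavy lifting — the existence of the final $\LiftedFunctor{F}$-coalgebra, and the fact that its distance function is a bona fide pseudometric — is encapsulated in the preceding results, so the proof reduces to a one-line chain of equalities.
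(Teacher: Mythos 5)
Your proof is correct and matches the paper's own argument: unfold $\bd_c$ via \Cref{def:bisim-pseudometric}, substitute $\final{x}_c = \final{x'}_c$, and apply reflexivity of the pseudometric $d_z$. Nothing further is needed.
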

\begin{proof}
  Since $d_z$ is a pseudometric, it is reflexive and thus $\bd_c(x,x') = d_z\big(\final{x}_c,\final{y}_c\big) = 0$.
\end{proof}

If $d_z$ is a metric, also the converse of the above implication holds. We will later (\Cref{thm:d-omega-is-metric}, \cpageref{thm:d-omega-is-metric}) provide sufficient conditions which guarantee that this is the case.

Before doing so, we show that under some mild conditions the bisimilarity pseudometric can be computed without exploring the entire final coalgebra (which might be too large) but, in a sense, restricting to its relevant part.

\begin{theorem}[Bisimilarity Pseudometric]
  \label{thm:comp-dist}
  Let $\LiftedFunctor{F}\colon\PMet\to\PMet$ be a lifting of a functor
  $F\colon \Set\to \Set$ such that $F$ has a final coalgebra and
  assume that $\LiftedFunctor{F}$ preserves isometries.
  Let furthermore $c\colon X\to FX$ be an arbitrary coalgebra. Then
  the pseudometric $d_c \colon X \times X \to \reals$ defined in \Cref{lem:lifting-coalg} is
  the bisimilarity pseudometric, i.e., we have $\bd_c = d_c$.
\end{theorem}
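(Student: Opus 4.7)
The plan is to observe that the statement follows almost immediately from the ``isometry transfer'' part of \Cref{lem:lifting-coalg} applied to the unique coalgebra homomorphism into the final coalgebra.

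More precisely, let $z\colon Z\to FZ$ be a final $F$-coalgebra. By \Cref{thm:final-coalgebra}, $z\colon (Z,d_z) \nonexpansiveTo (FZ,\LiftedMetric{F}{d_z})$ is the final $\LiftedFunctor{F}$-coalgebra, where $d_z$ is the fixpoint pseudometric of \Cref{lem:lifting-coalg} applied to $z$. Given the arbitrary coalgebra $c\colon X\to FX$, the unique coalgebra homomorphism $\final{\cdot}_c\colon X\to Z$ (with respect to $F$) satisfies $z \circ \final{\cdot}_c = F\final{\cdot}_c \circ c$. We then invoke the second part of \Cref{lem:lifting-coalg}: since by hypothesis $\LiftedFunctor{F}$ preserves isometries, the homomorphism $\final{\cdot}_c\colon (X,d_c)\nonexpansiveTo (Z,d_z)$ is itself an isometry, i.e.,
\[
 d_z \circ (\final{\cdot}_c \times \final{\cdot}_c) \;=\; d_c.
\]
Unfolding definitions, for every $x,y\in X$ this equality reads $d_z(\final{x}_c,\final{y}_c) = d_c(x,y)$. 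By \Cref{def:bisim-pseudometric} the left-hand side is precisely $\bd_c(x,y)$, so $\bd_c = d_c$ as required.

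In short, there is essentially no new work to be done: the content of the theorem is already packaged in \Cref{lem:lifting-coalg}, and the only step is to apply it to the canonical map into the final coalgebra. I do not anticipate any genuine obstacle; the only subtlety worth double-checking is that the hypothesis ``$\LiftedFunctor{F}$ preserves isometries'' is used precisely where \Cref{lem:lifting-coalg} needs it (to upgrade nonexpansiveness of the homomorphism to isometry), which is exactly the inequality $d_c \le d_z \circ (\final{\cdot}_c \times \final{\cdot}_c)$ that, combined with nonexpansiveness in the other direction, yields the desired equality.
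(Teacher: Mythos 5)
Your proposal is correct and follows essentially the same route as the paper's proof: both apply the isometry-transfer part of \Cref{lem:lifting-coalg} (using the hypothesis that $\LiftedFunctor{F}$ preserves isometries) to the unique homomorphism $\final{\cdot}_c$ into the final coalgebra, and then unfold \Cref{def:bisim-pseudometric}. Nothing is missing.
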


\begin{proof}
  Let $z \colon Z \to FZ$ be the final coalgebra of $F$ in $\Set$ and
  let $\final{\cdot}_c\colon X \to Z$ be the unique function such that
  $z \circ \final{\cdot}_c = F\final{\cdot}_c \circ c$. By \Cref{lem:lifting-coalg}, if
  $d_z$ and $d_c$ are the least fixpoint distances on $Z$ and $X$
  respectively, $\final{\cdot}_c\colon (X,d_c) \nonexpansiveTo (Z, d_z)$ is an isometry. Moreover, by \Cref{thm:final-coalgebra},
  $z\colon (Z, d_z) \nonexpansiveTo \LiftedFunctor{F} (Z, d_z)$ is the
  final coalgebra for $\LiftedFunctor{F}$. Therefore
  $\bd_c(x,y) = d_z\big(\final{x}_c,\final{y}_c\big) = d_c(x,y)$.
\end{proof}

We now want to find conditions which ensure that the behavioral distance is a proper metric. To this aim, we proceed by recalling the final coalgebra construction via the final chain which was first presented in the dual setting (free/initial algebra).

\begin{definition}[Final Chain Construction  \cite{Ada74}]
	\label{def:final-chain}
	\index{final chain}
	\index{connection morphism}
	Let $\cat{C}$ be a category with terminal object $\one$ and limits of ordinal-indexed cochains. For any functor $F\colon \cat{C} \to \cat{C}$ the \emph{final chain} consists of objects $W_i$ for all ordinals $i$ and \emph{connection morphisms} $p_{i,j} \colon W_j \to W_i$ for all ordinals $i \le j$. The objects are defined as $W_0 := \one$, $W_{i+1} := FW_i$ for all ordinals $i$, and $W_j := \lim_{i<j}W_i$ for all limit ordinals $j$. The morphisms are determined by $p_{0,i} :=\ !_{W_i}\colon W_i \to \one$, $p_{i,i} = \id_{W_i}$ for all ordinals $i$, $p_{i+1,j+1} := Fp_{i,j}$ for all ordinals $i < j$ and if $j$ is a limit ordinal the $p_{i,j}$ are the morphisms of the limit cone. They satisfy $p_{i,k} = p_{i,j} \circ p_{j,k}$ for all ordinals $i \leq j \leq k$. We say that the chain \emph{converges} in $\lambda$ steps if $p_{\lambda,\lambda+1}\colon W_{\lambda+1} \to W_\lambda$ is an isomorphism. 
\end{definition}

This construction does not necessarily converge (e.g. for the unrestricted powerset functor $\Powerset$ on $\Set$), but if it does, we always obtain a final coalgebra. 

\begin{theorem}[Final Coalgebra via the Final Chain \cite{Ada74}] 
	Let $\cat{C}$ be a category with terminal object $\one$ and limits of ordinal-indexed cochains. If the final chain of a functor $F\colon \cat{C} \to \cat{C}$ converges in $\lambda$ steps then $p_{\lambda, \lambda+1}^{-1}\colon W_\lambda \to FW_\lambda$ is the final coalgebra.
\end{theorem}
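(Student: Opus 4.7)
The plan is to show first that the isomorphism provides a coalgebra, and then construct, for any $F$-coalgebra $\alpha\colon X \to FX$, a unique coalgebra homomorphism into it by exploiting the limit-like behaviour of the final chain.

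First I would observe that since $p_{\lambda,\lambda+1}\colon W_{\lambda+1} \to W_\lambda$ is an isomorphism and $W_{\lambda+1} = FW_\lambda$, the map $p_{\lambda,\lambda+1}^{-1}\colon W_\lambda \to FW_\lambda$ is indeed an $F$-coalgebra. Then, given an arbitrary $F$-coalgebra $\alpha\colon X \to FX$, I would build a cone $(\alpha_i\colon X \to W_i)_{i\leq\lambda}$ over the final chain by transfinite induction: set $\alpha_0 := \,!_X$; at successor ordinals $\alpha_{i+1} := F\alpha_i \circ \alpha\colon X \to FW_i = W_{i+1}$; and at a limit ordinal $j$ take $\alpha_j$ to be the unique mediating morphism provided by the limit cone $W_j = \lim_{i<j} W_i$. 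A routine transfinite induction shows that the compatibility condition $p_{i,j}\circ \alpha_j = \alpha_i$ holds for all $i\leq j\leq\lambda$ (the successor step uses functoriality: $p_{i+1,j+1}\circ \alpha_{j+1} = Fp_{i,j}\circ F\alpha_j\circ\alpha = F(p_{i,j}\circ\alpha_j)\circ\alpha = F\alpha_i\circ\alpha = \alpha_{i+1}$, and the limit step follows from the universal property).

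Next I would show that $\alpha_\lambda\colon X \to W_\lambda$ is a coalgebra homomorphism from $\alpha$ to $p_{\lambda,\lambda+1}^{-1}$. Using the cone condition at ordinals $\lambda\leq \lambda+1$ we get $p_{\lambda,\lambda+1}\circ \alpha_{\lambda+1} = \alpha_\lambda$, i.e., $p_{\lambda,\lambda+1}\circ F\alpha_\lambda\circ \alpha = \alpha_\lambda$. Composing with $p_{\lambda,\lambda+1}^{-1}$ on the left yields exactly $F\alpha_\lambda \circ \alpha = p_{\lambda,\lambda+1}^{-1}\circ \alpha_\lambda$, which is the homomorphism equation.

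For uniqueness, assume $h\colon X \to W_\lambda$ is any coalgebra homomorphism, so $Fh\circ \alpha = p_{\lambda,\lambda+1}^{-1}\circ h$. I would prove by transfinite induction on $i\leq\lambda$ that $p_{i,\lambda}\circ h = \alpha_i$: the base case is forced by terminality of $\one$; the successor case uses the homomorphism equation together with the induction hypothesis (namely $p_{i+1,\lambda+1}\circ Fh = F(p_{i,\lambda}\circ h) = F\alpha_i$, combined with $p_{i+1,\lambda}\circ p_{\lambda,\lambda+1}^{-1} = p_{i+1,\lambda+1}$ which is a general identity in the final chain); and the limit case follows from the universal property of $W_j$ as a limit. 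Taking $i=\lambda$ gives $h = \alpha_\lambda$, establishing uniqueness. The main obstacle is the successor-step bookkeeping in the uniqueness argument—one must carefully use the identity $p_{i+1,\lambda+1} = p_{i+1,\lambda}\circ p_{\lambda,\lambda+1}^{-1}$, which itself requires a small lemma about the connection morphisms extending past the convergence ordinal $\lambda$ (either proved directly or absorbed into the definition of the chain once one notes that the inverse of $p_{\lambda,\lambda+1}$ lets the chain be extended consistently).
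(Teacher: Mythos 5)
The paper states this result without proof, citing Ad\'amek, so there is no in-paper argument to compare against; your reconstruction is the standard proof and it is essentially correct. The cone construction, the verification that $\alpha_\lambda$ is a homomorphism via $p_{\lambda,\lambda+1}\circ\alpha_{\lambda+1}=\alpha_\lambda$, and the uniqueness induction all go through.

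Two small points. First, the identity you invoke in the uniqueness step, $p_{i+1,\lambda}\circ p_{\lambda,\lambda+1}^{-1}=p_{i+1,\lambda+1}$, does not typecheck: $p_{\lambda,\lambda+1}^{-1}$ has codomain $W_{\lambda+1}$ while $p_{i+1,\lambda}$ has domain $W_\lambda$. What you need is $p_{i+1,\lambda}\circ p_{\lambda,\lambda+1}=p_{i+1,\lambda+1}$ (equivalently $p_{i+1,\lambda+1}\circ p_{\lambda,\lambda+1}^{-1}=p_{i+1,\lambda}$), after which the computation $p_{i+1,\lambda}\circ h = p_{i+1,\lambda+1}\circ Fh\circ\alpha = Fp_{i,\lambda}\circ Fh\circ\alpha = F(p_{i,\lambda}\circ h)\circ\alpha = F\alpha_i\circ\alpha = \alpha_{i+1}$ works exactly as you intend. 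Second, no ``small lemma about extending the chain past $\lambda$'' is required: the final chain in \Cref{def:final-chain} is defined for all ordinals, and the composition law $p_{i,k}=p_{i,j}\circ p_{j,k}$ is asserted for all $i\leq j\leq k$, so the identity above is an instance of it with $j=\lambda$, $k=\lambda+1$.
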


We now show under which circumstances the least fixpoint pseudometric $d_z$ is a metric and how our construction relates to the construction of the final chain.

\begin{theorem}[Final Coalgebra Metric]
  \label{thm:d-omega-is-metric}
	Let $\LiftedFunctor{F}\colon\PMet\to\PMet$ be a lifting of a functor $F\colon \Set\to \Set$ which has a final coalgebra $z\colon Z\to FZ$. Let $d_z \colon X \times X \to \reals$ be the least fixpoint pseudometric in \Cref{lem:lifting-coalg}. If $\LiftedFunctor{F}$ preserves metrics, and the final chain for $F$ converges then $d_z$ is a metric, i.e., all for $z,z'\in Z$ we have $d_z(z,z') = 0\iff z=z'$. This implies that for every coalgebra $c \colon X \to FX$ and all $x, x' \in X$ we have 
	\begin{align*}
		\final{x}_c = \final{x'}_c \iff \bd_c(x,x') = 0\,.
	\end{align*}
\end{theorem}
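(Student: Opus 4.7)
The plan is to construct an auxiliary metric $\delta_\lambda$ on $Z$ that is pointwise below $d_z$; since any pseudometric that dominates a metric is itself a metric, this suffices.

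First I would lift the $\Set$-valued final chain of $F$ to $\PMet$ by transfinite recursion. Set $\delta_0$ to be the trivial metric on $W_0 = \one$, define $\delta_{i+1} := \LiftedMetric{F}{\delta_i}$ on $W_{i+1} = FW_i$, and at a limit ordinal $j$ put $\delta_j := \sup_{i < j} \delta_i \circ (p_{i,j}\times p_{i,j})$, where the $p_{i,j}$ are the connecting maps of the chain. Each $\delta_i$ is genuinely a metric, by transfinite induction: successors invoke the assumption that $\LiftedFunctor{F}$ preserves metrics, while at a limit ordinal the family $\{p_{i,j}\}_{i<j}$ is jointly monic (limits in $\Set$ embed into products), so $\delta_j(w,w') = 0$ forces $p_{i,j}(w)=p_{i,j}(w')$ for every $i<j$, hence $w=w'$. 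By the explicit description of limits in the proof of \Cref{prop:PMetcomplete}, $(W_j,\delta_j)$ is actually the $\PMet$-limit of $\{(W_i,\delta_i)\}_{i<j}$. Using the assumed convergence of the chain at $\lambda$, identify $Z = W_\lambda$ and $z = p_{\lambda,\lambda+1}^{-1}$, so that $\delta_\lambda$ becomes a metric on $Z$.

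Next I would compare $\delta_\lambda$ with $d_z$ through the pre-fixed-point characterization from \Cref{lem:lifting-coalg}: $d_z$ is the infimum of all pseudometrics $d$ on $Z$ with $\LiftedMetric{F}{d}\circ(z\times z)\leq d$, equivalently, of those $d$ for which $z\colon (Z,d)\nonexpansiveTo \LiftedFunctor{F}(Z,d)$ is nonexpansive. Given any such $d$, I build by transfinite recursion nonexpansive maps $\alpha_i\colon (Z,d)\nonexpansiveTo (W_i,\delta_i)$: take $\alpha_0$ to be the unique arrow to $\one$, $\alpha_{i+1} := F\alpha_i \circ z$, and at a limit ordinal $j$ let $\alpha_j$ be the mediating arrow into the $\PMet$-limit $(W_j,\delta_j)$. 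Successor-step nonexpansiveness is functoriality of $\LiftedFunctor{F}$ applied to $\alpha_i$ composed with the nonexpansive $z$, and limits are automatic. Because the underlying $\Set$-cone $\{U\alpha_i\}_i$ is precisely the standard chain-theoretic construction of a coalgebra morphism out of $(Z,z)$ and $(Z,z)$ is the final $F$-coalgebra, finality forces $\alpha_\lambda = \id_Z$ under the identification above. Hence $\delta_\lambda(x,y) \leq d(x,y)$ for all $x,y\in Z$; taking the infimum over $d$ gives $\delta_\lambda \leq d_z$, so $d_z$ is a metric.

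The biconditional for an arbitrary coalgebra $c\colon X\to FX$ then drops out: the forward direction is \Cref{thm:bisim-pseudometric}, while for the converse $\bd_c(x,x') = d_z(\final{x}_c,\final{x'}_c) = 0$ combined with $d_z$ being a metric forces $\final{x}_c = \final{x'}_c$. The step I expect to require the most care is verifying that $\alpha_\lambda = \id_Z$ at the convergence ordinal: it is just uniqueness of the morphism into the final $F$-coalgebra, but it depends on chasing through the canonical identification $Z \cong W_\lambda$ supplied by the chain.
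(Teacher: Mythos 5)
Your proposal is correct and follows essentially the same route as the paper's proof: the same transfinitely-defined metrics on the final chain (with preservation of metrics at successors and joint monicity of the limit projections at limit ordinals), the same cone of maps out of $Z$ into the chain, and the same appeal to convergence plus finality to transfer the metric on $W_\lambda$ back to a lower bound for $d_z$. The only cosmetic differences are that you quantify over all pre-fixed points $d$ (the paper applies the argument directly to $d_z$, which is itself a pre-fixed point) and that you identify $\alpha_\lambda$ with $\id_Z$ where the paper only uses that $\eta_\lambda$ is a bijective coalgebra morphism; both yield the same conclusion.
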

\begin{proof}

	Let $W_i, p_{i,j} \colon W_j \to W_i$ be as in
        \Cref{def:final-chain}. Define a pseudometric over each
        element of the chain $e_i\colon W_i \times W_i \to \reals$ as
        follows: $e_0 \colon \one \times \one \to \reals$ is the
        (unique!) zero metric on $\one$, $e_{i+1} :=
        \LiftedMetric{F}{e_i}\colon W_{i+1} \times W_{i+1} \to \reals$
        for all ordinals $i$ and $e_j := \sup_{i<j} e_i\circ
        (p_{i,j}\times p_{i,j}) \text{ }$ for all limit ordinals
        $j$. Since the functor preserves metrics $e_{i+1}$ is a metric
        if $e_i$ is. Given a limit ordinal $j$, it follows from Lemma~\ref{le:lattice-pseudo} that $e_j$ is a pseudometric provided that all the $e_i$ with $i < j$ are pseudometrics. To see that $e_j$ is also a metric when all $e_i$ with $i<j$ are metrics we proceed as follows: Suppose $e_j(x,y) = 0$ for some $x,y \in W_j$, then we know that for all $i<j$ we must have $e_i\big(p_{i,j}(x),p_{i,j}(y)\big)=0$ and thus $p_{i,j}(x) = p_{i,j}(y)$ because the $e_i$ are metrics. Since the cone $(W_j \stackrel{p_{i,j}}{\to} W_i)_{i < j}$ is by definition a limit in $\Set$ we can now conclude that $x=y$. This is due to the universal property of the limit: Let us assume $x \not =y$ then for the cone $(\set{x,y} \stackrel{f_i}{\to} W_i)_{i<j}$ with $f_i(x) := p_{i,j}(x)$, and also $f_i(y) := p_{i,j}(y)=p_{i,j}(x)$ there would have to be a unique function $u\colon \set{x,y} \to W_j$ satisfying $p_{i,j}\circ u = f_i$. However, for example $u, u'\colon \set{x,y} \to W_j$ where $u(x) = u(y) = x$ and $u'(x) = u'(y) = y$ are distinct functions satisfying this commutativity which is a contradiction to the uniqueness. Thus our assumption ($x \not= y$) is false and $e_j$ is indeed a metric.

  Consider now for each ordinal $i$, a function
  $\eta_i \colon Z \to W_i$ defined as follows. For $i=0$ we let
  $\eta_0 \colon Z \to W_0$ be the unique mapping to the terminal
  object. For each ordinal $i$ we define
  $\eta_{i+1} = F\eta_i \circ z$, and for a limit ordinal $j$, we let
  $\eta_j$ be the unique mapping into the limit $\lim_{i<j} W_i$. Observe that for each ordinal $i$, it holds that
  \begin{equation}
    \label{eq:coalg}
    \eta_i = p_{i,i+1} \circ F \eta_i \circ z
  \end{equation}
  For $i=0$, $\eta_0 = p_{0,1} \circ F \eta_{1} \circ z$ is
  the unique map into the final object. For any ordinal $i$, assuming
  \eqref{eq:coalg} we can prove that the same property holds for $i+1$. In fact
  $\eta_{i+1} = F \eta_i \circ z = F(p_{i, i+1} \circ F\eta_i \circ z) \circ z = F(p_{i,i+1} \circ \eta_{i+1}) \circ
  z = F p_{i,i+1} \circ F \eta_{i+1} \circ z = p_{i+1,i+2} \circ F
  \eta_{i+1} \circ z$, as desired. Finally, for a limit ordinal $j$,
  assume the property \eqref{eq:coalg} for $i<j$. Recall that
  $\eta_j \colon Z \to W_j$ is the mediating arrow into the limit,
  i.e., for all ordinals $i<j$ we have $\eta_i = p_{i,j} \circ \eta_j$
  and thus
  $F\eta_i = F p_{i,j} \circ F \eta_j = p_{i+1, j+1} \circ F
  \eta_j$. Therefore
  \begin{align*}
    & p_{i,j} \circ (p_{j,j+1} \circ F\eta_j \circ z)  \\
    & \quad = p_{i,j+1} \circ F\eta_j \circ z = p_{i,i+1} \circ p_{i+1,j+1} \circ F\eta_j \circ z
    & \mbox{[by definition of $p_i,j$]}\\
    & \quad = p_{i,i+1} \circ F\eta_i \circ z
    & \mbox{[by the above observation]}\\
    & \quad = \eta_i 
    & \mbox{[by inductive hypothesis]}
  \end{align*} 
  Hence $p_{j,j+1} \circ F\eta_j \circ z$ is a mediating morphisms
  into the limit and thus it must coincide with $\eta_j$, namely
  $\eta_j = p_{j,j+1} \circ F\eta_j \circ z$ as desired.
   
  Also observe that each
  $\eta_i \colon (Z, d_z) \nonexpansiveTo (W_i, e_i)$ is
  nonexpansive. This is trivially true for $i=0$, since $e_0$ is the
  zero pseudometric. For any ordinal $i$, assuming
  $\eta_i \colon (Z, d_z) \nonexpansiveTo (W_i, e_i)$ nonexpansive,
  we obtain that
  $F \eta_i \colon (FZ, \LiftedMetric{F}{d_z}) \nonexpansiveTo (FW_i,
  \LiftedMetric{F}{e_i}) = (W_{i+1}, e_{i+1})$ is nonexpansive, hence
  $\eta_{i+1} = F\eta_i \circ z$ is nonexpansive. For a limit ordinal
  $j$,
\begin{align*}
    &  e_j \circ (\eta_j \times \eta_j) \\
    & \quad = (\sup_{i<j} (e_i \circ (p_{i,j} \times p_{i,j}))) \circ (\eta_j \times \eta_j) 
    & \mbox{[definition of $e_j$]}\\
    & \quad = \sup_{i<j} (e_i \circ (p_{i,j} \circ \eta_j \times p_{i,j} \circ \eta_j))
    & \mbox{[$\sup$ componentwise]}\\
    & \quad = \sup_{i<j} (e_i \circ (\eta_i \times \eta_i)) 
    & \mbox{[since $\eta_j$ is the mediating arrow into the limit]}\\
    & \quad = \sup_{i < j} d_Z = d_Z
    & \mbox{[nonexpansiveness of $\eta_i\colon (Z, d_Z) \nonexpansiveTo (W_i, e_i)$]}
  \end{align*}  
  which means that $\eta_j \colon (Z, d_z) \nonexpansiveTo (W_j, e_j)$
  is nonexpansive.

  Now, let $\lambda$ be the ordinal for which the final chain
  converges in $\Set$. We know that
  $p_{\lambda,\lambda+1}^{-1}\colon W_{\lambda} \to F W_{\lambda}$ is the
  final coalgebra in $\Set$. Since also $z \colon Z \to FZ$ is, and by
  \eqref{eq:coalg} $\eta_\lambda \colon Z \to W_{\lambda}$ is a
  coalgebra morphism, we deduce that $\eta_\lambda$ is an isomorphism
  of sets. Moreover, we proved that
  $\eta_{\lambda} \colon (Z, d_z) \nonexpansiveTo (W_\lambda,
  e_\lambda)$ is nonexpansive. Recalling that $e_\lambda$ is a metric,
  we conclude that also $d_z$ is so. In fact, if $w, w' \in Z$ are
  such that $d_z(w,w') = 0$, then
  $e_\lambda(\eta_\lambda(w),\eta_\lambda(w')) \leq d_z(w,w')
  =0$. Therefore $\eta_\lambda(w) = \eta_\lambda(w')$, and thus
  $w=w'$ since $\eta_\lambda$ is an iso in $\Set$ and thus bijective. Since we know now that $d_z$ is a metric, we also know that $0=\bd_c(x,x') = d_z(\final{x}_c, \final{x'}_c) $ holds if and only if $\final{x}_c = \final{x'}_c$.
\end{proof}

We will now get back to the examples studied at the beginning of this paper (\Cref{ex:probabilistic-1,ex:metric-ts-1}, \cpageref{ex:probabilistic-1,ex:metric-ts-1}) and discuss in which sense they are instances of our framework.

\begin{example}[Bisimilarity Pseudometric for Probabilistic Systems]
	\label{ex:probabilistic-2}
	In order to model the discounted behavioral distance for purely probabilistic systems as given in \Cref{ex:probabilistic-1} (\cpageref{ex:probabilistic-1}) in our framework, we set $\top=1$ and proceed to lift the following three functors: we first consider the identity functor $\Id$ with evaluation map $\ev_\Id\colon [0,1]\to [0,1]$, $\ev_\Id(z) = c\cdot z$ in order to integrate the discount (\Cref{ex:kant-rubinst}, \cpageref{ex:kant-rubinst}). Then, we take the coproduct with the singleton metric space (\Cref{def:coproduct-bifunctor} and \Cref{lem:coproduct-bifunctor-evfct,lem:coproduct-pseudometric}, pp.~\pageref{def:coproduct-bifunctor} ff.). The combination of the two functors yields the discrete version of the refusal functor of \cite{vBW06}, namely $\LiftedFunctor{R}(X,d) = (X+\one, \widehat{d})$ where $\widehat{d}$ is the coproduct pseudometric taken from \Cref{ex:probabilistic-1}.  Finally, we lift the probability distribution functor $\Distributions$ to obtain $\LiftedFunctor{\Distributions }$ (\Cref{exa:probability-distribution-functor}, \cpageref{exa:probability-distribution-functor}). All functors satisfy the Kantorovich-Rubinstein duality and preserve metrics.
	
	It is easy to see that $\LiftedFunctor{\Distributions}(\LiftedFunctor{R}(X,d))=(\Distributions(X+1), \overline{d})$, where $\overline{d}$ is defined as in \Cref{ex:probabilistic-1}).  Then, the least solution of $d(x,y) = \overline{d}\big(c(x),c(y)\big)$ can be computed as in \Cref{thm:comp-dist}.
\end{example}

\begin{example}[Bisimilarity Pseudometric for Metric Transition Systems]
	\label{ex:metric-ts-2}
	As in \Cref{ex:metric-ts-1} we consider metric transition systems as  coalgebras $c \colon S \to M_1 \times \dots \times M_n \times \PowersetFinite(S)$ where $S$ is a finite set of states and $(M_i,d_i)$ are pseudometric spaces. If, for $i \in \set{1, \dots, n}$, $\pi_i \colon M_1 \times \dots \times M_n \times \PowersetFinite(S) \to M_i$ and $\pi_{n+1}\colon  M_1 \times \dots \times M_n \times \PowersetFinite(S) \to \PowersetFinite(S)$ are the projections of the product, then each state $s \in S$ is assigned a valuation function $[s] \colon \set{1,\dots, n} \to \cup_{r=1}^n M_r$ where, of course, $[s](i) = \pi_i\big(c(S)\big)$, and the set $\pi_{n+1}\big(c(S)\big)$ of successor states.
	
	To obtain behavioral distances for metric transition systems using our framework we set $\top = \infty$. Moreover, analogously to the product bifunctor of \Cref{def:product-bifunctor} we can equip the product multifunctor $P\colon \Set^{n+1} \to \Set$ with the evaluation function $\ev_P \colon [0,\infty]^{n+1} \to [0,\infty]$ where  $\ev_P(r_1,\dots,r_{n+1}) = \max\set{r_1,\dots,r_{n+1}}$ which is a natural generalization of the function presented in \Cref{lem:evfct-product-bifunctor}. As in that lemma, this function is well-behaved in the sense of \Cref{def:well-behaved-multi} and analogously to \Cref{lem:product-pseudometrics} we can easily see that duality holds and we obtain the categorical product pseudometric, i.e., for given pseudometric spaces $(X_i,d_i)$ the new pseudometric $\LiftedMetric{P}{d} \colon (X_1 \times \dots \times X_{n+1})^2 \to [0,\infty], \quad \LiftedMetric{P}{d} = \max\set{d_1,\dots,d_{n+1}}$. Let $\LiftedFunctor{P}$ be the corresponding lifted multifunctor. We instantiate the given pseudometric spaces $(M_{i},d_{i})$ as fixed parameters and obtain the endofunctor $\LiftedFunctor{F}\colon \PMet \to \PMet$ with
	\begin{align*}
		\LiftedFunctor{F}(X,d) = \LiftedFunctor{P}\big((M_{1},d_{1}),\dots,(M_{n},d_{n}) \times \LiftedFunctor{\PowersetFinite}(X,d)\big)
	\end{align*}
	where $\LiftedFunctor{\PowersetFinite}$ is  the lifting of the powerset functor using the evaluation function $\max\colon \PowersetFinite([0,\infty]) \to [0,\infty]$ with $\max \emptyset = 0$ as presented in \Cref{exa:hausdorff}. Then, via \Cref{thm:comp-dist}, we obtain exactly the least solution of
	\begin{align*}
		d(s,t) = \max\set{\LiftedFunctor{\mathit{pd}}([s],[t]),\max_{s'\in\tau(s)}\min_{t'\in\tau(t)}	d(s',t'),\max_{t'\in\tau(t)}\min_{s'\in\tau(s)} d(s',t') } 
	\end{align*} 
	as in \eqref{eq:Hausd} in \Cref{ex:metric-ts-1}. Except for the fact that we allow $+\infty$ and consider only undirected (symmetric) pseudometrics, this is exactly the branching distance $\mathit{bd}^{Ss}$ for metric transition systems \cite[Def.~13]{dAFS09}.
\end{example}
\section{Compositionality of Liftings}
\label{sec:compositionality}
In the remainder of this paper we want to turn our attention to trace
(or linear-time) pseudometrics. Since we plan to apply the generalized powerset construction we will have to lift not only functors but also monads from $\Set$ to $\PMet$. As a preparation for that we now study \emph{compositionality} of functor liftings, i.e., we set off to identify some sufficient conditions ensuring $\LiftedFunctor{F}\,\LiftedFunctor{G} = \LiftedFunctor{FG}$. Unfortunately, this seems to be a quite difficult question in this general setting so our main result only deals with the Wasserstein lifting and requires the existence of optimal couplings. However, whenever it can be applied it allows us to reason modularly and, consequently, to simplify the proofs needed for the treatment of our examples. 

As further preparation for the trace pseudometric we will also consider two examples involving the finite powerset monad where optimal couplings do not always exist and manually prove that compositionality holds for these specific cases. 

\subsection{Compositionality for Endofunctors}
Given evaluation functions $\ev_F$ and $\ev_G$, we can easily construct an evaluation function for the composition $FG$ as follows.

\begin{definition}[Composition of Evaluation Functions]
	\index{composed evaluation function}
	\index{evaluation function!composition}
	Let $F$ and $G$ be endofunctors on $\Set$ with evaluation functions $\ev_F$ and $\ev_G$. We define the composition of $\ev_F$ and $\ev_G$ to be the  evaluation function $\ev_F \ast \ev_G\colon FG[0,\top] \to [0,\top]$ for the composed functor $FG$ as $\ev_F \ast \ev_G:= \EvaluationFunctor{F}\ev_G= \ev_F \circ F\ev_G$. 
\end{definition}

Whenever $F$ and $G$ preserve weak pullbacks well-behavedness is inherited. 

\begin{theorem}[Well-Behavedness of Composed Evaluation Function]
	\label{prop:comp:well-behaved}
	Let $F$, $G$ be endofunctors on $\Set$ with evaluation functions $\ev_F$,  $\ev_G$. If both functors preserve weak pullbacks and both evaluation functions are well-behaved then also $\ev_F \ast \ev_G$ is well-behaved.
\end{theorem}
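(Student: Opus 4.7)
The plan is to verify the three well-behavedness conditions W1, W2, W3 for $\ev_F \ast \ev_G = \ev_F \circ F\ev_G$ seen as an evaluation function for the composed functor $FG$. The central observation that streamlines the whole argument is that the evaluation functor of the composition is the composition of the evaluation functors: for any $g\colon X \to \reals$, we have $\EvaluationFunctor{FG}(g) = \ev_F\circ F\ev_G \circ FGg = \ev_F\circ F(\ev_G\circ Gg) = \EvaluationFunctor{F}(\EvaluationFunctor{G}g)$. With this in hand, W1 (monotonicity) for the composition is immediate: if $f\le g$, then \Cref{W1} for $\ev_G$ gives $\EvaluationFunctor{G}f\le \EvaluationFunctor{G}g$, and applying \Cref{W1} for $\ev_F$ yields $\EvaluationFunctor{F}\EvaluationFunctor{G}f \le \EvaluationFunctor{F}\EvaluationFunctor{G}g = \EvaluationFunctor{FG}g$.

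For W2, consider $t\in FG(\reals^2)$ and set $t_i := FG\pi_i(t)$. The plan is to construct an auxiliary element $u \in F(\reals^2)$ that lets us apply W2 first for $\ev_F$ and then bound the result using W2 for $\ev_G$ through monotonicity of $\EvaluationFunctor{F}$. Concretely, I would take $u := F\langle\EvaluationFunctor{G}\pi_1,\EvaluationFunctor{G}\pi_2\rangle(t)$; by functoriality $F\pi_i(u) = F(\EvaluationFunctor{G}\pi_i)(t)$, hence $\ev_F(F\pi_i(u)) = \EvaluationFunctor{F}(\EvaluationFunctor{G}\pi_i)(t) = (\ev_F\ast\ev_G)(t_i)$. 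Applying \Cref{W2} for $\ev_F$ to $u$ gives $d_e\big((\ev_F\ast\ev_G)(t_1),(\ev_F\ast\ev_G)(t_2)\big)\le \EvaluationFunctor{F}d_e(u)$. Now \Cref{W2} for $\ev_G$ says exactly that, as maps $G(\reals^2)\to\reals$, $d_e\circ\langle\EvaluationFunctor{G}\pi_1,\EvaluationFunctor{G}\pi_2\rangle \le \EvaluationFunctor{G}d_e$, so \Cref{W1} for $\ev_F$ gives $\EvaluationFunctor{F}d_e(u) = \EvaluationFunctor{F}\big(d_e\circ\langle\EvaluationFunctor{G}\pi_1,\EvaluationFunctor{G}\pi_2\rangle\big)(t) \le \EvaluationFunctor{F}(\EvaluationFunctor{G}d_e)(t) = \EvaluationFunctor{FG}d_e(t)$. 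Chaining these inequalities yields W2. Note that weak-pullback preservation is not needed here.

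For W3, my plan is to use the weak-pullback characterization of \Cref{lem:W3-weak-pb} and paste weak pullbacks. By that lemma, W3 for $\ev_G$ is equivalent to the square with vertices $G\set{0}$, $\set{0}$, $G\reals$, $\reals$, sides $Gi$, $i$, $!_{G\set{0}}$, $\ev_G$, being a weak pullback. Since $F$ preserves weak pullbacks, applying $F$ gives a weak pullback square with vertices $FG\set{0}$, $F\set{0}$, $FG\reals$, $F\reals$, sides $FGi$, $Fi$, $F!_{G\set{0}}$, $F\ev_G$. Pasting this horizontally with the weak pullback for W3 of $\ev_F$ (with top $!_{F\set{0}}$ and bottom $\ev_F$) yields, using the standard fact that horizontal composition of weak pullbacks is a weak pullback, a weak pullback rectangle whose top is $!_{F\set{0}}\circ F!_{G\set{0}} = !_{FG\set{0}}$ (by uniqueness of maps into the terminal) and whose bottom is $\ev_F\circ F\ev_G = \ev_F\ast\ev_G$. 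Applying the other direction of \Cref{lem:W3-weak-pb} gives W3 for $\ev_F\ast\ev_G$.

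The only real obstacle is the W2 calculation, where the auxiliary element $u$ and the switch between applying \Cref{W1,W2} at the $F$- and $G$-levels must be set up carefully; the remaining two conditions follow essentially by formal manipulation, with W3 crucially relying on the assumption that $F$ preserves weak pullbacks so that the lower weak pullback square lifts through $F$.
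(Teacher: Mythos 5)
Your proof is correct, and its overall architecture (verifying \Cref{W1,W2,W3} separately, with W1 by composing monotonicity and W3 by pasting the two weak-pullback squares of \Cref{lem:W3-weak-pb} and using that $F$ preserves weak pullbacks) matches the paper's, which splits the same three checks across two auxiliary lemmas. The one place where you genuinely diverge is \Cref{W2}: the paper does not work with your auxiliary element directly but instead observes that $\ev_G\colon (G\reals,\Kantorovich{G}{d_e})\nonexpansiveTo(\reals,d_e)$ is nonexpansive by construction of the Kantorovich lifting, feeds the $F$-coupling $F\langle G\pi_1,G\pi_2\rangle(t)$ into \Cref{prop:wasserstein-upper-bound}, and then uses $\Kantorovich{G}{d_e}\circ\langle G\pi_1,G\pi_2\rangle\leq\EvaluationFunctor{G}d_e$ together with \Cref{W1} for $\ev_F$. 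If you unfold that proposition, the element it manufactures is exactly your $u=F\langle\EvaluationFunctor{G}\pi_1,\EvaluationFunctor{G}\pi_2\rangle(t)$, so the two arguments perform the same computation; yours is self-contained and makes explicit that only \Cref{W1,W2} of the two evaluation functions are used, whereas the paper's buys reuse of already-established lifting machinery at the cost of formally dragging in the Kantorovich pseudometric (and a superfluous weak-pullback hypothesis on $G$ in its intermediate lemma). Your observation that weak-pullback preservation is not needed for \Cref{W2} is accurate.
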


We will split the proof into two technical lemmas. The first of these just summarizes some useful rules for calculations.

\begin{lemma}
	\label{lem:comp1}
	Let $F,G$ be endofunctors on $\Set$ with evaluation functions $\ev_F, \ev_G$ and $a:= \langle G\pi_1, G\pi_2\rangle$ (i.e., the unique mediating arrow into the product $GX \times GX$, where $\pi_i \colon X^2 \to X$ are the projections) and $(X,d)$ an arbitrary pseudometric space. Then the following holds.
	\begin{enumerate}
		\item We have $\EvaluationFunctor{G}d \ge \Wasserstein{G}{d} \circ a $ and if $\ev_G$ satisfies \Cref{W1,W2} we also have $\EvaluationFunctor{G}d \ge \Kantorovich{G}{d} \circ a$
		\item $\forall t_1,t_2 \in FGX: \quad t \in \Couplings{FG}(t_1,t_2) \implies Fa(t) \in \Couplings{F}(t_1,t_2)$.
		\item If $F$ and $G$ preserve weak pullbacks then so does $FG$.
		\item For any $f \in \Set/[0,\top]$ we have $\EvaluationFunctor{FG}f = \EvaluationFunctor{F}(\EvaluationFunctor{G}f)$.
	\end{enumerate}
\end{lemma}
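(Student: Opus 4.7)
The plan is to treat each of the four items separately; none are particularly deep, so the proof will reduce to bookkeeping calculations once the right observations are in place. I expect item (1) to require the most thought, while (2)--(4) follow directly from functoriality and general categorical facts.

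For item (4), I would simply unfold the definition of the evaluation functor. By definition $\ev_{FG} = \ev_F \ast \ev_G = \ev_F \circ F\ev_G$. Given $f \colon X \to [0,\top]$, functoriality of $F$ yields
\[
\EvaluationFunctor{FG}f \;=\; \ev_{FG} \circ FGf \;=\; \ev_F \circ F\ev_G \circ F(Gf) \;=\; \ev_F \circ F(\ev_G \circ Gf) \;=\; \EvaluationFunctor{F}(\EvaluationFunctor{G}f).
\]

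For item (3), this is a standard categorical fact: a weak pullback square is sent by the weak-pullback-preserving functor $G$ to a weak pullback, which is then sent by $F$ to a weak pullback. Since the resulting square is precisely $FG$ applied to the original one, $FG$ preserves weak pullbacks. For item (2), I would compute, with $\tau_i \colon GX \times GX \to GX$ the product projections, $F\tau_i \circ Fa = F(\tau_i \circ \langle G\pi_1, G\pi_2\rangle) = F(G\pi_i) = FG\pi_i$. Hence for any $t \in \Couplings{FG}(t_1,t_2)$ one has $F\tau_i(Fa(t)) = FG\pi_i(t) = t_i$, so $Fa(t) \in \Couplings{F}(t_1,t_2)$.

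The main idea for item (1) is the observation that every element $s \in G(X \times X)$ is itself (tautologically) a coupling of its two ``$G$-marginals'' $G\pi_1(s)$ and $G\pi_2(s)$, i.e., $s \in \Couplings{G}(G\pi_1(s), G\pi_2(s))$. Therefore, by the definition of the Wasserstein distance as an infimum over couplings,
\[
\Wasserstein{G}{d}(a(s)) \;=\; \Wasserstein{G}{d}(G\pi_1(s), G\pi_2(s)) \;\le\; \EvaluationFunctor{G}d(s),
\]
which gives the first inequality. For the Kantorovich version, I would not argue directly but invoke \Cref{prop:wasserstein-vs-kantorovich}: since $\ev_G$ satisfies \cref{W1,W2}, we have $\Kantorovich{G}{d} \le \Wasserstein{G}{d}$ pointwise, and composing with $a$ preserves the inequality, so $\Kantorovich{G}{d} \circ a \le \Wasserstein{G}{d} \circ a \le \EvaluationFunctor{G}d$.
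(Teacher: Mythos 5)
Your proof is correct and follows essentially the same route as the paper's: item (1) via the tautological observation that $s\in\Couplings{G}(G\pi_1(s),G\pi_2(s))$ together with \Cref{prop:wasserstein-vs-kantorovich} for the Kantorovich half, item (2) via $F\tau_i\circ Fa = FG\pi_i$, and items (3) and (4) by direct unfolding. No gaps.
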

\begin{proof}
	We first of all observe that $a$ is the unique mediating arrow into the product $GX \times GX$ satisfying $\tau_i \circ a = G\pi_i$ for the projections $\pi_i \colon X^2 \to X$ and $\tau_i\colon (GX)^2 \to GX$.
\begin{enumerate}
		\item Let $s \in G(X\times X)$ and define $s_i:=G\pi_i(s) = \tau_i \circ a(s)$. Then by definition $s \in \Couplings{G}(s_1,s_2)$ and we conclude $\EvaluationFunctor{G}d(s) \geq \inf\{\EvaluationFunctor Gd(s') \mid s' \in \Couplings{G}(s_1,s_2)\} = \Wasserstein{G}{d}(s_1,s_2) = \Wasserstein{G}{d}\big(\tau_1 \circ a(s),\tau_2 \circ a(s)\big) = \Wasserstein{G}{d} \circ a (s)$. Since $\Wasserstein{G}{d} \geq \Kantorovich{G}{d}$ as shown in \Cref{prop:wasserstein-vs-kantorovich}, the statement follows.
		\item We compute $F\tau_i \big(Fa(t)\big) = F(\tau_i \circ a)(t) = F(G\pi_i)(t) = FG\pi_i = t_i$. 
		\item This is indeed clear by definition.
		\item Let $f\colon X \to [0,\top]$, then $\EvaluationFunctor{FG}f = \ev_F \ast \ev_G \circ FGf = \ev_F \circ F\ev_G \circ FGf = \ev_F \circ F(\ev_G \circ Gf) = \EvaluationFunctor{F}(\EvaluationFunctor{G}f)$.\qedhere
	\end{enumerate}
\end{proof}

\noindent We can now use these calculations to prove the second lemma which already finishes the proof of the inheritance of well-behavedness.

\begin{lemma}
	\label{lem:comp:evfct}
	Let $F$, $G$ be functors with evaluation functions $\ev_F$, $\ev_G$.
\begin{enumerate}
		\item If $\EvaluationFunctor{F}$ and $\EvaluationFunctor{G}$ are monotone (\Cref{W1}), then so is the evaluation functor $\EvaluationFunctor{FG}$ with respect to the composed evaluation function $\ev_F \ast \ev_G$.
		\item If $G$ preserves weak pullbacks, $\ev_G$ is well-behaved and $\EvaluationFunctor{F}$ is monotone (\Cref{W1}) then $\ev_F \ast \ev_G$ satisfies \Cref{W2} of \Cref{def:well-behaved}.
		\item If $F$ preserves weak pullbacks and $\ev_F,\ev_G$ satisfy \Cref{W3} of well-behavedness, then also $\ev_F \ast \ev_G$ satisfies \Cref{W3} of \Cref{def:well-behaved}.	
	\end{enumerate}
\end{lemma}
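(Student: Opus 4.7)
The proof verifies each of the three conditions separately, using throughout the identity $\EvaluationFunctor{FG}f = \EvaluationFunctor{F}(\EvaluationFunctor{G}f)$ of Lemma~\ref{lem:comp1}(4). Part~(1) is immediate: if $f \leq g$, then monotonicity of $\EvaluationFunctor{G}$ gives $\EvaluationFunctor{G}f \leq \EvaluationFunctor{G}g$, and then monotonicity of $\EvaluationFunctor{F}$ yields $\EvaluationFunctor{F}(\EvaluationFunctor{G}f) \leq \EvaluationFunctor{F}(\EvaluationFunctor{G}g)$, which is the claim after rewriting.

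For Part~(2), fix $t \in FG(\reals^2)$ and set $t_i := FG\pi_i(t)$. The plan is to transport $t$ to a suitable element of $F(\reals^2)$, apply W2 for $\ev_F$ there, and then pull the resulting bound back via monotonicity of $\EvaluationFunctor{F}$ together with W2 for $\ev_G$. Concretely, set $a := \langle G\pi_1, G\pi_2\rangle \colon G(\reals^2) \to (G\reals)^2$ and define
\[
  v := F\bigl((\ev_G \times \ev_G) \circ a\bigr)(t) \in F(\reals^2).
\]
Using $\tau_i \circ a = G\pi_i$ (where $\tau_i$ are the projections on $(G\reals)^2$) and functoriality of $F$, a short computation gives $F\pi_i(v) = F\ev_G(t_i)$, hence $\ev_F(F\pi_i(v)) = \ev_{FG}(t_i)$. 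Applying W2 for $\ev_F$ to $v$ then yields $d_e\bigl(\ev_{FG}(t_1),\ev_{FG}(t_2)\bigr) \leq \EvaluationFunctor{F}d_e(v)$. On the other hand, W2 for $\ev_G$ rewrites exactly as the pointwise inequality $d_e \circ (\ev_G \times \ev_G) \circ a \leq \EvaluationFunctor{G}d_e$ of maps $G(\reals^2) \to \reals$; applying the monotone $\EvaluationFunctor{F}$ and evaluating at $t$ gives $\EvaluationFunctor{F}d_e(v) \leq \EvaluationFunctor{FG}d_e(t)$. Chaining the two bounds yields the claim.

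For Part~(3), I would use the weak-pullback characterization of W3 from Lemma~\ref{lem:W3-weak-pb}. By W3 for $\ev_G$, the square with horizontal edges $!_{G\{0\}}, \ev_G$ and vertical edges $Gi, i$ is a weak pullback; since $F$ preserves weak pullbacks, applying $F$ produces a weak pullback square with horizontal edges $F!_{G\{0\}}, F\ev_G$ and vertical edges $FGi, Fi$. Pasting this underneath the W3 weak-pullback square for $\ev_F$ along the common edge $Fi \colon F\{0\} \to F\reals$ gives a rectangle whose outer square is precisely the W3 diagram for $\ev_{FG} = \ev_F \circ F\ev_G$ (using uniqueness of maps to the terminal object, $!_{F\{0\}} \circ F!_{G\{0\}} = !_{FG\{0\}}$). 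Since weak pullbacks compose by pasting, the outer square is a weak pullback, and Lemma~\ref{lem:W3-weak-pb} then gives W3 for $\ev_{FG}$.

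The main technical obstacle lies in Part~(2): one must carefully track the several ambient types ($FG(\reals^2)$, $F((G\reals)^2)$, $F(\reals^2)$) and choose the auxiliary element $v$ so that its $F$-projections recover $F\ev_G(t_i)$. Once $v$ is in place the naturality identity $\tau_i \circ a = G\pi_i$ makes both halves of the argument chain routinely through W2 for $\ev_G$ and monotonicity of $\EvaluationFunctor{F}$.
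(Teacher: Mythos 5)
Your proof is correct, and parts (1) and (3) coincide with the paper's argument — part (3) is exactly the paper's pasting of the $F$-image of the \Cref{W3}-square for $\ev_G$ with the \Cref{W3}-square for $\ev_F$ via \Cref{lem:W3-weak-pb}. For part (2) you take a mildly different but equivalent route. The paper observes that $Fa(t)$, with $a=\langle G\pi_1,G\pi_2\rangle$, is an $F$-coupling of $t_1,t_2$ viewed in $F(G\reals)$, that $\ev_G\colon (G\reals,\Kantorovich{G}{d_e})\nonexpansiveTo(\reals,d_e)$ is nonexpansive, and then invokes \Cref{prop:wasserstein-upper-bound}, finishing with $\Kantorovich{G}{d_e}\circ a\le\EvaluationFunctor{G}d_e$ and \Cref{W1} for $\EvaluationFunctor{F}$. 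You instead unfold that proposition by hand: your $v=F\bigl((\ev_G\times\ev_G)\circ a\bigr)(t)$ is precisely $F(\ev_G\times\ev_G)\bigl(Fa(t)\bigr)$, the element to which \Cref{W2} for $\ev_F$ is applied inside the proof of \Cref{prop:wasserstein-upper-bound}, and your two chained bounds reproduce that proposition's two inequalities. So nothing is gained or lost except that your version is self-contained and never mentions the Kantorovich lifting; your identification of the $F$-projections of $v$ as $F\ev_G(t_i)$ is the one computation that needs care, and it is correct. One observation, which applies equally to the paper's own proof: both arguments use \Cref{W2} for $\ev_F$ (you directly, the paper through \Cref{prop:wasserstein-upper-bound}, whose hypotheses include it), even though part (2) as stated only assumes \Cref{W1} for $\EvaluationFunctor{F}$; this is harmless for \Cref{prop:comp:well-behaved}, where both evaluation functions are well-behaved, but strictly speaking \Cref{W2} for $\ev_F$ should be added to the hypotheses of part (2). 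Neither proof actually uses the assumption that $G$ preserves weak pullbacks in that part.
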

\begin{proof}
	\begin{enumerate}
		\item Let $f,g\colon X \to [0,\top]$ with $f \leq g$, then by monotonicity of $\ev_G$ we have $\EvaluationFunctor{G}f \leq \EvaluationFunctor{G}g$ and using monotonicity of $\ev_F$ we get $\EvaluationFunctor{FG}f = \EvaluationFunctor{F}(\EvaluationFunctor{G}f) \leq \EvaluationFunctor{F}(\EvaluationFunctor{G}g) = \EvaluationFunctor{FG}g$. 
		
		\item Let $\pi_i\colon X^2 \to X$ be the projections of the product. For $t \in FG([0,\top]^2)$ we define $t_i := FG\pi_i(t) \in FG[0,\top]$. By definition $t \in \Couplings{FG}(t_1,t_2)$ so \Cref{lem:comp1} (\cpageref{lem:comp1}) tells us $Fa(t) \in \Couplings{F}(t_1,t_2)$ for $a:=\langle G\pi_1, G\pi_2\rangle$. Moreover, since the evaluation function $\ev_G\colon (G[0,\top],\Kantorovich{G}{d_e}) \nonexpansiveTo ([0,\top],d_e)$ is nonexpansive (by definition of the Kantorovich pseudometric), we can apply \Cref{prop:wasserstein-upper-bound} (\cpageref{prop:wasserstein-upper-bound}) to obtain the inequality $d_e\big((\ev_F \ast \ev_G)(t_1), \ev_F \ast \ev_G(t_2)\big) = d_e\big(\EvaluationFunctor{F}\ev_G(t_1), \EvaluationFunctor{F}\ev_G(t_2)\big) \leq \EvaluationFunctor{F}\Kantorovich{G}{d_e}\big(Fa(t)\big) = \EvaluationFunctor{F}\big(\Kantorovich{G}{d_e} \circ a\big)(t)$. By \Cref{lem:comp1} we have $\Kantorovich{G}{d_e} \circ a \leq \EvaluationFunctor{G}{d_e}$ and using monotonicity of $\EvaluationFunctor{F}$ we can continue our inequality with $\EvaluationFunctor{F}\big(\Kantorovich{G}{d_e} \circ a\big)(t) \leq \EvaluationFunctor{F}\big(\EvaluationFunctor{G}{d_e}\big)(t) = \EvaluationFunctor{FG}d_e(t)$ which concludes the proof.
		\item By \Cref{lem:W3-weak-pb} (\cpageref{lem:W3-weak-pb}) we have to show that the following diagram is a weak pullback.
		\begin{center}
			\begin{diagram}
				\matrix(m)[matrix of math nodes, column sep=80pt, row sep=17pt]{
					FG\set{0} & F\set{0} & \set{0}\\
					FG[0,\top] & F{[0,\top]} & {[0,\top]}\\
				};
				\draw[->] (m-1-1) edge node[below]{$F!_{G\set{0}}$} (m-1-2);
				\draw[->] (m-1-1) edge node[left]{$FGi$} (m-2-1);
				\draw[->] (m-1-2) edge node[right]{$Fi$} (m-2-2);
				\draw[->] (m-2-1) edge node[above]{$F\ev_G$} (m-2-2);
				
				\draw[->] (m-1-2) edge node[below]{$!_{F\set{0}}$} (m-1-3);
				\draw[right hook->] (m-1-3) edge node[right]{$i$} (m-2-3);
				\draw[->] (m-2-2) edge node[above]{$\ev_F$} (m-2-3);
				
				% combined arrows
				\draw[->] (m-1-1) edge[bend left=10] node[above]{$!_{FG\set{0}}$} (m-1-3);
				\draw[->] (m-2-1) edge[bend right=10] node[below]{$\ev_F \ast \ev_G$} (m-2-3);
			\end{diagram}
		\end{center}
		\Cref{lem:W3-weak-pb} tells us that the right square is a weak pullback and since $F$ preserves weak pullbacks also the left square is. The outer part is thus necessarily a weak pullback again yielding by \Cref{lem:W3-weak-pb} that $\ev_F \ast \ev_G$ satisfies \Cref{W3}.\qedhere
	\end{enumerate}
\end{proof}

\noindent This lemma concludes the proof of \Cref{prop:comp:well-behaved}. In the light of this result we know that, whenever we start with weak pullback preserving functors $F$, $G$ along with well-behaved evaluation functions $\ev_F$, $\ev_G$, the Wasserstein distance for $FG$ (with respect to $\ev_F \ast \ev_G$) is always a pseudometric (see \Cref{prop:wasserstein-is-pseudometric}, \cpageref{prop:wasserstein-is-pseudometric}) so we can safely talk about \emph{the} Wasserstein lifting of $FG$ and study compositionality. 

We will now show that a sufficient criterion for compositionality of the Wasserstein lifting is the existence of optimal couplings for $G$. Again we start with a technical lemma which also contains a small statement about the Kantorovich lifting.
\begin{lemma}
	\label{lem:compositionality2}
	Let $F$ and $G$ be endofunctors on $\Set$ together with evaluation functions $\ev_F\colon F[0,\top]\to [0,\top]$, $\ev_G\colon G[0,\top] \to [0,\top]$. We define $\ev_F \ast \ev_G := \ev_F \circ F\ev_G$. Then the following properties hold for every pseudometric space $(X,d)$.
	\begin{enumerate}
		\item $\Kantorovich{FG}{d} \leq \Kantorovich{F}{(\Kantorovich{G}{d})}$.\label{item:leqKantorovich} 
		\item If $\ev_F$ and $\ev_G$ satisfy \Cref{W1,W2} then $\Wasserstein{FG}{d} \geq \Wasserstein{F}{(\Wasserstein{G}{d})}$.\label{item:geqWasserstein}
		\item If for all $t_1,t_2 \in FGX$ there is a function $\nabla (t_1,t_2)\colon \Couplings{F}(t_1, t_2) \to \Couplings{FG}(t_1,t_2)$ such that $\EvaluationFunctor{FG}d \circ \nabla(t_1,t_2) = \EvaluationFunctor{F}\Wasserstein{G}{d}$ then $\Wasserstein{FG}{d} \leq \Wasserstein{F}{(\Wasserstein{G}{d})}$.\label{prop:comp:nabla}
	\end{enumerate}
\end{lemma}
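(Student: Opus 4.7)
The plan is to handle the three items in order, since each has a slightly different flavour and the main reusable facts already appear in \Cref{lem:comp1} (associativity of evaluation functors, the coupling projection $Fa$, and the pointwise bound $\EvaluationFunctor{G}d \geq \Wasserstein{G}{d} \circ a$).

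For item (\ref{item:leqKantorovich}), the approach is to show that every nonexpansive $f\colon (X,d)\nonexpansiveTo(\reals,d_e)$ witnessing a value in the supremum defining $\Kantorovich{FG}{d}(t_1,t_2)$ can be turned into a nonexpansive $h := \EvaluationFunctor{G}f\colon (GX,\Kantorovich{G}{d})\nonexpansiveTo(\reals,d_e)$ witnessing the same value for $\Kantorovich{F}{(\Kantorovich{G}{d})}(t_1,t_2)$. Nonexpansiveness of $h$ is immediate from the definition of the Kantorovich distance applied at stage $G$, and by \Cref{lem:comp1}(4) we have $\EvaluationFunctor{FG}f = \EvaluationFunctor{F}h$, so $d_e(\EvaluationFunctor{FG}f(t_1),\EvaluationFunctor{FG}f(t_2)) = d_e(\EvaluationFunctor{F}h(t_1),\EvaluationFunctor{F}h(t_2))$ is bounded by $\Kantorovich{F}{(\Kantorovich{G}{d})}(t_1,t_2)$. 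Taking the supremum over all $f$ yields the claim; no appeal to \Cref{W1,W2} is needed here.

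For item (\ref{item:geqWasserstein}), the plan is, given any $t \in \Couplings{FG}(t_1,t_2)$, to push it through $Fa$ with $a = \langle G\pi_1, G\pi_2\rangle$. By \Cref{lem:comp1}(2), $Fa(t)\in \Couplings{F}(t_1,t_2)$. Using \Cref{lem:comp1}(4) and then monotonicity of $\EvaluationFunctor{F}$ (\Cref{W1}) combined with the inequality $\EvaluationFunctor{G}d \geq \Wasserstein{G}{d}\circ a$ of \Cref{lem:comp1}(1), one chains
\begin{align*}
  \EvaluationFunctor{FG}d(t) \;=\; \EvaluationFunctor{F}(\EvaluationFunctor{G}d)(t) \;\geq\; \EvaluationFunctor{F}(\Wasserstein{G}{d}\circ a)(t) \;=\; \EvaluationFunctor{F}\Wasserstein{G}{d}\bigl(Fa(t)\bigr) \;\geq\; \Wasserstein{F}{(\Wasserstein{G}{d})}(t_1,t_2),
\end{align*}
and then takes the infimum on the left over $t\in \Couplings{FG}(t_1,t_2)$. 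The only subtle dependence on \Cref{W2} is hidden in the use of \Cref{lem:comp1}(1), which is why both conditions are required in the hypothesis.

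Item (\ref{prop:comp:nabla}) is essentially a direct unfolding of the assumption on $\nabla$: for each $s\in \Couplings{F}(t_1,t_2)$, the coupling $\nabla(t_1,t_2)(s)\in \Couplings{FG}(t_1,t_2)$ realises $\EvaluationFunctor{FG}d\bigl(\nabla(t_1,t_2)(s)\bigr) = \EvaluationFunctor{F}\Wasserstein{G}{d}(s)$, so the infimum defining $\Wasserstein{FG}{d}(t_1,t_2)$ is bounded above by $\EvaluationFunctor{F}\Wasserstein{G}{d}(s)$; taking the infimum over $s$ finishes. I do not foresee any genuine obstacle: the only substantive step is item (\ref{item:geqWasserstein}), where care must be taken to invoke monotonicity of $\EvaluationFunctor{F}$ in the right direction and to keep track of the pointwise versus infimum interpretation of $\Wasserstein{G}{d}\circ a$.
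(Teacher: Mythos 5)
Your proposal is correct and follows essentially the same route as the paper: item (1) by noting that each $\EvaluationFunctor{G}f$ is nonexpansive for $\Kantorovich{G}{d}$ and that $\EvaluationFunctor{FG}f = \EvaluationFunctor{F}(\EvaluationFunctor{G}f)$, item (2) by pushing couplings through $Fa$ and combining $\EvaluationFunctor{G}d \geq \Wasserstein{G}{d}\circ a$ with monotonicity of $\EvaluationFunctor{F}$, and item (3) by direct unfolding of the hypothesis on $\nabla$. One minor inaccuracy in a side remark: the Wasserstein half of \Cref{lem:comp1}(1) that you invoke needs neither \Cref{W1} nor \Cref{W2} (it is just the observation that any $s$ is a coupling of its own marginals), so the only condition actually exercised in item (2) is monotonicity of $\EvaluationFunctor{F}$ — the stated hypothesis is merely sufficient, and your attribution of where \Cref{W2} enters is off, but this does not affect the validity of the argument.
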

\begin{proof}
	Let $t_1,t_2 \in FGX$. 
	\begin{enumerate}
		\item Recall that $\Kantorovich{G}{d}$ is the smallest pseudometric such that for every nonexpansive function $f\colon (X,d) \nonexpansiveTo ([0,\top],d_e)$ also $\EvaluationFunctor{G}f \colon (GX,\Kantorovich{G}{d}) \nonexpansiveTo ([0,\top],d_e)$ is nonexpansive (see remark in the beginning of \Cref{subsec:kantorovich-lifting} on \cpageref{subsec:kantorovich-lifting}). Moreover, $\EvaluationFunctor{FG}f = \EvaluationFunctor{F}(\EvaluationFunctor{G}f)$ by \Cref{lem:comp1}. Thus 
		\begin{align*}
			\Kantorovich{FG}{d}&(t_1,t_2) = \sup\set{d_e\Big(\EvaluationFunctor{FG}f(t_1),\EvaluationFunctor{FG}f(t_2)\Big) \,\big|\, f \colon (X,d) \nonexpansiveTo ([0,\top],d_e)}\\
			& = \sup\set{d_e\Big(\EvaluationFunctor{F}(\EvaluationFunctor{G}f)(t_1),\EvaluationFunctor{F}(\EvaluationFunctor{G}f)(t_2)\Big) \,\big|\, f \colon (X,d) \nonexpansiveTo ([0,\top],d_e)}\\
			& \leq \sup\set{d_e\Big(\EvaluationFunctor{F}(g)(t_1),\EvaluationFunctor{F}(g)(t_2)\big) \,\big|\, g \colon (GX,\Kantorovich{G}{d}) \nonexpansiveTo ([0,\top],d_e)} = \Kantorovich{F}{(\Kantorovich{G}{d})}(t_1,t_2)
		\end{align*}
		
		\item \Cref{lem:comp1} tells us $\EvaluationFunctor{G}d \geq \Wasserstein{G}{d} \circ a$ and for any coupling $t \in \Couplings{FG}(t_1,t_2)$ we have $Fa(t) \in \Couplings{F}(t_1,t_2)$. Using these facts and the monotonicity of $\EvaluationFunctor{F}$ we obtain:
		\begin{align*}
			\Wasserstein{FG}{d}&(t_1,t_2) = \inf\set{\EvaluationFunctor{FG}d(t) \,\big|\, t\in \Couplings{FG}(t_1,t_2)} = \inf\set{\EvaluationFunctor{F}(\EvaluationFunctor{G}d)(t) \,\big|\, t\in \Couplings{FG}(t_1,t_2)}\\
			& \geq  \inf\set{\EvaluationFunctor{F}(\Wasserstein{G}{d} \circ a)(t) \,\big|\, t\in \Couplings{FG}(t_1,t_2)} =  \inf\set{\EvaluationFunctor{F}\Wasserstein{G}{d} \big(Fa(t)\big) \,\big|\, t\in \Couplings{FG}(t_1,t_2)} \\
			& \geq  \inf\set{\EvaluationFunctor{F}\Wasserstein{G}{d} (t') \,\big|\, t'\in \Couplings{F}(t_1,t_2)} = \Wasserstein{F}{(\Wasserstein{G}{d})}(t_1,t_2)
		\end{align*}
		
		\item Using $\nabla(t_1,t_2)$ we compute
		\begin{align*}
			\Wasserstein{FG}{d}(t_1,t_2) &=  \inf{\set{\EvaluationFunctor{FG}d (t') \,\big|\, t'\in \Couplings{FG}(t_1,t_2)} } \leq \inf{\set{\EvaluationFunctor{FG}d \big(\nabla(t_1,t_2)(t)\big) \,\big|\, t\in \Couplings{F}(t_1,t_2)} } \\
			&= \inf{\set{\EvaluationFunctor{F}\Wasserstein{G}{d} (t) \,\big|\, t\in \Couplings{F}(t_1,t_2)} } =\Wasserstein{F}{(\Wasserstein{G}{d})}(t_1,t_2) 
		\end{align*}
		which concludes the proof.\qedhere		
	\end{enumerate}
\end{proof}

\noindent With this result at hand we can now prove compositionality for the Wasserstein lifting.

\begin{theorem}[Compositionality of the Wasserstein Lifting]
\label{prop:comp}
Let $F,G$ be weak pullback preserving endofunctors on $\Set$ with well-behaved evaluation functions $\ev_F$, $\ev_G$ and $(X,d)$ be a pseudometric space. If for all $t_1,t_2 \in GX$ there is an optimal $G$-coupling  $\gamma(t_1,t_2) \in \Gamma_G(t_1,t_2)$ such that $\Wasserstein{G}{d}(t_1,t_2) = \EvaluationFunctor{G}d\big(\gamma(t_1,t_2)\big)$ then we have the equality $\Wasserstein{FG}{d} = \Wasserstein{F}{(\Wasserstein{G}{d})}$.
\end{theorem}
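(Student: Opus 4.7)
The plan is to prove the two inequalities separately, leveraging the machinery already developed in \Cref{lem:compositionality2}. For the inequality $\Wasserstein{FG}{d} \geq \Wasserstein{F}{(\Wasserstein{G}{d})}$, I would invoke item~\ref{item:geqWasserstein} of \Cref{lem:compositionality2} directly: since both $\ev_F$ and $\ev_G$ are well-behaved, they satisfy \Cref{W1,W2}, so the hypotheses of that item are met and the inequality is immediate.

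For the reverse inequality $\Wasserstein{FG}{d} \leq \Wasserstein{F}{(\Wasserstein{G}{d})}$, I would apply item~\ref{prop:comp:nabla} of \Cref{lem:compositionality2}, which requires constructing, for every $t_1,t_2 \in FGX$, a map $\nabla(t_1,t_2)\colon \Couplings{F}(t_1,t_2) \to \Couplings{FG}(t_1,t_2)$ satisfying $\EvaluationFunctor{FG}d \circ \nabla(t_1,t_2) = \EvaluationFunctor{F}\Wasserstein{G}{d}$. The hypothesis provides an optimal coupling assignment $\gamma\colon GX \times GX \to G(X\times X)$, i.e., a function (not just a relation) with $G\pi_i \circ \gamma = \tau_i$ (where $\tau_i\colon GX\times GX \to GX$ are the product projections) and $\EvaluationFunctor{G}d \circ \gamma = \Wasserstein{G}{d}$. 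I would then define $\nabla(t_1,t_2)(t) := F\gamma(t)$ for any $t \in \Couplings{F}(t_1,t_2) \subseteq F(GX \times GX)$.

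It remains to verify the two required properties. First, $F\gamma(t)$ is indeed in $\Couplings{FG}(t_1,t_2)$: applying $FG\pi_i$ and using functoriality gives $FG\pi_i \circ F\gamma = F(G\pi_i \circ \gamma) = F\tau_i$, so $FG\pi_i(F\gamma(t)) = F\tau_i(t) = t_i$, the last equality holding because $t \in \Couplings{F}(t_1,t_2)$. Second, by \Cref{lem:comp1}(4) and functoriality, $\EvaluationFunctor{FG}d(F\gamma(t)) = \EvaluationFunctor{F}(\EvaluationFunctor{G}d)(F\gamma(t)) = \ev_F \circ F(\EvaluationFunctor{G}d \circ \gamma)(t) = \ev_F \circ F\Wasserstein{G}{d}(t) = \EvaluationFunctor{F}\Wasserstein{G}{d}(t)$, where the third equality uses the optimality assumption $\EvaluationFunctor{G}d \circ \gamma = \Wasserstein{G}{d}$. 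This establishes the required identity and thus item~\ref{prop:comp:nabla} yields the desired upper bound.

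I do not expect any real obstacles: the central trick is merely to recognize that the hypothesis of optimal couplings lets us pick a \emph{function} $\gamma$ on which $F$ can act, converting a coupling at the $F$-level into a coupling at the $FG$-level whose cost is exactly the lifted $F$-cost of the Wasserstein-$G$ distances between projections. The weak-pullback preservation of $F$ and $G$ and well-behavedness of $\ev_F, \ev_G$ are used only implicitly (through the fact that $\Wasserstein{G}{d}$ is a genuine pseudometric and through the cited item~\ref{item:geqWasserstein}); no further pullback argument is needed in this direction.
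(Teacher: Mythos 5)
Your proposal is correct and follows essentially the same route as the paper's proof: one direction by \Cref{lem:compositionality2}.\ref{item:geqWasserstein}, the other by constructing $\nabla(t_1,t_2)(t) = F\gamma(t)$ from the optimal-coupling function $\gamma$ and verifying the two conditions of \Cref{lem:compositionality2}.\ref{prop:comp:nabla} exactly as the paper does.
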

\begin{proof} 
	From \Cref{lem:compositionality2}.\ref{item:geqWasserstein} we know $\Wasserstein{FG}{d} \geq \Wasserstein{F}{(\Wasserstein{G}{d})}$. We just have to show the other inequality. By our requirement we have a function $\gamma\colon GX \times GX \to G(X\times X)$, such that $\Wasserstein{G}{d} = \EvaluationFunctor{G}d \circ \gamma$. Moreover, let $\pi_i \colon X^2 \to X$ and $\tau_i\colon (GX)^2 \to GX$ be the projections of the product then $\gamma$ satisfies $G\pi_i \circ \gamma = \tau_i$. Given $t_1,t_2 \in FGX$ and $t \in \Couplings{F}(t_1,t_2)$, we define $\nabla(t_1,t_2)(t) = F\gamma(t)$, then this satisfies the conditions of  \Cref{lem:compositionality2}.\ref{prop:comp:nabla}. First, we have $F\gamma(t) \in \Couplings{FG}(t_1,t_2)$ because $FG\pi_i\big(F\gamma(t)\big) = F(G\pi_i \circ \gamma)(t) = F\tau_i(t) = t_i$. Moreover $\EvaluationFunctor{FG}d\big(F\gamma(t)\big) = \ev_F \ast \ev_G \circ F\big(Gd \circ \gamma(t)\big) = \ev_{F} \circ F\ev_G \circ F(Gd \circ \gamma)(t) =\ev_{F} \circ F(\EvaluationFunctor{G}d \circ \gamma)(t)  = \ev_{F} \circ F\Wasserstein{G}{d} (t) = \EvaluationFunctor{F}\Wasserstein{G}{d}(t)$ so \Cref{lem:compositionality2}.\ref{prop:comp:nabla} yields $\Wasserstein{FG}{d} \leq \Wasserstein{F}{(\Wasserstein{G}{d})}$.
\end{proof}
This criterion will sometimes turn out to be useful for our later results. Nevertheless it provides just a sufficient condition for compositionality as the following examples show.

\begin{example}[Compositionality for the Distribution Functor]
	\label{exa:comp}
	We consider the distribution functor (with finite support) $\DistributionsFinite$ with the evaluation function defined in \Cref{exa:probability-distribution-functor}. For any pseudometric space $(X,d)$ we have $\Wasserstein{\DistributionsFinite\DistributionsFinite}{d} = \Wasserstein{\DistributionsFinite}{\big(\Wasserstein{\DistributionsFinite}{d}\big)}$ by~\Cref{prop:comp} because optimal couplings always exist.
\end{example}

\begin{example}[Compositionality for the Finite Powerset Functor]
	\label{exa:powerset-compositional}
	We consider the finite powerset functor $\PowersetFinite$ with the evaluation function defined in \Cref{ex:evaluation-max}. We claim that for any pseudometric space $(X,d)$ we have $\Wasserstein{\PowersetFinite\PowersetFinite}{d} = \Wasserstein{\PowersetFinite}{\big(\Wasserstein{\PowersetFinite}{d}\big)}$ although $\PowersetFinite$-couplings do not always exist.\label{exa:comp:3} To verify this, we recall from \Cref{lem:compositionality2}.\ref{item:geqWasserstein} that 
	\begin{align}
		\Wasserstein{\PowersetFinite\PowersetFinite}{d} \geq \Wasserstein{\PowersetFinite}{\left(\Wasserstein{\PowersetFinite}{d}\right)}\label{eq:powfin0}
	\end{align} holds. We now show that we always have equality. Let $(X,d)$ be a pseudometric space and $T_1,T_2 \in \PowersetFinite\PowersetFinite X$. We distinguish three cases:
	\begin{enumerate}
		\item If $T_1=T_2=\emptyset$ we know by reflexivity that both sides of \eqref{eq:powfin0} are $0$.
		\item If $T_1=\emptyset\not=T_2$ or $T_1\not=\emptyset=T_2$ we know from \Cref{exa:hausdorff} that $\Couplings{\PowersetFinite}(T_1,T_2) = \emptyset$ and therefore $\Wasserstein{\PowersetFinite}{\big(\Wasserstein{\PowersetFinite}{d}\big)}(T_1,T_2) = \top$ and thus \eqref{eq:powfin0} is necessarily an equality because the left hand side can never exceed $\top$. 
		\item Let $T_1,T_2 \not=\emptyset$. We know from \Cref{exa:hausdorff} that we have an optimal coupling $T^* \in \Couplings{\PowersetFinite}(T_1,T_2)$, say $T^* = \set{(V_{j1}, V_{j2}) \in \PowersetFinite X \times \PowersetFinite X \mid j \in J}$ for a suitable index set $J$. Then, using the projections $\pi_i \colon (\PowersetFinite X)^2 \to \PowersetFinite X$, we have $T_i = \PowersetFinite \pi_i (T^*) = \pi_i[T^*] = \set{\pi_i\big((V_{j1}, V_{j2})\big)\ \big|\  j \in J} = \set{V_{ji} \mid j \in J}$. By optimality we thus have:
		\begin{align}
			\Wasserstein{\PowersetFinite}{\left(\Wasserstein{\PowersetFinite}{d}\right)}(T_1,T_2) = \EvaluationFunctor{\PowersetFinite}{\Wasserstein{\PowersetFinite}{d}}(T^*) = \max \Wasserstein{\PowersetFinite}{d}[T^*] = \max_{j \in J} \Wasserstein{\PowersetFinite}{d}(V_{j1}, V_{j2})\,. \label{eq:powfin1}
		\end{align}
	We will make another case distinction:
	\begin{enumerate}
		\item If there is an index $j' \in J$ such that $\Couplings{\PowersetFinite}(V_{j'1}, V_{j'2}) = \emptyset$, we have $\Wasserstein{\PowersetFinite}{d}(V_{j'1}, V_{j'2}) = \top $ and by \eqref{eq:powfin1} also $\Wasserstein{\PowersetFinite}{\big(\Wasserstein{\PowersetFinite}{d}\big)}(T_1,T_2) = \top$ which again shows that \eqref{eq:powfin0} is an equality. 
		
		\item Otherwise we can take optimal couplings $V_j^* \in \Couplings{\PowersetFinite}(V_{j1}, V_{j2})$ (see \Cref{exa:hausdorff}, \cpageref{exa:hausdorff}). Continuing \eqref{eq:powfin1} we have
		\begin{align}
		\Wasserstein{\PowersetFinite}{\left(\Wasserstein{\PowersetFinite}{d}\right)}(T_1,T_2) = \max_{j \in J} \EvaluationFunctor{\PowersetFinite}d(V_j^*) = \max_{j \in J} \max d[V_j^*] \label{eq:powfin2}
		\end{align}
		
		We define $T := \set{V_j^* \mid j \in J} \in \PowersetFinite\PowersetFinite(X \times X)$ and calculate for the projections $\pi_i \colon X \times X \to X$
		\begin{align*}
		\PowersetFinite\PowersetFinite \pi_i (T) = \PowersetFinite\pi_i[T] = \set{\PowersetFinite\pi_i(V_j^*) \mid j \in J} = \set{V_{ji} \mid j \in J} = T_i
		\end{align*}
		and thus $T \in \Couplings{\PowersetFinite\PowersetFinite}(T_1,T_2)$. Moreover we have
		\begin{align}
			\Wasserstein{\PowersetFinite\PowersetFinite}{d}(T_1,T_2)  &\leq \EvaluationFunctor{\PowersetFinite\PowersetFinite}d(T) =\max\Big(\big(\PowersetFinite \max\big)\big(\PowersetFinite\PowersetFinite d(T)\big)\Big) \nonumber\\
			&= \max \big( \max \big[ (\PowersetFinite d)[T] \big] \big) = \max \left( \max\big[\set{d[V_j^*]\mid j \in J}\big] \right) \nonumber\\
			&=\max\left(\set{\max d[V_j^*]\mid j \in J}\right) = \max_{j \in J}\max d[V_j^*] \label{eq:powfin3}\,.
		\end{align}
		Thus using this, \eqref{eq:powfin2} and \eqref{eq:powfin0} we conclude that 
		\begin{align*}
		\Wasserstein{\PowersetFinite\PowersetFinite}{d}(T_1,T_2) \leq \max_{j \in J}\max d[V_j^*] = \Wasserstein{\PowersetFinite}{\left(\Wasserstein{\PowersetFinite}{d}\right)}(T_1,T_2) \leq \Wasserstein{\PowersetFinite\PowersetFinite}{d}(T_1,T_2)
		\end{align*}
		which proves equality also in this last case. 
	\end{enumerate}
	\end{enumerate}
\end{example}

\noindent We conclude our study of compositionality for endofunctors with another example for which one again has to show compositionality separately. This result will later turn out to be helpful to obtain trace pseudometrics for nondeterministic automata. The proof follows closely the approach used in the previous example, hence it is omitted.

\begin{example}
	\label{exa:m2-comp}
	As in \Cref{exa:wasserstein-machine} (\cpageref{exa:wasserstein-machine}) we equip the machine functor with the evaluation function $\ev_{M_\two} \colon \two \times [0,1]^A\to [0,1]$, $(o,s) \mapsto c \cdot \ev_I(s)$ where $c \in \,]0,1]$ and $\ev_I$ is one of the evaluation functions for the input functor from \Cref{exa:wasserstein-input-functor}. Moreover, for the powerset functor we use the maximum as evaluation function (see \Cref{ex:evaluation-max}). 
	
	Although couplings for $M_\two$ do not always exist, one simply can adapt \cite[Ex.~5.6.9]{Ker16} the approach employed in \Cref{exa:powerset-compositional} to show that we have $\Wasserstein{\PowersetFinite M_\two}{d} = \Wasserstein{\PowersetFinite}{\left(\Wasserstein{M_\two}{d}\right)}$.
\end{example}

\subsection{Compositionality for Multifunctors}
We conclude the analysis of compositionality with a short explanation on how our theory extends to multifunctors. 

For $n \in \N$ we denote by $[n] := \{1,\dots,n\} \subseteq \N$ the set of all positive natural numbers less than or equal to $n$. Now let $n_i \in \N$ for all $i \in[n]$ and $F\colon \Set^n \to \Set$ and $G_i\colon \Set^{n_i} \to \Set$ (for $i \in [n]$) be multifunctors with evaluation functions $\ev_F\colon F({[0,\top]}^n) \to [0,\top]$ and $\ev_{G_i}\colon G_i([0,\top]^{n_i}) \to [0, \top]$. We define $N:=\sum_{i=1}^n n_i$ and define the functor 
\begin{align*}
H := F \circ \prod_{i=1}^nG_i = F \circ (G_1 \times G_2 \times \dots \times G_n) \colon \Set^N \to \Set
\end{align*}
Then we can define the evaluation function $\ev_H\colon H([0,\top]^N) \to [0,\top]$ by
\begin{align*}
\ev_H := \ev_F \circ F(\ev_{G_1}, \ev_{G_2}, \dots, \ev_{G_n})\,.
\end{align*}
In this setting, compositionality of the lifting means that whenever we have $N$ pseudometric spaces $(X_i,d_i)$ the pseudometric $\LiftedMetric{H}{(d_1,\dots, d_N)}$ is equal to
\begin{align*}
\LiftedMetric{F}{\left(\LiftedMetric{G_1}{(d_1,\dots,d_{n_1})}, \LiftedMetric{G_2}{(d_{n_1+1},\dots,d_{n_1+n_2})}\dots, \LiftedMetric{G_n}{(d_{N-n_n+1}, \dots, d_N)}\right)}\,.
\end{align*}
In the upcoming examples we will just use the Wasserstein lifting and we only have the following two cases:
\begin{enumerate}
	\item $n=1$, $n_1=2$ so that $F\colon \Set \to \Set$ is an endofunctor with evaluation function $\ev_F\colon F[0,\top] \to [0,\top]$ and $G \colon \Set^2 \to \Set$ is a bifunctor with evaluation function $\ev_G\colon G([0,\top],[0,\top]) \to [0,\top]$. Then we have $N=n_1=2$ and obtain the bifunctor $H = F \circ G \colon \Set^2 \to \Set$ with evaluation $\ev_H = \ev_F \circ F\ev_G \colon FG([0,1],[0,1]) \to [0,1]$. Compositionality means that for an two pseudometric spaces $(X_1,d_1)$, $(X_2,d_2)$ we have $\Wasserstein{H}{(d_1,d_2)} = \Wasserstein{F}{\big(\Wasserstein{G}{(d_1,d_2)}\big)}$.
	\item $n=2$, $n_1=n_2 = 1$ so that $F\colon \Set^2 \to \Set$ is a bifunctor with evaluation function $\ev_F \colon F([0,\top],[0,\top]) \to [0,\top]$ and $G_1,G_2\colon \Set \to \Set$ are endofunctors with evaluations $\ev_{G_i}\colon G_i[0,\top] \to [0,\top]$. Then we have $N=n_1+n_2 = 1 + 1=2$ and obtain the bifunctor $H = F \circ (G_1 \times G_2) \colon \Set^2 \to \Set$ with evaluation $\ev_H = \ev_F \circ F(\ev_{G_1}, \ev_{G_2})\colon F(G_1[0,\top],G_2[0,\top]) \to [0,\top]$. Compositionality means that for an two pseudometric spaces $(X_1,d_1)$, $(X_2,d_2)$ we have $\Wasserstein{H}{(d_1,d_2)} = \Wasserstein{F}{(\Wasserstein{G}{d_1},\Wasserstein{G_2}{d_2})}$.
\end{enumerate}

\noindent The results presented for endofunctors work analogously in the multifunctor case (the proofs can be transferred almost verbatim), so we do not explicitly present them here. Instead, we will use compositionality to obtain the machine bifunctor. 

\begin{example}[Machine Bifunctor]
	\label{exa:machinebifunctor}
	\index{machine bifunctor}
	Let $I = \_^A$ be the input functor, $\Id$ the identity endofunctor on $\Set$ and $P$ be the product bifunctor of \Cref{def:product-bifunctor}. The \emph{machine bifunctor} is the composition $M := P \circ (\Id \times I)$, i.e., the bifunctor $M\colon \Set^2 \to \Set$ with $M(B,X) := B \times X^A$. We compute the composed evaluation function which, of course, depends on the evaluation functions for $P$ and $I$ (for $\Id$ we always take $\id_{[0,\top]})$. Let $(o,s) \in [0,\top] \times [0,\top]^A$, then
	$\ev_M(o,s) = \ev_P \circ P(\id_{[0,\top]}, \ev_I)  (o,s) = \ev_P \circ (\id_{[0,\top]} \times \ev_I) (o,s) = \ev_P\big(o,\ev_I(s)\big)$.
	By instantiating $\ev_P$ and $\ev_I$ as in the table below
        (see also \Cref{exa:wasserstein-input-functor} and
        \Cref{lem:product-pseudometrics}), we obtain the corresponding
        evaluation functions $\ev_M\colon [0,\top] \times [0,\top]^A
        \to [0,\top]$. They are well-behaved since all involved
        functors preserve weak pullbacks and for $\Id$ and $I$ there
        are unique (thus optimal) couplings so we have
        compositionality by a multifunctor equivalent to
        \Cref{prop:comp:well-behaved}. \new{Again we introduce
          constants $c_1,c_2$ as weighting factors and we impose
          certain requirements to ensure well-behavedness. In
          particular we have to require that the distance does not
          exceed the upper bound $\top$.}
	\begin{center}\begin{tabular}{c|c|c|c}
	Parameters & $\ev_P(r_1,r_2)$ & $\ev_I(s)$ & $\ev_M(o,s)$\\
	\hline
	$c_1,c_2 \in\,]0,1]$ & $\max \set{c_1r_1,c_2r_2}$ & $\max\limits_{a \in A}s(a)$& $\max\Big\{c_1o,c_2\max\limits_{a \in A}s(a)\Big\}$\\
	$\begin{matrix}c_1,c_2 \in\,]0,1], \\c_1+c_2 \leq 1\end{matrix}$ & $c_1 x_1 + c_2 x_2$ & $|A|^{-1}\sum\limits_{a \in A}s(a)$ &$c_1o + c_2|A|^{-1} \sum\limits_{a \in A}s(a)$\\
	$\begin{matrix}c_1,c_2 \in \,]0,\infty[, \\\top=\infty\end{matrix}$ & $c_1 x_1 + c_2 x_2$ & $\sum\limits_{a \in A}s(a)$ &$c_1o + c_2 \sum\limits_{a \in A}s(a)$
	\end{tabular}\end{center}
	Now let $(B,d_B)$, $(X,d)$ be pseudometric spaces. For any $t_1,t_2 \in M(B,X)$ with $t_i = (b_i,s_i) \in B \times X^A$ the unique and therefore necessarily optimal coupling is $t := (b_1,b_2, \langle s_1,s_2 \rangle)$. We compute the Wasserstein distance
	\begin{align*}
		\Wasserstein{M}{(d_B,d)}(t_1,t_2) &= \EvaluationFunctor{M}(d_B,d) (t) = \ev_M \circ M(d_B,d) (t) \\
		&= \ev_M \circ \left(d_B \times d^A\right) \left(b_1,b_2,\langle s_1,s_2 \rangle \right) =\ev_M\big(d_B(b_1,b_2), d \circ \langle s_1, s_2\rangle\big)\,.
	\end{align*}
	We obtain in the first case $\Wasserstein{M}{(d_B,d)}(t_1,t_2) = \max\set{c_1d_B(b_1,b_2), c_2 \cdot \max_{a \in A} d\big(s_1(a),s_2(a)\big)}$, in the second case $\Wasserstein{M}{(d_B,d)}(t_1,t_2) ={c_1d_B(b_1,b_2) + {c_2}{|A|}^{-1} \sum_{a \in A} d\big(s_1(a),s_2(a)\big)}$
	and in the third case $\Wasserstein{M}{(d_B,d)}(t_1,t_2) ={c_1d_B(b_1,b_2) + {c_2} \sum_{a \in A} d\big(s_1(a),s_2(a)\big)}$. Of course one has to choose which of these pseudometrics fits into the respective context. While the first one selects either the distance of the output values or the maximal distance of the successors and neglects the other one, the latter two accumulate the distances. Depending on our maximal element, we have to make sure that we stay within the selected measuring interval $[0,\top]$ by proper scaling of the values.
\end{example}

Usually we will fix the first argument of the machine bifunctor (the set of outputs) of the machine bifunctor and just consider the machine endofunctor $M_B := M(B,\_)$  as in \Cref{exa:m2-comp,exa:wasserstein-machine} (\cpageref{exa:m2-comp,exa:wasserstein-machine}).  However, for the same reasons as explained before for the product bifunctor, we often need to lift it as a bifunctor and then fix the first component of the lifted bifunctor. One notable exception is the case where $B$ is endowed with the discrete metric. Then it is easy to show that the endofunctor lifting and the bifunctor lifting coincide, i.e., for all pseudometric spaces $(X,d)$ we have $\Wasserstein{M}{(d_B,d)} = \Wasserstein{M_B}{d}$ \cite[Ex.~5.6.11]{Ker16}.

Let us finish our short excursion to the theory of multifunctor compositionality with another example which shows how the machine bifunctor lifting helps to obtain suitable bisimilarity pseudometrics.

\begin{example}[Bisimilarity Pseudometric for Automata with Real Outputs]
	\label{exa:bisim-pseudo-pa}
	We consider the machine endofunctor with output set $[0,1]$, i.e., the functor $M_{[0,1]} = [0,1] \times \_^A$ which arises out of the machine bifunctor $M$ by fixing the first component to $[0,1]$. As maximal distance we set $\top=1$ and equip the machine bifunctor $M$ with the evaluation function $\ev_M \colon [0,1] \times [0,1]^A \to [0,1]$ where $\ev_M(o,s) = c_1o + {c_2}{|A|}^{-1} \sum_{a \in A} s(a)$ for $c_1,c_2 \in\,]0,1[$ such that $c_1 + c_2 \leq 1$ as in \Cref{exa:machinebifunctor}. Moreover, we recall that the carrier of the final $M_{[0,1]}$-coalgebra is $[0,1]^{A^*}$ \cite[Lem.~3]{JSS15}. 
	
	If we equip $[0,1]$ with the Euclidean metric $d_e$ and use our knowledge from \Cref{exa:machinebifunctor} we know that for any pseudometric $d$ on $[0,1]^{A^*}$ we obtain as Wasserstein pseudometric the function $\Wasserstein{F}{(d_e,d)}\colon \big([0,1] \times \big([0,1]^{A^*}\big)^A\big)^2 \to [0,1]$ with
	\begin{align*}
		\Wasserstein{F}{(d_e,d)}\big((r_1,s_1), (r_2, s_2)\big) = c_1|r_1-r_2| +  \frac{c_2}{|A|}\cdot\sum_{a \in A}d\big(s_1(a), s_2(a)\big)\,.
	\end{align*}
	We now want to obtain the final coalgebra for the endofunctor $\LiftedFunctor{M}_{([0,1],d_e)}=\LiftedFunctor{M}{\big(([0,1],d_e), \_\big)}$ on $\PMet$ which is a lifting of $M_{[0,1]}$. For this we use the fixed-point equation induced by \Cref{lem:lifting-coalg} (\cpageref{lem:lifting-coalg}). It is given by, for $p_1,p_2 \in [0,1]^{A^*}$, the equation
	\begin{align*}
		d(p_1,p_2) = c_1|p_1(\epsilon)- p_2(\epsilon)| + \frac{c_2}{|A|} \cdot \sum_{a \in A} d\big(\lambda w.p_1(aw), \lambda w.p_2(aw)\big)\,.
	\end{align*}
	As in the previous example a simple calculation shows that the function
	\begin{align*}
		&d_{[0,1]^{A^*}} \colon [0,1]^{A^*} \times [0,1]^{A^*} \to [0,1],\quad d_{[0,1]^{A^*}}(p_1,p_2) = c_1 \cdot \sum_{w \in A^*} \left(\frac{c_2}{|A|}\right)^{|w|} \left|p_1(w) - p_2(w)\right|
	\end{align*}
	is the least fixed-point of this equation so if we equip $[0,1]^{A^*}$ with this pseudometric we obtain the final $\LiftedFunctor{M}_{([0,1],d_e)}$-coalgebra. Thus for a probabilistic automaton $\alpha \colon X \to [0,1] \times X^A$ the bisimilarity pseudometric as given in  \Cref{def:bisim-pseudometric} (\cpageref{def:bisim-pseudometric}) is the function
	
	\begin{align*}
		\bd_\alpha \colon X \times X \to [0,1], \quad \bd_\alpha(x,y) = c_1 \cdot \sum_{w \in A^*} \left(\frac{c_2}{|A|}\right)^{|w|} \left|\final{x}_\alpha(w) - \final{y}_\alpha(w)\right|
	\end{align*} 
	where the unique map into the final coalgebra $\final{\cdot}_\alpha \colon X \to [0,1]^{A^*}$ maps each state to the function describing the output value of the automaton for each finite word when starting from the respective state.
\end{example}
\section{Lifting Natural Transformations and Monads}
\label{sec:monadlifting}
If we have a monad on $\Set$, we can of course use our framework to lift the endofunctor $T$ to a functor $\LiftedFunctor{T}$ on pseudometric spaces. A natural question that arises is, whether we also obtain a monad on pseudometric spaces, i.e., if the components of the unit and the multiplication are nonexpansive with respect to the lifted pseudometrics. In order to answer this question, we first take a closer look at sufficient conditions for lifting natural transformations.

\begin{theorem}[Lifting of a Natural Transformation]
	\label{prop:nt-lifting}
	Let $F$, $G$ be endofunctors on $\Set$ with evaluation functions
	$\ev_F$, $\ev_G$ and $\lambda\colon F \Rightarrow G$ be a natural
	transformation. The following two properties hold for the Kantorovich lifting.
	\begin{enumerate}
		\item If $d_e \circ (\ev_G \circ \lambda_{[0,\top]} \times \ev_G \circ
                  \lambda_{[0,\top]}) \le d_e \circ (\ev_F \times \ev_F)$
                  % $\ev_{G} \circ \lambda_{[0,\top]} \leq \ev_{F}$
                  then $\lambda_X$ is nonexpansive for all
                  pseudometric spaces $(X,d)$, i.e.,
                  $\Kantorovich{G}{d} \circ (\lambda_X \times
                  \lambda_X) \leq \Kantorovich{F}{d}$.\footnote{Note
                    that in an earlier version of the paper we stated
                    the requirement incorrectly as
                    $\ev_{G} \circ \lambda_{[0,\top]} \leq
                    \ev_{F}$.} \label{prop:nt-lifting:1}
		\item If $\ev_{G} \circ \lambda_{[0,\top]} = \ev_{F}$ then $\lambda_X$ is an isometry for all pseudometric spaces $(X,d)$, i.e., $\Kantorovich{G}{d} \circ (\lambda_X \times \lambda_X) = \Kantorovich{F}{d}$. \label{prop:nt-lifting:2}
	\end{enumerate}
	\noindent Moreover, similar properties hold for the Wasserstein lifting. 
	\begin{enumerate}
	\setcounter{enumi}{2}
		\item If $\ev_{G} \circ \lambda_{[0,\top]} \leq \ev_{F}$ then $\lambda_X$ is nonexpansive for all pseudometric space $(X,d)$, i.e., $\Wasserstein{G}{d} \circ (\lambda_X \times \lambda_X) \leq \Wasserstein{F}{d}$.\label{prop:nt-lifting:3}
		\item If $\ev_{G} \circ \lambda_{[0,\top]} = \ev_{F}$ and the Kantorovich-Rubinstein duality holds for $F$, i.e., $\Kantorovich{F}{d} = \Wasserstein{F}{d}$, then $\lambda_X$ is an isometry for all pseudometric spaces, i.e., $\Wasserstein{G}{d} \circ (\lambda_X \times \lambda_X) = \Wasserstein{F}{d}$.\label{prop:nt-lifting:4}
	\end{enumerate}
\end{theorem}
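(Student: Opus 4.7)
The plan is to leverage the naturality of $\lambda$ in all four parts. The key identity I will use repeatedly is that for any function $f\colon X \to [0,\top]$, naturality of $\lambda$ gives $Gf \circ \lambda_X = \lambda_{[0,\top]} \circ Ff$, and hence
\[
  \EvaluationFunctor{G}f \circ \lambda_X \;=\; \ev_G \circ Gf \circ \lambda_X \;=\; \ev_G \circ \lambda_{[0,\top]} \circ Ff\,.
\]
Under the hypothesis $\ev_G \circ \lambda_{[0,\top]} \le \ev_F$ this rewrites as the pointwise inequality $\EvaluationFunctor{G}f \circ \lambda_X \le \EvaluationFunctor{F}f$, and under the equality hypothesis as the pointwise equality $\EvaluationFunctor{G}f \circ \lambda_X = \EvaluationFunctor{F}f$. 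All subsequent steps branch off this observation.

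For part \ref{prop:nt-lifting:2} I would argue that the pointwise equality immediately yields $d_e\bigl(\EvaluationFunctor{G}f(\lambda_X(t_1)), \EvaluationFunctor{G}f(\lambda_X(t_2))\bigr) = d_e\bigl(\EvaluationFunctor{F}f(t_1), \EvaluationFunctor{F}f(t_2)\bigr)$ for every nonexpansive $f\colon (X,d)\nonexpansiveTo ([0,\top],d_e)$; taking the supremum on both sides over all such $f$ then gives the isometry claim $\Kantorovich{G}{d}(\lambda_X(t_1),\lambda_X(t_2)) = \Kantorovich{F}{d}(t_1,t_2)$. Part \ref{prop:nt-lifting:1} would proceed along the same lines, replacing the equality by the corresponding pointwise inequality and comparing the suprema of the resulting $d_e$-distances.

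For part \ref{prop:nt-lifting:3} I will fix $t_1,t_2 \in FX$ and, for every coupling $t \in \Couplings{F}(t_1,t_2)$, push it forward to $t' := \lambda_{X\times X}(t) \in G(X\times X)$ (with the edge case $\Couplings{F}(t_1,t_2) = \emptyset$ being trivial since then the right-hand side is $\top$). Applying naturality of $\lambda$ to each projection $\pi_i\colon X\times X \to X$ gives $G\pi_i(t') = \lambda_X(F\pi_i(t)) = \lambda_X(t_i)$, so $t'$ is a $G$-coupling of $\lambda_X(t_1)$ and $\lambda_X(t_2)$. Instantiating the key identity above with $f := d$ then gives
\[
  \EvaluationFunctor{G}d(t') \;=\; (\ev_G \circ \lambda_{[0,\top]})(Fd(t)) \;\le\; \ev_F(Fd(t)) \;=\; \EvaluationFunctor{F}d(t)\,,
\]
and taking infima over $\Couplings{F}(t_1,t_2)$ yields $\Wasserstein{G}{d}(\lambda_X(t_1),\lambda_X(t_2)) \le \Wasserstein{F}{d}(t_1,t_2)$. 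For part \ref{prop:nt-lifting:4} the $\le$ direction is precisely part \ref{prop:nt-lifting:3}, while for the reverse I would chain already-established results: by part \ref{prop:nt-lifting:2} applied with the equality hypothesis, $\Kantorovich{G}{d}(\lambda_X(t_1),\lambda_X(t_2)) = \Kantorovich{F}{d}(t_1,t_2)$; by duality for $F$ the right-hand side equals $\Wasserstein{F}{d}(t_1,t_2)$; and \Cref{prop:wasserstein-vs-kantorovich} gives $\Kantorovich{G}{d} \le \Wasserstein{G}{d}$ (assuming the natural setting in which $\ev_G$ satisfies \ref{W1} and \ref{W2}), so $\Wasserstein{F}{d}(t_1,t_2) \le \Wasserstein{G}{d}(\lambda_X(t_1),\lambda_X(t_2))$.

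The main obstacle will be the Kantorovich inequality in part \ref{prop:nt-lifting:1}: a bare pointwise bound $\ev_G\circ\lambda_{[0,\top]} \le \ev_F$ immediately controls the individual values $\EvaluationFunctor{G}f(\lambda_X(t_i))$ in terms of $\EvaluationFunctor{F}f(t_i)$, but converting this into a bound on $d_e$-distances (which are absolute values of differences, and therefore not monotone in either argument separately) requires more care than for the Wasserstein counterpart. In part \ref{prop:nt-lifting:3} the difficulty simply does not arise because the evaluation functor is applied to the symmetric quantity $d$ directly, so the pointwise bound translates verbatim into a bound on evaluations of couplings.
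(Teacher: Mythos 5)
Your treatment of parts \ref{prop:nt-lifting:2}, \ref{prop:nt-lifting:3} and \ref{prop:nt-lifting:4} is essentially the paper's own proof: the same naturality identity $\EvaluationFunctor{G}f \circ \lambda_X = \ev_G \circ \lambda_{[0,\top]} \circ Ff$, the same pushforward of an $F$-coupling $t$ to the $G$-coupling $\lambda_{X\times X}(t)$ via $G\pi_i \circ \lambda_{X\times X} = \lambda_X \circ F\pi_i$, and for part \ref{prop:nt-lifting:4} the same sandwich $\Kantorovich{F}{d} = \Kantorovich{G}{d}\circ(\lambda_X\times\lambda_X) \leq \Wasserstein{G}{d}\circ(\lambda_X\times\lambda_X) \leq \Wasserstein{F}{d} = \Kantorovich{F}{d}$ (your remark that the middle inequality needs \Cref{W1,W2} for $\ev_G$ is correct; the paper leaves this implicit under ``the fact that Wasserstein is an upper bound''). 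These parts are fine.

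What you did not prove is part \ref{prop:nt-lifting:1}, and the obstacle you flag there is genuine. The paper's own proof at this point passes from the pointwise bound $\EvaluationFunctor{G}f\circ\lambda_X \leq \EvaluationFunctor{F}f$ directly to $\sup_f d_e\big(\EvaluationFunctor{G}f(\lambda_X(t_1)), \EvaluationFunctor{G}f(\lambda_X(t_2))\big) \leq \sup_f d_e\big(\EvaluationFunctor{F}f(t_1), \EvaluationFunctor{F}f(t_2)\big)$ — exactly the step you distrust — and that step is unsound, since $a_i \le b_i$ does not give $d_e(a_1,a_2) \le d_e(b_1,b_2)$. Indeed the conclusion of part \ref{prop:nt-lifting:1} can fail under the stated hypothesis: take $F = G = \Id$, $\lambda = \id$, $\top = 1$, $\ev_F = \id_{[0,1]}$ and $\ev_G(r) = \max\{0, 2r-1\}$, so that $\ev_G\circ\lambda_{[0,1]} = \ev_G \le \ev_F$. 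On $(X,d) = ([0,1], d_e)$ with $t_1 = 1/2$, $t_2 = 1$ one has $\Kantorovich{F}{d}(t_1,t_2) = d_e(t_1,t_2) = 1/2$, while the nonexpansive $f = \id$ witnesses $\Kantorovich{G}{d}\big(\lambda_X(t_1), \lambda_X(t_2)\big) \geq d_e\big(\ev_G(1/2), \ev_G(1)\big) = 1$. So the inequality hypothesis alone does not suffice for the Kantorovich case (the equality hypothesis of part \ref{prop:nt-lifting:2} is what actually closes the argument, and the Wasserstein case \ref{prop:nt-lifting:3} escapes because the evaluation is applied to the single symmetric quantity $d$). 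Your instinct not to wave this through was right; but as it stands your proposal, like the paper, leaves part \ref{prop:nt-lifting:1} without a valid proof.
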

\begin{proof}
	Let $t_1,t_2 \in FX$.
	\begin{enumerate}
        \item By naturality of $\lambda$ we obtain for every
          $f \colon X \to [0,\top]$ the equality
          $\EvaluationFunctor{G}f\circ \lambda_X = \ev_G \circ Gf
          \circ \lambda_X = \ev_G \circ \lambda_{[0,\top]} \circ
          Ff$. Using this and the assumed inequality we can infer
                  \begin{eqnarray*}
                    && \Kantorovich{G}{d}\big(\lambda_X(t_1),
                    \lambda_X(t_2)\big) \\
                    & = &
                    \sup\{d_e\big(\EvaluationFunctor{G}f\big(\lambda_X(t_1)\big),
                    \EvaluationFunctor{G}f\big(\lambda_X(t_2)\big)\big)
                    \mid f \colon (X,d) \nonexpansiveTo
                    ([0,\top],d_e)\} \\
                    & = &
                    \sup\{d_e\big(\ev_G(\lambda_{[0,\top]}(Ff(t_1))),
                    \ev_G(\lambda_{[0,\top]}(Ff(t_2)))\big) \mid f
                    \colon (X,d) \nonexpansiveTo
                    ([0,\top],d_e)\} \\
                    & \leq &
                    \sup\{d_e\big(\ev_F(Ff(t_1)),\ev_F(Ff(t_2))\big)\mid
                    f \colon (X,d) \nonexpansiveTo ([0,\top],d_e)\} \\
                    & = &
                    \sup\{d_e\big(\EvaluationFunctor{F}f(t_1),\EvaluationFunctor{F}f(t_2)\big)\mid
                    f \colon (X,d) \nonexpansiveTo ([0,\top],d_e)\} \\
                    & = & \Kantorovich{F}{d}(t_1,t_2).
                  \end{eqnarray*}
		\item We obtain by naturality of $\lambda$
                  and $\ev_G \circ \lambda_{[0,\top]} = \ev_F$ for
                  every $f \colon X \to [0,\top]$ the equality
                  $\EvaluationFunctor{G}f \circ \lambda_X = \ev_G
                  \circ Gf \circ \lambda_X = \ev_G \circ
                  \lambda_{[0,\top]} \circ Ff = \ev_F \circ Ff =
                  \EvaluationFunctor{F}f$. Using this we compute
                  $\Kantorovich{G}{d}\big(\lambda_X(t_1),
                  \lambda_X(t_2)\big)
                  =\sup\{d_e\big(\EvaluationFunctor{G}f\big(\lambda_X(t_1)\big),\EvaluationFunctor{G}f\big(\lambda_X(t_2)\big)\big)
                  \mid f \colon (X,d) \nonexpansiveTo
                  ([0,\top],d_e)\}=\sup\{d_e\big(\EvaluationFunctor{F}f(t_1),\EvaluationFunctor{F}f(t_2)\big)\mid
                  f \colon (X,d) \nonexpansiveTo ([0,\top],d_e)\} =
                  \Kantorovich{F}{d}(t_1,t_2)$.
		\item Naturality of $\lambda$ yields the equations $\lambda_X \circ F\pi_i = G\pi_i \circ \lambda_{X \times X}$ and $\lambda_{[0,\top]} \circ Fd = Gd \circ \lambda_{X \times X}$  where $\pi_i \colon X \times X \to X$ are the projections of the product and $d\colon X \times X \to [0,\top]$ is a pseudometric on $X$. Using the first equality we can see that $\lambda_{X \times X}$ maps every coupling $t \in \Couplings{F}(t_1, t_2)$ to a coupling $\lambda_{X \times X}(t)\in \Couplings{G}\big(\lambda_X(t_1), \lambda_X(t_2)\big)$ because $G\pi_i\big(\lambda_{X \times X}(t)\big) = \lambda_X(F\pi_i(t)) = \lambda_X(t_i)$. Moreover, we can use our requirement ($\ev_G \circ \lambda_{[0,\top]} \leq \ev_F$) and the second equality to obtain $\EvaluationFunctor{G}d\big(\lambda_{X \times X}(t)\big) = \ev_{G} \circ Gd \circ \lambda_{X \times X}(t) = \ev_{G} \circ \lambda_{[0,\top]} \circ Fd (t)\leq \ev_{F} \circ Fd (t) = \EvaluationFunctor{F}d(t)$.
		With these preparations at hand we can conclude that $\Wasserstein{G}{d}\big(\lambda_X(t_1), \lambda_X(t_2)\big) = \inf\{\EvaluationFunctor{G}d(t')\mid t' \in \Couplings{G}\big(\lambda_X(t_1),\lambda_X(t_2)\big)\}\leq \inf\{\EvaluationFunctor{G}d\big(\lambda_{X \times X}(t)\big)\mid t \in \Couplings{F}(t_1,t_2)\}\leq \inf\{\EvaluationFunctor{F}d(t)\mid t \in \Couplings{F}(t_1,t_2)\} = \Wasserstein{F}{d}(t_1,t_2)$.		
		\item Using the previous two results and the fact that Wasserstein is an upper bound yields $\Kantorovich{F}{d} = \Kantorovich{G}{d} \circ (\lambda_X \times \lambda_X) \leq \Wasserstein{G}{d} \circ (\lambda_X\times\lambda_X) \leq \Wasserstein{F}{d}$ and since $\Kantorovich{F}{d} = \Wasserstein{F}{d}$ all these inequalities are equalities.\qedhere
	\end{enumerate}
\end{proof}

\noindent In the remainder of this paper we will call a natural transformation $\lambda$ nonexpansive [an isometry] if (and only if) each of its components are nonexpansive [isometries] and write $\overline{\lambda}$ for the resulting natural transformation from $\LiftedFunctor{F}$ to $\LiftedFunctor{G}$. Instead of checking nonexpansiveness separately for each component of a natural transformation, we can just check the above (in-)equalities involving the two evaluation functions. 

By applying these conditions on the unit and multiplication of a given monad, we can now provide sufficient criteria for a monad lifting.

\begin{corollary}[Lifting of a Monad]
	\label{cor:monad-lifting}
	Let $(T, \eta, \mu)$ be a $\Set$-monad and $\ev_T$ an evaluation function for $T$. Then the following holds.
	\begin{enumerate}
	\item If $\ev_T \circ \eta_{[0,\top]} \leq \id_{[0,\top]}$
          then $\eta$ is nonexpansive for the Wasserstein liftings. If
          $d_e \circ (\ev_T \circ \eta_{[0,\top]} \times \ev_T \circ
          \eta_{[0,\top]}) \le d_e$ the same holds for the
          Kantorovich lifting. In both cases we obtain the unit
          $\overline{\eta}\colon \LiftedFunctor{\Id}\Rightarrow
          \LiftedFunctor{T}$ in $\PMet$.
	\item If $\ev_T \circ \eta_{[0,\top]} = \id_{[0,\top]}$ then $\eta$ is an isometry for both liftings. 
	\item Let $\LiftedMetric{T}{d} = \Wasserstein{T}{d}$
        (Wasserstein case). If
        $\ev_T \circ \mu_{[0,\top]} \leq \ev_T * \ev_T$ and
        compositionality holds for $TT$, i.e.,
        $\LiftedMetric{T}{(\LiftedMetric{T}{d})} =
        \LiftedMetric{TT}{d}$, then $\mu$ is nonexpansive, i.e.,
        $\LiftedMetric{T}{d} \circ (\mu_X \times \mu_X) \leq
        \LiftedMetric{T}{(\LiftedMetric{T}{d})}$.

        The same holds whenever $d^T = \Kantorovich{T}{d}$
        (Kantorovich case) and
        $d_e \circ (\ev_T \circ \mu_{[0,\top]} \times
        \ev_T \circ \mu_{[0,\top]}) \le d_e\circ
        ((\ev_T * \ev_T)\times (\ev_T * \ev_T))$.

        This yields the multiplication
        $\overline{\mu}\colon \LiftedFunctor{T}\,\LiftedFunctor{T}
        \Rightarrow \LiftedFunctor{T}$ in $\PMet$.
	\end{enumerate}
\end{corollary}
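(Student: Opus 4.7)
The plan is to reduce all three claims to direct applications of \Cref{prop:nt-lifting} after identifying the appropriate functors and evaluation functions. The corollary is essentially bookkeeping on top of the functor-level result, together with the observation that compositionality is needed to compare $\LiftedMetric{TT}{d}$ with $\LiftedMetric{T}{(\LiftedMetric{T}{d})}$.

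For parts~1 and~2, I would view $\eta\colon \Id\Rightarrow T$ as a natural transformation, equipping $\Id$ with the canonical evaluation function $\ev_\Id = \id_{[0,\top]}$. By \Cref{ex:kant-rubinst} this choice satisfies the Kantorovich-Rubinstein duality, and both liftings of $\Id$ return the input pseudometric unchanged, i.e.\ $\Kantorovich{\Id}{d}=\Wasserstein{\Id}{d}=d$. The hypothesis $\ev_T\circ\eta_{[0,\top]} \leq \id_{[0,\top]} = \ev_\Id$ is then exactly the premise of \Cref{prop:nt-lifting:1} (Kantorovich) and \Cref{prop:nt-lifting:3} (Wasserstein), so each $\eta_X$ is nonexpansive for both liftings. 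Under the strengthened hypothesis $\ev_T\circ\eta_{[0,\top]} = \id_{[0,\top]}$, \Cref{prop:nt-lifting:2} gives the isometry property for the Kantorovich lifting and \Cref{prop:nt-lifting:4}, invoking the duality of $\Id$, does the same for the Wasserstein lifting.

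For part~3, I would view $\mu\colon TT\Rightarrow T$ as a natural transformation. By the composition of evaluation functions, the canonical evaluation function for the composite $TT$ is $\ev_{TT} = \ev_T\ast\ev_T = \ev_T\circ T\ev_T$. Hence the hypothesis $\ev_T\circ\mu_{[0,\top]} \leq \ev_T\circ T\ev_T$ reads precisely as $\ev_T\circ\mu_{[0,\top]} \leq \ev_{TT}$, which fits \Cref{prop:nt-lifting:1} (for $\LiftedMetric{T}{d}=\Kantorovich{T}{d}$) and \Cref{prop:nt-lifting:3} (for $\LiftedMetric{T}{d}=\Wasserstein{T}{d}$). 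Either application delivers, for every pseudometric space $(X,d)$ and all $t_1,t_2\in TTX$,
\begin{align*}
\LiftedMetric{T}{d}\bigl(\mu_X(t_1),\mu_X(t_2)\bigr) \;\leq\; \LiftedMetric{TT}{d}(t_1,t_2).
\end{align*}
Rewriting the right-hand side with the assumed compositionality $\LiftedMetric{TT}{d}=\LiftedMetric{T}{(\LiftedMetric{T}{d})}$ gives exactly the desired nonexpansiveness of $\mu_X\colon \LiftedFunctor{T}\,\LiftedFunctor{T}(X,d)\nonexpansiveTo \LiftedFunctor{T}(X,d)$.

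No genuine obstacle appears: the work has been carried out already in \Cref{prop:nt-lifting} and in the compositionality discussion of \Cref{sec:compositionality}. The one subtle point worth flagging is that for the Wasserstein lifting the generic inequality from \Cref{lem:compositionality2} points the wrong way ($\Wasserstein{TT}{d} \geq \Wasserstein{T}{(\Wasserstein{T}{d})}$), so the compositionality hypothesis is genuinely needed to pass from $\LiftedMetric{TT}{d}$-nonexpansiveness to $\LiftedMetric{T}{(\LiftedMetric{T}{d})}$-nonexpansiveness; for the Kantorovich case, the reverse inequality of \Cref{lem:compositionality2} would in fact suffice, but the corollary is stated uniformly. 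Finally, that we actually obtain $\overline{\eta}$ and $\overline{\mu}$ as natural transformations in $\PMet$ is immediate, since naturality is inherited from $\Set$ through the forgetful functor once every component has been shown to be a $\PMet$-morphism.
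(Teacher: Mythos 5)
Your proposal is correct and follows essentially the same route as the paper: instantiate \Cref{prop:nt-lifting} with $F=\Id$, $\ev_\Id=\id_{[0,\top]}$ for the unit (where the duality for $\Id$ makes part~4 applicable) and with $F=TT$, $\ev_{TT}=\ev_T\circ T\ev_T$ for the multiplication, then invoke compositionality to rewrite $\LiftedMetric{TT}{d}$. Your closing observation that the compositionality hypothesis is only genuinely needed on the Wasserstein side (since \Cref{lem:compositionality2} already gives $\Kantorovich{TT}{d}\leq\Kantorovich{T}{(\Kantorovich{T}{d})}$) is an accurate refinement the paper does not spell out.
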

\begin{proof}
	This is an immediate consequence of
        \Cref{prop:nt-lifting}. For the unit take $F=\Id$ with
        evaluation function $\ev_F = \id_{[0,\top]}$, hence
        $\Kantorovich{F}{d} = \Wasserstein{F}{d}=d$ and $G=T$,
        $\ev_G=\ev_T$, $\lambda = \eta \colon \Id \Rightarrow T$. For
        the multiplication take $F = TT$, $G=T$, $\ev_F = \ev_{TT} =
        \ev_T \circ T\ev_T$, $\ev_G = \ev_T$ and $\lambda = \mu$.

        Note also that $d^{TT} \le (d^T)^T$ always holds in the
        Kantorovich case (due to
        \ref{lem:compositionality2}(\ref{item:leqKantorovich})) giving
        us an identity natural transformation
        $\LiftedFunctor{T}\,\LiftedFunctor{T}\Rightarrow
        \LiftedFunctor{TT}$, making the requirement of compositionality
        unnecessary in that case.
\end{proof}

We conclude this section with two examples of liftable monads.

\begin{example}[Lifting of the Finite Powerset Monad]\label{exa:monad:powfin}
	We recall that the finite powerset functor $\PowersetFinite$ is part of a monad with unit $\eta$ consisting of the functions $\eta_X\colon X \to \PowersetFinite X$, $\eta_X(x) = \set{x}$ and multiplication given by $\mu_X \colon \PowersetFinite \PowersetFinite X \to \PowersetFinite X$, $\mu_X(S) = \cup S$. We check if our conditions for the Wasserstein lifting are satisfied. Given $r \in [0,\infty]$ we have $\ev_T \circ \eta_{[0,\infty]}(r) = \max\set{r} = r$ and for $\mathcal{S} \in \PowersetFinite(\PowersetFinite[0,\top])$ we have $\ev_T \circ \mu_{[0,1]}(\mathcal{S}) = \max \cup \mathcal{S} = \max \cup_{S \in \mathcal{S}} S$ and $\ev_T \circ T\ev_T (\mathcal{S}) = \max\left(\ev_T[\mathcal{S}]\right) = \max\set{\max S \mid S \in \mathcal{S}}$ and thus it is easy to see that both values coincide. Moreover, we recall from \Cref{exa:powerset-compositional} that we have compositionality for $\PowersetFinite\PowersetFinite$. Therefore, by \Cref{cor:monad-lifting} $\eta$ is an isometry and $\mu$ nonexpansive.
\end{example}

\begin{example}[Lifting of the Distribution Monad With Finite Support]
	\label{exa:monad:distr}
	It is known that the probability distribution functor $\DistributionsFinite$ is part of a monad: the unit $\eta$ consists of the functions $\eta_X\colon X \to \DistributionsFinite X$, $\eta_X(x) = \delta_{x}^X$ where $\delta_x^X$ is the Dirac distribution and the multiplication is given by $\mu_X \colon \DistributionsFinite\DistributionsFinite X \to \DistributionsFinite X$, $\mu_X(P) = \lambda x.\sum_{q \in \DistributionsFinite X}P(q) \cdot q(x)$. We consider its Wasserstein lifting. Since $[0,1] = \DistributionsFinite \two$ we can see that $\ev_{\DistributionsFinite} = \mu_\two$. Using this and the monad laws we have $\ev_{\DistributionsFinite} \circ \eta_{[0,1]} = \mu_\two \circ \eta_{\DistributionsFinite \two} = \id_{\DistributionsFinite X} = \id_{[0,1]}$ and also $\ev_{\DistributionsFinite} \circ \mu_{[0,1]} = \mu_\two \circ \mu_{\DistributionsFinite \two} = \mu_\two \circ \DistributionsFinite \mu_\two = \ev_{\DistributionsFinite} \circ \DistributionsFinite\ev_{\DistributionsFinite}$. Moreover, since we always have optimal couplings, we have compositionality for $\DistributionsFinite\DistributionsFinite$ by \Cref{prop:comp}. Thus by \Cref{cor:monad-lifting} $\eta$ is an isometry and $\mu$ nonexpansive.
\end{example}

\section{Trace Pseudometrics}
\label{sec:tracemetrics}
\begin{wrapfigure}[13]{r}{5.25cm}
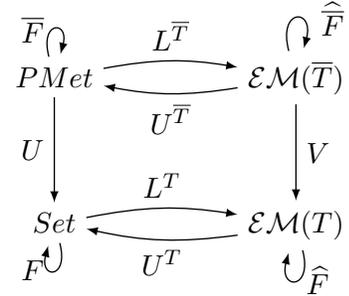

	\centering
	\begin{diagram}
		\matrix[matrix of math nodes, column sep=1.8cm, row sep=1.3cm] (m){
			\PMet & \EM(\LiftedFunctor{T})\\
			\Set & \EM(T)\\
		};
		
		\path (m-1-1) edge node[left]{$U$} (m-2-1);
		\path (m-1-2) edge node[right]{$V$} (m-2-2);
		\path (m-1-1) edge[bend left=10] node[above]{$L^{\LiftedFunctor{T}}$} (m-1-2);
		\path (m-1-2) edge[bend left=10] node[below]{$U^{\LiftedFunctor{T}}$} (m-1-1);
		
		\path (m-2-1) edge[bend left=10] node[above]{$L^T$} (m-2-2);
		\path (m-2-2) edge[bend left=10] node[below]{$U^T$} (m-2-1);
		\path (m-1-1) edge[loop above] node[left]{$\LiftedFunctor{F}$}
		(m-1-1);
		\path (m-2-1) edge[loop below] node[left]{$F$} 
		(m-2-1);
		\path (m-1-2) edge[loop above] node[right,xshift=.2cm]{$\widehat{\LiftedFunctor{F}}$} 
		(m-1-2);
		\path (m-2-2) edge[loop below] node[right]{$\widehat{F}$} 
		(m-2-2);
	\end{diagram}
	\caption{Trace pseudometrics via the generalized powerset construction}
	\label{fig:gen-pow-trace}
\end{wrapfigure}
Combining all our previous results we now want to use the generalized
powerset construction \cite{SBBR13} on $\PMet$ instead of $\Set$ to
obtain trace pseudometrics. The basic setup is summarized in the
(non-commutative) diagram in \Cref{fig:gen-pow-trace}. We quickly
recall that in the usual, qualitative setting (bottom part of
\Cref{fig:gen-pow-trace}) we have to start with a coalgebra $c\colon X
\to FTX$ where $F$ is an endofunctor with final coalgebra $z\colon Z
\to FZ$ and $(T, \eta, \mu)$ is a monad on $\Set$. Using a
distributive law $\lambda \colon TF \Rightarrow FT$ we can then consider the determinization $c^\sharp$ of  $c$ which is defined as\\
\begin{diagram}
		\matrix[matrix of math nodes, column sep=0.8cm, row sep=.5cm] (m){
			c^\sharp := \Big(TX & TFTX & FTTX & FTX\Big)\,.\\
		};
		\draw (m-1-1) edge node[above]{$Tc$}(m-1-2);
		\draw (m-1-2) edge node[above]{$\lambda_{TX}$}(m-1-3);
		\draw (m-1-3) edge node[above]{$F\mu_X$}(m-1-4);
\end{diagram}\\
With this determinized coalgebra we define two states $x,y \in X$ of the original coalgebra $c$ to be trace equivalent if and only if $\final{\eta_X(x)}_c = \final{\eta_X(y)}_c$ holds. 

The underlying reason why this technique works is that the
distributive law $\lambda$ yields a unique lifting (with respect to the canonical Eilenberg-Moore adjunction $L^T \dashv U^T$ of the functor $F$ to a functor $\widehat{F}$ on the Eilenberg-Moore category $\EM(T)$. The determinization of a coalgebra is nothing but the application of another lifting 
\begin{align*}
	L\colon \CoAlg{FT} \to \CoAlg{\widehat{F}}
\end{align*} 
of the free algebra functor $L^T$, i.e., $c^\sharp = U^TL(c)\colon TX \to FTX$ is the $F$-coalgebra underlying the $\widehat{F}$-coalgebra $L(c)\colon \mu_X \to \widehat{F}\mu_X$. 

In order to move to a quantitative setting (upper part of \Cref{fig:gen-pow-trace}) we need to require that both the functor $F$ and the monad $(T, \eta, \mu)$ can be lifted. Then clearly any $FT$-coalgebra $c\colon X \to FTX$ can be regarded as an $\LiftedFunctor{F}\,\LiftedFunctor{T}$-coalgebra $\overline{c}$ by
equipping the state space $X$ with the discrete metric $d$ assigning
$\top$ to non equal states (in this way, $\overline{c}$ is trivially
nonexpansive). Moreover, if we can ensure that the $\EM$ law $\lambda$
is nonexpansive, thus yielding a distributive law $\overline{\lambda} \colon \LiftedFunctor{T}\,\LiftedFunctor{F} \Rightarrow \LiftedFunctor{F}\,\LiftedFunctor{T}$, we can use exactly the same ideas as before. In particular, we can lift the lifted functor $\LiftedFunctor{F}$ to a functor 
$\widehat{\LiftedFunctor{F}} \colon \EM\big(\LiftedFunctor{T}) \to  \EM\big(\LiftedFunctor{T})$ 
on the Eilenberg-Moore category of the lifted monad. With this, we can lift the free algebra functor $L^{\LiftedFunctor{T}}$ to a functor 
\begin{align*}
	L' \colon \CoAlg{\LiftedFunctor{F}\,\LiftedFunctor{T}} \to \CoAlg{\,\widehat{\LiftedFunctor{F}}\,} 
\end{align*}
which allows us to determinize $\overline{c}$ (as $\LiftedFunctor{F}\,\LiftedFunctor{T}$-coalgebra in $\PMet$) to the $\widehat{\LiftedFunctor{F}}$-coalgebra  $L'\overline{c} \colon \LiftedFunctor{\mu}_X \to \widehat{\LiftedFunctor{F}}\LiftedFunctor{\mu}_X$ which is given by the underlying  $\LiftedFunctor{F}$-coalgebra 
\begin{align}
	\overline{c}^\sharp := U^TL'\overline{c} = \LiftedFunctor{F}\,\LiftedFunctor{\mu_X} \circ \LiftedFunctor{\lambda_{TX}} \circ \LiftedFunctor{T}\,{\overline{c}}\colon \LiftedFunctor{T}(X,d) \nonexpansiveTo \LiftedFunctor{F}\,\LiftedFunctor{T}(X,d)\,.
\end{align}
If we now equip $TX$ with the behavioral pseudometric $\bd_{{c}^\sharp} \colon (TX)^2 \to [0,\top]$ as in \Cref{def:bisim-pseudometric}, we can define the trace pseudometric on $X$ via the unit $\eta$ as follows.

\begin{definition}[Trace Pseudometric]
	\label{def:trace-pseudometric}
	Let $F$ be an endofunctor and $(T,\eta,\mu)$ be a monad on $\Set$. If $F$ has a final coalgebra $z \colon Z \to FZ$ in $\Set$, $F$ has a lifting $\LiftedFunctor{F} \colon \PMet \to \PMet$, $(T,\eta, \mu)$ has a lifting $(\LiftedFunctor{T},\LiftedFunctor{\eta},\LiftedFunctor{\mu})$, and there is an $\EM$-law $\lambda \colon TF \to FT$ which can be lifted to an $\EM$-law $\LiftedFunctor{\lambda}\colon \LiftedFunctor{T}\,\LiftedFunctor{F} \to \LiftedFunctor{F}\,\LiftedFunctor{T}$ then for any coalgebra $c \colon X \to FTX$ we define the \emph{trace pseudometric} to be 
	\begin{align*}
		\td_c:= \bd_{{c}^\sharp} \circ (\eta_X \times \eta_X) \colon X \times X \to [0,\top]
	\end{align*}
	where ${c}^\sharp = F\mu_X \circ \lambda_{TX} \circ Tc\colon TX \to FTX$ is the determinization of $c\colon X \to FTX$ and $bd_{{c}^\sharp} \colon (TX)^2 \to [0,\top]$ is the corresponding bisimilarity pseudometric (\Cref{def:bisim-pseudometric}, \cpageref{def:bisim-pseudometric}).
\end{definition}

In order to apply this definition to our two main examples (nondeterministic and probabilistic automata) the only missing thing is the lifting of the $\EM$-law to $\PMet$. We note that \Cref{prop:nt-lifting} (\cpageref{prop:nt-lifting}) not only provides sufficient conditions for monad liftings but also can be exploited to lift distributive laws. The additional commutativity requirements for $\Kl$-laws or $\EM$-laws trivially hold when all components are nonexpansive. For the Wasserstein lifting it suffices to require compositionality on the left hand side of the law and to check one inequality.

\begin{corollary}[Lifting of a Distributive Law]
	\label{cor:lifting-distr-law}
	Let $F,G$ be weak pullback preserving endofunctors on $\Set$ with well-behaved evaluation functions $\ev_F$, $\ev_G$ and $\lambda\colon FG \Rightarrow GF$ be a distributive law. If the evaluation functions satisfy $(\ev_G \ast \ev_F) \circ \lambda_{[0,\top]} \leq \ev_F \ast \ev_G$ and compositionality holds for $FG$, then $\lambda$ is nonexpansive for the Wasserstein lifting and hence $\overline{\lambda}\colon \LiftedFunctor{F}\,\LiftedFunctor{G} \Rightarrow \LiftedFunctor{G}\,\LiftedFunctor{F}$ is also a distributive law.
\end{corollary}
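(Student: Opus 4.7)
The plan is to apply part~(\ref{prop:nt-lifting:3}) of Proposition~\ref{prop:nt-lifting} to the natural transformation $\lambda \colon FG \Rightarrow GF$, regarding $FG$ and $GF$ as single endofunctors equipped with the composed evaluation functions $\ev_F\ast\ev_G$ and $\ev_G\ast\ev_F$ respectively. Before doing so, I would invoke Proposition~\ref{prop:comp:well-behaved} (applied once to $FG$ and once to $GF$) together with part~(3) of Lemma~\ref{lem:comp1} to confirm that both $FG$ and $GF$ preserve weak pullbacks and that their composed evaluation functions are well-behaved, so that $\Wasserstein{FG}{d}$ and $\Wasserstein{GF}{d}$ are bona fide pseudometrics for every pseudometric space $(X,d)$.

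Since the hypothesis $(\ev_G\ast\ev_F)\circ\lambda_{[0,\top]}\leq \ev_F\ast\ev_G$ is precisely what part~(\ref{prop:nt-lifting:3}) of Proposition~\ref{prop:nt-lifting} requires in this setting (with source functor $FG$ and target functor $GF$), an immediate application yields
\begin{equation*}
  \Wasserstein{GF}{d}\circ(\lambda_X\times\lambda_X)\leq \Wasserstein{FG}{d}
\end{equation*}
for every pseudometric space $(X,d)$, i.e.\ $\lambda_X$ is nonexpansive with respect to the ``direct'' Wasserstein liftings of $FG$ and $GF$.

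The next step is to rephrase this inequality in terms of the compound liftings $\LiftedFunctor{F}\,\LiftedFunctor{G}$ and $\LiftedFunctor{G}\,\LiftedFunctor{F}$. On the source side, compositionality for $FG$ (our hypothesis) provides the equality $\Wasserstein{FG}{d}=\Wasserstein{F}{(\Wasserstein{G}{d})}$. On the target side, part~(\ref{item:geqWasserstein}) of Lemma~\ref{lem:compositionality2} supplies $\Wasserstein{G}{(\Wasserstein{F}{d})}\leq \Wasserstein{GF}{d}$, which happens to point in the direction we need. Chaining these relations gives
\begin{equation*}
  \Wasserstein{G}{(\Wasserstein{F}{d})}\circ(\lambda_X\times\lambda_X) \leq \Wasserstein{GF}{d}\circ(\lambda_X\times\lambda_X) \leq \Wasserstein{FG}{d} = \Wasserstein{F}{(\Wasserstein{G}{d})},
\end{equation*}
which is exactly nonexpansiveness of $\lambda_X\colon \LiftedFunctor{F}\,\LiftedFunctor{G}(X,d)\nonexpansiveTo \LiftedFunctor{G}\,\LiftedFunctor{F}(X,d)$, so $\overline{\lambda}$ is a natural transformation in $\PMet$.

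Finally, to conclude that $\overline{\lambda}$ is still a distributive law, I would observe that the defining coherence axioms of a distributive law are equalities of natural transformations whose components are $\Set$-maps; since the forgetful functor $U\colon \PMet\to \Set$ is faithful and the underlying functions of $\overline{\lambda}$, $\LiftedFunctor{F}$, $\LiftedFunctor{G}$ and any relevant monad structure agree with their $\Set$-counterparts (nothing but the addition of nonexpansiveness distinguishes them), these axioms transfer automatically from $\lambda$. The one subtle point, which I would flag as the main obstacle, is the asymmetric use of compositionality: we need it as an \emph{equality} on the source side $FG$ but only need an \emph{inequality} on the target side $GF$, and the inequality from Lemma~\ref{lem:compositionality2}(\ref{item:geqWasserstein}) conveniently points the right way, so no additional hypothesis on $GF$ is required.
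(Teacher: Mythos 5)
Your proposal is correct and follows essentially the same route as the paper's own proof: apply \Cref{prop:nt-lifting}.\ref{prop:nt-lifting:3} to $\lambda$ viewed as a natural transformation between the composed functors $FG$ and $GF$ with evaluation functions $\ev_F\ast\ev_G$ and $\ev_G\ast\ev_F$, then chain $\Wasserstein{G}{(\Wasserstein{F}{d})} \leq \Wasserstein{GF}{d}$ (from \Cref{lem:compositionality2}.\ref{item:geqWasserstein}) with the assumed compositionality equality $\Wasserstein{FG}{d}=\Wasserstein{F}{(\Wasserstein{G}{d})}$. Your extra remarks on well-behavedness of the composed evaluation functions and on the transfer of the coherence axioms via the faithful forgetful functor are sound additions that the paper leaves implicit.
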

\begin{proof}
	We use the evaluation function $\ev_F \ast \ev_G$ for $FG$ and for $GF$ the evaluation function $\ev_{G} \ast \ev_F$. By \Cref{prop:nt-lifting} (\cpageref{prop:nt-lifting}) we know that $\Wasserstein{GF}{d} \circ (\lambda_X \times \lambda_X) \leq \Wasserstein{FG}{d}$ and by \Cref{lem:compositionality2}.\ref{item:geqWasserstein} (\cpageref{lem:compositionality2}) we have $\Wasserstein{G}{(\Wasserstein{F}{d})} \leq \Wasserstein{GF}{d}$. Plugging everything together we conclude that for every pseudometric space $(X,d)$ we have
	\begin{align*}
		\Wasserstein{G}{(\Wasserstein{F}{d})}  \circ (\lambda_X \times \lambda_X) \leq \Wasserstein{GF}{d} \circ (\lambda_X \times \lambda_X) \leq \Wasserstein{FG}{d} = \Wasserstein{F}{(\Wasserstein{G}{d})}
	\end{align*}
	which is the desired nonexpansiveness of $\lambda_X$.
\end{proof}

In the remainder of this section we will consider two examples where in both cases $G$ is the machine endofunctor $M_B = B \times \_^A$ (for  $B=\two$ and $B=[0,1]$). It is well-known \cite[Lem.~3]{JSS15} that for every output set $B$ the final coalgebra for $M_B$ is  
\begin{align*}
	z = \langle o_z, s_z \rangle \colon B^{A^*} \to  B \times \left(B^{A^*}\right)^A
\end{align*}
which maps any function $f \colon A^* \to B$ to the tuple $z(f)=\big(o_z(f),s_z(f)\big)$. The output value $o_z(f)$ is the value of $f$ on the empty word, i.e., $o_z(f) = f(\epsilon)$ and the successor function $s_z(f) \colon A \to B^{A^*}$ assigns to each letter $a \in A$ the function $s_z(f)(a)\colon A^* \to B$. Its value on a word $w \in A^*$ is equal to the value of $f$ on the word $aw$, formally $s_z(f)(a)(w) = f(aw)$. In order to lift the machine functor we have two possibilities:
\begin{enumerate}
	\item We can lift it  as an endofunctor obtaining an endofunctor $\LiftedFunctor{M_B}$ on $\PMet$. 
	\item We lift the machine bifunctor $M$ of \Cref{exa:machinebifunctor} to obtain a lifted bifunctor $\LiftedFunctor{M}\colon \PMet^2 \to \PMet$. Then we fix a pseudometric $d_B$ on the outputs $B$ and consider the induced endofunctor $\LiftedFunctor{M}_{(B,d_B)} := \LiftedFunctor{M}\big((B, d_B), \_\big)$.
\end{enumerate}
In the first case we can of course simply apply \Cref{cor:lifting-distr-law} from above but in the second case we have to prove nonexpansiveness of $\lambda$ separately. We will employ the first approach for nondeterministic automata (where $B= \two$) and the second one for probabilistic automata (where $B=[0,1])$. 

\subsection{Trace Pseudometric for Nondeterministic Automata}
We will now consider a known $\EM$-law $\lambda \colon  \PowersetFinite M_\two \Rightarrow M_\two \PowersetFinite$ for finitely branching nondeterministic automata \cite[p.~867]{JSS15}. Here, \Cref{cor:lifting-distr-law} is directly applicable using $F=\PowersetFinite$ and $G=M_\two = \two\times\_^A$.

\begin{lemma}[Distributive Law for Nondeterministic Automata]
	\label{exa:EMlaw:powfin:nonexpansive}
	Let $(\PowersetFinite,\eta,\mu)$ be the finite powerset monad from \Cref{exa:monad:powfin} with the maximum as evaluation function and $M_\two = \two \times \_^A$ be the deterministic automaton functor equipped with the evaluation function $\ev_{M_\two}\colon \two \times [0,\top]^A \to [0,\top]$, $\ev_{M_\two}(o,s) = c \cdot \max_{a \in A}s(a)$ with $c \in \,]0,1]$. We consider the $\EM$-law $\lambda \colon \PowersetFinite(\two \times \_^A) \Rightarrow \two \times \PowersetFinite(\_)^A$ on $\Set$ which is defined, for any set $X$, as $\lambda_X(S) = \langle o, s \rangle$  with $o(S) = 1$ if there is an $s' \in X^A$ such that $(1,s') \in S$ else $o(S) = 0$ and the successor functions $s(S) \colon A \to \Powerset X$, where $s(A) = \set{s'(a) \mid (o',s') \in S}$ for every $S \in \powerset{\two \times X^A}$. This law is nonexpansive. 
\end{lemma}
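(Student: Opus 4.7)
The plan is to apply \Cref{cor:lifting-distr-law} with $F = \PowersetFinite$ and $G = M_\two$, since this directly handles the Wasserstein-lifting case. Four hypotheses must be checked: (i) both functors preserve weak pullbacks; (ii) both evaluation functions are well-behaved; (iii) compositionality holds for the composite $\PowersetFinite M_\two$; and (iv) the pointwise inequality $(\ev_{M_\two} \ast \ev_{\PowersetFinite}) \circ \lambda_{[0,\top]} \leq \ev_{\PowersetFinite} \ast \ev_{M_\two}$ on $\PowersetFinite M_\two [0,\top]$. The first three items are essentially already available: $\PowersetFinite$ preserves weak pullbacks (standard), $M_\two$ does too (it is built from product and input, both pullback-preserving), the evaluation functions are well-behaved by \Cref{ex:evaluation-max} and a direct check following \Cref{exa:wasserstein-machine}, and compositionality for $\PowersetFinite M_\two$ was established in \Cref{exa:m2-comp}.

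The only real work is item (iv), which I expect to be a routine computation showing in fact equality. Taking an arbitrary $S \in \PowersetFinite(\two \times [0,\top]^A)$, on one side I would unfold
\begin{align*}
(\ev_{\PowersetFinite} \ast \ev_{M_\two})(S) = \max_{(o,s)\in S} \ev_{M_\two}(o,s) = c \cdot \max_{(o,s)\in S}\,\max_{a\in A} s(a),
\end{align*}
and on the other side, writing $\lambda_{[0,\top]}(S) = (o^\ast, s^\ast)$ with $s^\ast(a) = \{s'(a) \mid (o',s') \in S\}$, I would compute
\begin{align*}
(\ev_{M_\two} \ast \ev_{\PowersetFinite})(\lambda_{[0,\top]}(S)) = c \cdot \max_{a\in A}\,\max s^\ast(a) = c \cdot \max_{a\in A}\,\max_{(o',s')\in S} s'(a).
\end{align*}
Commuting the two $\max$'s yields equality (so in particular $\leq$), which handles (iv). The edge case $S = \emptyset$ should be checked separately: both sides evaluate to $0$ since $\max\emptyset = 0$ and $s^\ast(a) = \emptyset$ for every $a$.

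I do not anticipate a significant obstacle here. The output value $o^\ast$ plays no role in the computation because the evaluation $\ev_{M_\two}$ discards it (it depends only on the successor component via the discount factor $c$ and the $\max$ over $A$), so the nonexpansiveness of $\lambda$ reduces to simple commutativity of maxima. Once (iv) is verified, \Cref{cor:lifting-distr-law} immediately gives that each component of $\lambda$ is nonexpansive and that $\overline{\lambda} \colon \LiftedFunctor{\PowersetFinite}\,\LiftedFunctor{M_\two} \Rightarrow \LiftedFunctor{M_\two}\,\LiftedFunctor{\PowersetFinite}$ is still a distributive law, since the Eilenberg-Moore axioms are preserved by the lifting (they hold in $\Set$ and the forgetful functor $U\colon \PMet \to \Set$ is faithful).
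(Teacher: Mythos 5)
Your proposal is correct and follows essentially the same route as the paper: invoke \Cref{cor:lifting-distr-law}, cite \Cref{exa:m2-comp} for compositionality of $\PowersetFinite M_\two$, and reduce the evaluation-function inequality to commuting two maxima, which in fact yields equality. Your bookkeeping of which composed evaluation function lives on which side of the inequality is the type-correct one (the paper's displayed inequality has the two composites transposed relative to their domains), and your explicit treatment of $S=\emptyset$ is a small extra care the paper omits.
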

\begin{proof}
	In the notation of \Cref{cor:lifting-distr-law} we have $F = \two \times \_^A= M_\two$, and $G = \PowersetFinite$. The composed evaluation functions are $\ev_{\PowersetFinite} \ast \ev_{M_\two}\colon \PowersetFinite(\two \times [0,1]^A) \to[0,1]$ where for $S \in \PowersetFinite(\two \times [0,1]^A)$ 
	\begin{align*}
		\ev_{\PowersetFinite}\ast \ev_{M_\two} (S) &= \ev_{\PowersetFinite} \circ \PowersetFinite\ev_{M_\two} (S)= \max\set{\ev_{M_\two}(o,s) \mid (o,s) \in S} \\
		&=  \max\set{c \cdot \max_{a \in A} s(a) \mid (o,s) \in S} = c \cdot \max_{(o,s) \in S}\max_{a \in A}s(a)
	\end{align*}
	and $\ev_{M_\two} \ast \ev_{\PowersetFinite} \colon \two \times (\PowersetFinite [0,1])^A \to [0,1]$ where for $(o,s) \in \two \times (\PowersetFinite X)^A$
	\begin{align*}
		\ev_{M_\two} \ast  \ev_{\PowersetFinite} (S) &= \ev_{M_\two} \circ M_\two\ev_{\PowersetFinite}(o,s) = \ev_{M_\two}\big(o, \max \circ s\big) = c \cdot \max_{a \in A} \max s(a)
	\end{align*}
	
	\noindent As we have seen in \Cref{exa:m2-comp} (\cpageref{exa:m2-comp}) we have compositionality for $\PowersetFinite M_{\two}$ and the Wasserstein lifting. We want to apply \Cref{cor:lifting-distr-law} to show nonexpansiveness. For this we just have to check that $(\ev_{\PowersetFinite} \ast \ev_{M_\two}) \circ \lambda_{[0,1]} \leq \ev_{M_\two} \ast \ev_{\PowersetFinite}$ holds. For $S \in \PowersetFinite(\two \times [0,1]^A)$ we have $(\ev_{\PowersetFinite} \ast \ev_{M_\two})  \circ \lambda_{[0,1]}(S) = c \cdot \max_{a \in A}\max \set{s(a) \mid (o,s) \in S} = c \cdot \max_{a \in A}\max_{(o,s) \in S} s(a) = \ev_{M_\two} \ast \ev_{\PowersetFinite}(S)$ which concludes the proof.
\end{proof}

With this result at hand we can now define the trace pseudometric for finitely branching nondeterministic automata.

\begin{example}[Trace Pseudometric for Nondeterministic Automata]
	\label{exa:final-coalgebra-tracedistance-nfa}
	We consider the machine endofunctor $M_\two = \two \times \_^A$. As maximal distance we take $\top=1$ and as evaluation function we use $\ev_{M_\two} \colon [0,1] \times [0,1]^A$ with $\ev_{M_\two}(o,s) = c \cdot \max_{a \in A} s(a)$ for $0<c <1$ as in \Cref{exa:final-coalg-lifted-machine} (\cpageref{exa:final-coalg-lifted-machine}) and lift the functor using the Wasserstein lifting. 
	
	We now take a finitely branching nondeterministic automaton which is a coalgebra $\alpha \colon X \to \two \times (\PowersetFinite X)^A$. Its determinization is the powerset automaton $\alpha^\sharp\colon \PowersetFinite X \to \two \times (\PowersetFinite X)^A$ whose states are sets of states of the original automaton. We recall from \Cref{exa:bisim-pseudo-det-aut} (\cpageref{exa:bisim-pseudo-det-aut}) that the bisimilarity pseudometric is the function
	\begin{align*}
		\bd_{\alpha^\sharp} \colon \PowersetFinite X \times \PowersetFinite X \to [0,1], \quad \bd_{\alpha^\sharp}(S,T) = c^{\inf\set{n \in \N \mid \exists w \in A^n.\final{S}_{\alpha^\sharp}(w) \not = \final{T}_{\alpha^\sharp}(w)}}\,.
	\end{align*} 

	If we apply the construction of \Cref{def:trace-pseudometric} using the unit $\eta_X(x) = \set{x}$ of the powerset monad we obtain the trace pseudometric
	\begin{align*}
		\td_\alpha \colon X \times X \to [0,1], \quad \td_\alpha(x,y) = c^{\inf\set{n \in \N \mid \exists w \in A^n.\final{\set{x}}_{\alpha^\sharp}(w) \not = \final{\set{y}}_{\alpha^\sharp}(w)}}\,.
	\end{align*} 
	Thus the trace distance of states $x$ and $y$ of $\alpha$ is given by a word $w$ of minimal length which is contained in the language of the state $\set{x}$ of the determinization $\alpha^\sharp$ and \emph{not} contained in the language of the state $\set{y}$. Then, the distance is computed as $c^{|w|}$. 
\end{example}
\subsection{Trace Pseudometric for Probabilistic Automata}
Our next example of an $\EM$-law will be of the shape $\lambda \colon \DistributionsFinite M_{[0,1]} \to M_{[0,1]}\DistributionsFinite$. This is much more complicated because we need to consider multifunctors to obtain the correct lifting. 

\begin{lemma}[Distributive Law for Probabilistic Automata]
	\label{exa:EMlaw:distr}
	Let $(\DistributionsFinite,\eta,\mu)$ be the distribution monad with finite support from \Cref{exa:monad:distr} (\cpageref{exa:monad:distr}) and $M$ be the machine bifunctor from \Cref{exa:machinebifunctor} (\cpageref{exa:machinebifunctor}). There is a known\footnote{This law arises out of the so-called \emph{strength map} of the monad \cite[Lem.~4]{JSS15}. It is also straightforward to show this just using the properties of an $\EM$-law \cite{Ker16}.} $\EM$-law $\lambda \colon \DistributionsFinite([0,1] \times \_^A) \Rightarrow [0,1] \times \DistributionsFinite^A$ in $\Set$ where $\lambda_X = \langle o_X,s_X \rangle$ with
	\begin{align*}
		&o_X(P) = \sum_{r \in [0,1]} r \cdot P(r,X^A), \quad \text{and} \quad s_X(P) \colon A \to \DistributionsFinite X, s_X(P)(a)(x) = \sum_{{s' \in X^A,\, s'(a)=x}}\!\!\!\!\!\!\!\!\!\!P([0,1],s')
	\end{align*}
	for all sets $X$ and all distributions $P \colon [0,1] \times X^A \to [0,1]$. The endofunctors on both sides of this law can be seen a bifunctors $F,G$ for which one component is fixed. They arise by composition of the distribution functor, the identity functor and the machine bifunctor as follows. The bifunctor on the left hand side (the domain of $\lambda$) is $F= \DistributionsFinite \circ M$ and the bifunctor on the right hand side (the codomain of $\lambda$) is $G = M \circ (\Id \times \DistributionsFinite)$ and to get our law we need to fix the respective first parameter to be $[0,1]$.
	
	If we use the usual distribution functor evaluation function as given in \Cref{exa:probability-distribution-functor,exa:probability-distribution-functor2} (\cpageref{exa:probability-distribution-functor,exa:probability-distribution-functor2}), the identity function as evaluation function for $\Id$ and the discounted sum for the machine bifunctor as in \Cref{exa:machinebifunctor} (\cpageref{exa:machinebifunctor}), lift both bifunctors and then fix their first component to the metric space $([0,1],d_e)$ then the above $\EM$-law is nonexpansive. 
\end{lemma}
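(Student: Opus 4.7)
The plan is to apply (the multifunctor version of) \Cref{cor:lifting-distr-law} to the bifunctors $F = \DistributionsFinite \circ M$ and $G = M \circ (\Id \times \DistributionsFinite)$, after fixing the first argument to $([0,1], d_e)$. The first step is to compute the two composed evaluation functions explicitly. Using $\ev_{\DistributionsFinite}(P) = \sum_{x} x \cdot P(x)$ and $\ev_M(o,s) = c_1 o + c_2 |A|^{-1} \sum_{a \in A} s(a)$ from \Cref{exa:machinebifunctor}, one obtains $\ev_F \colon \DistributionsFinite([0,1] \times [0,1]^A) \to [0,1]$ with
\begin{align*}
  \ev_F(P) \;=\; \sum_{(r,s)} \Big( c_1 r + c_2 |A|^{-1} \sum_{a \in A} s(a) \Big)\, P(r,s),
\end{align*}
and $\ev_G \colon [0,1] \times (\DistributionsFinite [0,1])^A \to [0,1]$ with $\ev_G(o,\sigma) = c_1 o + c_2 |A|^{-1} \sum_{a \in A} \ev_{\DistributionsFinite}(\sigma(a))$.

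The next step is the key identity $\ev_G \circ \lambda_{[0,1]} = \ev_F$. Substituting the explicit formula for $\lambda_{[0,1]}(P) = (o_{[0,1]}(P), s_{[0,1]}(P))$ into $\ev_G$ yields
\begin{align*}
  \ev_G(\lambda_{[0,1]}(P)) = c_1 \sum_{r} r\, P(r, [0,1]^A) + \frac{c_2}{|A|} \sum_{a \in A} \sum_{x} x \!\!\!\sum_{s' : s'(a)=x}\!\! P([0,1], s').
\end{align*}
Unfolding the marginals via $P(r, [0,1]^A) = \sum_s P(r,s)$ and reindexing $\sum_x x \sum_{s':s'(a)=x} P([0,1], s') = \sum_{(r,s)} s(a) \, P(r,s)$ collapses this to exactly $\ev_F(P)$, giving the required identity (and hence the inequality $\ev_G \circ \lambda_{[0,1]} \leq \ev_F$ needed by the Wasserstein part of \Cref{prop:nt-lifting}, item 3, in its multifunctor form).

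The last ingredient is compositionality of the Wasserstein lifting for $F = \DistributionsFinite \circ M$. This follows from the multifunctor analogue of \Cref{prop:comp} together with \Cref{exa:comp}: since $\DistributionsFinite$ (with finite support) always admits optimal couplings, $\Wasserstein{\DistributionsFinite M}{(d_e,d)}$ coincides with $\Wasserstein{\DistributionsFinite}{(\Wasserstein{M}{(d_e,d)})}$. With the identity above and compositionality in place, the multifunctor version of \Cref{cor:lifting-distr-law} yields that $\lambda_{(X,d)}$ is nonexpansive for every pseudometric space $(X,d)$, and hence $\overline{\lambda}$ is a distributive law between the liftings.

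The main obstacle is not the calculation itself, which is bookkeeping, but making sure that the multifunctor variants of \Cref{prop:nt-lifting} and \Cref{prop:comp} really do apply with one argument fixed to the metric space $([0,1], d_e)$: one must verify that restricting the lifted bifunctors $\LiftedFunctor{F}, \LiftedFunctor{G}$ in their first slot to $([0,1], d_e)$ preserves both the compositionality argument and the evaluation-function criterion. This is essentially the same observation made at the end of \Cref{exa:machinebifunctor} for $M$, adapted here to the two composite bifunctors above.
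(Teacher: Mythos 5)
Your computation of the two composed evaluation functions and of the identity $\ev_G \circ \lambda_{[0,1]} = \ev_F$ is correct, and the compositionality part (optimal couplings always exist for $\DistributionsFinite$, $\Id$ and $M$) is also fine. The gap is in the reduction to ``the multifunctor version of \Cref{cor:lifting-distr-law}''. That corollary rests on \Cref{prop:nt-lifting}(3), whose proof transfers couplings by applying the \emph{doubled component} $\lambda_{X\times X}$ to a coupling $t\in\Couplings{F}(t_1,t_2)$, using naturality. In the bifunctor setting with the first slot fixed to $([0,1],d_e)$, a coupling of $P_1,P_2\in F([0,1],X)$ lives in $F([0,1]^2,X^2)=\DistributionsFinite\big([0,1]^2\times(X\times X)^A\big)$, i.e.\ it is doubled in \emph{both} arguments, whereas $\lambda$ is only a family indexed by the second argument: $\lambda_{X\times X}$ has domain $\DistributionsFinite\big([0,1]\times(X\times X)^A\big)$ and simply cannot be applied to such a coupling. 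One cannot repair this by viewing $\lambda$ as natural in the first slot either, since its output component $o_X(P)=\sum_r r\cdot P(r,X^A)$ is an expected value that only makes sense for $B=[0,1]$ and is not natural in $B$. Your proposed resolution --- the remark at the end of \Cref{exa:machinebifunctor} --- does not apply here, because that remark concerns the case where the fixed first argument carries the \emph{discrete} metric, and $([0,1],d_e)$ is not discrete; indeed using the endofunctor lifting instead would produce the wrong metric on the $[0,1]$-component.

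What is actually needed, and what the paper does, is to construct by hand a ``doubled'' map
$\Lambda_X\colon \DistributionsFinite\big([0,1]^2\times(X\times X)^A\big)\to[0,1]^2\times\big(\DistributionsFinite(X\times X)\big)^A$
with two output components $\mathbf{o}_1,\mathbf{o}_2$ (one per copy of $[0,1]$), and to verify directly that (i) $\Lambda_X$ sends every $P\in\Couplings{F}(P_1,P_2)$ to an element of $\Couplings{G}\big(\lambda_X(P_1),\lambda_X(P_2)\big)$ and (ii) $\EvaluationFunctor{G}(d_e,d)\circ\Lambda_X\leq\EvaluationFunctor{F}(d_e,d)$; nonexpansiveness then follows from the infimum definition of the Wasserstein distance. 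Your identity $\ev_G\circ\lambda_{[0,1]}=\ev_F$ is the ``diagonal'' shadow of (ii) and is genuinely useful evidence, but it does not by itself produce the coupling-transfer map, so as written the proof does not go through.
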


\begin{proof}
	Since all of the involved (bi)functors have optimal couplings, we have compositionality and the evaluation functions for the composed functors are
	\begin{align*}
		&\ev_F := \ev_{\DistributionsFinite} \circ \DistributionsFinite\ev_M\colon\DistributionsFinite([0,1]\times [0,1]^A) \to [0,1] \quad \text{and}\\
		&\ev_G := \ev_M \circ M(\id_{[0,1]}, \ev_{\DistributionsFinite}) \colon [0,1] \times (\DistributionsFinite X)^A \to [0,1]\,.
	\end{align*}
	For any set $X$ we define the function 
	\begin{align*}
		 \Lambda_X := \langle \mathbf{o}_1,\mathbf{o}_2, \mathbf{s}\rangle\colon \DistributionsFinite\big([0,1]^2\times (X\times X)^A\big)\to [0,1]^2\times \big(\DistributionsFinite (X\times X)\big)^A
	\end{align*}
	where for any $P \in \DistributionsFinite([0,1] \times [0,1] \times (X\times X)^A)$
	 \begin{align*}
	 &\mathbf{o}_1(P) = \sum_{r \in [0,1]} r \cdot P\big(r,[0,1],(X\times X)^A\big), \quad \mathbf{o}_2(P) = \sum_{r \in [0,1]} r \cdot P\big([0,1],r,(X\times X)^A\big), \text{ and}\\
	 &\mathbf{s}(P) \colon A \to \DistributionsFinite (X \times X), \quad \mathbf{s}(P)(a)(x,y) = \!\!\!\!\sum_{{s' \in (X \times X)^A,\ s'(a)=(x,y)}}\!\!\!\!\!\!\!\!\!\!\!P([0,1]^2,s')
	 \end{align*}
	 completely analogous to the definition of the components $\lambda_X$ of the distributive law in \Cref{exa:EMlaw:distr}. A lengthy but straightforward computation\footnote{It is presented in the appendix of the extended version of our CALCO '15 paper \cite{BBKK15}.} shows that for any $P_1,P_2 \in \DistributionsFinite\big([0,1] \times X^A\big)$ and any $P \in \Couplings{F}(P_1,P_2) \subseteq \DistributionsFinite\big([0,1] \times [0,1] \times (X\times X)^A\big)$ this function satisfies the requirements $\Lambda_X(P) \in \Couplings{G}\big(\lambda_X(P_1), \lambda_X(P_2)\big)$ and $\EvaluationFunctor{G}(d_B,d)\big(\Lambda_X(P)\big)\leq 	\EvaluationFunctor{F}(d_B,d)(P)$. Using these properties we can conclude that
	\begin{align*}
		\Wasserstein{G}{(d_B,d)}\big(\lambda_X(P_1), \lambda_X(P_2)\big) &= \inf\set{\EvaluationFunctor{G}(d_B,d)(P') \ \big|\ P' \in \Couplings{G}\big(\lambda_X(P_1),\lambda_X(P_2)\big)}\\
		&\leq \inf\set{\EvaluationFunctor{G}(d_B,d)\big(\Lambda_{X}(P)\big) \ \big|\ P \in \Couplings{F}(P_1,P_2)}\\
		&\leq \inf\set{\EvaluationFunctor{F}(d_B,d)(P) \ \big|\ t \in \Couplings{F}(P_1,P_2)} = \Wasserstein{F}{(d_B,d)}(P_1,P_2) 
	\end{align*}
	\noindent which, due to compositionality, proves the desired nonexpansiveness of $\lambda_X$. 
\end{proof}

Using this law, we can now define the trace pseudometric for probabilistic automata.

\begin{example}[Trace Pseudometric for Probabilistic Automata]
	\label{exa:final-coalgebra-tracedistance-pa}
	As in \Cref{exa:bisim-pseudo-pa} (\cpageref{exa:bisim-pseudo-pa}) we consider the machine functor $M_{[0,1]} = [0,1] \times \_^A$ which  arises out of the machine bifunctor $M$ by fixing the first component to $[0,1]$. As maximal distance we set $\top=1$ and equip the machine bifunctor $M$ with the evaluation function $\ev_M \colon [0,1] \times [0,1]^A \to [0, 1]$ where $\ev_M(o,s) = c_1o + {c_2}{|A|}^{-1} \sum_{a \in A} s(a)$ for $c_1,c_2 \in\,]0,1[$ such that $c_1 + c_2 \leq 1$ as in \Cref{exa:machinebifunctor}. We lift this bifunctor using the Wasserstein lifting and then fix its first component to $([0,1], d_e)$. 
	
	For a probabilistic automaton $\alpha \colon X \to [0,1] \to (\DistributionsFinite X)^A$ its determinization is the $M_{[0,1]}$-coalgebra $\alpha^\sharp\colon \DistributionsFinite X \to  [0,1] \to (\DistributionsFinite X)^A$ whose state space are distributions on the states of the original automaton. From \Cref{exa:bisim-pseudo-pa} we know that we obtain the following bisimilarity pseudometric
	\begin{align*}
		\bd_{\alpha^\sharp} \colon \DistributionsFinite X \times \DistributionsFinite X \to [0,1], \ \  \bd_{\alpha^\sharp}(p,q) = c_1 \cdot \sum_{w \in A^*} \left(\frac{c_2}{|A|}\right)^{|w|} \left|\final{p}_{\alpha^\sharp}(w) - \final{q}_{\alpha^\sharp}(w)\right|\,.
	\end{align*}
	If we apply the construction of \Cref{def:trace-pseudometric} using the unit $\eta_X(x) = \delta_x^X$ of the finite distribution monad we obtain the trace pseudometric
	\begin{align*}
		\td_\alpha \colon X \times X \to [0,1], \quad \td_\alpha(x,y) =  c_1 \cdot \sum_{w \in A^*} \left(\frac{c_2}{|A|}\right)^{|w|} \left|\final{\delta_x^X}_{\alpha^\sharp}(w) - \final{\delta_y^X}_{\alpha^\sharp}(w)\right|\,.
	\end{align*} 
	Thus the trace distance of states $x$ and $y$ of $\alpha$ is
        given by the distance of their Dirac distributions in the
        determinization. \new{We obtain a metric which is a weighted
          version of the total variation distance.}
\end{example}
\section{Conclusion, Related and Future Work}
\label{sec:conclusion}
In this paper we have demonstrated how a substantial part of the coalgebraic machinery for modeling and analyzing labeled transition systems can be extended from a qualitative to a quantitative setting. The crucial idea for this is the idea to lift a functor from the category $\Set$ of sets and functions to the category $\PMet$ of pseudometric spaces and nonexpansive functions. While all the remaining results require a bit of effort, they arise naturally once such lifting has been defined. The big advantage of our approach is that we try
\begin{itemize}
	\item to keep it as general as possible (by using coalgebra and not restricting to a specific class of transition systems) and
	\item to minimize the amount of additional information needed.
\end{itemize}
Instead of assuming that a transition system already comes equipped with some distance function on the state space, we give canonical definitions of bisimilarity and trace pseudometrics in the sense that they arise automatically out of the coalgebraic model. The only information we have to provide is the evaluation function which explains how we can evaluate the effect of applying the branching functor to real numbers as a single real number.

Whenever someone is interested in defining a new type of transition
system, he or she can now automatically derive canonical notions of
both behavioral equivalences and pseudometrics.

Since we suggested two possible liftings (Kantorovich and
Wasserstein), we should discuss their respective advantages and
disadvantages. 

The Wasserstein lifting is somewhat easier to understand due to the
transport plan analogy (Section~\ref{sec:motivation}) and behaves
better with respect to compositionality
(Section~\ref{sec:compositionality}). Furthermore, under some mild
conditions, the lifting preserves metrics. Also, there is an
interesting fibrational theory underlying the Wasserstein lifting
that we are currently working out.

The Kantorovich lifting has other advantages: first it can be
characterized via a universal property. Namely, it is the smallest
lifting that makes the evaluation function
$\ev_F\colon (F[0,\top],d_e^F)\to ([0,\top],d_e)$
nonexpansive. Furthermore, the Kantorovich (pseudo)metric seems to be useful for logics and could simplify the proof of a Hennessy-Milner
theorem for quantitative coalgebraic logics. If we assume that modal
logic formulas are denoted by $\phi$ and, for a given coalgebra
$c\colon X\to FX$, are associated with a function
$\llbracket \phi\rrbracket_c\colon X\to [0,\top]$, a quantitative
version of the Hennessy-Milner theorem reads as follows:
\[ d_c(x,y) = \sup_\phi d_e(\llbracket
\phi\rrbracket_c(x),\llbracket\phi\rrbracket_c(y))\,. \] If we
furthermore observe that bisimulation-invariant modal formulas
generalize to nonexpansive quantitative formulas, this formula is
similar in nature to the definition of the Kantorovich lifting.

Note that in several cases both liftings coincide, making the
decision easy. In all other cases the Wasserstein (pseudo)metric
is larger than the Kantorovich (pseudo)metric.

\subsection{Related Work}
The ideas for our framework are not only heavily influenced by transportation theory \cite{Vil09} but also by work on quantitative variants of (bisimulation) equivalence of probabilistic systems. In that context \name[Alessandro]{Giacalone}, \name[Chi-Chang]{Jou} and \name[Scott A.]{Smolka} observed in the nineties that probabilistic Larsen-Skou bisimulation \cite{LS89} is too strong and therefore introduced a metric based on the notion of $\epsilon$-bisimulations \cite{GJS90}. Such a bisimulation is a relaxation of the usual probabilistic bisimulation relation which allows matching the steps of another state with a probability that is not exactly same but can be at most $\epsilon$ apart with respect to the Euclidean metric on $[0,1]$. Based on this, two states are exactly $\epsilon$ apart if this is the smallest value such that the two states are $\epsilon$-bisimilar.

A second approach to behavioral distances is based on logics. Labeled Markov processes (\textlsc{LMP}) are generalizations of reactive probabilistic transition system to fairly arbitrary (namely analytic) state spaces which involve some measure theoretic results. Surprisingly, probabilistic bisimilarity for these systems can be expressed via a simple modal logic without negation \cite{DEP98} in the sense that two states of an \textlsc{LMP} are bisimilar if and only if they satisfy the same formulae. Using this logical framework \name[Josée]{Desharnais}, \name[Vineet]{Gupta}, \name[Radha]{Jagadeesan} and \name[Prakash]{Panangaden} defined a family of metrics between \textlsc{LMP}s \cite{DGJP04} via functional expressions, which can be understood as quantitative generalization of the logical formulae. If evaluated on a state of an \textlsc{LMP}, such a functional expression measures the extent (as real number between $0$ and $1$) to which a formula is satisfied in that state. Then, for any set of such functional expressions, the distance of two \textlsc{LMP}s is the supremum of all differences (with respect to the Euclidean distance on $[0,1]$) of these functional expressions.

A third, coalgebraic approach, which inspired us to develop our framework, is used by \name[Franck]{van Breugel} and \name[James]{Worrell} \cite{vBW05,vBW06}. As we have presented in the examples in this paper, they define both a discounted and an undiscounted pseudometric on probabilistic systems via a fixed point approach using the usual Kantorovich pseudometric for probability measures. Moreover, they show that this metric is related to the logical pseudometric by Desharnais et al. \cite{vBW05}. We quickly point out that metrics in a coalgebraic setting appeared already earlier in a paper by \name[Erik]{de Vink} and \name[Jan]{Rutten}. They used ultrametrics\footnote{An ultrametric is a reflexive and symmetric function $d \colon X^2 \to [0,1]$ satisfying the implication $d(x,y) \implies x = y$ and the strong triangle inequality $d(x,z) \leq \max \set{d(x,y), d(y,z)}$.} and the category of ultrametric spaces in order to define coalgebraic bisimulation for continuous probabilistic transition systems \cite{dVR97,dVR99}. However, they use it mainly as a technical tool to get a final coalgebra and not in order to study bisimilarity distances.

Not only the definition of distances for probabilistic systems and the study of their theoretical importance but also their efficient approximation or exact computation has been the focus of several recent research papers \cite{vBW01a,FPP04,TDZ11,CPP12}. In particular, \name[Di]{Chen}, \name[Franck]{van Breugel} and \name[James]{Worrell} proved that both the discounted and the undiscounted bisimilarity pseudometric for probabilistic systems can be computed exactly in polynomial time exploiting algorithms to solve linear programs \cite{CvBW12}. Taking some inspiration from this work, one year later \name[Giorgio]{Bacci}, \name[Giovanni]{Bacci}, \name[Kim]{Larsen} and \name[Radu]{Mardare} proposed an on-the-fly approach for the exact computation of bisimilarity distances \cite{BBLM13a} which they proved to be practically much more efficient than the earlier approximation algorithms.

Behavioral distances have not only been studied for probabilistic
systems but also for other types of transition systems. An example
which also appears in the main text is the branching distance for
metric transition systems \cite{dAFS09}. Moreover, a thorough
comparison of various different behavioral distances on labeled
transition systems has recently been carried out by
\name[Uli]{Fahrenberg}, \name[Axel]{Legay} and
\name[Claus]{Thrane}. They transfer \name[Rob]{van Glabbeek}'s
quantitative linear-time--branching time spectrum \cite{vG90} to a
quantitative setting \cite{FLT11,FL14}.

The lifting of a monad to the bicategory of $\mathcal{V}$-matrices
(where $\mathcal{V}$ is a quantale), similar to our Wasserstein
lifting, has earlier been studied in \cite{h:closed-objects}.  After
our original publication \cite{BBKK14} on lifting functors to $\PMet$,
there have been other suggestions to categorically characterize such
liftings in even more general settings
\cite{ks:codenisity-liftings-monads,bkv:extensions-v-cat}. Both
approaches are based on Kan extensions, where the work by
\name[Shin-ya]{Katsumata} and \name[Tetsuya]{Sato} generalizes the
Kantorovich lifting and the method by \name[Adriana]{Balan},
\name[Alexander]{Kurz} and \name[Ji\v{r}\'{\i}]{Velebil} is
reminiscent of the Wasserstein lifting.

\subsection{Future Work}
This paper proposes a paradigmatic shift from qualitative to quantitative behavior analysis and although many basic results are in place there is still a lot of work ahead. We will first discuss a few open questions whose answers (if they exist) might yield improvements of our current framework. Then we discuss further possible generalizations.

In light of the fact that we propose two different lifting approaches, the first apparent question is whether there are conditions that guarantee that these liftings coincide, i.e., such that the Kantorovich-Rubinstein duality holds. However, some preliminary attempts suggest that this is very difficult. The proof for the duality in the (arbitrary) probabilistic setting is domain specific and cannot easily be generalized.

Another valid question concerning the two different liftings is whether they can be captured by some \emph{universal properties}. Although we use a coalgebraic and thus category theoretic framework, our intuition comes from transportation and probability theory. It would be interesting to figure out whether there is some general category theoretic construction such that our liftings are two ends of this construction in a similar way as the Kleisli and Eilenberg-Moore categories of a monad are initial and final objects in a category of adjunctions. A possible source of inspiration for this line of work could be \name[Franck]{van Breugel}'s draft paper on the metric monad \cite{vB05} which describes a generalization of the (continuous) Giry monad in terms of universal properties involving monad morphisms.

In order to obtain trace pseudometrics we employed the generalized
powerset construction by \name[Alexandra]{Silva},
\name[Filippo]{Bonchi}, \name[Marcello]{Bonsangue} and
\name[Jan]{Rutten} \cite{SBBR13}, one of the two most prominent
coalgebraic approaches to traces. Another well-known approach,
suggested by \name[Ichiro]{Hasuo}, \name[Bart]{Jacobs} and
\name[Ana]{Sokolova} \cite{HJS07}, is to work in the Kleisli
category. Whether our framework can  be modified or extended to work
in that setting remains an interesting open question. A possible basis
for answering this could be the recent comparison between these two
approaches in the qualitative setting which was carried out by
\name[Bart]{Jacobs}, \name[Alexandra]{Silva} and \name[Ana]{Sokolova}
\cite{JSS15}. However, this is already quite complicated in the
qualitative case, and thus very likely to be even more involved in the
quantitative case. In particular, for this comparison one needs -- in
addition to the distributive law (\EM-law) -- a \Kl-law and a so-called extension natural transformation which connects the two distributive laws.

While our theory is already at a quite general level, there are several further possible generalizations. Among these, one could be to drop symmetry and study so-called \emph{directed pseudometrics} as is done in the case of metric transition systems \cite{dAFS09}. This would lay the foundation to study simulation distances from a coalgebraic perspective.

An even more general idea is to study certain reflexive functions $d \colon X^2 \to L$ as generalized metrics, where $L$ is a  (complete) lattice with possibly some additional structure. This could result in a theory in which one can model distances for conditional transition systems (\textlsc{CTS}). These systems were proposed by \name[{Ji\u{r}\'{\i}}]{Adámek}, \name[Filippo]{Bonchi}, \name[Mathias]{Hülsbusch}, \name[Barbara]{König}, \name[Stefan]{Milius} and \name[Alexandra]{Silva} \cite{ABH+12} and are similar to featured transition systems \cite{CHS+10}. Formally, a \textlsc{CTS} is a labeled transition system with the following semantics. Once the environment chooses a label (which represents a condition or a feature), all transitions with this label remain (but the label is dropped) and all the other transitions vanish. Then one is interested in the behavior of the resulting unlabeled transition system. If $A$ is the set of conditions, we could take the complete lattice (with respect to the subset ordering) $L=\Powerset A$ as codomain of our generalized metrics. A natural distance of two states $x,y$ could be the set of conditions for which they are not bisimilar. This is a further generalization of a distance recently proposed by \name[Joanne M.]{Atlee}, \name[Uli]{Fahrenberg} and \name[Axel]{Legay} for featured transition systems: their (simulation) distance only counts how many features prevent simulation \cite{AFL15}.

A less drastic, yet interesting generalization could result from replacing the Euclidean metric with a different metric which has recently been suggested by \name[Konstantinos]{Chatzikokolakis}, \name[Daniel]{Gebler}, \name[Catuscia]{Palamidessi} and \name[Lili]{Xu} for probabilistic systems \cite{CGPX14}. Our Kantorovich lifting is obviously based on the Euclidean metric and we could simply replace $d_e \colon [0,\top]^2 \to [0,\top]$ with a different metric in this definition and study the resulting liftings. However, it is unclear how to proceed for the Wasserstein lifting since the Euclidean metric only plays a role in \Cref{W2} of the well-behavedness of the respective evaluation function.

The use of so-called up-to techniques \cite{San98} can significantly reduce both memory consumption and running time for equivalence checks between two specific states of a labeled transition system. This was demonstrated recently by \name[Filippo]{Bonchi} and \name[Damien]{Pous} for the equivalence check of nondeterministic finite automata \cite{BP13}. Together with \name[Daniela]{Petri{\c s}an} and \name[Jurriaan]{Rot}, they discovered a fibrational basis for these techniques \cite{BPPR14,Rot15}. Based on some preliminary research, we know that our entire framework can be seen from a fibrational perspective as well, leading to yet another generalization. We hope that this fibrational view will lead to more general proofs and efficiently computable algorithms, possibly by using the aforementioned up-to techniques.

Another valid question is, if (and how) our techniques can be generalized to other categories than $\Set$. In order to talk about distances between states it is likely that this will require at least concrete categories.

As we have discussed above, for probabilistic systems it is possible to efficiently compute the lifted distance by using ideas from linear programming. While our framework provides a solid theoretical basis for reasoning about several behavioral distances, its algorithmic applicability as yet is somewhat limited. For this we would need
\begin{itemize}
	\item an efficient method to compute the lifted distances, and possibly also
	\item an efficient method to automatically derive the fixed point of \Cref{lem:lifting-coalg}.
\end{itemize}
However, it is most likely that the efficiency of these methods will to a great extent rely on the specific system under consideration.

Last but not least, the coalgebraic theory in $\Set$ has benefited a lot from the huge amount of examples. While we have already looked at several examples throughout this paper, there is yet a lot more examples that should be worked out explicitly. In particular, it would be interesting to see if (and how) we can recover other behavioral distances from the literature.

\section*{Acknowledgement}
The authors are grateful to \new{Larry Moss,} Franck van Breugel, Neil
Ghani and Daniela Petri{\c s}an for several precious suggestions and
inspiring discussions.  The first author acknowledges the support of
the MIUR PRIN project CINA and University of Padova project
ANCORE. The second author acknowledges the support by project
ANR-16-CE25-0011 REPAS. The third and fourth authors acknowledge the
support of DFG project BEMEGA.

\bibliographystyle{alpha}
\bibliography{coalg-beh-dist}

\newcommand{\etalchar}[1]{$^{#1}$}
\begin{thebibliography}{CvBW12}

\bibitem[ABH{\etalchar{+}}12]{ABH+12}
Ji{\v r}{\'\i} Ad{\'a}mek, Filippo Bonchi, Mathias H{\"u}lsbusch, Barbara
  K{\"o}nig, Stefan Milius, and Alexandra Silva.
\newblock A coalgebraic perspective on minimization and determinization.
\newblock In Lars Birkedal, editor, {\em Foundations of Software Science and
  Computational Structures}, volume 7213 of {\em Lecture Notes in Computer
  Science}, pages 58--73. Springer Berlin Heidelberg, 2012.

\bibitem[Ad{\'a}74]{Ada74}
Ji{\v r}{\'\i} Ad{\'a}mek.
\newblock Free algebras and automata realizations in the language of
  categories.
\newblock {\em Commentationes Mathematicae Universitatis Carolinae},
  015(4):589--602, 1974.

\bibitem[AFL15]{AFL15}
Joanne~M. Atlee, Uli Fahrenberg, and Axel Legay.
\newblock Measuring behaviour interactions between product-line features.
\newblock In {\em Proceedings of the Third FME Workshop on Formal Methods in
  Software Engineering}, Formalise '15, pages 20--25, Piscataway, NJ, USA,
  2015. IEEE Press.

\bibitem[AHS90]{AHS90}
Ji\u{r}\'{\i} Ad{\'a}mek, Horst Herrlich, and George~E. Strecker.
\newblock {\em Abstract and Concrete Categories. The Joy of Cats}, volume~17.
\newblock [John Wiley and Sons, New York], 2006 [1990].

\bibitem[BBKK14]{BBKK14}
Paolo Baldan, Filippo Bonchi, Henning Kerstan, and Barbara K{\"o}nig.
\newblock {Behavioral Metrics via Functor Lifting}.
\newblock In Venkatesh Raman and S.~P. Suresh, editors, {\em 34th International
  Conference on Foundations of Software Technology and Theoretical Computer
  Science (FSTTCS 2014)}, volume~29 of {\em Leibniz International Proceedings
  in Informatics (LIPIcs)}, pages 403--415. Schloss Dagstuhl -- Leibniz-Zentrum
  f{\"u}r Informatik, 2014.

\bibitem[BBKK15]{BBKK15}
Paolo Baldan, Filippo Bonchi, Henning Kerstan, and Barbara K{\"o}nig.
\newblock Towards trace metrics via functor lifting.
\newblock In Lawrence~S. Moss and Pawe{\l} Soboci{\'n}ski, editors, {\em 6th
  Conference on Algebra and Coalgebra in Computer Science (CALCO'15)},
  volume~35 of {\em Leibniz International Proceedings in Informatics (LIPIcs)},
  pages 35--49. Schloss Dagstuhl -- Leibniz-Zentrum f{\"u}r Informatik, 2015.

\bibitem[BBLM13]{BBLM13a}
Giorgio Bacci, Giovanni Bacci, Kim~G. Larsen, and Radu Mardare.
\newblock On-the-fly exact computation of bisimilarity distances.
\newblock In Nir Piterman and Scott~A. Smolka, editors, {\em Tools and
  Algorithms for the Construction and Analysis of Systems}, volume 7795 of {\em
  Lecture Notes in Computer Science}, pages 1--15. Springer Berlin Heidelberg,
  2013.

\bibitem[BBLM15]{bblm:total-variation-markov}
Giorgio Bacci, Giovanni Bacci, Kim~G. Larsen, and Radu Mardare.
\newblock On the total variation distance of semi-markov chains.
\newblock In {\em Foundations of Software Science and Computation Structures
  (FOSSACS 2015)}, volume 9034 of {\em Lecture Notes in Computer Science}.
  Springer, 2015.

\bibitem[BKV15]{bkv:extensions-v-cat}
Adriana Balan, Alexander Kurz, and Ji\v{r}\'{\i} Velebil.
\newblock Extensions of functors from {Set} to $\mathcal{V}$-cat.
\newblock In {\em Proc. of CALCO '15}, volume~35 of {\em {LIPIcs}}, pages
  17--34. Schloss Dagstuhl -- Leibniz Center for Informatics, 2015.

\bibitem[BP13]{BP13}
Filippo Bonchi and Damien Pous.
\newblock Checking {NFA} equivalence with bisimulations up to congruence.
\newblock In {\em Proceedings of the 40th Annual ACM SIGPLAN-SIGACT Symposium
  on Principles of Programming Languages}, POPL '13, pages 457--468, New York,
  NY, USA, 2013. ACM.

\bibitem[BPPR14]{BPPR14}
Filippo Bonchi, Daniela Petri{\c s}an, Damien Pous, and Jurriaan Rot.
\newblock Coinduction up-to in a fibrational setting.
\newblock In {\em Proceedings of the Joint Meeting of the Twenty-Third EACSL
  Annual Conference on Computer Science Logic (CSL) and the Twenty-Ninth Annual
  ACM/IEEE Symposium on Logic in Computer Science (LICS)}, CSL-LICS '14. ACM,
  2014.

\bibitem[CGPX14]{CGPX14}
Konstantinos Chatzikokolakis, Daniel Gebler, Catuscia Palamidessi, and Lili Xu.
\newblock Generalized bisimulation metrics.
\newblock In Paolo Baldan and Daniele Gorla, editors, {\em CONCUR 2014 --
  Concurrency Theory}, volume 8704 of {\em Lecture Notes in Computer Science},
  pages 32--46. Springer Berlin Heidelberg, 2014.

\bibitem[CHS{\etalchar{+}}10]{CHS+10}
Andreas Classen, Patrick Heymans, Pierre-Yves Schobbens, Axel Legay, and
  Jean-Fran\c{c}ois Raskin.
\newblock Model checking lots of systems: Efficient verification of temporal
  properties in software product lines.
\newblock In {\em Proceedings of the 32Nd ACM/IEEE International Conference on
  Software Engineering - Volume 1}, ICSE '10, pages 335--344, New York, NY,
  USA, 2010. ACM.

\bibitem[CPP12]{CPP12}
Gheorghe Comanici, Prakash Panangaden, and Doina Precup.
\newblock On-the-fly algorithms for bisimulation metrics.
\newblock In {\em 2012 Ninth International Conference on Quantitative
  Evaluation of Systems (QEST)}, pages 94--103. IEEE Computer Society, 2012.

\bibitem[CvBW12]{CvBW12}
Di~Chen, Franck van Breugel, and James Worrell.
\newblock On the complexity of computing probabilistic bisimilarity.
\newblock In Lars Birkedal, editor, {\em Foundations of Software Science and
  Computational Structures}, volume 7213 of {\em Lecture Notes in Computer
  Science}, pages 437--451. Springer, 2012.

\bibitem[dAFS04]{afs:linear-branching-metrics-quant}
Luca de~Alfaro, Marco Faella, and Mari{\"e}lle Stoelinga.
\newblock Linear and branching metrics for quantitative transition systems.
\newblock In {\em Proc. of ICALP '04}, volume 3142 of {\em Lecture Notes in
  Computer Science}, pages 97--109. Springer, 2004.

\bibitem[dAFS09]{dAFS09}
Luca de~Alfaro, Marco Faella, and Mari{\"e}lle Stoelinga.
\newblock Linear and branching system metrics.
\newblock {\em IEEE Trans. Software Eng.}, 35(2):258--273, 2009.

\bibitem[DEP98]{DEP98}
Jos{\'e}e Desharnais, Abbas Edalat, and Prakash Panangaden.
\newblock A logical characterization of bisimulation for labeled {M}arkov
  processes.
\newblock In {\em Proceedings of the 13th IEEE Symposium on Logic in Computer
  Science, Indianapolis}, volume 179(2), pages 478--489. IEEE Press, New York,
  1998.

\bibitem[DGJP04]{DGJP04}
Jos{\'e}e Desharnais, Vineet Gupta, Radha Jagadeesan, and Prakash Panangaden.
\newblock Metrics for labelled markov processes.
\newblock {\em Theoretical Computer Science}, 318(3):323 -- 354, 2004.

\bibitem[dVR97]{dVR97}
Erik~P. de~Vink and Jan~J.M.M. Rutten.
\newblock Bisimulation for probabilistic transition systems: a coalgebraic
  approach.
\newblock In Pierpaolo Degano, Roberto Gorrieri, and Alberto
  Marchetti-Spaccamela, editors, {\em Automata, Languages and Programming},
  volume 1256 of {\em Lecture Notes in Computer Science}, pages 460--470.
  Springer Berlin Heidelberg, 1997.

\bibitem[dVR99]{dVR99}
Erik~P. de~Vink and Jan~J.M.M Rutten.
\newblock Bisimulation for probabilistic transition systems: a coalgebraic
  approach.
\newblock {\em Theoretical Computer Science}, 221(1--2):271 -- 293, 1999.

\bibitem[FL14]{FL14}
Uli Fahrenberg and Axel Legay.
\newblock The quantitative linear-time--branching-time spectrum.
\newblock {\em Theoretical Computer Science}, 538(0):54 -- 69, 2014.
\newblock Quantitative Aspects of Programming Languages and Systems (2011-12).

\bibitem[FLT11a]{flt:quantitative-spectrum}
Uli Fahrenberg, Axel Legay, and Claus Thrane.
\newblock The quantitative linear-time--branching-time spectrum.
\newblock In {\em IARCS Annual Conference on Foundations of Software Technology
  and Theoretical Computer Science (FSTTCS 2011)}, volume~13 of {\em Leibniz
  International Proceedings in Informatics (LIPIcs)}, pages 103--114. Schloss
  Dagstuhl -- Leibniz Center for Informatics, 2011.

\bibitem[FLT11b]{FLT11}
Uli Fahrenberg, Axel Legay, and Claus Thrane.
\newblock {The Quantitative Linear-Time--Branching-Time Spectrum}.
\newblock In Supratik Chakraborty and Amit Kumar, editors, {\em IARCS Annual
  Conference on Foundations of Software Technology and Theoretical Computer
  Science (FSTTCS 2011)}, volume~13 of {\em Leibniz International Proceedings
  in Informatics (LIPIcs)}, pages 103--114, Dagstuhl, Germany, 2011. Schloss
  Dagstuhl--Leibniz-Zentrum f{\"u}r Informatik.

\bibitem[FPP04]{FPP04}
Norm Ferns, Prakash Panangaden, and Doina Precup.
\newblock Metrics for finite markov decision processes.
\newblock In {\em Proceedings of the 20th Conference on Uncertainty in
  Artificial Intelligence}, UAI '04, pages 162--169, Arlington, Virginia,
  United States, 2004. AUAI Press.

\bibitem[GJS90]{GJS90}
Alessandro Giacalone, Chi-Chang Jou, and Scott~A. Smolka.
\newblock Algebraic reasoning for probabilistic concurrent systems.
\newblock In {\em Proc. IFIP TC2 Working Conference on Programming Concepts and
  Methods}, pages 443--458. North-Holland, 1990.

\bibitem[HJS07]{HJS07}
Ichiro Hasuo, Bart Jacobs, and Ana Sokolova.
\newblock Generic trace semantics via coinduction.
\newblock {\em Logical Methods in Computer Science}, 3 (4:11):1--36, 2007.

\bibitem[Hof07]{h:closed-objects}
Dirk Hofmann.
\newblock Topological theories and closed objects.
\newblock {\em Advances in Mathematics}, 215(2):789--824, 2007.

\bibitem[JSS12]{JSS12}
Bart Jacobs, Alexandra Silva, and Ana Sokolova.
\newblock Trace semantics via determinization.
\newblock In Dirk Pattinson and Lutz Schr{\"o}der, editors, {\em Coalgebraic
  Methods in Computer Science}, volume 7399 of {\em Lecture Notes in Computer
  Science}, pages 109--129. Springer Berlin Heidelberg, 2012.

\bibitem[JSS15]{JSS15}
Bart Jacobs, Alexandra Silva, and Ana Sokolova.
\newblock Trace semantics via determinization.
\newblock {\em Journal of Computer and System Sciences}, 81(5):859 -- 879,
  2015.
\newblock 11th International Workshop on Coalgebraic Methods in Computer
  Science, {CMCS} 2012 (Selected Papers).

\bibitem[Ker16]{Ker16}
Henning Kerstan.
\newblock {\em Coalgebraic Behavior Analysis -- From Qualitative to
  Quantitative Analyses}.
\newblock PhD thesis, Universit{\"a}t Duisburg-Essen, 2016.

\bibitem[KS15]{ks:codenisity-liftings-monads}
{Shin-ya} Katsumata and Tetsuya Sato.
\newblock Codensity liftings of monads.
\newblock In {\em Proc. of CALCO '15}, volume~35 of {\em {LIPIcs}}, pages
  156--170. Schloss Dagstuhl -- Leibniz Center for Informatics, 2015.

\bibitem[LS89]{LS89}
Kim~G. Larsen and Arne Skou.
\newblock Bisimulation through probabilistic testing (preliminary report).
\newblock In {\em Conference Record of the 16th ACM Symposium on Principles of
  Programming Languages (POPL)}, pages 344--352, 1989.

\bibitem[Rot15]{Rot15}
Jurriaan~Cornelis Rot.
\newblock {\em Enhanced Coinduction}.
\newblock PhD thesis, Universiteit Leiden, 2015.

\bibitem[Rut00]{Rut00}
Jan~J.M.M. Rutten.
\newblock Universal coalgebra: a theory of systems.
\newblock {\em Theoretical Computer Science}, 249:3--80, 2000.

\bibitem[San98]{San98}
Davide Sangiorgi.
\newblock On the bisimulation proof method.
\newblock {\em Mathematical Structures in Computer Science}, 8(5):447--479,
  1998.

\bibitem[SBBR13]{SBBR13}
Alexandra Silva, Filippo Bonchi, Marcello~M. Bonsangue, and Jan J. M.~M.
  Rutten.
\newblock Generalizing determinization from automata to coalgebras.
\newblock {\em Logical Methods in Computer Science}, 9(1), 2013.

\bibitem[Sok11]{Sok11}
Ana Sokolova.
\newblock Probabilistic systems coalgebraically: A survey.
\newblock {\em Theoretical Computer Science}, 412(38):5095--5110, 2011.
\newblock {CMCS} Tenth Anniversary Meeting.

\bibitem[Tar55]{t:lattice-fixed-point}
Alfred Tarski.
\newblock A lattice-theoretical theorem and its applications.
\newblock {\em Pacific Journal of Mathematics}, 5:285--309, 1955.

\bibitem[TDZ11]{TDZ11}
Mathieu Tracol, Jos{\'e}e Desharnais, and Abir Zhioua.
\newblock Computing distances between probabilistic automata.
\newblock In Mieke Massink and Gethin Norman, editors, {\em 9th Workshop on
  Quantitative Aspects of Programming Languages (QAPL 2011)}, volume~57 of {\em
  {EPTCS}}, pages 148--162, 2011.

\bibitem[vB05]{vB05}
Franck van Breugel.
\newblock The metric monad for probabilistic nondeterminism.
\newblock \url{http://www.cse.yorku.ca/~franck/research/drafts/monad.pdf},
  2005.

\bibitem[vBSW08]{vBSW08}
Franck van Breugel, Babita Sharma, and James Worrell.
\newblock Approximating a behavioural pseudometric without discount for
  probabilistic systems.
\newblock {\em Logical Methods in Computer Science}, 4 [2:2]:1 -- 23, 2008.

\bibitem[vBW01]{vBW01a}
Franck van Breugel and James Worrell.
\newblock An algorithm for quantitative verification of probabilistic
  transition systems.
\newblock In Kim~G. Larsen and Mogens Nielsen, editors, {\em CONCUR 2001 ---
  Concurrency Theory}, volume 2154 of {\em Lecture Notes in Computer Science},
  pages 336--350. Springer Berlin Heidelberg, 2001.

\bibitem[vBW05]{vBW05}
Franck van Breugel and James Worrell.
\newblock A behavioural pseudometric for probabilistic transition systems.
\newblock {\em Theoretical Computer Science}, 331(1):115--142, 2005.

\bibitem[vBW06]{vBW06}
Franck van Breugel and James Worrell.
\newblock Approximating and computing behavioural distances in probabilistic
  transition systems.
\newblock {\em Theoretical Computer Science}, 360(1):373--385, 2006.

\bibitem[vG90]{vG90}
Robert~J. van Glabbeek.
\newblock The linear time - branching time spectrum.
\newblock In J.~C.~M. Baeten and J.~W. Klop, editors, {\em CONCUR '90 Theories
  of Concurrency: Unification and Extension}, volume 458 of {\em Lecture Notes
  in Computer Science}, pages 278--297. Springer Berlin Heidelberg, 1990.

\bibitem[Vil09]{Vil09}
C{\'e}dric Villani.
\newblock {\em Optimal Transport -- Old and New}, volume 338 of {\em
  Grundlehren der mathematischen Wissenschaften}.
\newblock Springer Berlin Heidelberg, 2009.

\end{thebibliography}
\end{document}